\newcommand{\nocontentsline}[3]{}
\let\origcontentsline\addcontentsline
\newcommand\stoptoc{\let\addcontentsline\nocontentsline}
\newcommand\resumetoc{\let\addcontentsline\origcontentsline}
\renewcommand\onecolumngrid{
\do@columngrid{one}{\@ne}%
\def\set@footnotewidth{\onecolumngrid}
\def\footnoterule{\kern-6pt\hrule width 1.5in\kern6pt}%
}
\renewcommand\twocolumngrid{
        \def\footnoterule{
        \dimen@\skip\footins\divide\dimen@\thr@@
        \kern-\dimen@\hrule width.5in\kern\dimen@}
        \do@columngrid{mlt}{\tw@}
}%
\newtheorem{thm}{Theorem}
\newtheorem{theorem}{Theorem}
\newtheorem{lemma}[theorem]{Lemma}
\newtheorem{proposition}[theorem]{Proposition}
\newtheorem{definition}[theorem]{Definition}
\newcommand{\bes} {\begin{subequations}}
\newcommand{\ees} {\end{subequations}}
\newcommand{\bea} {\begin{eqnarray}}
\newcommand{\eea} {\end{eqnarray}}
\newcommand{\be} {\begin{equation}}
\newcommand{\ee} {\end{equation}}
\newcommand{\beq} {\begin{equation}}
\newcommand{\eeq} {\end{equation}}
\newcommand{\B}{\mathcal{B}}
\newcommand{\D}{\mathcal{D}}
\newcommand{\E}{\mathcal{E}}
\newcommand{\F}{\mathcal{F}}
\newcommand{\I}{\mathcal{I}}
\renewcommand{\H}{\mathcal{H}}
\newcommand{\M}{\mathcal{M}}
\newcommand{\N}{\mathcal{N}}
\newcommand{\T}{\mathcal{T}}
\newcommand{\V}{\mathcal{V}}
\newcommand{\Sp}{\operatorname{Sp}}
\newcommand{\Or}{\operatorname{O}}
\newcommand{\Un}{\operatorname{U}}
\newcommand{\SU}{\operatorname{SU}}
\renewcommand{\tr}{{\rm{tr}}}
\renewcommand{\det}{{\rm{det}}}
\newcommand{\diag}{\mathrm{diag}}
\newcommand{\overbar}[1]{\mkern 1.5mu\overline{\mkern-1.5mu#1\mkern-1.5mu}\mkern 1.5mu}
\newcommand{\bmo}{\boldsymbol{0}}
\newcommand{\bma}{\boldsymbol{a}}
\newcommand{\bmb}{\boldsymbol{b}}
\newcommand{\bmd}{\boldsymbol{d}}
\newcommand{\bmr}{\boldsymbol{r}}
\newcommand{\bmalpha}{\boldsymbol{\alpha}}
\newcommand{\bmxi}{\boldsymbol{\xi}}
\begin{document}

\title{Symmetry and Asymmetry in Bosonic Gaussian Systems:\texorpdfstring{\\}{ }A Resource-Theoretic Framework}
\author{Nikolaos Koukoulekidis}
\email{nikolaos.koukoulekidis@duke.edu}
\affiliation{Duke Quantum Center and Departments of Physics and Electrical and Computer Engineering, Duke University, Durham, NC 27708, USA}
\author{Iman Marvian}
\email{iman.marvian@duke.edu}
\affiliation{Duke Quantum Center and Departments of Physics and Electrical and Computer Engineering, Duke University, Durham, NC 27708, USA}

\begin{abstract}
We study the interplay of symmetries and Gaussianity in bosonic systems, under closed and open dynamics, and develop a resource theory of Gaussian asymmetry. 
Specifically, we focus on Gaussian symmetry-respecting (covariant) operations, which serve as the free operations in this framework. 
We prove that any such operation can be realized via Gaussian Hamiltonians that respect the symmetry under consideration, coupled to an environment prepared in a symmetry-respecting pure Gaussian state. 
We further identify a family of tractable monotone functions of states that remain non-increasing under Gaussian symmetry-respecting dynamics, and are exactly conserved in closed systems.
We demonstrate that these monotones are not generally respected under non-Gaussian symmetry-respecting dynamics.
Along the way, we provide several technical results of independent interest to the quantum information and optics communities, including a new approach to the Stinespring dilation theorem, and an extension of Williamson’s theorem for the simultaneous normal mode decomposition of Gaussian systems and conserved charges.
\end{abstract}

\maketitle

\stoptoc

\section{Introduction}
\label{sec:intro}

Symmetries lie at the foundation of the fundamental principles of nature.
Noether's theorem, for instance,  establishes that every continuous symmetry gives rise to a conservation law~\cite{Noether1918nachrichten}.
In practice, exploiting symmetries  often simplifies the characterization of a system's dynamics.
Conversely, if one has access only to physical operations that respect a certain symmetry, e.g. spin rotations in the $\hat{z}$ direction, it becomes impossible to transform symmetric states, e.g. spin-$\hat{z}$ particles, into asymmetric states, e.g. spin-$\hat{x}$ particles.
From this perspective, modern treatments of symmetry in quantum physics~\cite{reference2007bartlett,gour2008resource,marvian2013theory,marvian2014symmetry,marvian2014asymmetry,cirstoiu2020robustness,yamaguchi2025quantum} view asymmetry as a resource that permits reaching certain quantum states that would otherwise remain inaccessible under symmetric dynamics.
This resource-theoretic approach~\cite{chitamber2019quantum,Gour2025quantum} has uncovered deep links between asymmetry and various areas of quantum physics, such as quantum coherence~\cite{ marvian2014asymmetry, yadin2016general, marvian2016quantum, marvian2016quantify,  lostaglio2015description, lostaglio2015quantum,  lostaglio2019coherence, streltsov2017colloquium},  quantum metrology~\cite{chiribella2004efficient,hall2012does,marvian2016quantum,marvian2020coherence,yadavalli2024optimal}, and athermality in quantum thermodynamics~\cite{brandao2013resource,brandao2015second,gour2018quantum,alexander2022infinitesimal, lostaglio2015description, lostaglio2015quantum,  lostaglio2019coherence}. More recently, in the context of many-body systems, the resource theory of asymmetry has been shown to be useful for understanding thermal relaxation processes in quantum systems, and in particular for elucidating the Mpemba effect~\cite{summer2025resource,ares2023entanglement}.
However, the resource theory of asymmetry has been developed primarily for finite-dimensional Hilbert spaces, leaving the subtleties of continuous-variable (CV) systems with infinite-dimensional Hilbert spaces largely unaddressed.

An important class of CV systems that is ubiquitous in nature is the class of bosonic Gaussian systems, i.e. bosonic systems described by Gaussian states and Gaussian dynamics.
The Gaussian formalism often sufficiently captures or approximates quantum correlations~\cite{adesso2007entanglement} which are crucial in quantum optics, field theories and solid state physics~\cite{cerf2007quantum}.

Gaussian systems are also ubiquitous in the laboratory, as they can be efficiently prepared and manipulated~\cite{weedbrook2012gaussian,adesso2014continuous,serafini2017quantum}.
Therefore, there have been extensive analysis of the role of Gaussianity in operational tasks in the contexts of quantum communication~\cite{caves1982quantum,caves2004fidelity,caves2012quantum,adesso2012measuring}, quantum thermodynamics~\cite{brown2016passivity,macchiavelo2020quantum,serafini2020gaussian,singh2023gaussian,medina2024anomalous,rodriguez2025extracting}, quantum simulation~\cite{cramer2008exact,somhorst2023quantum,schweigler2021decay}, quantum coherence~\cite{anderson2005experimental,zhao2017quantum,du2023incoherent,yadavalli2024optimal}, quantum sensing~\cite{aasi2013enhanced,caves1980on,joo2011quantum,ohliger2012continuous,mccormick2019quantum}, quantum learning theory~\cite{gandhari2023precision,becker2024classical,mele2024learning,fanizza2025efficient} and quantum computing~\cite{lloyd1999quantum,gottesman2001encoding,niset2009no,chabaud2023resources,brenner2024factoring,upreti2025interplay}.
Gaussian operations are sufficient for quantum advantage in several of these scenarios, but other times they prove too limited compared to the much larger class of non-Gaussian operations~\cite{zhuang2018resource}.
Viewing Gaussianity as an operational constraint within the set of all possible operations on CV systems has motivated the development of Gaussian resource theories~\cite{lami2018gaussian,takagi2018convex} that attempt to characterize and quantify the degree of non-Gaussianity present in a CV system.

In this work, we bridge the notions of symmetry and Gaussianity in the context of bosonic CV systems, and study the effect of Gaussian constraints on symmetry-respecting (covariant) operations, and vice versa. Specifically, we study the consequences of symmetry in both closed and open Gaussian time dynamics.  For closed systems, we identify quantities that remain conserved if the dynamics is governed by Gaussian Hamiltonians that respect a symmetry, but that are not conserved if the Hamiltonian is not Gaussian or does not respect the symmetry. Similarly, for open systems, we find functions of quantum states that remain monotone under Gaussian covariant operations.

We take an approach which is akin to quantum resource theories, and in particular to the resource theory of asymmetry.
In this context, the motivation and scope of our results are twofold. Firstly, we investigate the standard resource theory of asymmetry in the setting of bosonic systems, where the set of free operations is defined as all transformations that respect the underlying symmetry. This establishes a systematic characterization of asymmetry within a well-established theoretical framework with broad applications. Secondly, we develop and analyze a new resource-theoretic framework in which the free operations are restricted to Gaussian covariant operations. This refinement provides a framework that is both physically motivated and experimentally relevant, as it naturally incorporates the structural features of Gaussian processes while respecting the imposed symmetry constraints. In analogy with the well-studied notion of Gaussian entanglement, we refer to this new framework as \emph{Gaussian asymmetry}.

An immediate application of our framework lies in the context of quantum thermodynamics, where the free operations are those realizable by energy-conserving unitary transformations together with ancillas in the thermal state of the system's intrinsic Hamiltonians.
Recently, Gaussian thermal operations have been studied~\cite{serafini2020gaussian} as a restriction of the free operations that is desirable in certain practical settings.
To understand the connection with our framework, we note that energy-conserving unitaries can equivalently be defined as unitaries that respect a certain symmetry, namely, as unitaries that commute with the time-translation symmetry generated by the intrinsic Hamiltonian, i.e. $\exp(-i \hat{H} t)$ for all $t \in \mathbb{R}$.
If the time evolution of the system is periodic, then the corresponding family of unitary transformations forms a Gaussian unitary representation of the group $\mathrm{U}(1)$, which we study extensively in this work.

After a brief look at preliminary notation in Section~\ref{sec:prelim}, we introduce and characterize the representations of interest in Section~\ref{sec:sym_repr}.
We then establish the core ingredients of the resource theory of Gaussian asymmetry in Section~\ref{sec:res_theory}, discussing monotones in Section~\ref{sec:monotones}, and conservation laws for closed systems in Section~\ref{sec:conserve_laws}.
We provide a dilation theorem for our free operations in Section~\ref{sec:dilation}, and we conclude in Section~\ref{sec:summary}.

\subsection*{Summary of main results}
\textbf{Covariant Gaussian dilation.} We show that any symmetry-respecting (i.e. covariant) Gaussian channel can be realized using Gaussian Hamiltonians that respect the symmetry and couple the system to an environment initially prepared in a pure, symmetry-respecting Gaussian state (Theorem~\ref{thm:cov_gauss_dilation}).
This result generalizes the well-known Gaussian dilation of Gaussian channels ~\cite{caruso2008multi,serafini2017quantum}.
Remarkably, our argument relies on a different approach based on the ideas of Gaussian purification and decoupling (see Fig.~\ref{fig:dilation}).
This result strengthens the operational significance of the resource theory of Gaussian asymmetry, by providing an implementation of free operations via symmetry-respecting Gaussian unitaries and ancillary states.\\

\textbf{Monotones for Gaussian asymmetry.}
In order to quantify asymmetry in open covariant dynamics, in Section~\ref{sec:monotones}, we define and calculate quantities that are monotone under the action of Gaussian covariant channels.
Given a representation of interest $\hat{S}(g): g \in G$ for a symmetry group $G$, we derive families of tractable monotones of Gaussian asymmetry as well as of asymmetry in the displacement vector only (type--1 asymmetry) and in the covariance matrix only (type--2 asymmetry) of a quantum state $\hat{\rho}$.

We introduce a family of type--1 asymmetry monotones that take the form
\begin{equation*}
    f_\mu(\hat{\rho}) \coloneqq{\rm rank}\left(T^{(\mu)}\left(\bmd\bmd^\dag\right)\right) \,,
\end{equation*}
where $\bmd$ is the displacement vector of any, possibly non-Gaussian, state and $T^{(\mu)}$ is a completely positive map defined in Eq.(\ref{eq:Tmu}) that projects $\bmd\bmd^\dag$ to an irreducible representation $\mu$ of the symmetry.

We then give a recipe for constructing type--2 asymmetry monotones. 
In the case of a Gaussian state $\hat{\rho} = e^{-\beta \hat{H}} / \tr[e^{-\beta \hat{H}}]$ with covariance matrix $\sigma$, we find the following closed form for a family of such monotones,
\begin{equation*}
    f_{\alpha, g}^{(2)}(\hat{\rho}) \coloneqq -\frac{1}{2}\log\frac{\det[(\sigma_\alpha + \sigma_{1-\alpha})/2]}{\det[\left(\sigma_\alpha + S(g)\sigma_{1-\alpha}S(g)^{\rm T}\right)/2]} \,,
\end{equation*}
for any $\alpha \in (0,1)$, $g \in G$.
Here, $S(g)$ is the matrix representation of $\hat{S}(g)$ in terms of position and momentum operators, and the quantity $\sigma_{\alpha}$ is the covariance matrix of $\hat{\rho}^\alpha/\tr[\hat{\rho}^\alpha]$, i.e. the thermal state of Hamiltonian $\hat{H}$ at temperature $(\alpha\beta)^{-1}$. 
Monotones $f_{\alpha, g}^{(2)}$ are essentially ratios of overlaps between Gaussian states obtained from $\hat{\rho}$.
Specifically, the numerator corresponds to the overlap of two thermal states of Hamiltonian $\hat{H}$ with linear terms removed, but at temperatures higher than the temperature of $\hat{\rho}$, while the denominator corresponds to the same overlap with one of the states rotated according to $g \in G$.

In the important case of a continuous 1--parameter family of unitary transformations $\hat{S}(t) = e^{it\widehat{Q}}: t \in \mathbb{R}$, where $\widehat{Q}$ is a Gaussian observable representing a conserved charge, we obtain additional monotones,
\begin{equation*}
    F_Q(\hat{\rho}) \coloneqq \frac{1}{4}\tr\Big[\Omega Q\Omega\sigma_{1/2}^{-1}[\Omega Q, \sigma_{1/2}\Omega]\Big] \,,
\end{equation*}
where $\Omega$ is the symplectic form, and $Q$ is the matrix representation of $\widehat{Q}$ in terms of position and momentum operators.
\\

\textbf{Conservation laws for closed Gaussian systems.} 
All the above monotones remain conserved under closed Gaussian dynamics that respect the symmetry.
In Section~\ref{sec:conserve_laws}, we further introduce new conservation laws that hold under closed Gaussian dynamics, but do not hold under non-Gaussian dynamics.

Remarkably, we find that for Gaussian states lacking coherence with respect to the eigenspaces of the conserved charge, a simple finite set of conservation laws captures all the consequences of symmetry and Gaussianity (Theorem~\ref{thm:interconvert}): if these laws are satisfied, then there exists a Gaussian charge-conserving unitary that transforms one, possibly mixed, state into the other.\\

\textbf{Diagonalization with symplectic transformations (extension of Williamson's theorem).}  
As a key technical ingredient to derive the conservation laws, we introduce an extension of Williamson's theorem~\cite{williamson1936on} (Theorem~\ref{thm:nmd_q}) that, under certain conditions, allows for the simultaneous normal mode decomposition of a positive-definite matrix and a symmetric matrix representing the conserved charge.
That is, we show that these two matrices can be simultaneously diagonalized by congruence via a symplectic transformation.

\section{Preliminaries}
\label{sec:prelim}

We first provide preliminary definitions on symmetry representations and  Gaussian systems.

\subsection{Symmetry representations in the Hilbert space}

We assume any system under consideration is equipped with a unitary representation $ \hat{S}(g): g\in G$ of a symmetry group $G$, which, in general, can be a projective representation, such that
\begin{equation}\label{eq:group_rule_hilbert}
    \hat{S}(g_2)\hat{S}(g_1)=\omega(g_1,g_2) \hat{S}(g_2g_1) \,, 
\end{equation}
for all $g_1,g_2\in G$ and $\hat{S}(g^{-1})=\hat{S}(g)^\dag$, where $\omega(g_1,g_2)$ is a phase. 
Then, under symmetry transformation $g \in G$, the state $\ket{\psi} \in \H$ of the system is transformed as $\ket{\psi} \mapsto \hat{S}(g)\ket{\psi}$. 
This, in turn, implies that a density operator $\hat{\rho} \in \B(\H)$ transforms as $\hat{\rho}\mapsto \hat{S}(g)\hat{\rho}\hat{S}(g)^\dag$, and a channel, i.e. a completely positive trace-preserving linear map $\E: \B(\H_A) \rightarrow \B(\H_B)$ transforms as
\begin{equation*}
    \mathcal{E}(\cdot) \mapsto \hat{S}_B(g)\mathcal{E}\Big(\hat{S}_A(g)(\cdot)\hat{S}_A(g)^\dag\Big) \hat{S}_B(g)^\dag \,,
\end{equation*}
where subscripts $A$ and $B$ label the input and output systems, respectively, which in general can be different. 
As usual, rather than applying the transformation on states, we can apply the dual transformation on observables, i.e. an observable $\hat{K}$ transforms as $\hat{K} \mapsto \hat{S}(g)^\dag\hat{K}\hat{S}(g)$.

\subsection{Gaussian systems}

Any system under consideration is a collection of a finite number $n$ of bosonic modes with position and momentum operators $\hat{\bmr} \coloneqq (\hat{x}_1, \hat{p}_1, \dots, \hat{x}_n, \hat{p}_n)^{\rm T}$ which satisfy the canonical commutation relations $[\hat{x}_j, \hat{x}_k] = [\hat{p}_j, \hat{p}_k] = 0$, and $[\hat{x}_j, \hat{p}_k] = i\delta_{j,k}$, written compactly as
\begin{equation}\label{comm}
    \left[\hat{\bmr}, \hat{\bmr}^{\rm T}\right] = i\Omega_n \coloneqq i \bigoplus_{j=1}^n \begin{pmatrix} 0 & 1 \\ -1 & 0 \end{pmatrix} \,,
\end{equation}
where $\Omega \equiv \Omega_n$ is the symplectic form on $n$ modes\footnote{If the number of modes of system $A$ is unclear, we write $\Omega_A$ instead. Notation is further explained in Appendix~\ref{app:notation}.}.
The annihilation and creation operators are $\hat{a}_j \coloneqq (\hat{x}_j + i\hat{p}_j)/\sqrt{2}$, and $\hat{a}_j^\dag$, respectively, and the $n$--mode vacuum state $\ket{0}^{\otimes n}$ is the unique state annihilated by all $n$ annihilation operators.

The Gaussian formalism can be defined based on the notion of the second-order Hamiltonian  
\begin{equation}\label{eq:ham}
    \hat{H} = \frac{1}{2} \hat{\bmr}^{\rm T} H \hat{\bmr} + \hat{\bmr}^{\rm T}\bmxi \,,
\end{equation}
where $H$ is a real symmetric $(2n \times 2n)$ matrix and $\bmxi \in \mathbb{R}^{2n}$ is a displacement vector.
We refer to such Hamiltonians as \emph{Gaussian Hamiltonians}, and the group of unitaries that can be realized by them, i.e. the group generated by unitaries of the form $e^{i \hat{H}}$, as \emph{Gaussian unitaries}. 

Any Gaussian unitary can be brought to the standard form $\hat{U} = \hat{D}_{\bmxi}\hat{V}$, where
\begin{equation}\label{eq:disp_op}
    \hat{D}_{\bmxi} \coloneqq e^{i\bmxi^{\rm T} \Omega \hat{r}} \,,
\end{equation}
for some $\bmxi \in \mathbb{R}^{2n}$, is a Weyl displacement operator, and $\hat{V}$ is a symplectic Gaussian unitary that can be realized with purely quadratic Hamiltonians.
Gaussian unitary $\hat{U}$ transforms the position and momentum operators as
\begin{equation}\label{eq:gauss_basis_transformation}
    \hat{U}^\dag \hat{\bmr} \hat{U} 
    = \hat{V}^\dag \big( \hat{D}_{\bmxi}^\dag \hat{\bmr} \hat{D}_{\bmxi} \big) \hat{V} 
    = V \hat{\bmr} - \bmxi \,,
\end{equation}
where $V = e^{-\Omega H}$ is a $(2n \times 2n)$ real matrix\footnote{We denote the phase space matrix associated with a Hilbert space operator by simply dropping the hat, unless otherwise stated.} satisfying $V \Omega V^{\rm T} = \Omega$.
The set of $(2n \times 2n)$ real matrices satisfying this condition forms the \emph{symplectic group} denoted $\Sp(2n, \mathbb{R})$.
In particular, symplectic Gaussian unitaries form a subgroup of the full Gaussian unitary group\footnote{
Indeed, the correspondence $-\Omega H \mapsto i\hat{H}$ for purely quadratic $\hat{H}$ yields a faithful representation of the Lie algebra. 
Specifically,
\begin{equation*}
    i\hat{H}_A = [\,i\hat{H}_B, i\hat{H}_C\,]
    \quad \Longleftrightarrow \quad 
    -\Omega H_A = [-\Omega H_B, -\Omega H_C] \,.
\end{equation*}}.

A Gaussian quantum density matrix $\hat{\rho}$ is a thermal state $e^{-\beta\hat{H}}/\tr[e^{-\beta\hat{H}}]$ of a positive second-order Hamiltonian $\hat{H}$ with $\beta>0$ and it is fully specified by its displacement vector $\bmd$ and covariance matrix $\sigma$,
\begin{equation}\label{eq:moments}
    \bmd \coloneqq \tr\left[ \hat{\rho} \hat{\bmr} \right] \,,\quad
    \sigma \coloneqq \tr\left[ \hat{\rho} \left\{ (\hat{\bmr}-\bmd), (\hat{\bmr}-\bmd)^{\rm T} \right\} \right] \,.
\end{equation}
Here, $\sigma$ satisfies $\sigma = \sigma^{\rm T}$ and the uncertainty relation $\sigma + i\Omega \ge 0$.
The limit of zero temperature, i.e. $\beta\rightarrow \infty$, corresponds to pure Gaussian states.

A bosonic quantum channel is a completely-positive trace-preserving (CPTP) map $\E: \B(\H_A) \rightarrow \B(\H_B)$ from input bosonic system $A$ to output bosonic system $B$.
A Gaussian channel $\E$ is a subclass of bosonic channels~\cite{holevo2001evaluating,caruso2008multi} that transform the displacement vector $\bmd$ and covariance matrix $\sigma$ of the input as  
\begin{equation}\label{eq:xychannel}
    \bmd \mapsto X\bmd + \bmxi \,,\quad \sigma \mapsto X\sigma X^{\rm T} + Y \,,
\end{equation}
for an arbitrary real vector $\bmxi$ and two arbitrary real matrices $X$ and $Y$ satisfying
\begin{equation}\label{eq:channel_uncertainty}
    Y + i(\Omega_B - X \Omega_A X^{\rm T}) \ge 0 \,,
\end{equation}
where $\Omega_A$ and $\Omega_B$ are the input and output symplectic forms, respectively.
Gaussian channels map Gaussian states to Gaussian states, even when they act on a subsystem of a composite system, i.e. $(\E \otimes \I_E) (\hat{\rho}_{AE})$ is Gaussian whenever $\hat{\rho}_{AE}$ is Gaussian, for any environment $E$ with arbitrary number of modes.
We provide a formal definition in Appendix~\ref{app:channel_def}.

Appendix~\ref{app:elements} overviews additional details on Gaussian formalism that are relevant to our proofs.

\section{Invariance under Symplectic Representations of Symmetry}
\label{sec:sym_repr}

In our framework, all systems under consideration are equipped with a (possibly projective) Gaussian unitary representation of a group $G$, denoted as $g \mapsto \hat{U}(g)$, for all group elements $g\in G$. This means that the symmetry transformation acts linearly on position and momentum operators as described by Eq.(\ref{eq:gauss_basis_transformation}). 
In particular, we are often interested in representations which do not contain displacements, which correspond to the case $\bmxi=0$ in Eq.(\ref{eq:gauss_basis_transformation}). That is $\hat{U}(g)=\hat{S}(g)$, where $\hat{S}(g)$ is in the symplectic subgroup of Gaussian unitaries. Let $S(g)$ be the symplectic matrix associated with $\hat{S}(g)$ via Eq.(\ref{eq:gauss_basis_transformation}). 
Then, for any (possibly projective) symplectic Gaussian representation $g \mapsto \hat{S}(g)$ in the Hilbert space, the map $g \mapsto S(g)$ defines its associated symplectic representation of symmetry in the phase space, namely it satisfies
\begin{equation}\label{eq:group_rule_phase}
    \begin{split}
    S(g_2g_1) &= S(g_2)S(g_1) \,, \quad\text{and}\\ 
    S(g^{-1}) &= S(g)^{-1}  \,, \hspace{28pt}\text{for all } g_1, g_2, g \in G \,.
    \end{split}
\end{equation}

Consider systems A and B with $n_{\rm A}$ and $n_{\rm B}$ modes, and position and momentum operators $\hat{\bmr}_A$ and $\hat{\bmr}_B$, respectively. 
Then, the composite system AB has $n_A+n_B$ modes and we write the position and momentum operators as $\hat{\bmr}_A \oplus \hat{\bmr}_B$.
The symmetry representation on the composite system is the tensor product of representations, $g \mapsto \hat{S}_{AB}(g) = \hat{S}_{A}(g) \otimes \hat{S}_{B}(g)$, in the Hilbert space, with associated phase space representation $g \mapsto S_{AB}(g) = S_{A}(g) \oplus S_{B}(g)$. 
Note that each of the subsystem representations $\hat{S}_{A}, \hat{S}_{B}$ can generally entangle modes within the subsystem.

\subsection{Symmetry-respecting Gaussian operators and channels in phase space}
\label{sec:sym_op_ch}

Operators and channels that remain unchanged under the action of a symmetry group $G$ take a central role in the study of asymmetry.
Table~\ref{tab:transf_rules} shows how different object of interest in phase space transform under symmetry transformations, allowing us to characterize invariant Gaussian operators and covariant Gaussian channels in the phase space.

\begin{table}[ht]
{
\renewcommand{\arraystretch}{2}
\setlength{\tabcolsep}{10pt}
\begin{tabular}{c|l|l}
    {\bf Tensor type} & {\bf Tensor} & {\bf Transformation} \\
    \hline
    $(1,0)$ & $\bmd, \bmxi$ & $\bmd \rightarrow S\bmd$  \\
    $(1,1)$ & $V, X$ & $S \rightarrow S V S^{-1}$  \\
    $(2,0)$ & $\sigma, Y$ & \hspace{0.5pt}$\sigma \rightarrow S \sigma S^{\rm T}$  \\
    $(0,2)$ & $H$ & \hspace{-2pt}$H \rightarrow S^{\rm -T} H S^{-1}$ 
\end{tabular}
}
\caption{\textbf{The action of symmetry transformations in phase space.} 
This table summarizes how different objects of interest transform under a symmetry transformation described by a symplectic matrix $S$, namely, 
displacement vectors $\bmd$ and covariance matrices $\sigma$ associated with a quantum state via Eq.(\ref{eq:moments}), displacement vectors $\bmxi$ and symplectic matrices $V$ associated with a Gaussian unitary $\hat{U}$ via Eq.(\ref{eq:gauss_basis_transformation}), 
the $X,Y$ matrices associated with a Gaussian channel via Eq.(\ref{eq:xychannel}), 
and symmetric matrix $H$ associated with a second-order Hamiltonian $\hat{H}$ via Eq.(\ref{eq:ham}). 
The type of tensor associated to each object is determined by the number of matrices $S$ or $S^{\rm T}$ and their inverses that appear in the transformation.
}
\label{tab:transf_rules}
\end{table}

Given a symplectic Gaussian representation $g \mapsto \hat{S}(g)$ of group $G$, state $\hat{\rho}$ is called $G$--invariant if $[\hat{\rho}, \hat{S}(g)] = 0$ for all $g \in G$.
This implies invariance of its displacement vector $\bmd$ and covariance matrix $\sigma$ under the associated phase space representation $g \mapsto S(g)$,
\begin{equation}\label{eq:sp_state}
\begin{split}
    S(g) \bmd &= \bmd \,,\text{ and }\\
    S(g) \sigma S(g)^{\rm T} &= \sigma \,,\text{ for all } g \in G \,.
\end{split}
\end{equation}
The above constraints are satisfied by all invariant states, whether Gaussian or not. 
However, for Gaussian states, they guarantee that the state is $G$--invariant.

A Gaussian unitary $\hat{U} = \hat{D}_{\bmxi}\hat{V}$ is called $G$--invariant when $[\hat{U}, \hat{S}(g)] = 0$ for all $g \in G$.
This is equivalent to invariance of vector $\bmxi$ and symplectic matrix $V$,
\begin{equation}\label{eq:sp_unitary}
\begin{split}
    S(g) \bmxi &= \bmxi \,,\text{ and }\\
    S(g) V S(g)^{-1} &= V \,,\text{ for all } g \in G \,.
\end{split}
\end{equation}
In particular, the displacement vector $\bmxi$ is invariant if and only if it is restricted to modes where the symmetry representation acts trivially.
In Appendix~\ref{app:ps_invariance}, we show that any $G$--invariant symplectic Gaussian unitary can be decomposed as $\hat{V} = e^{i\hat{H}_1}e^{i\hat{H}_2}$ for $G$--invariant Hamiltonians $\hat{H}_1$ and $\hat{H}_2$. 
It follows that the subgroup of symplectic Gaussian unitaries respecting the symmetry is connected.

Finally, consider a Gaussian channel $\E: \B(\H_A) \rightarrow \B(\H_B)$ with input and output symplectic Gaussian representations $g \mapsto \hat{S}_A(g)$ and $g \mapsto \hat{S}_B(g)$ of group $G$. 
Channel $\E$ is called $G$--covariant when
\begin{equation}
    \E\left(\hat{S}_A(g)(\cdot)\hat{S}^\dag_A(g)\right) = \hat{S}_B(g)\E(\cdot)\hat{S}_B(g)^\dag \,, \text{ for all } g \in G \,,
\end{equation}
which is equivalent to the $G$--invariance of $\bmxi, X$ and $Y$,
\begin{equation}\label{eq:sp_channel}
\begin{split}
    S_A(g) \bmxi &= \bmxi \,,\\
    XS_A(g) &= S_B(g)X \,,\text{ and } \\
    S_B(g)YS_B(g)^{\rm T} &= Y \,,\text{ for all } g \in G \,.
\end{split}
\end{equation}

Appendix~\ref{app:ps_invariance} offers a proof of the equivalence between Hilbert space and phase space invariance conditions based on the transformation rules in Table~\ref{tab:transf_rules}.

\subsection{Orthogonal symplectic (passive) representations}
\label{sec:sym_repr_pass}

A particularly physically relevant family of Gaussian representations is defined by the so-called \emph{passive unitaries},  
a terminology borrowed from the quantum optics literature~\cite{serafini2017quantum,gerry2023introductory}.  
We call an operator \emph{passive} if it commutes with   
the free Hamiltonian of harmonic oscillators with identical frequencies,
\begin{equation}\label{eq:ham_free}
    \hat{H}_{\rm free} \coloneqq \frac{1}{2} \sum_{j=1}^n(\hat{x}_j^2 + \hat{p}_j^2) = \sum_{j=1}^n \left(\hat{a}_j^\dag\hat{a}_j + \frac{1}{2}\right) \,.
\end{equation}
Specifically, a symplectic Gaussian unitary $\hat{S}$ and its associated symplectic matrix $S$ are passive if $S$ is also orthogonal, i.e. $S \Omega S^{\rm T} = \Omega$ and $SS^{\rm T} = I$, that is, 
\begin{equation}
  S \in \Sp(2n,\mathbb{R}) \cap \mathrm{O}(2n,\mathbb{R})
  \;\cong\; \mathrm{U}(n) \,.
\end{equation}
The isomorphism with $\mathrm{U}(n)$ can be explained as a consequence of the fact that passive transformations do not mix 
annihilation operators with creation operators, and act unitarily within each set. That is, under any passive representation of group $G$,
\begin{equation}\label{eq:unitary_repr_annih}
    \hat{a}_j \mapsto \hat{S}(g)^\dag \hat{a}_j \hat{S}(g)= \sum_{k} U_{jk}(g) \hat{a}_k
\end{equation}
where $g \mapsto U(g)$ defines an $n$--dimensional unitary representation of group $G$ (see Appendix~\ref{app:passive} for details).

In the case of passive representations, since $S(g)^{\rm T} = S(g)^{-1}$, invariance by congruence is equivalent to invariance by similarity.
Hence, the invariance conditions on all matrices $\sigma$, $V$, $X$, $Y$, and $H$, are equivalent to commutation with $S(g)$ for all $g \in G$. 
The following proposition, proved in Appendix~\ref{app:elements}, gathers frequently used properties of passive unitaries and quadratic observables.

\begin{restatable}{proposition}{propQ}\label{prop:Qprop}
For any purely quadratic observable 
$\widehat{Q} = \sum_{j,k} Q_{jk} \hat{r}_j \hat{r}_k$, 
where $Q = Q^{\rm T}$ is symmetric, the following statements are equivalent:
\begin{enumerate}
    \item[(1)] $e^{i \widehat{Q} s}$ is a passive symplectic Gaussian unitary for all $s\in\mathbb{R}$;
    \item[(2)] $e^{-\Omega Q s}$ is orthogonal symplectic for all $s\in\mathbb{R}$;
    \item[(3)] $\widehat{Q}$ commutes with the free Hamiltonian;
    \item[(4)] $Q$ commutes with $\Omega$;
    \item[(5)] the vacuum state is an eigenvector of $\widehat{Q}$.
\end{enumerate}
\end{restatable}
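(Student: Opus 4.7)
The plan is to establish the chain (1) $\Leftrightarrow$ (2) $\Leftrightarrow$ (4) $\Leftrightarrow$ (3) $\Leftrightarrow$ (5), using throughout the Lie-algebra correspondence between purely quadratic Hamiltonians and elements of $\sp(2n,\mathbb{R})$ recalled in the footnote of Section~\ref{sec:prelim}. Under this correspondence, the one-parameter group $e^{i\widehat{Q}s}$ is represented on phase space by the symplectic matrix $e^{-\Omega Q s}$, where any factor of two arising from the convention $\widehat{Q} = \hat{\bmr}^{\rm T}Q\hat{\bmr}$ versus $\hat{H} = \frac{1}{2}\hat{\bmr}^{\rm T}H\hat{\bmr}$ in Eq.~(\ref{eq:ham}) is absorbed into a rescaling of $s$.

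The equivalence (1) $\Leftrightarrow$ (2) is then just the definition of passive for symplectic Gaussian unitaries from Section~\ref{sec:sym_repr_pass}: the unitary is passive iff the associated symplectic matrix is also orthogonal. For (2) $\Leftrightarrow$ (4), I would use $Q^{\rm T} = Q$ and $\Omega^{\rm T} = -\Omega$ to get $(e^{-\Omega Q s})^{\rm T} = e^{s Q\Omega}$; since $e^{-\Omega Q s}$ is symplectic for every $s$, orthogonality is equivalent to $e^{-\Omega Q s}\,e^{s Q\Omega} = I$ for every $s$, which differentiated at $s=0$ yields $[\Omega, Q] = 0$ and conversely is implied by $[Q,\Omega] = 0$ (the two exponents then commute and sum to zero). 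For (4) $\Leftrightarrow$ (3), the same correspondence transforms the vanishing of the Hilbert commutator $[\widehat{Q},\hat{H}_{\rm free}]$ into the vanishing of the phase-space matrix commutator $[\Omega Q, \Omega] = \Omega Q\Omega + Q$ (using $\Omega^2 = -I$); setting this to zero gives $\Omega Q\Omega = -Q$, and multiplying on the left by $-\Omega$ yields $Q\Omega = \Omega Q$.

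Finally, for (3) $\Leftrightarrow$ (5), the key observation is that $\hat{H}_{\rm free}$ generates the one-parameter group of phase rotations $\hat{a}_j \mapsto e^{-it}\hat{a}_j$, so (3) is equivalent to $\widehat{Q}$ being invariant under these rotations. Writing $\widehat{Q}$ in normal-ordered form, $\widehat{Q} = c + \sum_{jk} M_{jk}\hat{a}_j^\dag\hat{a}_k + \sum_{jk}(N_{jk}\hat{a}_j^\dag\hat{a}_k^\dag + \bar{N}_{jk}\hat{a}_j\hat{a}_k)$ with $M$ Hermitian and $N$ symmetric, invariance forces $N = 0$ and then $\widehat{Q}\ket{0} = c\ket{0}$ since $\hat{a}_k\ket{0} = 0$. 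Conversely, if $\widehat{Q}\ket{0} \propto \ket{0}$, then $\sum_{jk} N_{jk}\hat{a}_j^\dag\hat{a}_k^\dag\ket{0}$ vanishes, and linear independence of the two-particle states $\hat{a}_j^\dag\hat{a}_k^\dag\ket{0}$ for $j \le k$ forces $N = 0$, recovering (3). The main obstacle, in my view, is this last step, since it requires the explicit normal-ordered decomposition in the complex $(\hat{a},\hat{a}^\dag)$ basis; the other equivalences are routine algebraic identities in $\sp(2n,\mathbb{R})$ obtained directly from the canonical commutation relations and the Lie-algebra correspondence already established in the excerpt.
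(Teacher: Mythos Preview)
Your argument is correct, but two points deserve comment when compared with the paper's proof.

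First, a small circularity in your (1)~$\Leftrightarrow$~(2). The paper's primary definition of ``passive'' (Section~\ref{sec:sym_repr_pass}) is that the operator commutes with $\hat{H}_{\rm free}$; the statement that this is equivalent to orthogonality of the associated symplectic matrix is precisely one of the things the proposition is establishing, so invoking it as definitional short-circuits the logic. The paper instead proves (1)~$\Leftrightarrow$~(3) directly: $[\widehat{Q},\hat{H}_{\rm free}]=0$ implies $[e^{i\widehat{Q}s},\hat{H}_{\rm free}]=0$ for all $s$, and differentiating at $s=0$ gives the converse. Your chain is easily repaired by inserting this step.

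Second, the handling of (5) is genuinely different. You go (3)~$\Leftrightarrow$~(5) via the explicit normal-ordered decomposition $\widehat{Q}=c+\sum M_{jk}\hat{a}_j^\dag\hat{a}_k+\sum(N_{jk}\hat{a}_j^\dag\hat{a}_k^\dag+\text{H.c.})$ and the linear independence of two-particle states. The paper avoids this entirely: for (3)~$\Rightarrow$~(5) it simply notes that the vacuum is the \emph{non-degenerate} ground state of $\hat{H}_{\rm free}$, so anything commuting with $\hat{H}_{\rm free}$ must have it as an eigenvector; and rather than (5)~$\Rightarrow$~(3), the paper closes the cycle via (5)~$\Rightarrow$~(2): if $\ket{0}$ is an eigenvector of $\widehat{Q}$, then $e^{i\widehat{Q}s}$ fixes the vacuum and hence preserves its covariance matrix $I$, which is exactly $e^{-\Omega Qs}(e^{-\Omega Qs})^{\rm T}=I$. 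This route is shorter and uses only first- and second-moment data, whereas your approach is more explicit and makes the structure of $\widehat{Q}$ visible. Both are valid; the paper's is more economical, yours more hands-on.
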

\noindent Apart from the pure vacuum state $\ket{0}^{\otimes n}$, passive operators also leave invariant any free thermal state $e^{-\beta \hat{H}_{\rm free}}/\tr[e^{-\beta \hat{H}_{\rm free}}]$ with $\beta^{-1} > 0$.

\subsection{Invariant Hamiltonians}
\label{sec:ham_inv}

In terms of creation and annihilation operators, a Gaussian Hamiltonian can be written as $\hat{H} = \hat{H}^{\text{p}}+\hat{H}^{\text{np}}$, where
\begin{equation}
    \hat{H}^{\text{p}} \coloneqq \sum_{j,k} {H}_{jk}^{\text{p}}\ \hat{a}_j^\dag\hat{a}_k \,,\
    \hat{H}^{\text{np}} \coloneqq \sum_{j,k} {H}_{jk}^{\text{np}}\ \hat{a}_j\hat{a}_k + \text{H.c.} \,,
\end{equation}
denote the passive and non-passive terms, respectively. 
If $\hat{H}^{\mathrm{np}} \neq 0$, then under the Hamiltonian $\hat{H}$ the vacuum evolves into a pure Gaussian state exhibiting squeezing. 
Equivalently, its density operator does not commute with $\hat{H}_{\rm free}$. 
While there always exists a passive Gaussian Hamiltonian that respects any symmetry, namely $\hat{H}_{\text{free}}$, a representation may not allow the existence of $G$--invariant non-passive Gaussian Hamiltonians.
 
A passive representation transforms a general $n$--mode second-order Hamiltonian $\hat{H} \mapsto \hat{S}(g)\hat{H}\hat{S}(g)^\dag$ as $H^{\text{p}}\mapsto U(g)  H^{\text{p}} U(g)^\dag $ and $H^{\text{np}}\mapsto U(g)  H^{\text{np}} U(g)^{\rm T}$. 
We thus identify necessary and sufficient conditions for $\hat{H}$ to be invariant under the given representation,
\begin{equation}\label{eq:quadH_cond}
    \begin{split}
    U(g) H^{\text{p}} U(g)^\dag &= H^{\text{p}} \,,\\
    U(g) H^{\text{np}} U(g)^{\rm T} &= H^{\text{np}} \,,
    \end{split}
\end{equation}
for all $g \in G$.
The matrices $H^{\text{p}}$ and $H^{\text{np}}$ satisfying the conditions in Eq.(\ref{eq:quadH_cond}) are characterized in Appendix~\ref{app:irreps}.
In particular, using the vectorization of these equations, one finds that invariant Hamiltonians correspond to subspaces that are invariant under the representations $U(g)\otimes U(g)^*$ and $U(g)\otimes U(g)$ for $g\in G$, respectively.
In general, the form of invariant states and operations depends on the choice of representation, as we illustrate in Fig.~\ref{fig:intro} by considering composite systems with different numbers of modes on each subsystem, necessitating subsystem representations of different dimensions.

By decomposing the representation $g \mapsto U(g)$ to irreps and applying Schur's lemma, one can fully characterize the matrices that satisfy these conditions, as we further explain in Section~\ref{sec:irreps}.

\subsection*{Example: U(1) symmetry}
\label{sec:u1}

We start with the important case of $\Un(1)$ symmetry, which can have different physical interpretations. 
These include invariance under rotations around a fixed axis in real space,  time-translational invariance in periodic systems, or, equivalently, energy conservation, and insensitivity under phase shifts.

\begin{figure}[t]
\centering{\includegraphics[width=0.4\textwidth]{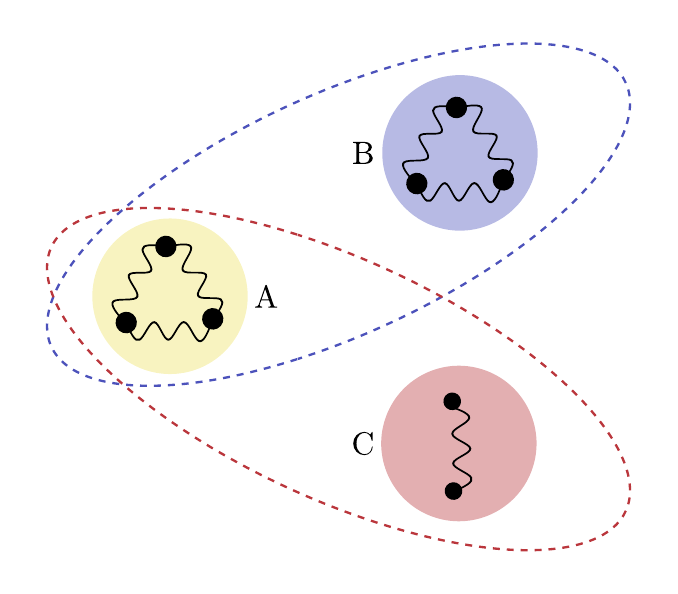}}
\caption{
    \textbf{Gaussian systems and symmetry representations}. 
    Each system under consideration is a collection of bosonic modes carrying a representation of a symmetry group $G$, which, in general, can differ between subsystems. In this example, systems $A$, $B$, and $C$ have 3, 3, and 2 bosonic modes, respectively, and carry an arbitrary representation of group $G$ with dimension equal to the number of modes of the system.
    Since all Gaussian Hamiltonians are quadratic, interactions in this framework can only couple pairs of systems, i.e. they are 2--local. The symmetry further restricts these 2--local interactions (see Appendix~\ref{app:examples} for further discussion and examples including the group $\SU(2)$ and the permutation group $\operatorname{S}_3$).
}
\label{fig:intro}
\end{figure}

For a system with $n$ modes, consider the $\Un(1)$ symmetry associated with the observable
\begin{equation}
    \widehat{Q} = \sum_{j=1}^n q_j \hat{a}^\dag_j\hat{a}_j = \sum_{j=1}^n q_j \left( \hat{x}_j^2 + \hat{p}_j^2 - \frac{1}{2} \right) \,,
\end{equation}
where integer $q_j$ determines the charge associated with mode $j$, and constant $\frac{1}{2}$ can be ignored as it corresponds to a global phase.
If $\hat{H} = \omega \widehat{Q}$ is the intrinsic Hamiltonian of a system of $n$ modes, then $q_j\omega$ is the energy of an excitation in mode $j$, e.g. a photon or a phonon, or, equivalently, the frequency associated with that mode.
 
More precisely, assume the action of the $\Un(1)$ symmetry for any phase $\theta \in [0, 2\pi)$ is represented by operator
\begin{align}\label{eq:u1_repr_hat}
    \hat{O}(\theta) = \exp({i\theta\widehat{Q}}) = \bigotimes_{j=1}^n \exp({i\theta q_j \hat{a}^\dag_j\hat{a}_j}) \,,
\end{align}
with associated symplectic matrix
\begin{equation}\label{eq:u1_repr}
    O(\theta) = \exp({-\theta \Omega Q}) = \bigoplus_{j=1}^n \begin{pmatrix} \cos(q_j \theta) & -\sin(q_j \theta) \\ \sin(q_j \theta) & \cos(q_j \theta) \end{pmatrix} \,,
\end{equation}
where $Q \coloneqq \bigoplus_{j=1}^n q_j I$ is the total conserved charge, and each mode is treated as its own subsystem with representation $\theta \mapsto O_j(\theta) \coloneqq \exp({-\theta q_j \Omega})$.
In general, charges $q_j$ can be arbitrary integers.
In the context of quantum thermodynamics, when $\widehat{Q}$ is interpreted as the intrinsic Hamiltonian of a system, the charges $q_j$ are all positive, otherwise, if $q_j < 0$, for some $j$, the Hamiltonian of mode $j$ will be unbounded and so will not admit a proper thermal state. 
However, as we describe in Appendix~\ref{app:neg_freq}, engineering Hamiltonians with negative frequencies is a common technique in quantum optics, used, for example, to implement 2--mode squeezing~\cite{caves1985new,tsang2012evading}.

Modes with charge $q_j=0$ are not restricted by the $\Un(1)$ symmetry 
and may have arbitrary Gaussian Hamiltonians. 
However, they cannot be coupled to modes with non-zero charge.
For any mode with charge $q_j \neq 0$, the only 1--mode Hamiltonian that respects the symmetry is $\hat{a}_j^\dag \hat{a}_j$, up to normalization.
 This Hamiltonian generates a phase-shifter, with associated symplectic matrix
\begin{equation}\label{eq:Vps}
    V_{\rm ps}(\phi) = e^{-\phi\Omega} = \begin{pmatrix} \cos{\phi} & -\sin{\phi} \\ \sin{\phi} & \cos{\phi} \end{pmatrix} \,,\quad \phi \in [0,2\pi) \,.
\end{equation}
Two modes $j$ and $k$ with non-zero charges $q_j, q_k$ are coupled by a U(1)--invariant Hamiltonian $\hat{H}$ if and only if $[H, \Omega Q] = 0$, i.e. all $(2 \times 2)$ blocks $H_{jk}$ that  couple mode $j$ and mode $k$ satisfy
\begin{equation}\label{eq:u1Hblock}
    H_{jk} = -\left(q_j q_k^{-1}\right) \Omega H_{jk} \Omega \,.
\end{equation}
In particular, modes with charges of unequal magnitude are uncoupled, i.e. $H_{jk} = 0$ unless $q_j = \pm q_k$.
This can be seen by considering any unitary-invariant norm of both sides of Eq.(\ref{eq:u1Hblock}), e.g. the operator norm.
Hence, up to normalization and a phase shift, the U(1)--invariant Gaussian Hamiltonians are
\begin{equation}
    \hat{H}_{\rm bs} = i\left(\hat{a}_j^\dag \hat{a}_k - \text{H.c.}\right) \,,\ \hat{H}_{\rm 2sq} = i\left(\hat{a}_j \hat{a}_k - \text{H.c.}\right) \,,
\end{equation} 
for $q_j = q_k \neq 0$ and for $q_j = -q_k \neq 0$, respectively,
which generate the beam-splitter and 2--mode squeezing operators, with associated symplectic matrices
\begin{align}
    V_{\rm bs}(\phi) &= e^{\phi(\Omega \otimes I)} = \begin{pmatrix} \cos{\phi} I & \sin{\phi} I \\ -\sin{\phi} I & \cos{\phi} I \end{pmatrix} \,,\quad \phi \in [0,2\pi) \,, \label{eq:Vbs}\\
    V_{\rm{2sq}}(r) &= e^{r Z(\Omega \otimes I)} = \begin{pmatrix} \cosh{r} I & \sinh{r} Z \\ \sinh{r} Z & \cosh{r} I \end{pmatrix} \,,\quad r \in \mathbb{R} \,. \label{eq:V2sq}
\end{align}
Unless $\phi \in \{0,\pi\}$ and $r=0$, these operators do not commute with the free Hamiltonians $\hat{a}_j^\dag \hat{a}_j$ and $ \hat{a}_k^\dag \hat{a}_k$, which means they exchange the conserved charge between modes $j$ and $k$.
Together with phase-shifters, they generate all U(1)--invariant Gaussian unitaries, as we prove in Appendix~\ref{app:local}.
\begin{restatable}{proposition}{local}
    All U(1)--invariant Gaussian unitaries are generated by displacement operators on zero-charge sectors, phase-shifters, beam-splitters between sectors of equal charge and 2--mode squeezers between sectors of opposite charge.
\end{restatable}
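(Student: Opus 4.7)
The plan is to peel off the displacement part, reduce the symplectic part to a product of exponentials of U(1)--invariant quadratic Hamiltonians, and verify that every such Hamiltonian lies in the real Lie algebra spanned by the proposed generators.

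First I would write any Gaussian unitary in the standard form $\hat{U} = \hat{D}_{\bmxi}\hat{V}$ with $\hat{V}$ symplectic. By Eq.(\ref{eq:sp_unitary}), U(1)--invariance amounts to $O(\theta)\bmxi = \bmxi$ and $O(\theta) V O(\theta)^{-1} = V$ for all $\theta$. Using the block-diagonal structure of $O(\theta)$ given in Eq.(\ref{eq:u1_repr}), the first equation forces every 2-vector block of $\bmxi$ supported on a mode with $q_j \ne 0$ to vanish, so $\hat{D}_{\bmxi}$ becomes a displacement on zero-charge modes only, as claimed.

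For the symplectic part, I would invoke the result cited in the paragraph following Eq.(\ref{eq:sp_unitary}) (proved in Appendix~\ref{app:ps_invariance}), which writes $\hat{V}$ as a finite product of exponentials $e^{i\hat{H}_\ell}$ of U(1)--invariant quadratic Hamiltonians $\hat{H}_\ell$. Hence it suffices to show that the real Lie algebra of invariant quadratic Hamiltonians is generated by phase-shifter, beam-splitter, and 2-mode-squeezer Hamiltonians. The commutation condition $[H,\Omega Q]=0$ decouples mode-by-mode. For an off-diagonal $(2\times 2)$ block $H_{jk}$, the argument already given in the text around Eq.(\ref{eq:u1Hblock}), using a unitary-invariant norm, forces $H_{jk}=0$ unless $q_j=\pm q_k$. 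A direct analysis of the remaining constraint yields the spanning sets: (a) $q_j=q_k\ne 0$ gives a two-dimensional space spanned by $\hat{a}_j^\dag\hat{a}_k+\mathrm{h.c.}$ and $i(\hat{a}_j^\dag\hat{a}_k-\mathrm{h.c.})$; (b) $q_j=-q_k\ne 0$ gives a two-dimensional space spanned by $\hat{a}_j\hat{a}_k+\mathrm{h.c.}$ and $i(\hat{a}_j\hat{a}_k-\mathrm{h.c.})$; (c) $q_j=q_k=0$ is unconstrained, so the full 2-mode quadratic algebra is allowed; and diagonal blocks with $q_j\ne 0$ are proportional to $\hat{a}_j^\dag\hat{a}_j$.

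To reduce from the spanning sets to the shorter list of generators in the statement, I would use commutators with phase-shifters: a straightforward computation gives $[\hat{a}_j^\dag\hat{a}_j,\,\hat{a}_j^\dag\hat{a}_k]=\hat{a}_j^\dag\hat{a}_k$, which converts the ``imaginary'' generator $i(\hat{a}_j^\dag\hat{a}_k-\mathrm{h.c.})$ into its ``real'' partner $\hat{a}_j^\dag\hat{a}_k+\mathrm{h.c.}$ (and similarly for 2-mode squeezers), so each of (a), (b) is already spanned. Exponentiating and combining with Step 1 then assembles $\hat{U}$ from the listed generators.

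The main obstacle is case (c), the zero-charge sector, where the listed Hamiltonians are beam-splitters, 2-mode squeezers, and phase-shifters on zero-charge modes, yet the Lie algebra to be generated is the full $\sp(2n_0,\mathbb{R})$, which also contains single-mode squeezers. I would handle this by an explicit nested commutator: with $H_{\rm bs}=\hat{a}_j^\dag\hat{a}_k+\mathrm{h.c.}$ and $H_{\rm 2sq}=i(\hat{a}_j\hat{a}_k-\mathrm{h.c.})$ on two zero-charge modes, a computation gives $i[H_{\rm bs},H_{\rm 2sq}]\propto (\hat{a}_j^{\dag 2}+\hat{a}_j^2)+(\hat{a}_k^{\dag 2}+\hat{a}_k^2)$, and single-mode components on each mode can then be separated by a further commutator with a phase-shifter on that mode, using $[\hat{a}_j^\dag\hat{a}_j,\hat{a}_j^{\dag 2}-\hat{a}_j^2]=2(\hat{a}_j^{\dag 2}+\hat{a}_j^2)$. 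This verifies closure of the generator list within $\sp(2n_0,\mathbb{R})$ and completes the argument.
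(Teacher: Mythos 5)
Your proof is correct and takes essentially the same route as the paper's: characterize the real Lie algebra of U(1)--invariant quadratic Hamiltonians block by block (coupling only when $q_j=\pm q_k$), then pass to the group level via the decomposition $\hat{V}=e^{i\hat{H}_1}e^{i\hat{H}_2}$ with invariant $\hat{H}_1,\hat{H}_2$ from Appendix~\ref{app:ps_invariance}. You are more explicit on two points the paper elides: recovering the arbitrary-phase beam-splitter and squeezer Hamiltonians by commutators with phase-shifters, and the zero-charge sector, which the paper's appendix proof sidesteps entirely by assuming all $q_j\neq 0$. The only caveat (an imprecision of the statement's wording rather than of your argument) is that your nested-commutator construction of single-mode squeezers requires at least two zero-charge modes, so a lone zero-charge mode is covered only if the degenerate $j=k$, $q_j=0$ instance of the 2--mode squeezer, i.e. a single-mode squeezer on that mode, is counted among the listed generators.
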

\noindent When all charges $q_j$ are equal, then invariant symplectic matrices coincide with passive symplectic matrices, and there exists a known algorithm~\cite{reck1994experimental} that decomposes them into phase-shifters and beam-splitters.

\subsection*{SU(2): Schwinger model of angular momentum}

Suppose that a 2--mode system is equipped with the defining representation of the $\SU(2)$ symmetry, generated by angular momentum operators
\begin{equation}\label{eq:jzjx}
    \hat{J}_z = \frac{1}{2}\left(\hat{a}^\dagger \hat{a} - \hat{b}^\dagger \hat{b}\right) \,,\quad
    \hat{J}_x = \frac{1}{2}\left(\hat{a}^\dagger \hat{b}+ \hat{a}\hat{b}^\dagger\right) \,.
\end{equation}
This construction is known as the Schwinger boson representation of angular momentum~\cite{schwinger1952angular}. 
Using the Euler decomposition of $\SU(2)$ elements, this immediately defines a representation of $\SU(2)$ as
\begin{equation}\label{eq:repSU}
    \hat{S}(U) = e^{-i\alpha_1 \hat{J}_z}e^{-i\alpha_2 \hat{J}_x}e^{-i\alpha_3 \hat{J}_z} \,,
\end{equation}
where $\alpha_1,\alpha_2,\alpha_3$ are the Euler angles associated to unitary $U\in\mathrm{SU}(2)$.

This system does not admit any single-body non-passive $\SU(2)$--invariant operator, which follows from the fact that the defining irreducible representation of $\SU(2)$ is pseudo-real (see discussion in Section~\ref{sec:irreps}).
Nevertheless, it does allow for two-body non-passive $\SU(2)$--invariant operators.
In particular, the quadratic $\SU(2)$--invariant Hamiltonians 
\begin{align}
    \hat{H}^{\text{p}}_1 &= \hat{a}^\dagger \hat{a} + \hat{b}^\dagger \hat{b} \,, \\
    \hat{H}^{\text{p}}_2 &= \hat{a}_1^\dagger \hat{a}_2 + \hat{b}_1^\dagger \hat{b}_2 \,, \\
    \hat{H}^{\text{np}} &= e^{i\phi}(\hat{a}_1 \hat{b}_2 - \hat{b}_1 \hat{a}_2) + \text{H.c.} \,, \label{eq:pseudoreal_ent}
\end{align}
are, respectively, single-body passive, two-body passive, and two-body non-passive, for arbitrary $\phi \in [0,2\pi)$. 
Note that, unlike the passive Hamiltonians, $\hat{H}^{\text{np}}$ allows interactions between the modes associated with $\hat{a}$ and $\hat{b}$.

\subsection{Symplectic representations with invariant states: Squeezed passive representations}

An important feature of orthogonal symplectic representations, which is crucial in our work, is the existence of invariant states: the vacuum and, more generally, all thermal states of the free Hamiltonian $\hat{H}_{\rm free}$. 
In contrast, a general symplectic representation $S: G \rightarrow \Sp(2n,\mathbb{R})$ of a symmetry group $G$ does not necessarily admit an invariant state.
A prominent example is given by the full symplectic group itself.
In particular, there is no state that remains invariant under the 1--mode squeezing operator $\exp(r(\hat{a}^{\dagger 2} - \hat{a}^2))$, when $r \neq 0$, described by its associated symplectic matrix
\begin{equation}
    V_{\rm 1sq}(r) = e^{rZ} = \diag(e^r, e^{-r}) \,.
\end{equation}
To see this note that, according to Eq.(\ref{eq:sp_state}), if a state with covariance matrix $\sigma$ is invariant, then $\sigma^{-1/2} S(g) \sigma^{1/2}$ is orthogonal. 
In other words, the representation $g \mapsto S(g)$ is similar to an orthogonal representation. 
The eigenvalues of $V_{\rm 1sq}(r)$ are $e^r$ and $e^{-r}$, so the 1--mode squeezing operator cannot be similar to an orthogonal transformation when $r \neq 0$.

However, this phenomenon cannot occur in the important case of finite groups, 
and more generally of compact groups such as $\Un(1)$ and $\SU(d)$.
 Moreover, all representations that admit a physical invariant state are equivalent to a passive representation.
\begin{restatable}{proposition}{squrepr}\label{prop:squ}
    Let $G$ be a group with a symplectic representation $S: G \rightarrow \Sp(2n,\mathbb{R})$ that admits a $G$--invariant quantum state with finite covariance matrix $\sigma$, i.e. such that $S(g) \sigma S(g)^{\rm T} = \sigma$ for all $g \in G$ (this condition is automatically satisfied if $G$ is compact).
    Then, the following statements are true.
    \begin{enumerate}
        \item[(i)] Representation $S$ is symplectically equivalent to a passive (orthogonal symplectic) one; that is, there exists a symplectic matrix $R \in \Sp(2n,\mathbb{R})$ such that, for all $g \in G$,
        \begin{equation}\label{eq:orthogonalized}
            O(g) \coloneqq R^{-1} S(g) R \in \Sp(2n,\mathbb{R}) \cap \mathrm{O}(2n) \,.
        \end{equation}
        \item[(ii)] The passive representation $O$ is unique up to conjugation by an orthogonal symplectic transformation; that is, if $\widetilde{O}(g) = \widetilde{R}^{-1} S(g) \widetilde{R}$ is passive for some $g \in G$ and symplectic matrix $\widetilde{R}$, then $\widetilde{O}(g) = T O(g) T^{\rm T}$ for some orthogonal symplectic matrix $T$.
        \item[(iii)] Representation $S$ admits a pure Gaussian $G$--invariant state, which is characterized by zero displacement and covariance matrix $R R^{\rm T}$.
    \end{enumerate}
\end{restatable}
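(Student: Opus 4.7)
The plan is to reduce everything to Williamson's theorem together with a symplectic polar decomposition. Apply Williamson to the invariant covariance matrix and write $\sigma = R D R^{\rm T}$ with $R \in \Sp(2n,\mathbb{R})$ and $D = \bigoplus_s \mu_s I_{2m_s} \succ 0$, where the distinct symplectic eigenvalues $\mu_s \geq 1$ have been grouped together (with multiplicities $m_s$). I would take this particular $R$ as the candidate for (i) and (iii), and then use uniqueness properties of polar-type decompositions for (ii). For the parenthetical remark about compact $G$, one notes that averaging the vacuum covariance $I$ over $G$ with respect to the Haar measure yields a finite $G$-invariant covariance matrix.

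For (i), define $O(g) \coloneqq R^{-1} S(g) R$; this is symplectic, and the invariance $S(g)\sigma S(g)^{\rm T} = \sigma$ becomes $O(g) D O(g)^{\rm T} = D$. Combining this with the symplectic identity $O(g) \Omega O(g)^{\rm T} = \Omega$, a short manipulation gives $O(g) (D\Omega^{-1}) = (D\Omega^{-1}) O(g)$, i.e.\ $O(g)$ commutes with $D\Omega^{-1} = \bigoplus_s (-\mu_s \Omega_{m_s})$. Since the $\mu_s$ are distinct, the real eigenspace decomposition of $D\Omega^{-1}$ coincides with the Williamson block decomposition $\mathbb{R}^{2n} = \bigoplus_s V_s$, and $O(g)$ must preserve each $V_s$. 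On each block the relation $O(g)|_{V_s} (\mu_s I) O(g)|_{V_s}^{\rm T} = \mu_s I$ forces orthogonality, while symplecticity on the symplectic subspace $V_s$ is automatic; hence $O(g) \in \Sp(2n,\mathbb{R}) \cap \mathrm{O}(2n)$. For (iii), the vacuum covariance $I$ is trivially $O(g)$-invariant, so $R I R^{\rm T} = RR^{\rm T}$ is $S(g)$-invariant; since $R$ is symplectic, $RR^{\rm T}$ is the covariance matrix of a pure Gaussian state (the vacuum symplectically transformed by $R$), and the zero displacement is preserved because $\hat{S}(g)$ contains no Weyl displacements.

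For (ii), suppose $\widetilde{R}$ also yields an orthogonal symplectic $\widetilde{O}(g) = \widetilde{R}^{-1} S(g)\widetilde{R}$ for all $g$. Set $A \coloneqq \widetilde{R}^{-1} R$, which is symplectic and satisfies $\widetilde{O}(g) = A O(g) A^{-1}$. Using $O(g)^{\rm T} = O(g)^{-1}$ and $\widetilde{O}(g)^{\rm T} = \widetilde{O}(g)^{-1}$, I would transpose this relation to also obtain $\widetilde{O}(g) = A^{-\rm T} O(g) A^{\rm T}$, whence $M \coloneqq A^{\rm T} A$ commutes with every $O(g)$. Now $M$ is positive-definite symmetric, and since both $A$ and $A^{\rm T}$ are symplectic, so is $M$; the identity $M\Omega M = \Omega$ combined with positivity gives $\Omega M \Omega^{-1} = M^{-1}$, and functional calculus then yields $\Omega M^{1/2} \Omega^{-1} = M^{-1/2}$, so $M^{1/2}$ is symplectic. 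Defining $T \coloneqq A M^{-1/2}$, one checks $T$ is symplectic and $T T^{\rm T} = A M^{-1} A^{\rm T} = I$, so $T$ is orthogonal symplectic. Because $M^{1/2}$ is a function of $M$ it commutes with every $O(g)$, giving $T O(g) T^{-1} = A O(g) A^{-1} = \widetilde{O}(g)$, and $T^{-1} = T^{\rm T}$ rewrites this as $\widetilde{O}(g) = T O(g) T^{\rm T}$.

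The main technical obstacle is in part (i): proving that the conjugated representation $O(g)$ is block-diagonal with respect to the Williamson decomposition of $D$. The argument closes only because the symplectic and $D$-invariance conditions together imply commutation with $D\Omega^{-1}$, and because $D$ acts as a scalar on each Williamson block, which is itself a symplectic subspace. Everything else, in particular the symplectic polar decomposition argument used in (ii), is standard once one knows that positive symplectic matrices admit symplectic real powers.
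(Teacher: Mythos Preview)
Your proof is correct. For part~(ii) your argument is essentially equivalent to the paper's: the paper uses the Bloch--Messiah decomposition $A = O_2 D O_1$ and shows $[D, O_1 O(g) O_1^{\rm T}] = 0$, which is the same as your $[M^{1/2}, O(g)] = 0$ since $M^{1/2} = \sqrt{A^{\rm T}A} = O_1^{\rm T} D O_1$; your $T = A M^{-1/2}$ then equals the paper's $T = O_2 O_1$. So (ii) is the same proof in polar rather than SVD clothing.

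For parts~(i) and~(iii), however, your route is genuinely different from the paper's. The paper does not apply Williamson to the given $\sigma$; instead it takes the zero-temperature limit $\sigma_\infty = \lim_{\alpha\to\infty} \coth(\alpha\,\coth^{-1}(\sigma i\Omega))\,i\Omega$, which is automatically a pure-state covariance matrix and remains $G$-invariant as a limit of invariant objects, and then sets $R = \sqrt{\sigma_\infty}$. Your approach is more elementary: you Williamson-decompose $\sigma = R D R^{\rm T}$ directly and deduce orthogonality of $O(g) = R^{-1}S(g)R$ from the commutation $[O(g), D\Omega^{-1}] = 0$. In fact your argument can be shortened: since $D$ commutes with $\Omega$, one has $(D\Omega^{-1})^2 = -D^2$, so $O(g)$ commutes with $D^2$ and hence with $D$ itself; combined with $O(g) D O(g)^{\rm T} = D$ this gives $O(g)O(g)^{\rm T} = I$ immediately, bypassing the block-diagonal analysis. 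The paper's thermal-limit argument has the advantage of making the physical content of (iii) transparent (cooling an invariant state yields an invariant ground state), while yours is purely linear-algebraic and avoids any functional calculus on $\sigma$.
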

\noindent In Appendix~\ref{app:squ_repr}, we present a self-contained proof.
For any symplectic matrix $R$, and orthogonal symplectic representation $G \rightarrow \Sp(2n,\mathbb{R})\cap \mathrm{O}(2n)$, we shall refer to the family of representations defined by 
\begin{equation}\label{eq:sq_passive}
     S(g) = R O(g) R^{-1}: g \in G  \,,
\end{equation}
as \emph{squeezed passive representations}.

A few remarks on this important proposition are in order.
First, statement (i) for compact groups is an immediate consequence of the Cartan-Iwasawa-Malcev theorem, which states that any compact subgroup of a connected Lie group is contained in a maximal compact subgroup, and all maximal compact subgroups are conjugate to each other~\cite{malcev1945theory,iwasawa1949some,onishchik2012lie}.

Second, the uniqueness of the associated passive representation $O(g)$ up to a passive symplectic transformation follows from the symplectic singular value decomposition, also known as the Bloch-Messiah decomposition~\cite{arvind1995real,braunstein2005squeezing}. As a simple example, consider two different representations of $\Un(1)$ symmetry on a single mode, as 
\begin{equation}\label{eq:exU(1)}
    O_1(\theta) = e^{\Omega\theta} \text{ or } O_2(\theta) = e^{-\Omega\theta} \,,
\end{equation}
for $\theta\in[0,2\pi)$. 
While these representations are equivalent as real representations, i.e. they are related as 
\begin{equation}
    O_2(\theta) = V O_1(\theta) V^{-1} \,,
\end{equation}
for all $\theta \in [0,2\pi)$ with e.g. $V=Z$, they are \emph{not} symplectically equivalent; that is, there exists no symplectic transformation $V$ satisfying the above equation. 
According to statement (ii) of Proposition~\ref{prop:squ}, this can be seen by simply noting that no \emph{passive} symplectic transformation $V$ can convert one representation into the other.
In this example, all such passive transformations are of the form $e^{\phi\Omega}$ for some $\phi\in[0,2\pi)$, which commutes with the representation, and therefore cannot change it.

Third, statement (iii) of Proposition~\ref{prop:squ} implies that the existence of a $G$--invariant state under a symplectic Gaussian representation $\hat{S}(g): g\in G$ guarantees the existence of a pure $G$--invariant Gaussian state $\ket{\eta}$, satisfying
\begin{equation}\label{G-inv}
    \hat{S}(g) \ket{\eta} = \ket{\eta} \,,
\end{equation}
for all $g \in G$.
This statement is not generally true for non-Gaussian representations.
For example, although the maximally mixed qubit state is invariant under $\SU(2)$ rotations, no pure qubit state is $\SU(2)$--invariant.

Analysis of passive representations is directly applicable to squeezed passive representations via the basis change induced by $\hat{R}$.
In the Hilbert space, this means that any symplectic Gaussian representation with invariant states is equivalent to $\hat{S}(g) = \hat{R} \hat{O}(g) \hat{R}^{\dag}: g \in G$, where $\hat{R}$ is a symplectic Gaussian unitary. 
Equivalently, there exist annihilation operators $\hat{b}_j \coloneqq \hat{R} \hat{a}_j \hat{R}^\dag$, that can be written as a linear combination of operators $\{\hat{a}_j, \hat{a}^\dag_j\}$, via what is known as a Bogoliubov transformation, and, relative to this new basis, $\hat{S}(g)$ does not mix creation and annihilation operators,
\begin{equation}\label{eq:unitary_repr_annih2}
    \hat{S}(g)^\dag \hat{b}_j \hat{S}(g) = \sum_{k} U_{jk}(g) \hat{b}_k \,,
\end{equation}
where $g \mapsto U(g)$ represents group $G$ unitarily, generalizing Eq.(\ref{eq:unitary_repr_annih}) to all squeezed passive representations. 
We note that the above decomposition is not unique and depends on the choice of the annihilation operators $\{\hat{b}_j\}$. However, according to statement (ii) in Proposition~\ref{prop:squ}, different choices of these operators result in the same unitary $U(g)$ on annihilation operators, up to a unitary change of basis.

For example, for the single-mode $\mathrm{U}(1)$ representations given in Eq.(\ref{eq:exU(1)}), the corresponding unitaries are simple phase factors, i.e. $(1 \times 1)$ unitaries, equal to $e^{-i\theta}$ and $e^{i\theta}$, respectively.

\begin{figure}[t]
    \centering{\includegraphics[width=0.49\textwidth]{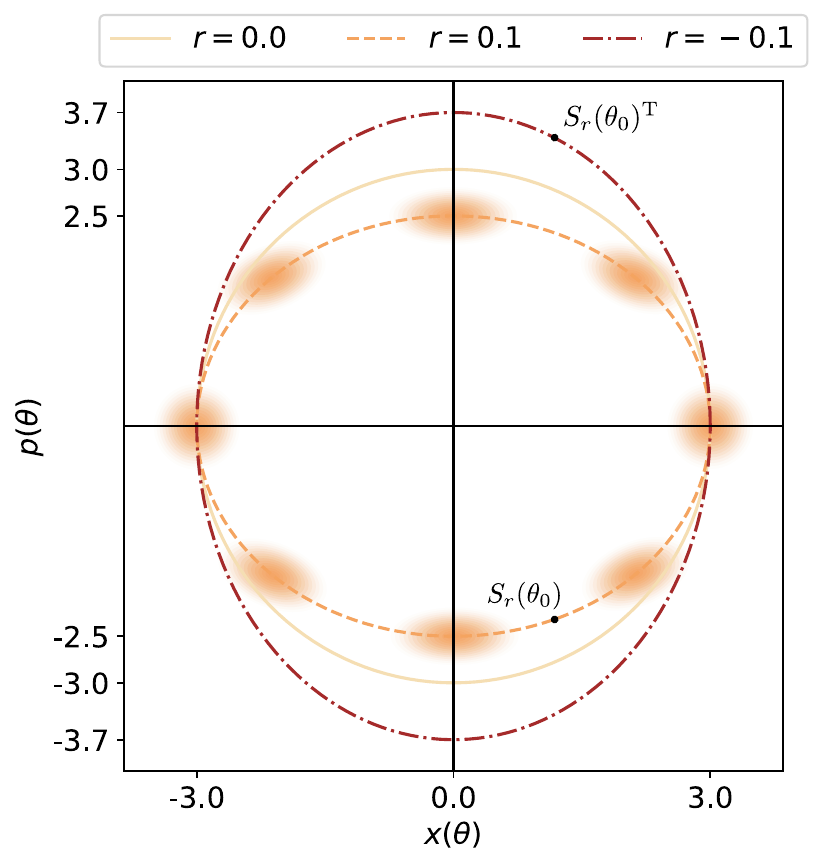}}
    \caption{\textbf{Squeezed passive representations}.
    We sketch the action of squeezed passive representations of U(1), $\theta \mapsto S_r(\theta) \coloneqq V_{\rm 1sq}(r)\exp(-\theta \Omega)V_{\rm 1sq}(-r)$ for different squeezing parameters $r$, on the displaced vacuum state $\hat{D}_{(3,0)^{\rm T}}\ket{0}$.
    These transformations move phase space points anti-clockwise on an ellipsoid, with the passive representation ($r = 0$), corresponding to a circle.
    Each element $S_r(\theta_0)$ for fixed $\theta_0$ commutes with representation $S_{r'}$ if and only if $r'=r$, whereas its transpose $S_r(\theta_0)^{\rm T} = S_{-r}(-\theta_0)$ commutes with $S_{r'}$ if and only if $r'=-r$. 
    We discuss when a $(1,1)$--tensor and its transpose simultaneously commute with the same representation in Appendix~\ref{app:passive}.
    The complex conjugate representations $\theta \mapsto S_r(-\theta)$ move points clockwise instead.
    We also depict the state's covariance matrix under the action of the representation with $r = 0.1$.
    The covariance matrix picks up squeezing along its orbit, but returns to the same state after a $\pi$ rotation, i.e. it respects the $\mathbb{Z}_2$ symmetry.
    We generalize this observation for any representation of $\Un(1)$ in Appendix~\ref{app:u1breaking}, where we show that a Gaussian operator which breaks the $\Un(1)$ symmetry always breaks the symmetry of a finite $\Un(1)$ subgroup of sufficiently high order.
    Instead, the state $V_{\rm 1sq}(r)\ket{0}$ is an invariant state of representation $S_r$ for all $r$.
    }
    \label{fig:repr}
\end{figure}

As a consequence, squeezed passive representations can, up to conjugation by a symplectic transformation, be decomposed into unitary irreducible representations of the group. 
These can then be used to fully characterize invariant Hamiltonians, invariant quantum states, and covariant channels as we discuss in Section~\ref{sec:irreps}.
Another desirable property of squeezed passive representations is that invariant Gaussian unitaries act transitively on invariant pure Gaussian states, which we prove in Appendix~\ref{app:res_transverse} by constructing the unitary in terms of the states' displacement vectors and covariance matrices.

As an example, Fig.~\ref{fig:repr} illustrates the squeezed passive representation $\theta \mapsto \exp(i\theta (e^{-2r}\hat{x}^2+e^{2r}\hat{p}^2))$, obtained by conjugating a 1--mode $\Un(1)$ representation with charge $q=1$ by a 1--mode squeezing operator.

\subsection*{Complex Conjugate Representation}

Given any unitary representation $\hat{U}(g): g\in G$ in the Hilbert space, one can consider the complex conjugate representation $\hat{U}(g)^*: g\in G$, where complex conjugation ($*$) is defined with respect to a fixed arbitrary basis, and corresponds to an anti-unitary transformation in the Hilbert space.
We use the convention that this fixed basis is the common eigenbasis of the position operators $\{\hat{x}_j\}$, which means under complex conjugation the momentum operators transform as $\hat{p}_j\mapsto -\hat{p}_j$, i.e. $\hat{\bmr} \mapsto (I \otimes Z)\hat{\bmr}$. 
With this convention, the complex conjugation can be interpreted as time reversal.

Suppose $\hat{S}(g): g\in G$ is a symplectic Gaussian representation with associated phase space representation $g \mapsto S(g)$.
Then, $\hat{S}(g)^*: g\in G$ is also a symplectic Gaussian representation with associated representation
\begin{equation}\label{eq:conj_repr}
    g \mapsto (I \otimes Z) S(g) (I \otimes Z) \,.
\end{equation}

\subsection{Irrep decomposition \& the trichotomy of irreps}
\label{sec:irreps}

A standard method for characterizing the set of invariant states and Hamiltonians is via decomposition into irreducible representations.
Consider a group $G$ with squeezed passive representation $g \mapsto S(g) = RO(g)R^{-1}$ for symplectic $R$ and orthogonal symplectic representation $g \mapsto O(g)$. Then, according to Eq.(\ref{eq:unitary_repr_annih2}), we can choose a proper basis for the annihilation, such that they transform unitarily under the action of symmetry, without mixing with creation operators.
By decomposing this unitary representation into irreducible representations, we find a basis $\{\hat{b}_{\lambda,k,\alpha}\}$ such that
\begin{equation}\label{eq:irrep_dec}
    \hat{S}(g)^\dag\ \hat{b}_{\lambda,j,\alpha}\ \hat{S}(g) = \sum_{k=1}^{d_\lambda} [U_\lambda(g)]_{jk}\ \hat{b}_{\lambda,k,\alpha} \,,
\end{equation}
for all $g \in G$.
Here, $\lambda$ labels the inequivalent irreps of $G$, each with dimension $d_\lambda$, $[U_\lambda(g)]_{jk}$ are the matrix elements of irrep $\lambda$ with indices $j,k = 1, \dots, d_\lambda$, and $\alpha$ is a multiplicity index.
The overall transformation mapping the set $\{\hat{a}_j, \hat{a}^\dag_j\}$ of the original annihilation and creation operators to the set $\{\hat{b}_{\lambda,j,\alpha}, \hat{b}^\dag_{\lambda,j,\alpha}\}$ is a Bogoliubov transformation, which is described by a symplectic matrix.

For a given squeezed passive representation $g \mapsto S(g)$, the multiplicity $m(\lambda)$ of an irrep $\lambda$ appearing in the decomposition in Eq.(\ref{eq:irrep_dec}) remains invariant under any symplectic change of basis, i.e. under a similarity transformation $R S(g) R^{-1}$ by a symplectic matrix $R$.
\begin{proposition}
    Any symplectic representation $S: G \rightarrow \Sp(2n,\mathbb{R})$ of a group $G$ that admits an invariant state, is uniquely specified, up to a similarity transformation $S(g) \rightarrow R S(g) R^{-1}: g\in G$ by a symplectic matrix $R\in \Sp(2n,\mathbb{R})$, in terms of the integer multiplicities  $m(\lambda)$ defined above, for all unitary irreps $\lambda$ of group $G$.
\end{proposition}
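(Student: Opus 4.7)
The plan is to reduce the classification problem to the standard representation theory of finite-dimensional unitary representations, by chaining together the two equivalence statements already contained in Proposition~\ref{prop:squ}. The first step is to observe that, since $S$ admits a $G$--invariant quantum state, Proposition~\ref{prop:squ}(i) supplies a symplectic matrix $R \in \Sp(2n,\mathbb{R})$ such that $O(g) \coloneqq R^{-1}S(g)R$ is orthogonal symplectic for all $g \in G$. Thus every symplectic equivalence class of such representations contains a passive representative, and it suffices to classify representations up to passive symplectic conjugation.

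Next, I would invoke the isomorphism $\Sp(2n,\mathbb{R}) \cap \Or(2n) \cong \Un(n)$ recalled in Section~\ref{sec:sym_repr_pass}, under which each passive representation $g\mapsto O(g)$ determines a genuine (linear) unitary representation $g\mapsto U(g)$ on the annihilation operators via Eq.(\ref{eq:unitary_repr_annih}), and conjugation by an orthogonal symplectic transformation corresponds to a unitary change of basis of the annihilation operators. Classical representation theory then tells us that $g\mapsto U(g)$ decomposes as a direct sum $\bigoplus_\lambda m(\lambda)\,U_\lambda$ over the inequivalent unitary irreps of $G$, that the multiplicities $m(\lambda)$ are well-defined integer invariants, and that two unitary representations are unitarily equivalent precisely when their multiplicity functions $\lambda\mapsto m(\lambda)$ coincide. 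Translating back through the isomorphism yields the corresponding statement for passive representations: two orthogonal symplectic representations on $2n$ modes are conjugate by an element of $\Sp(2n,\mathbb{R})\cap\Or(2n)$ if and only if they have identical multiplicities $m(\lambda)$.

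Finally, I would close the argument using Proposition~\ref{prop:squ}(ii), which asserts that the passive representative of a given symplectic equivalence class is unique up to orthogonal symplectic conjugation. Combined with the preceding step, this shows that the map sending a symplectic representation $S$ (admitting an invariant state) to the multiplicity function $\lambda \mapsto m(\lambda)$ is well-defined on symplectic equivalence classes, and is a complete invariant: representations $S$ and $S'$ satisfy $S'(g) = R\,S(g)\,R^{-1}$ for some $R \in \Sp(2n,\mathbb{R})$ if and only if their passive forms share the same multiplicity data, which in turn is equivalent to the equality of their multiplicities $m(\lambda)$ for every irrep $\lambda$ of $G$.

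The main conceptual subtlety — not so much an obstacle as a checkpoint — is establishing that $m(\lambda)$ is an invariant of the original symplectic representation $S$, rather than an artefact of the auxiliary conjugation $R$ used to pass to a passive form. This is resolved entirely by Proposition~\ref{prop:squ}(ii): any two passive forms of $S$ differ by an orthogonal symplectic transformation, which under the $\Un(n)$ isomorphism becomes a unitary change of basis on the annihilation operators, and such changes of basis manifestly leave the irrep multiplicities unchanged.
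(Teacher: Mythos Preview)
Your proposal is correct and follows essentially the same approach as the paper: both arguments use Proposition~\ref{prop:squ}(i) to pass to a passive (orthogonal symplectic) representative, invoke Proposition~\ref{prop:squ}(ii) to show this representative is unique up to orthogonal symplectic conjugation, and then translate via the isomorphism $\Sp(2n,\mathbb{R})\cap\Or(2n)\cong\Un(n)$ to reduce the classification to standard multiplicity theory for finite-dimensional unitary representations. Your explicit treatment of the well-definedness checkpoint is, if anything, slightly more careful than the paper's terse appendix proof.
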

\noindent The proof is presented in Appendix~\ref{app:squ_repr}. It is worth noting that similarity transformations by general, possibly non-symplectic, real matrices, such as the time-reversal $I \otimes Z$ in Eq.(\ref{eq:conj_repr}), can change $m(\lambda)$.
However, it is still guaranteed that the sum $m(\lambda) + m(\lambda^*)$ remains conserved, where $\lambda^*$ denotes the dual irrep of $\lambda$~\cite{fulton2013representation, hall2013lie}, i.e. the irrep equivalent to\footnote{Since the representations are unitary, the notions of complex conjugate and dual irreps coincide.} $U_\lambda^*$ (see Appendix~\ref{app:irreps}).
 
In general, symmetry forbids correlations between modes transforming under inequivalent irreps, except when the irreps are related by complex conjugation. 
In particular, based on the decomposition in Eq.(\ref{eq:irrep_dec}), we present a full characterization of symmetry-respecting Gaussian Hamiltonians, unitaries, and channels (see Proposition~\ref{prop:irreps} in Appendix~\ref{app:irreps}).
Here, we highlight an important implication of this characterization. 
\begin{proposition}\label{prop:irreps_corollary}
    Let $\hat{\rho}$ be a, possibly non-Gaussian, $G$--invariant state. 
     Modes transforming under irreps $\lambda$ and $\lambda'$, i.e. modes associated with annihilation operators $\hat{b}_{\lambda,j,\alpha}$ and $\hat{b}_{\lambda',j',\alpha'}$, can be correlated only if $\lambda' = \lambda$ or $\lambda' = \lambda^*$. 
    Likewise, $G$--invariant Hamiltonians and unitaries couple modes associated with $\hat{b}_{\lambda,j,\alpha}$ and $\hat{b}_{\lambda',j',\alpha'}$ only if $\lambda' = \lambda$ or $\lambda' = \lambda^*$.
    
    Finally, for any Gaussian $G$--covariant channel, the output modes with annihilation operator transforming under an irrep $\lambda$ can depend non-trivially only on input modes with annihilation operator transforming under irreps $\lambda$ or $\lambda^*$.
\end{proposition}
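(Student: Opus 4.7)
The plan is to reduce all three claims to Schur's lemma applied in the basis of Eq.(\ref{eq:irrep_dec}). In this basis, any bilinear in creation and annihilation operators is of one of two types: a \emph{passive} bilinear $\hat{b}^\dag_{\lambda,j,\alpha}\hat{b}_{\lambda',k,\alpha'}$, whose coefficient matrix transforms as $U_\lambda\otimes U_{\lambda'}^\ast$; or a \emph{non-passive} bilinear $\hat{b}_{\lambda,j,\alpha}\hat{b}_{\lambda',k,\alpha'}$, whose coefficient matrix transforms as $U_\lambda\otimes U_{\lambda'}$. Since Schur's lemma permits a nontrivial invariant intertwiner between two irreducible unitary representations only when they are equivalent, $G$-invariance forces passive coefficient matrices to vanish unless $\lambda'=\lambda$, and non-passive ones to vanish unless $\lambda'=\lambda^\ast$.

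For the state statement, I would interpret ``correlated'' in the Gaussian sense --- i.e., through the first and second moments --- since even for $\Un(1)$ with incommensurate charges one can easily construct $G$-invariant states exhibiting higher-order correlations between sectors where $\lambda'\notin\{\lambda,\lambda^\ast\}$ (e.g. a uniform mixture of Fock states $|2,0\rangle$ and $|0,1\rangle$ for charges $q_1=1$, $q_2=2$). The Gaussian-level correlators $M^{\text{p}}_{jk}=\langle\hat{b}^\dag_{\lambda,j,\alpha}\hat{b}_{\lambda',k,\alpha'}\rangle$ and $M^{\text{np}}_{jk}=\langle\hat{b}_{\lambda,j,\alpha}\hat{b}_{\lambda',k,\alpha'}\rangle$ inherit the invariance of $\hat{\rho}$ and satisfy $U_\lambda(g)^\ast M^{\text{p}} U_{\lambda'}(g)^{\rm T}=M^{\text{p}}$ and $U_\lambda(g) M^{\text{np}} U_{\lambda'}(g)^{\rm T}=M^{\text{np}}$, to which Schur's lemma applies after a transpose/conjugate and yields the claimed restriction on the displacement and covariance-matrix blocks.

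For invariant Hamiltonians, the identical argument applies to the coefficient matrices $H^{\text{p}}$ and $H^{\text{np}}$ in Eq.(\ref{eq:quadH_cond}), ruling out all blocks whose irrep labels lie outside $\{\lambda,\lambda^\ast\}$. The statement about invariant unitaries then follows from the decomposition, proved in Appendix~\ref{app:ps_invariance}, of any $G$-invariant symplectic Gaussian unitary as a product of exponentials of $G$-invariant Gaussian Hamiltonians: the allowed irrep structure is inherited by each factor and hence by the product.

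Finally, for a Gaussian $G$-covariant channel I would pass to the complex basis and split the transformation matrix $X$ in Eq.(\ref{eq:xychannel}) into a passive block $X^{\text{p}}$ and a non-passive block $X^{\text{np}}$ indexed by input and output irreps. The covariance condition $X S_A(g)=S_B(g)X$ decouples blockwise into $U_\lambda(g)X^{\text{p}}=X^{\text{p}}U_{\lambda'}(g)$ and $U_\lambda(g)X^{\text{np}}=X^{\text{np}}U_{\lambda'}(g)^\ast$, which Schur's lemma restricts to $\lambda'=\lambda$ and $\lambda'=\lambda^\ast$, respectively; the noise matrix $Y$, being a tensor of type $(2,0)$ on the output system, is then treated exactly as the covariance matrix of an invariant state. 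The only step that really requires care is bookkeeping --- tracking when $U_\lambda$ appears versus $U_\lambda^\ast$ in each block --- since this is precisely what selects between the two allowed branches and explains why the dual irrep $\lambda^\ast$ must appear at all.
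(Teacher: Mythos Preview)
Your proposal is correct and follows essentially the same route as the paper: both reduce the claim to Schur's lemma applied to the passive ($F$-type) and non-passive ($G$-type) blocks of the relevant tensors in the irrep basis, which is the content of Proposition~\ref{prop:irreps} in Appendix~\ref{app:irreps}. Your observation that ``correlated'' for non-Gaussian states must be read at the level of first and second moments---illustrated by your $\Un(1)$ counterexample with higher-order correlations---is a genuine clarification that the paper leaves implicit; the only minor difference is that for invariant unitaries the paper applies the block analysis directly to the symplectic matrix $V$ as a $(1,1)$-tensor, whereas you route through the Hamiltonian decomposition, but both arguments are equivalent.
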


This proposition, for instance, implies that a non-passive quadratic Hamiltonian can be $G$--invariant only if there exist two distinct sets of modes carrying dual irreps, or if there is a set of modes carrying an irrep that is self-dual (i.e. equivalent to its own dual). 
As another consequence, the only U(1)--covariant channels on 1--mode systems, for which the output depends non-trivially on the input, correspond to frequencies $q_{\text{out}} = \pm q_{\text{in}}$.
These channels are called phase-covariant when $q_{\text{out}}=q_{\text{in}}$,  and phase-contravariant when $q_{\text{out}}=-q_{\text{in}}$.

An important special case arises when $n$ subsystems carry the same representation, and this representation is irreducible, that is, only a single $\lambda$ appears in Eq.(\ref{eq:irrep_dec}), occurring with multiplicity $n$, i.e. $U(g) = U_\lambda(g) \otimes I_n$.
In this case, the corresponding real symplectic representation $g \mapsto S(g)$ is irreducible over $\mathbb{R}$ but reducible over $\mathbb{C}$. 
This reducibility can be seen from the fact that, for all $g \in G$, $S(g)$ does not mix annihilation and creation operators $\hat{b}_{\lambda,j,\alpha}$ and  $\hat{b}^\dag_{\lambda,j,\alpha}$. 
When $U_\lambda$ is irreducible, its dual representation $U_\lambda^*$ is also irreducible. 
In general, $U_\lambda^*$ may or may not be equivalent to $U_\lambda$.
That is, there may or may not exist an invertible matrix $B$ such that 
\begin{align}\label{eq:dual_equiv}
    U_\lambda(g)^* = B^{-1} U_\lambda(g) B \,,
\end{align}
for all $g \in G$.
It turns out that if such a matrix $B$ exists, it is either symmetric or anti-symmetric (see Appendix~\ref{app:irreps}), so the relationship between these two irreps falls into three cases, each with important consequences for invariant Hamiltonians and states:
\begin{enumerate}
    \item[(i)] {Complex irreps.} $U_\lambda$ and $U_\lambda^*$ are inequivalent; that is, there exists no invertible matrix $B$ satisfying Eq.(\ref{eq:dual_equiv}).  
    This occurs, for example, for the $\Un(1)$ symmetry on a single mode. 
    In this case, if all subsystems carry the same irrep, then there is no non-passive $G$--invariant Gaussian Hamiltonian.
    
    \item[(ii)] {Real irreps.} $U_\lambda$ and $U_\lambda^*$ are equivalent via a symmetric matrix $B$.
    This occurs, for example, for the permutation group $\operatorname{S}_n$, where all irreps are real, i.e. $B = I$. 
    In this case, if all subsystems carry the same irrep, both single-body and two-body $G$--invariant non-passive Hamiltonians exist.
    In particular, the Hamiltonian
    \begin{equation}\label{eq:realH}
        \hat{H}^{\text{np}} = e^{i\phi} \sum_{j,k=1}^{n} B_{jk} \hat{a}_j \hat{a}_k + \text{H.c.} \,,
    \end{equation}
    is $G$--invariant for any $\phi\in[0,2\pi)$, where $B_{jk}$ are the matrix elements of $B$.
    
    \item[(iii)] {Pseudo-real irreps.} $U_\lambda$ and $U_\lambda^*$ are equivalent via an anti-symmetric matrix $B$.
    This occurs, for example, for $\SU(2)$, where $B = \Omega_1$.
    In this case, if all systems carry the same irrep, there are two-body $G$--invariant non-passive Hamiltonians, such as the one in Eq.(\ref{eq:pseudoreal_ent}), but no single-body non-passive Hamiltonians. 
    In particular, the Hamiltonian in Eq.(\ref{eq:realH}) vanishes.
\end{enumerate} 
A more detailed discussion of this irrep classification is provided in Appendix~\ref{app:irreps}.

\subsection*{Entangled $G$--invariant Gaussian states}

As another example of important consequences of the above trichotomy of irreps, consider a system composed of $n$ subsystems, each containing $d$ modes that transform under the same $d$--dimensional irrep of a symmetry group $G$. 
Consider the set of pure $G$--invariant Gaussian density operators. 
In the case of a passive representation, this set contains the state in which all modes are in the vacuum.
In the case of $n=1$, i.e. when there is only one subsystem, then the vacuum is the only invariant pure state if the irrep $U$ is complex or pseudo-real.
On the other hand, for $n>1$ subsystems, the vacuum is the only invariant state if the irrep $U$ is complex. 
In other words, when the irrep is real or pseudo-real, $G$--invariant states can contain entanglement, e.g. by evolving the vacuum under the two-body non-passive Hamiltonians in Eq.(\ref{eq:realH}), such as the Hamiltonian in Eq.(\ref{eq:pseudoreal_ent}) for $\SU(2)$.

In Appendix~\ref{app:state_irreps}, we characterize the covariance matrix of certain invariant quantum states.

\section{Resource theory of \\ Gaussian asymmetry}
\label{sec:res_theory}

\subsection{Free states and free operations}

All systems under consideration are equipped with a squeezed passive representation of a symmetry group $G$, as defined in Proposition~\ref{prop:squ}.
Then, we specify a resource theory with the following ingredients:\\

\noindent\textbf{Free states:} Gaussian states that are invariant under a representation of $G$, as characterized by Eq.(\ref{eq:sp_state}) in phase space.\\

\noindent\textbf{Free operations:} Gaussian channels that are covariant under the input and output representations of $G$, as characterized by Eq.(\ref{eq:sp_channel}) in phase space.\\

For brevity, we sometimes refer to this resource theory as \emph{Gaussian asymmetry}, in analogy to Gaussian entanglement.
We emphasize, however, that this terminology is imprecise, since in this framework both asymmetry and non-Gaussianity can play the role of a resource.

Restricting to squeezed passive representations guarantees that the set of free states is not empty, and contains both pure and mixed states.
We can also define free unitary operations as a subset of the free operations.\\

\noindent\textbf{Free unitary operations:} Gaussian unitaries that are invariant under representations of $G$, as characterized by Eq.(\ref{eq:sp_unitary}) in phase space.\\

As discussed in Section~\ref{sec:intro}, restricting to such operations and states is motivated by physical and practical considerations.
In particular, it is less expensive to implement such operations in various operational contexts compared to other operations.

The resource theory of Gaussian asymmetry inherits many desirable properties from the resource theories of asymmetry and non-Gaussianity.
These properties include that the theory is closed under partial trace and tensoring a free state, while the tensor product of free operations is a free operation.
It is a non-convex theory, as in the case of a trivial symmetry it reduces to the non-convex resource theory of non-Gaussianity.

Free operations act transitively on free states. 
That is, for any pair of $G$--invariant Gaussian states $\hat{\rho}_1$ and $\hat{\rho}_2$ there exists a Gaussian channel that converts $\hat{\rho}_1$ to $\hat{\rho}_2$.
Moreover, free unitaries act transitively on free pure states as we show in Appendix~\ref{app:res_transverse}.
\begin{restatable}{proposition}{pureinterconvert}\label{prop:pure_interconversion}
    Let $G$ be a symmetry group with squeezed passive representation $g \mapsto \hat{R}\hat{O}(g)\hat{R}^\dag$ in Hilbert space $\H$, where $\hat{O}$ is passive. 
    Let $\ket{\psi_1}, \ket{\psi_2} \in \H$ be any pair of pure invariant Gaussian states, and denote the displacement vector and covariance matrix of $\ket{\psi_k}$ as $\bmd_k, \sigma_k$, respectively, for $k = 1,2$. 
        
    Then, there exists an invariant Gaussian unitary $\hat{U}$, such that $\hat{U} \ket{\psi_1} = \ket{\psi_2}$.
    Furthermore, the unitary can be realized as $\hat{U} = \hat{D}_{\bmxi} \hat{V}$ with
    \begin{equation}\label{eq:rho1to2}
       \begin{split}
          \bmxi &= \bmd_2-\bmd_1 \,, \text{ and} \\
          V &= R(R^{-1}\sigma_2 R^{\rm -T})^{\frac{1}{2}}(R^{-1}\sigma_1 R^{\rm -T})^{-\frac{1}{2}}R^{-1} \,.
        \end{split}
    \end{equation}
\end{restatable}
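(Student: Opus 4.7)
The plan is to reduce the problem to the passive frame provided by Proposition~\ref{prop:squ}, build the symplectic piece there by a simple closed-form construction, and transport it back to the original frame by conjugation with $\hat{R}$. Writing $\hat{S}(g) = \hat{R}\hat{O}(g)\hat{R}^\dag$ with $\hat{O}$ passive, I would pass to the rotated pure $G$--invariant states $\hat{R}^\dag\ket{\psi_k}$, whose displacement and covariance are $\tilde{\bmd}_k \coloneqq R^{-1}\bmd_k$ and $\tilde{\sigma}_k \coloneqq R^{-1}\sigma_k R^{-\rm T}$. Because $R$ is symplectic, these remain valid pure-state covariance data and in particular obey $\tilde{\sigma}_k \Omega \tilde{\sigma}_k = \Omega$; and because each $O(g)$ is orthogonal symplectic, the invariance $O(g)\tilde{\sigma}_k O(g)^{\rm T} = \tilde{\sigma}_k$ collapses to $[O(g),\tilde{\sigma}_k] = 0$. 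The task thus reduces to producing a symplectic $\tilde{V}$ that commutes with every $O(g)$ and satisfies $\tilde{V}\tilde{\sigma}_1\tilde{V}^{\rm T}=\tilde{\sigma}_2$; setting $V = R\tilde{V}R^{-1}$ then recovers precisely the stated formula.

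The natural candidate is $\tilde{V}\coloneqq \tilde{\sigma}_2^{1/2}\tilde{\sigma}_1^{-1/2}$, and I would verify three properties in turn. The \emph{intertwining} property is immediate: $\tilde{V}\tilde{\sigma}_1\tilde{V}^{\rm T} = \tilde{\sigma}_2^{1/2}(\tilde{\sigma}_1^{-1/2}\tilde{\sigma}_1\tilde{\sigma}_1^{-1/2})\tilde{\sigma}_2^{1/2} = \tilde{\sigma}_2$ using the symmetry of $\tilde{\sigma}_k$. The \emph{invariance} is also easy: since $\tilde{\sigma}_k$ commutes with every $O(g)$, the functional calculus gives $[O(g),\tilde{\sigma}_k^{\pm 1/2}]=0$ and hence $[O(g),\tilde{V}]=0$. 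The crux is \emph{symplecticity}, which leverages purity: the identity $\tilde{\sigma}_k\Omega\tilde{\sigma}_k=\Omega$ rearranges to $\tilde{\sigma}_k = -\Omega\tilde{\sigma}_k^{-1}\Omega$, and the matrix $-\Omega\tilde{\sigma}_k^{-1/2}\Omega$ is symmetric positive with square $\tilde{\sigma}_k$. Uniqueness of the positive square root then forces $\tilde{\sigma}_k^{1/2} = -\Omega\tilde{\sigma}_k^{-1/2}\Omega$, equivalently $\tilde{\sigma}_k^{\pm 1/2}\Omega\tilde{\sigma}_k^{\pm 1/2}=\Omega$. Composition yields $\tilde{V}\Omega\tilde{V}^{\rm T} = \tilde{\sigma}_2^{1/2}\Omega\tilde{\sigma}_2^{1/2}=\Omega$, as required.

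Transporting back, $V = R\tilde{V}R^{-1}$ is then symplectic, commutes with each $S(g) = RO(g)R^{-1}$, and maps $\sigma_1 \mapsto \sigma_2$. To finish, I would pick the Weyl displacement $\bmxi$ that compensates the residual offset in the mean after applying $\hat{V}$; since $\bmd_1,\bmd_2$ are both $G$--invariant and $V$ intertwines with $S(g)$, this $\bmxi$ is automatically $G$--invariant, so $\hat{U}=\hat{D}_{\bmxi}\hat{V}$ is an invariant Gaussian unitary, matching the formulas of Eq.(\ref{eq:rho1to2}) after substituting back the definitions of $\tilde{\sigma}_k$. The principal obstacle is the symplecticity step above: the key identity $\tilde{\sigma}_k\Omega\tilde{\sigma}_k=\Omega$ is equivalent to purity of $\ket{\psi_k}$, so this direct square-root construction would fail if either state were mixed, in which case the interconversion can generally only be achieved by a (non-unitary) Gaussian covariant channel.
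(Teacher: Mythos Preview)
Your proposal is correct and follows essentially the same approach as the paper: both pass to the passive frame via $R$, take $\tilde V=\tilde\sigma_2^{1/2}\tilde\sigma_1^{-1/2}$ (the paper obtains this by composing the maps $\sqrt{\tilde\sigma_k}$ that send the vacuum to each state), and conjugate back. Your verification of symplecticity via the purity identity $\tilde\sigma_k\Omega\tilde\sigma_k=\Omega$ and uniqueness of the positive square root is in fact more explicit than the paper's appeal to the Bloch--Messiah form, and your treatment of the displacement offset (noting that $V\bmd_1$ is $G$--invariant because $V$ intertwines) is likewise more careful.
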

\noindent Combined with Proposition~\ref{prop:squ}, this implies that all $G$--invariant pure Gaussian states of a composite system can be obtained from a fixed $G$--invariant pure Gaussian state that is unentangled across subsystems.

\subsection*{Example: U(1) asymmetry}

Recall that a $(2n \times 2n)$ real matrix $M$ commutes with $\Un(1)$ representation $O(\theta) = \exp({-\theta \Omega Q})$, where $Q = \bigoplus_{j=1}^n q_j I$ if and only if its $(2 \times 2)$ blocks $M_{jk}$ satisfy the condition of Eq.(\ref{eq:u1Hblock}), leading to
\begin{equation}\label{eq:u1M}
    M_{jk} = 
    \begin{cases}
        m_{jk}\exp(\theta_{jk} \Omega) \,, 
        &\text{if } q_j = +q_k \,, \\
        m_{jk}Z\exp(\theta_{jk} \Omega) \,, 
        &\text{if } q_j = -q_k \,,
    \end{cases}
\end{equation}
for real coefficients $m_{jk}$ and $\theta_{jk} \in [0, 2\pi)$.
Then, we can specify a resource theory of $\Un(1)$ asymmetry with the following ingredients:\\

\noindent\textbf{Free states:} Gaussian states with displacement vector $\bmd$ that is supported on modes with zero charge, and covariance matrix $\sigma$ of the form of Eq.(\ref{eq:u1M}).
Note that in order for $\sigma$ to be symmetric, we require that $m_{jk} = m_{kj}$ and $\theta_{jk} = \mp \theta_{kj}$ for $q_j = \pm q_k$.
In particular, the diagonal blocks are proportional to the identity, $\sigma_{jj} = a_{jj}I$.\\

\noindent\textbf{Free Channels:} Gaussian channels characterized by Eq.(\ref{eq:sp_channel}) in phase space, with $\bmxi$ supported on modes with zero charge, and $X,Y$ of the form of Eq.(\ref{eq:u1M}).
Again, for $Y$ to be symmetric, we require that $m_{jk} = m_{kj}$ and $\theta_{jk} = \mp \theta_{kj}$ for $q_j = \pm q_k$.\\

\noindent\textbf{Free Unitaries:} Gaussian unitaries $\hat{D}_{\bmxi} \hat{V}$, with $\bmxi$ supported on modes with zero charge, and symplectic matrix $V$ of the form of Eq.(\ref{eq:u1M}).

\subsection{Two types of Gaussian asymmetry}
\label{sec:twotypes}

Gaussian channels do not mix the displacement vector and the covariance matrix of a quantum state $\hat{\rho}$, as seen in Eq.(\ref{eq:xychannel}), which has implications in the manipulation of Gaussian asymmetry.

Given a symplectic Gaussian representation $g \mapsto \hat{S}(g)$, we say that a state $\hat{\rho}$ contains type--1 asymmetry if $S(g)\bmd \neq \bmd$ for some $g \in G$, i.e. its displacement vector breaks the symmetry condition satisfied by $G$--invariant states. 
Similarly, state $\hat{\rho}$ contains type--2 asymmetry if $S(g) \sigma S(g)^{\rm T} \neq \sigma$ for some $g \in G$, i.e. its covariance matrix breaks the symmetry condition satisfied by $G$--invariant states.
Crucially, these two types of asymmetry remain distinct under Gaussian covariant channels, that is, the output state contains each type, only if the input state contains it.
Non-Gaussian covariant operations, on the other hand, can convert type--1 to type--2 asymmetry, but not the other way around. 
Specifically, we prove the following statement in Appendix~\ref{app:types}.
\begin{restatable}{proposition}{asymtypes}\label{prop:types}
    Covariant channels cannot transform a Gaussian state $\hat{\rho}$ with no type--1 asymmetry to a state which contains type--1 asymmetry, regardless of the type--2 asymmetry of $\hat{\rho}$.
    
    On the contrary, for any pure Gaussian state which contains type--1 asymmetry, there exists a, necessarily non-Gaussian, covariant channel that maps it to a state which contains type--2 asymmetry.
\end{restatable}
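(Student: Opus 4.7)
The plan is to prove the two parts of the proposition separately. For the first statement, the Gaussian-covariant case is immediate: by Eq.~(\ref{eq:xychannel}) the output displacement is $\bmd' = X\bmd + \bmxi$, and the equivariance conditions of Eq.~(\ref{eq:sp_channel}) give $XS_A(g) = S_B(g)X$ together with invariance of $\bmxi$ in the output representation, so that $S_B(g)\bmd' = XS_A(g)\bmd + \bmxi = X\bmd + \bmxi = \bmd'$ whenever $S_A(g)\bmd = \bmd$. To extend to arbitrary (possibly non-Gaussian) covariant channels, I would pass to the Heisenberg picture, writing $\bmd'_k = \tr[\hat{\rho}\,\mathcal{E}^*(\hat{r}_{B,k})]$, and using the equivariance $\hat{S}_A(g)^\dag\,\mathcal{E}^*(\hat{r}_{B,k})\,\hat{S}_A(g) = \sum_j S_B(g)_{kj}\,\mathcal{E}^*(\hat{r}_{B,j})$ induced by channel covariance. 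The strategy is then to reduce to the centered case $\bmd=0$ via a displacement $\hat{D}_{-\bmd}$ (which commutes with $\hat{S}(g)$ up to phase because $\bmd$ is $G$-invariant) and to decompose the adjoint observables $\mathcal{E}^*(\hat{\bmr}_B)$ into $G$-irrep components, so that only the components associated with the trivial irrep of the output representation can contribute to $\bmd'$.

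For the second statement, the argument is constructive. By Proposition~\ref{prop:squ} we may assume without loss of generality that the representation is passive, so that the vacuum is $G$-invariant. Given a pure Gaussian $\ket{\psi}$ with $\bmd \neq S(g_0)\bmd$ for some $g_0 \in G$, consider the $G$-covariant heterodyne POVM $\hat{E}(\bmv) \propto \hat{D}_{\bmv}\ket{0}\bra{0}\hat{D}_{\bmv}^\dag$ (which satisfies $\hat{S}(g)^\dag \hat{E}(\bmv) \hat{S}(g) = \hat{E}(S(g)^{-1}\bmv)$), and conditionally prepare a displaced squeezed vacuum $\hat{\sigma}(\bmv)$ with squeezing direction covariantly aligned to $\bmv$, so that $\hat{S}(g)\hat{\sigma}(\bmv)\hat{S}(g)^\dag = \hat{\sigma}(S(g)\bmv)$. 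The measure-and-prepare channel
\begin{equation*}
    \mathcal{E}(\hat{\rho}) = \int d^{2n}\bmv\,\tr[\hat{\rho}\,\hat{E}(\bmv)]\,\hat{\sigma}(\bmv)
\end{equation*}
is covariant by construction. For input $\ket{\psi}$, the outcome distribution $\tr[\ket{\psi}\bra{\psi}\,\hat{E}(\bmv)]$ is not $G$-invariant (it is peaked near the non-invariant $\bmd$), so the resulting mixture of conditional squeezed states has a non-invariant covariance matrix, exhibiting type--2 asymmetry.

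The main obstacle is the general (non-Gaussian) channel case of Part 1. The Gaussian case reduces to a one-line calculation, but the non-Gaussian case requires controlling how higher-order moments of the Gaussian input feed into $\bmd'$ through the adjoint observables $\mathcal{E}^*(\hat{\bmr}_B)$, which may contain polynomials of arbitrary degree in $\hat{\bmr}_A$. Since $\hat{\rho}$ is only assumed $G$-invariant at the level of its first moment, one must exploit the Gaussian structure of $\hat{\rho}$ beyond invariance of $\bmd$, likely via a coordinated irrep decomposition of both $\mathcal{E}^*(\hat{\bmr}_B)$ and the moments of $\hat{\rho}$, together with the symplectic structure that links second moments to $\Omega$. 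A secondary obstacle in Part 2 is verifying the required covariant alignment of the conditional squeezing $\hat{\sigma}(\bmv)$ for an arbitrary compact symmetry group and a general squeezed passive representation, so that the chosen family of output states is genuinely equivariant under the whole orbit of $\bmv$.
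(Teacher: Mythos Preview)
For Part~1 (general covariant channels), you correctly identify the non-Gaussian case as the crux, but your irrep-decomposition plan in the Heisenberg picture does not close the gap. The observables $\E^*(\hat{r}_{B,k})$ do transform under $S_B$ via conjugation by $\hat{S}_A(g)$, but showing that $\tr[\hat{\rho}\,\E^*(\hat{r}_{B,k})]$ vanishes on non-trivial irreps would amount to $\hat{\rho}$ being fully $G$-invariant --- which it is not when type--2 asymmetry is present. The paper's key insight, which you are missing, is a parity argument: a Gaussian state whose displacement lies in the trivial-irrep sector is \emph{fully} invariant under the $\mathbb{Z}_2$ element that acts as $-I$ on the non-trivial modes, because the covariance matrix (and all even moments) are automatically invariant under $\sigma \mapsto (-I)\sigma(-I)^{\rm T}$. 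Once this $\mathbb{Z}_2$ lies in the image of the representation, $G$-covariance of $\E$ implies $\mathbb{Z}_2$-covariance, so the output is $\mathbb{Z}_2$-invariant and its first moment vanishes on non-trivial modes. Gaussianity of $\hat{\rho}$ is used precisely here: it upgrades ``$\bmd$ invariant'' to ``$\hat{\rho}$ invariant under this $\mathbb{Z}_2$,'' which is exactly what your plan lacked a mechanism for.

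For Part~2, your heterodyne measure-and-prepare is plausible but, as you note, requires a covariant alignment of the conditional squeezing that is awkward to verify for a general representation. The paper's route is more direct: if $\ket{\psi}$ already has type--2 asymmetry the identity channel suffices, so one may assume $\sigma$ is $G$-invariant and apply the $G$-invariant symplectic $\sqrt{\sigma}^{-1}$ to reduce to a coherent input $\ket{\alpha}$ with non-invariant $\alpha$; then act with the non-Gaussian $G$-invariant \emph{unitary} $\hat{U} = \hat{I} - 2\ketbra{0}$. Since $\ket{0}$ is fixed by passive representations, $\hat{U}$ is manifestly covariant, and the output covariance (Eqs.~(\ref{eq:ngmoment1})--(\ref{eq:ngmoment2})) acquires a block proportional to $\alpha\alpha^{\rm T}$, which is non-invariant whenever $\alpha$ is. This sidesteps both of the obstacles you raise.
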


In Section~\ref{sec:type1mono}, we present monotones that are defined exclusively through either the displacement vector or the covariance matrix, which will be called type--1 and type--2 Gaussian asymmetry monotones, respectively.

While for Gaussian states, lack of asymmetry in the displacement and covariance matrix implies that state is invariant under the action of symmetry, this is not the case for  non-Gaussian states, for which higher moments are not determined by the first two.
For instance, consider the non-Gaussian states $\hat{\rho}_1 = \frac{1}{3}\left(\ket{0} + \ket{3} + \ket{6}\right)\left(\bra{0} + \bra{3} + \bra{6}\right)$ and $\hat{\rho}_2 = \frac{1}{3}\left(\ketbra{0} + \ketbra{3} + \ketbra{6}\right)$. 
They have indistinguishable first two moments, with zero displacement and covariance matrix $7I$, but, under phase shifts, $\hat{\rho}_1$ transforms non-trivially, whereas $\hat{\rho}_2$ is incoherent and remains invariant.
Therefore, the asymmetry content of $\hat{\rho}_1$ lies entirely in its higher-order central moments.
Consequently, to understand and characterize the asymmetry of non-Gaussian states, one can extend the definitions of type--1 and type--2 asymmetry to higher central moments.

\subsection{Complete characterization of first moments under Gaussian covariant dynamics}
\label{sec:char_d}

Consider the set of reachable states by Gaussian covariant channels from a given initial state.
Here, we characterize the displacement vectors of the reachable states.
Recall that for any, possibly non-Gaussian, input state with displacement vector $\bmd$, the displacement vector of the output state of a Gaussian channel is $\bmd' = X\bmd + \bmxi$, where, in general, $\bmxi$ and $X$ are unrestricted real vector and matrix, respectively.
For a covariant Gaussian channel, Eq.(\ref{eq:sp_channel}) implies that the displacement vector $\bmxi$ vanishes on modes where the symmetry acts non-trivially, but remains fully unrestricted on modes carrying the trivial representation of symmetry.
Hence, in the following discussion, we restrict attention to the modes where the symmetry acts non-trivially.
Moreover, for covariant channels, the transformation matrix $X$ must satisfy the commutation relation $[X, S(g)] = 0$ for all $g \in G$, where, for simplicity, we assume that the input and output systems have the same number of modes and carry the same representation of the symmetry.

Consider a group $G$ with squeezed passive representation $g \mapsto S(g)$.
The set of displacement vectors of reachable states is
\begin{equation}
\begin{split}
    \mathcal{S}(\bmd) \coloneqq {\rm span}\{X \bmd:\ &X \in \mathbb{M}_{\mathbb{R}}(2n,2n), \\
    &[X, S(g)] = 0 \text{ for all } g \in G\} \,,
\end{split}
\end{equation}
where $\mathbb{M}_{\mathbb{R}}(2n,2n)$ denotes the set of all real $(2n \times 2n)$ matrices.
To characterize the subspace $\mathcal{S}(\bmd)$, we consider the irrep decomposition of the representation given by Eq.(\ref{eq:irrep_dec}) to irreps.
In particular, we write
\begin{equation}\label{eq:uustar_decomp}
    K S(g) K^{-1} = \bigoplus_{\mu \in M} U_\mu(g) \otimes I_{\N_\mu} \,,
\end{equation}
where $M \coloneqq \Lambda \cup \Lambda^*$ is the union of inequivalent irreps $\Lambda$ in the space of annihilation operators and their duals $\Lambda^*$, and $\N_\mu$ is the multiplicity space of irrep $\mu$.
Note that, in general, matrix $K$ is neither unitary nor symplectic, but is invertible. 

Let $T^{(\mu)}: \mathbb{M}_\mathbb{C}(2n \times 2n) \rightarrow \mathbb{M}_\mathbb{C}(d_\mu \times d_\mu)$ be the completely positive map that projects $(2n \times 2n)$ complex matrices to irrep $\mu$. 
Formally, this map is given by
\begin{equation}\label{eq:Tmu}
    T^{(\mu)}(\cdot) \coloneqq \tr_{\mathcal{N}_\mu}[\Pi_\mu K(\cdot)K^\dag \Pi_\mu] \,,
\end{equation}
where $\Pi_\mu$ is the projector onto irrep $\mu$. 
Roughly speaking, for modes carrying irrep $\mu$, the support of $T^{(\mu)}(\bmd\bmd^{\dag})$ determines\footnote{Note that we have used $\bmd^\dag$ instead of $\bmd^{\rm T}$ to include the case where $\bmd$ is expressed in the basis defined by creation and annihilation operators, where, in general, $\bmd$ is complex.} the subspace $\mathcal{S}(\bmd)$ of displacement vectors that can be reached from the initial state with displacement vector $\bmd$. 

The above intuition is formulated in the following proposition, which fully characterizes  $\mathcal{S}(\bmd)$, the set of displacement vectors of reachable states.
\begin{restatable}{proposition} {suppsubset}\label{prop:suppd}
    Let $G$ be a group with squeezed passive representations $\hat{S}_A(g): g \in G$ and $\hat{S}_B(g): g \in G$ on input system $A$ and output system $B$ with respective sets of inequivalent irreps $M_A$ and $M_B$ in the space of annihilation and creation operators.
    
    Given any quantum state $\hat{\rho}$ with first moment $\bmd_A$, there exists a Gaussian channel that is covariant under input/output representations $\hat{S}_A, \hat{S}_B$ and maps $\hat{\rho}$ to a state with first moment $\bmd_B$ if and only if
    \begin{equation}\label{eq:suppsubseq}
        {\rm supp}\left(T^{(\mu)}\left(\bmd_B\bmd_B^{\dag}\right)\right) \subseteq {\rm supp}\left(T^{(\mu)}\left(\bmd_A\bmd_A^\dag\right)\right) \,,
    \end{equation}
    for all $\mu \in (M_A \cap M_B)\setminus\{\mu_{\rm trivial}\}$, where $\mu_{\rm trivial}$ is the trivial irrep and $T^{(\mu)}$ is defined in Eq.(\ref{eq:Tmu}).
\end{restatable}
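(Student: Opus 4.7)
The plan is to reduce reachability of displacement vectors under covariant Gaussian channels to a standard intertwiner problem via Schur's lemma, and then recognize the intertwiner image as exactly the support condition imposed by $T^{(\mu)}$. Recall that a Gaussian channel acts on the first moment as $\bmd\mapsto X\bmd+\bmxi$, and that covariance under $\hat{S}_A,\hat{S}_B$, as recorded in Eq.(\ref{eq:sp_channel}), forces (i) $\bmxi$ to satisfy $S_B(g)\bmxi=\bmxi$, hence to be supported in the trivial irrep sector of $B$, and (ii) $X$ to intertwine the two representations, $X S_A(g)=S_B(g) X$ for all $g\in G$. Conversely, given any admissible pair $(X,\bmxi)$, I can always complete it to a valid $G$--covariant Gaussian channel by choosing a $G$--invariant noise matrix $Y$ large enough that the channel uncertainty relation $Y+i(\Omega_B-X\Omega_A X^{\rm T})\ge 0$ holds; such a $Y$ always exists, e.g.\ a sufficiently large scalar multiple of the covariance matrix of a pure $G$--invariant Gaussian state on $B$, whose existence is guaranteed by Proposition~\ref{prop:squ}(iii). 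The problem therefore reduces to characterizing the attainable set $\{X\bmd_A+\bmxi\}$ as $X,\bmxi$ range over all admissible values.

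Next I would apply the isotypic decomposition Eq.(\ref{eq:uustar_decomp}), rewriting each representation as $\bigoplus_\mu U_\mu(g)\otimes I_{\mathcal{N}_\mu}$ after conjugation by $K_A$ or $K_B$. By Schur's lemma, any intertwiner is block-diagonal across irreps, contributes only for $\mu\in M_A\cap M_B$, and within each such block takes the form $I_{d_\mu}\otimes M_\mu$ for an arbitrary linear map $M_\mu:\mathcal{N}_\mu^A\to\mathcal{N}_\mu^B$. Encoding the $\mu$--component of $\bmd$ as a $(d_\mu\times m_\mu)$ coefficient matrix $C_\mu$ in the decomposed basis, the action of the intertwiner becomes simply $C^B_\mu=C^A_\mu M_\mu^{\rm T}$; letting $M_\mu$ range over all linear maps, the attainable $C^B_\mu$ are precisely those whose column span lies inside the column span of $C^A_\mu$. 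The trivial sector is unconstrained because $\bmxi$ is free there, which is exactly why the proposition excludes $\mu_{\text{trivial}}$.

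The final step is to observe that $T^{(\mu)}(\bmd\bmd^\dag)$ reduces, by its definition in Eq.(\ref{eq:Tmu}) after tracing out the multiplicity space, to $C_\mu C_\mu^\dag$, whose support (range) coincides with the column span of $C_\mu$. The reachability condition ``column span of $C^B_\mu$ contained in column span of $C^A_\mu$'' therefore translates verbatim into the support inclusion of Eq.(\ref{eq:suppsubseq}), yielding both necessity and sufficiency.

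The main obstacle will be the bookkeeping between the real phase-space structure and the complex irrep basis induced by $K$, which in general is neither unitary nor symplectic. Because $\bmd$ is real, the coefficient matrices $C_\lambda$ and $C_{\lambda^*}$ associated to a pair of dual irreps are related by complex conjugation, and the intertwiner blocks $M_\lambda,M_{\lambda^*}$ must be chosen compatibly so that the reassembled $X$ is a real matrix on phase space. Crucially, this reality constraint does not alter the support-inclusion characterization, because the support condition on $\lambda$ holds if and only if it holds on $\lambda^*$; once this pairing is tracked carefully, the Schur argument above closes the proof.
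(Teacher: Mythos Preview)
Your proposal is correct and follows essentially the same route as the paper: reduce to an intertwiner problem on the isotypic decomposition, apply Schur's lemma block by block, and identify $T^{(\mu)}(\bmd\bmd^\dag)$ with the Gram matrix whose support is the relevant column span (the paper packages this last step as a separate bipartite-vector lemma, Lemma~\ref{lem:supp}). The only notable difference is in handling the reality constraint on $X$: rather than tracking the $\lambda\leftrightarrow\lambda^*$ pairing as you propose, the paper uses a cleaner averaging trick---once a complex intertwiner $X$ solving $X\bmd_A=\bmd_B$ is found, $X^*$ also intertwines and solves the same equation (since $\bmd_A,\bmd_B,S_A,S_B$ are all real), so $\tfrac{1}{2}(X+X^*)$ is the desired real solution without any block-level bookkeeping.
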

\noindent We defer the full proof to Appendix~\ref{app:suppd}.

\subsection{Example: SU(2)--covariant amplifiers and attenuators}
\label{sec:exSU2}

Recall the Schwinger boson representation of $\mathrm{SU}(2)$ in Eq.(\ref{eq:repSU}), defined on two modes with annihilation operators $\hat{a}_1$ and $\hat{a}_2$. 
Consider general Gaussian bosonic channels on this system, and let $X, Y$ be matrices describing the action of such channel in phase space via Eq.(\ref{eq:xychannel}). 
For $\SU(2)$--covariant channels, the displacement vector $\bmxi$ is zero, and the matrices $X$ and $Y$ satisfy
\begin{equation}\label{eq:XYSU}
    [X, S(U)] = 0 \,, \quad S(U) Y S(U)^{\rm T} = Y \,,
\end{equation}
for all $U \in \mathrm{SU}(2)$, where $S(U)$ is the symmetry representation defined in Eq.(\ref{eq:repSU}), which in the $(\hat{a}_1, \hat{a}_2, \hat{a}^\dag_1, \hat{a}^\dag_2)$ basis is given by  
\begin{equation}\label{eq:SU}
    S(U) = U \oplus U^* :\ U \in \SU(2) \,.
\end{equation}

Using the fact that $U^* = \Omega^{\rm T} U \Omega$, this representation can be decomposed into two copies of the defining representation of $\SU(2)$,
\begin{equation}\label{eq:uustar}
    (I_2 \oplus \Omega)(U \oplus U^*)(I_2 \oplus \Omega^{\rm T}) = U \oplus U \cong U \otimes I_2 \,,
\end{equation}
for all $U \in \SU(2)$.
As we show in Appendix~\ref{app:su2cov}, a general matrix $X$ that commutes with the action of symmetry takes the following form in the $(\hat{x}_1,\hat{p}_1,\hat{x}_2,\hat{p}_2)$ basis,
\begin{equation}\label{eq:Xchar}
    X = e^{\phi\Omega_2} \left( x_+ I_4 + x_- e^{\theta\Omega_2} X_- \right)
\end{equation}
for arbitrary $x_\pm \in \mathbb{R}$ and $\theta,\phi \in [0,2\pi)$, where
\begin{align}
    X_- &= \Omega \otimes Z = (Z \oplus Z)V_{\rm bs}\left(\frac{\pi}{2}\right) = (Z \oplus -Z)\text{SW}_{12} \nonumber\\
    &= \begin{pmatrix}
         &  & 1 &  \\
         &  &  & -1 \\
        -1 &  & &  \\
         & 1 &  &  \\
    \end{pmatrix} \,,
\end{align}
in the $(\hat{x}_1,\hat{p}_1,\hat{x}_2,\hat{p}_2)$ basis.
Here, $Z = \diag(1,-1)$ is a Pauli matrix, and $\text{SW}_{12}$ is the swap of modes 1 and 2.

Note that $X_-$ is an orthogonal but non-symplectic matrix, satisfying $X_-^4=I_4$.
In the Hilbert space, it corresponds to the anti-unitary transformation
\begin{equation}\label{eq:hatX-}
    \hat{X}_- = \mathcal{C}\ \hat{V}_{\rm bs}\left(\frac{\pi}{2}\right) = \mathcal{C} \exp(i\pi\hat{J}_y) \,,
\end{equation}
where $\mathcal{C}$ is the anti-unitary corresponding to complex conjugation in the common eigenbasis of $\hat{x}_1, \hat{x}_2$, and $\hat{J}_y = i[\hat{J}_z, \hat{J}_x] = \frac{i}{2}(\hat{a}^\dag_1\hat{a}_2-\hat{a}_1\hat{a}^\dag_2)$. 
The second expression in Eq.(\ref{eq:hatX-}) has the same form as the time-reversal operator on spin 1/2 systems. 
Indeed, under $\hat{X}_-$, the angular momentum operators acquire a minus sign,
\begin{equation}
    \hat{X}_- \hat{J}_w \hat{X}_-^{-1} = -\hat{J}_w \,,\quad \text{for } w \in \{x, y, z\} \,,
\end{equation}
which means that $\hat{X}_-$ acts as time reversal if the two modes describe an angular momentum degree of freedom.
On the other hand, if the two modes describe, e.g. two independent mechanical oscillators, then the time-reversal transformation is simply complex conjugation in the $\hat{x}$ basis, i.e. $\mathcal{C}$, whereas $\hat{X}_-$ acts as $(\hat{X}_- \psi)(x_1,x_2)=\psi(x_2,-x_1)^*$ on the wavefunction.

As an anti-unitary, $\hat{X}_-$ can be physically realized only when the process adds a sufficient amount of noise to the system, which is captured by matrix $Y$.
In fact, the only symmetric matrices $Y$ that satisfy the invariance condition in Eq.(\ref{eq:XYSU}) are of the form $Y = y I_4,\ y \in \mathbb{R}$ in the $(\hat{x}_1,\hat{p}_1,\hat{x}_2,\hat{p}_2)$ basis, as can be seen by imposing that Eq.(\ref{eq:Xchar}) is equal to its transpose.
This leads to the following complete-positivity condition for 2--mode $\SU(2)$--covariant channels,
\begin{equation}\label{eq:SU2channel_uncertainty}
    y \ge \sqrt{\left((x_+ - 1)^2 + x_-^2\right)\left((x_+ + 1)^2 + x_-^2\right)} \,,
\end{equation}
which is independent of phases $\phi, \theta$.

Up to a global phase shift $e^{\phi\Omega_2}$ at the output of the channel, the matrix $X$ is specified by two real parameters $x_\pm$ and a phase $\theta$.
Two special cases of interest are channels with $X = x_+ I_4$ and $X = x_- X_-$, which are analogous to phase-covariant and phase-contravariant channels discussed in Section~\ref{sec:irreps}.
In fact, Eq.(\ref{eq:SU2channel_uncertainty}) reduces to the complete-positivity condition for these channels when $x_- = 0$ and $x_+ = 0$, respectively.
However, unlike in the case of $\mathrm{U}(1)$ symmetry, since $\SU(2)$ is self-dual, both types of channels are covariant for the same system.
These channels can be interpreted as $\SU(2)$--covariant amplifiers and attenuators. In particular, for $x_-=0$  channels with $|x_+|>1$ and $|x_+|<1$ are $\SU(2)$--covariant  amplifiers and attenuators, respectively.

It can then be shown that under such channels, an input with a non-zero displacement vector can be mapped to an output with an arbitrary displacement vector. In other words, as long as the input displacement vector is non-zero, $\SU(2)$ symmetry and Gaussianity do not restrict the displacement of the output state.
Note that the space of displacement vectors on two modes is a 4D real vector space, which is consistent with the fact that matrices $X$ satisfying Eq.(\ref{eq:Xchar}) also form a 4D real space (indeed, this space carries a quaternionic structure). Of course, these constraints do restrict the covariance matrix of the output, ensuring that monotonicity of asymmetry and Gaussian asymmetry monotones, as discussed below, is not violated.  

We conclude by applying Proposition~\ref{prop:suppd} to this example, which immediately implies that, for inputs with non-zero displacement, the displacement vector of the outputs is unrestricted.  
The matrix $K$ in Eq.(\ref{eq:uustar_decomp}) takes the form $I \oplus \Omega$. 
Acting on the displacement vector $\bmd=(\langle\hat{a}_1\rangle, \langle\hat{a}_2\rangle, \langle\hat{a}_1\rangle^*, \langle\hat{a}_2\rangle^*)^{\rm T}$, this gives
\begin{equation}
K \bmd=(\langle\hat{a}_1\rangle, \langle\hat{a}_2\rangle, -\langle\hat{a}_2\rangle^*, \langle\hat{a}_1\rangle^*)^{\rm T} \,.
\end{equation}
The matrix $T(\bmd\bmd^{\dag})$ is obtained by taking the partial trace over the multiplicity subsystem with respect to the tensor product decomposition defined in Eq.(\ref{eq:uustar}):  
\begin{equation}\label{eq:col}
    {\rm supp}(T(\bmd\bmd^{\dag})) = \text{span}\left\{ 
    \begin{pmatrix}
    \langle \hat{a}_1\rangle \\ 
    \langle \hat{a}_2\rangle
    \end{pmatrix} 
    \,, \quad
    \begin{pmatrix}
    -\langle \hat{a}_2\rangle^* \\ 
    \langle \hat{a}_1\rangle^*
    \end{pmatrix}    
    \right\} \,.
\end{equation}
Note that $T(\bmd\bmd^\dag)$ has full support, i.e. ${\rm rank}(T(\bmd\bmd^\dag))=2$, unless both $\langle \hat{a}_1\rangle$ and $\langle \hat{a}_2\rangle$ vanish, in which case $T(\bmd\bmd^\dag)$ is zero, as can be seen by taking the determinant of the matrix whose columns are given in Eq.(\ref{eq:col}).
As we discuss in the next section, ${\rm rank}(T(\bmd\bmd^\dag))$ is a Gaussian asymmetry monotone.

\section{Gaussian asymmetry monotones}
\label{sec:monotones}

Monotones are used in resource theories to quantify how resourceful a state is.
Within the context of Gaussian asymmetry, we define monotones as follows.
\begin{definition}\label{def:monotone}
    Suppose all bosonic systems under consideration are equipped with Gaussian unitary representations of a group $G$. We call a function from the set of quantum states of these systems to the non-negative real numbers a \emph{Gaussian asymmetry monotone} if it is non-increasing under Gaussian $G$--covariant operations.

    If a Gaussian asymmetry monotone depends solely on the displacement vector or the covariance matrix of the quantum state, we further call it a \emph{type--1} or \emph{type--2} asymmetry monotone, respectively.
\end{definition}
It immediately follows that all Gaussian asymmetry monotones remain conserved under (reversible) closed Gaussian dynamics that respect the symmetry.

For instance, all asymmetry monotones~\cite{vaccaro2008tradeoff, marvian2014extending, Marvian2022operational, wigner1963information, skotiniotis2012alignment, toloui2011constructing, marvian2014symmetry}, such as the relative entropy of asymmetry~\cite{vaccaro2008tradeoff, gour2009measuring} and quantum Fisher information metrics~\cite{marvian2020coherence, Marvian2022operational}, are non-increasing under all covariant operations and, therefore, also under Gaussian covariant operations, which are a special case. Consequently, they are automatically Gaussian asymmetry monotones.
Similarly, any non-Gaussianity monotone, such as the Wigner negativity~\cite{kenfack2004negativity}, relative entropy of non-Gaussianity~\cite{genoni2008quantifying}, and stellar rank~\cite{chabaud2020stellar} are monotones for Gaussian asymmetry.
Note that monotones that are faithful, i.e. achieve a minimum if and only if the state is free, in the theories of asymmetry or non-Gaussianity are not generally faithful in the theory of Gaussian asymmetry.

\subsection{Relative entropy of asymmetry}
\label{sec:relent}

An entropic monotone, commonly considered in resource theories~\cite{chitamber2019quantum}, is the relative entropy of resource $\Gamma$.
In the case of asymmetry, consider the twirling operation~\cite{marvian2014symmetry} $\T(\hat{\rho}) = \int {\rm d}g\ \hat{S}(g) \hat{\rho} \hat{S}(g)^\dag$, where the integral is taken with respect to the Haar measure ${\rm d}g$ of group $G$ with unitary representation $g \mapsto \hat{S}(g)$.
Then,
\begin{align}\label{eq:relent_asym}
    \Gamma(\hat{\rho}) 
    &\coloneqq \inf_{\hat{\omega}\in\F} S_v(\hat{\rho}\|\hat{\omega}) \nonumber\\
    &= S_v(\hat{\rho}\|\T(\hat{\rho})) 
    = S_v(\T(\hat{\rho})) - S_v(\hat{\rho}) \,,
\end{align}
where $\F$ is the set of bosonic quantum states that are invariant under the action of the symmetry group $G$, and the von Neumann entropy $S_v$ of state $\hat{\rho}$ is
\begin{equation}\label{eq:vonneumann}
    S_v(\hat{\rho}) \coloneqq -\tr[\hat{\rho} \log(\hat{\rho})] \,.
\end{equation}
In analogy with other quantum resource theories, function $\Gamma$ is often called the relative entropy of asymmetry\footnote{The quantity $S_v(\T(\hat{\rho})) - S_v(\hat{\rho})$ was first introduced in Ref.~\cite{vaccaro2008tradeoff}, and was called \emph{asymmetry}. Then, in Ref.~\cite{gour2009measuring}, it was shown that it can be expressed in terms of the relative entropy.}.

Note that if, instead, we define $\Gamma$ by taking $\F$ to be the set of Gaussian $G$--invariant states, then $\Gamma$ is faithful in the resource theory of Gaussian asymmetry, but it becomes harder to compute, as the last two equations in Eq.(\ref{eq:relent_asym}) do not hold in general.

Consider the 1--mode $\Un(1)$ representation $e^{i\theta} \mapsto e^{i\theta \hat{a}^\dag\hat{a}}$.
In Appendix~\ref{app:monotone}, we calculate the relative entropy of asymmetry of $\hat{\rho}$ in terms of its mean photon number $\langle n \rangle \coloneqq \tr[\hat{\rho}\hat{a}^\dag\hat{a}]$ for pure coherent and squeezed states. 
In particular, for a coherent state $\ket{\alpha} \coloneqq D_{\sqrt{2}({\rm Re}(\alpha),{\rm Im}(\alpha))^{\rm T}}\ket{0}$,
\begin{align}
    \Gamma(\ketbra{\alpha})
    = \frac{1}{2}\log\langle n \rangle + \log(\sqrt{2\pi e}) + O(\langle n \rangle^{-1}) \,, \label{eq:asym_coherent}
\end{align}
where $\langle n \rangle = |\alpha|^2$.
Similarly, for a squeezed state $\ket{r} \coloneqq \exp(r(\hat{a}^{\dagger 2} - \hat{a}^2))\ket{0}$,
\begin{align}
    \Gamma(\ketbra{r})
    = \log\langle n \rangle + \log(2^{-3}\sqrt{\pi}e^{2-\gamma/2}) + o(1) \,, \label{eq:asym_1squeezed}
\end{align}
where $\langle n \rangle = \sinh^2r$, and $\gamma \approx 0.577$ is the Euler– Mascheroni constant.
We plot both monotones in Fig.~\ref{fig:monotones}.

\begin{figure}[t]
    \centering{\includegraphics[width=0.48\textwidth]{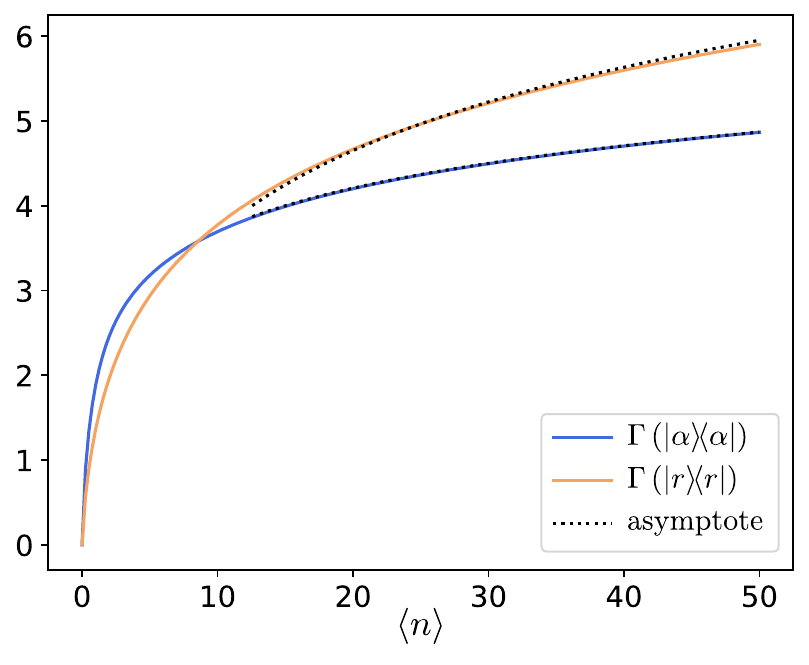}}
    \caption{\textbf{Relative entropy of asymmetry for 1--mode states}. 
    We consider 1--mode $\Un(1)$ symmetry and calculate the Holevo asymmetry $\Gamma$ of coherent states $\ket{\alpha}$ and squeezed vacuum states $\ket{r}$ as a function of mean photon number $\langle n \rangle$.
    The maximum mean photon number $\langle n \rangle = 50$ in the plot corresponds to $|\alpha| = \sqrt{50} \approx 7.07$ and $r = \sinh^{-1}\left(\sqrt{50}\right) \approx 2.65$.
    The dashed lines represent the asymptotic behavior of the monotone as $r, |\alpha| \rightarrow \infty$.
    For $\langle n \rangle \lesssim 8.54$, coherent states have a larger relative entropy of asymmetry, whereas squeezed states exhibit the dominant asymptotic behavior.
    }
    \label{fig:monotones}
\end{figure}

The relative entropy of asymmetry can be extended by considering the weighted twirling operation,
\begin{equation}
    \T_p(\hat{\rho}) = \int {\rm d}g\ p(g) \hat{S}(g) \hat{\rho} \hat{S}(g)^\dag \,,
\end{equation}
where $p$ is a distribution over group elements.
If $p$ is sharp at some group element $g \in G$, then $\T_p(\hat{\rho}) =  \hat{S}(g) \hat{\rho} \hat{S}(g)^\dag$.
On the other hand, if $p$ is the uniform distribution over $G$, then $\T_p \equiv \T$ is the twirling operation.
Then, one can define the so-called Holevo asymmetry monotones~\cite{marvian2014symmetry},
\begin{equation}\label{eq:holevo_monotone}
    \Gamma_p(\hat{\rho}) \coloneqq S_v(\T_p(\hat{\rho})) - S_v(\hat{\rho}) \,,
\end{equation}
for some distribution $p$ over the group $G$.
In addition to the relative entropy of asymmetry, and its generalization to Holevo asymmetry measures, an important class of asymmetry monotones with clear operational interpretations is obtained from the quantum Fisher information~\cite{Marvian2022operational, yadavalli2024optimal, marvian2020coherence}. 
For further approaches to constructing asymmetry monotones, see~\cite{marvian2014symmetry}.

Based on the split of Gaussian asymmetry into type--1 and type--2 asymmetry, we may ask whether we can similarly construct separate monotones for each type.
Indeed, we provide such examples in the following sections.

\subsection{Type--1 asymmetry monotones}
\label{sec:type1mono}

We present a family of functions that depend solely on the first moments of states and are monotone under covariant Gaussian channels. 
Recall the setting of Proposition~\ref{prop:suppd}.  
For each irrep $\mu$ of a symmetry group $G$ and for any, possibly non-Gaussian, state $\hat{\rho}$ with displacement vector $\bmd$, define the integer
\begin{equation}\label{eq:typ1monotone}
    f_\mu(\hat{\rho}) \coloneqq{\rm rank}\left(T^{(\mu)}\left(\bmd\bmd^\dag\right)\right) \,.
\end{equation}
The monotonicity of this quantity under covariant Gaussian channels follows immediately from Eq.(\ref{eq:suppsubseq}). 
Since the support of $T^{(\mu)}(\bmd_B\bmd_B^{\dag})$ for the output state is contained in the support of $T^{(\mu)}(\bmd_A\bmd_A^\dag)$ for the input state, its dimension cannot be larger.

It is also clear that $f_\mu$ is an integer bounded by $d_\mu$, the dimension of irrep $\mu$. 
In particular, for Abelian groups, it can only take values $0$ or $1$, indicating whether or not the first moment of the state has a component in a mode that supports irrep $\mu$.

\subsection{Type--2 asymmetry monotones}
\label{sec:type2mono}

Here, we present a general recipe for constructing type--2 Gaussian asymmetry monotones, based on the notion of type--1 asymmetry--destroying maps.
One can define the type--1 asymmetry-destroying map $\D^{(1)}_{\bmd}: (\bmd, \sigma) \mapsto (\bmo, \sigma)$ that removes the first moment $\bmd$ of a state $\hat{\rho}$.
This map explicitly depends on input $\bmd$ and can be realized as a displacement $\D^{(1)}_{\bmd}(\hat{\rho}) = \hat{D}_{-\bmd}\hat{\rho}\hat{D}_{\bmd}$.

Consider a Gaussian $G$--covariant channel $\E$ whose action in phase space is described by $X,Y,\bmxi$ according to Eq.(\ref{eq:xychannel}), such that $\E(\hat{\rho}_1) = \hat{\rho}_2$ for states $\hat{\rho}_1,\hat{\rho}_2$ with first moments $\bmd_1,\bmd_2$.
We can fully remove the type--1 asymmetry in the input/output states by applying appropriate displacements, obtaining states $\D^{(1)}_{\bmd_1}(\hat{\rho}_1)$ and $\D^{(1)}_{\bmd_2}(\hat{\rho}_2)$ with vanishing first moments. 
Then, the Gaussian $G$--covariant channel $\E'$ described by $X,Y,\bmo$ according to Eq.(\ref{eq:xychannel}), maps $\D^{(1)}_{\bmd_1}(\hat{\rho}_1)$ to $\D^{(1)}_{\bmd_2}(\hat{\rho}_2)$, as depicted\footnote{One can similarly define a type--2 asymmetry-destroying map that sends the covariance matrix $\sigma$ of a state $\hat{\rho}$ to a fixed covariance matrix that respects the symmetry such as the identity $I$ for passive representations. 
However, in general, one cannot guarantee the existence of a covariant channel $\mathcal{E}'$ that maps the transformed input state to the corresponding transformed output state, as depicted in Fig.~\ref{fig:type1destroy} (for instance, this is the case for 1-mode phase-insensitive amplifiers with $\Un(1)$ symmetry).} in Fig.~\ref{fig:type1destroy}.

Given an asymmetry monotone $f$, we can construct a type--2 asymmetry monotone $f^{(2)}: \hat{\rho} \mapsto f(\D^{(1)}_{\bmd}(\hat{\rho}))$, where $\D^{(1)}_{\bmd}$ is the type--1 asymmetry-destroying map that depends on the first moment $\bmd$ of input state $\hat{\rho}$.
Then, using the notation of Fig.~\ref{fig:type1destroy},
\begin{equation}\label{eq:type2monotone}
    f(\D^{(1)}_{\bmd_2}(\E(\hat{\rho}_1))) = f(\E'(\D^{(1)}_{\bmd_1}(\hat{\rho}_1))) \le f(\D^{(1)}_{\bmd_1}(\hat{\rho}_1)) \,,
\end{equation}
because $\E'$ is a $G$--covariant operation.
We proceed to give concrete examples of type--2 asymmetry monotones in the next section.

\begin{figure}[t]
    \centering{\includegraphics[width=0.38\textwidth]{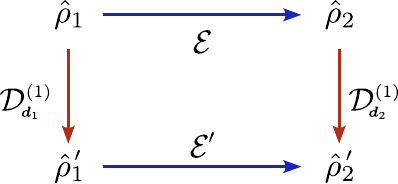}}
    \caption{\textbf{Type--1 asymmetry}.
    Suppose a $G$--covariant Gaussian channel $\E$ sends an input state $\hat{\rho}_1$ to an output state $\hat{\rho}_2$.
    Then, there exists a $G$--covariant Gaussian channel $\E'$ that maps $\hat{\rho}'_1 = \D^{(1)}_{\bmd}(\hat{\rho}_1)$ to $\hat{\rho}'_2 = \D^{(1)}_{\bmd}(\hat{\rho}_2)$.
    }
    \label{fig:type1destroy}
\end{figure}

\subsection{Petz-R{\'e}nyi relative entropy of asymmetry}
\label{sec:PRmono}

Here, we give a recipe for deriving Gaussian asymmetry monotones and corresponding type--2 asymmetry monotones, providing some examples that take explicit closed forms for Gaussian states.

Given any divergence $D$ and any group element $g \in G$, the quantity $f_g(\hat{\rho}) \coloneqq D(\hat{\rho} \| \hat{S}(g) \hat{\rho} \hat{S}^\dag (g))$ is an asymmetry monotone for unitary representation $\hat{S}$ of $G$. 
Concretely, for any $\hat{\rho} \in \B(\H)$ and $G$--covariant $\E: \B(\H) \rightarrow \B(\H)$,
\begin{align}
    f_g(\E(\hat{\rho}))
    &= D(\E(\hat{\rho}) \| \E(\hat{S}(g) \hat{\rho} \hat{S}^\dag (g))) \nonumber\\
    &\le D(\hat{\rho} \| \hat{S}(g) \hat{\rho} \hat{S}^\dag (g))
    = f_g(\hat{\rho}) \,. \label{eq:fg_monotone}
\end{align}
where the first equality follows because $\E$ is covariant and the inequality is the data-processing inequality that divergence $D$ needs to satisfy.

Consider the Petz-R{\'e}nyi $\alpha$--divergence~\cite{petz1986quasi},
\begin{equation}
    D_\alpha(\hat{\rho} \| \hat{\tau}) \coloneqq \frac{1}{\alpha-1} \log \tr[\hat{\rho}^\alpha \hat{\tau}^{1-\alpha}] \,,
\end{equation}
defined for $\alpha \in (0,1) \cup (1,\infty)$.
This quantity has been calculated explicitly~\cite{seshadreesan2018renyi} in terms of the displacement vector and covariance matrix of any two Gaussian states $\hat{\rho}, \hat{\tau}$, for $\alpha \in (0,1)$.
For our purposes, consider $f_{\alpha, g} \coloneqq D_\alpha(\hat{\rho} \| \hat{S}(g)\hat{\rho}\hat{S}(g)^\dag)$, where $\hat{\rho}$ has first moment $\bmd$ and covariance matrix $\sigma$.
We find in Appendix~\ref{app:PRentropy} that
\begin{align}
    f_{\alpha, g} &= \frac{1}{1-\alpha} \left(f_{\alpha,g}^{(1)}(\hat{\rho}) + f_{\alpha,g}^{(2)}(\hat{\rho})\right) \,, \label{eq:f} \\
    f_{\alpha,g}^{(1)}(\hat{\rho}) &\coloneqq \bmd_g^{\rm T} \left(\sigma_\alpha + S(g)\sigma_{1-\alpha}S(g)^{\rm T}\right)^{-1}\bmd_g \,, \label{eq:f1}\\
    f_{\alpha,g}^{(2)}(\hat{\rho}) &\coloneqq -\frac{1}{2}\log\frac{\det[(\sigma_\alpha + \sigma_{1-\alpha})/2]}{\det\left[\left(\sigma_\alpha + S(g)\sigma_{1-\alpha}S(g)^{\rm T}\right)/2\right]} \,, \label{eq:f2}
\end{align}
where $\bmd_g \coloneqq (I - S(g))\bmd$ is non-zero if $S(g) \neq I$, and
\begin{equation}\label{eq:sigmaalpha}
    \sigma_{\alpha} \coloneqq \coth(\alpha\ {\rm coth}^{-1}(\sigma i\Omega))i\Omega \,,
\end{equation}
for $\alpha \in (0,1)$, is the covariance matrix of $\hat{\rho}^\alpha/\tr[\hat{\rho}^\alpha]$, i.e. the thermal state of Hamiltonian $\hat{H}$ at temperature $(\alpha\beta)^{-1}$, where $\hat{\rho}$ is the thermal state of Hamiltonian $\hat{H}$ at temperature $\beta^{-1}$.
This follows from the well-known relation $\sigma i\Omega = \coth\left(\frac{1}{2}i\Omega H\right)$ between the covariance matrix $\sigma$ of state $\hat{\rho}$ and the Hamiltonian matrix $H$ of $\hat{H}$.
Inserting this relation in Eq.(\ref{eq:sigmaalpha}) gives $\sigma_{\alpha} i\Omega = \coth\left(\frac{1}{2}i\Omega (\alpha H)\right)$. 

Note that if $\hat{\rho}$ is pure, so is $\hat{\rho}^\alpha/\tr[\hat{\rho}^\alpha]$ and $\sigma_\alpha = \sigma$ for all $\alpha \in (0,1)$.
Then, Eqs.(\ref{eq:f1},\ref{eq:f2}) simplify to
\begin{align}
    f_{\alpha,g}^{(1)}(\hat{\rho}) &= \bmd_g^{\rm T} \left(\sigma + S(g) \sigma S(g)^{\rm T}\right)^{-1}\bmd_g \,, \label{eq:f1_pure}\\
    f_{\alpha,g}^{(2)}(\hat{\rho}) &= \frac{1}{2}\log\det\left[\left(\sigma + S(g) \sigma S(g)^{\rm T}\right)/2\right] \,. \label{eq:f2_pure}
\end{align}
We emphasize that the expressions in Eqs.(\ref{eq:f}--\ref{eq:f2},\ref{eq:f1_pure},\ref{eq:f2_pure}) are not valid for non-Gaussian states.

Under closed covariant dynamics, the monotone $f_{\alpha,g}$ is conserved, since Ineq.(\ref{eq:fg_monotone}) holds as equality.
If the dynamics is Gaussian, both $f_{\alpha,g}^{(1)}, f_{\alpha,g}^{(2)}$ are independently conserved, i.e. for any invariant Gaussian unitary $\hat{U}$, it is evident that $f_{\alpha,g}^{(k)}(\hat{U}\hat{\rho}\hat{U}^\dag) = f_{\alpha,g}^{(k)}(\hat{\rho})$, for $k=1,2$.

Under open dynamics, the quantity $f_{\alpha,g}^{(2)}$ is a monotone for type--2 asymmetry due to Eq.(\ref{eq:type2monotone}).
We can rewrite $f_{\alpha,g}^{(2)}$ for $\alpha = 1/2$ as follows,
\begin{align}
    f_{1/2, \hspace{1pt}g}^{(2)}(\hat{\rho}) 
    &= -\frac{1}{2}\log\frac{\det[\sigma_{1/2}]}{\det[\left(\sigma_{1/2} + S(g)\sigma_{1/2}S(g)^{\rm T}\right)\hspace{-2pt}/2]} \\
    &= \log(\frac{\tr[\hat{\rho}_{1/2}^2]}{\tr\left[\D_{\bmd}^{(1)}(\hat{\rho}_{1/2})\D_{S(g)\bmd}^{(1)}(\hat{S}(g)\hat{\rho}_{1/2}\hat{S}(g)^\dag)\right]}) \,,\nonumber
\end{align}
which expresses the ratio of the purity of state $\hat{\rho}_{1/2}$ over the overlap between state $\hat{\rho}_{1/2}$ with removed displacement vector and its rotation by $g \in G$.

Consider the $\Un(1)$ symmetry generated by a purely quadratic observable $\widehat{Q}$, given by $\hat{S}(t) = e^{i\widehat{Q}t}$ with associated symplectic matrix $S(t) = e^{-\Omega Qt}$.
In Appendix~\ref{app:smallt}, we show that, for this continuous symmetry, we obtain additional monotones of Gaussian asymmetry and type--2 asymmetry as the second derivatives $F_Q$ of $f_{1/2,t}$ and $F_Q^{(2)}$ of $f_{1/2,t}^{(2)}$, respectively.
In particular,
\begin{align}
    F_Q 
    &\coloneqq F_Q^{(1)} + F_Q^{(2)} \\
    F_Q^{(1)}(\hat{\rho}) 
    &\coloneqq -\bmd^{\rm T} Q\Omega \sigma_{1/2}^{-1} \Omega Q \bmd \,, \label{eq:FQ1} \\
    F_Q^{(2)}(\hat{\rho}) 
    &\coloneqq \frac{1}{4}\tr\Big[\Omega Q\Omega\sigma_{1/2}^{-1}[\Omega Q, \sigma_{1/2}\Omega]\Big] \,. \label{eq:FQ2}
\end{align}
Note that if $Q$ commutes with $\Omega$, then $F_Q^{(2)}(\hat{\rho})$ attains a zero value for every state with $\sigma_{1/2} = I$, e.g. the vacuum. 
On the contrary, if $Q$ does not commute with $\Omega$, then the representation is not passive (see Fig.~\ref{fig:repr}), so, for example, the vacuum is no longer a free state.

As an example, consider free evolution with frequency $\omega$, i.e. $Q = \omega I$, on 1 mode.
For a coherent state $\ket{\gamma}$, we have $f_{\alpha, t}^{(2)}(\ket{\gamma}) = 0$, and
\begin{equation}
    f_{\alpha, t}(\ket{\gamma}) = f_{\alpha, t}^{(1)}(\ket{\gamma}) = 2(1-\cos(\omega t))|\gamma|^2 \,,
\end{equation}
for all $\alpha \in (0,1)$.
Therefore, $F_Q^{(2)}(\ket{\gamma}) = 0$, and
\begin{equation}
    F_Q(\ket{\gamma}) = F_Q^{(1)}(\ket{\gamma}) = 2\omega^2|\gamma|^2 \,.
\end{equation}
For a squeezed state $\ket{r}$, we have $f_{\alpha, t}^{(1)}(\ket{r}) = 0$, and
\begin{equation}
    f_{\alpha, t} = f_{\alpha, t}^{(2)}(\ket{r}) = \frac{1}{2}\log\left(1 + \sin^2(\omega t) \sinh^2(2r)\right) \,,
\end{equation}
for all $\alpha \in (0,1)$.
Therefore, $F_Q^{(1)}(\ket{r}) = 0$ and 
\begin{equation}
    F_Q(\ket{r}) = F_Q^{(2)}(\ket{r}) = \omega^2 \sinh^2(2r) \,.
\end{equation}

\section{Conservation Laws in closed Gaussian systems}
\label{sec:conserve_laws}

It is well known that, in the case of closed systems, symmetries of the dynamics imply conservation laws, a result often colloquially referred to as Noether's theorem~\cite{Noether1918nachrichten}. 
For instance, in the presence of rotational $\SU(2)$ symmetry, the angular momentum operators remain conserved. More generally, for a symmetry group $G$, if an initial state $\hat{\rho}$ evolves under a symmetry-respecting unitary into a final state $\hat{\rho}'$, then we have  
\begin{equation}\label{eq:noether}
    \tr[\hat{\rho}\hat{U}(g)] = \tr[\hat{\rho}'\hat{U}(g)] \,, \quad\text{for all } g \in G \,.
\end{equation}
If the group is generated by an observable $\widehat{Q}$ such that $\hat{U}(s) = e^{i\widehat{Q} s}$, $s \in \mathbb{R}$, we find that, for arbitrary integer $k$,
\begin{equation}\label{eq:generic_coserve}
    \tr[\hat{\rho}\widehat{Q}^k] = \tr[\hat{\rho}'\widehat{Q}^k] \,.
\end{equation}
In the aforementioned example of $\SU(2)$ symmetry, this equation implies that all expectation values of polynomials of angular momentum operators remain conserved.

When $\hat{\rho}$ is a pure state, Eq.(\ref{eq:noether}) captures all the implications of the symmetry in the following sense: if this conservation law holds for all $g \in G$, then there exists a unitary transformation $\hat{V}$ that respects the symmetry, i.e. $[\hat{V}, \hat{U}(g)] = 0$ for all $g \in G$, and converts between the two states $\hat{\rho}' = \hat{V} \hat{\rho} \hat{V}^\dag$~\cite{marvian2013theory}\footnote{It is worth noting that this does not generally hold when $\hat{\rho}$ is mixed; in fact, the conservation of measures of asymmetry imposes independent conservation laws on the dynamics of the system that are not captured by Eq.(\ref{eq:noether})~\cite{marvian2014extending}.}.

Clearly, if the dynamics of a closed system that respects the symmetry is Gaussian, they satisfy the conservation laws in Eqs.(\ref{eq:noether},\ref{eq:generic_coserve}). This naturally raises the following question: Are there conservation laws that are satisfied by Gaussian closed systems that respect the symmetry, but are not necessarily satisfied by non-Gaussian closed systems that respect the symmetry? In other words, does the presence of symmetries in Gaussian systems impose any additional constraints on the dynamics? In the following, we show that the answer is affirmative, and we present a class of such conservation laws.

The intuition is simple: conservation laws of the form given in Eq.(\ref{eq:generic_coserve}) only consider the total amount of charge in the system. However, roughly speaking, the conserved charge and asymmetry can be decomposed into different components corresponding to central moments of different order, which do not mix under Gaussian dynamics that respect the symmetry. Therefore, by analyzing the contributions to charge and asymmetry in these different central moments, one can identify new independent conserved quantities.

\subsection{Conservation laws for closed Gaussian systems}
\label{sec:gauss_conserve}

Suppose that under a Gaussian unitary evolution described by $\hat{V}$, an arbitrary, possibly non-Gaussian, initial state $\hat{\rho}_1$, with displacement vector $\bmd_1$ and covariance matrix $\sigma_1$, evolves to the final state $\hat{\rho}_2 = \hat{V} \hat{\rho}_1 \hat{V}^\dag $, with displacement vector $\bmd_2$ and covariance matrix $\sigma_2$. 
Suppose further that $\hat{V}$ is invariant under a symplectic Gaussian representation of symmetry $e^{ir\widehat{Q}}$ with quadratic charge
\begin{equation}\label{eq:quad_charge}
    \widehat{Q} = \sum_{j,k} Q_{jk} \hat{r}_j \hat{r}_k \,,\quad Q = Q^{\rm T} \,,
\end{equation}
for all $r \in \mathbb{R}$, which corresponds to representation $e^{-r\Omega Q}$ in the phase space.
Then,  
\begin{equation}\label{eq:generic_gauss_conserve}
    \bmd_1^{\rm T} Q \bmd_1 = \bmd_2^{\rm T} Q \bmd_2 \,,\quad\quad \tr[\sigma_1 Q] = \tr[\sigma_2 Q] \,.
\end{equation}
Crucially, while in a Gaussian charge-conserving closed system each of these quantities is individually conserved, this is not generally the case in non-Gaussian charge-conserving closed systems. We provide a counter-example in Section~\ref{sec:violation}.
Instead, due to Eq.(\ref{eq:generic_coserve}) for $k=1$, $\tr[\hat{\rho}_1\widehat{Q}]=\tr[\hat{\rho}_2\widehat{Q}]$, which, in turn,  implies  the following linear combination remains conserved,
\begin{equation}\label{eq:nonGauss_law}
    \tr\left[\sigma_1 Q\right] + 2 \bmd_1^{\rm T} Q\bmd_1 = \tr\left[\sigma_2 Q\right] + 2 \bmd_2^{\rm T} Q\bmd_2 \,. 
\end{equation}
To derive more general conservation laws, we first establish the following simple result.
\begin{lemma}\label{lem:cons}
    Consider the function  
    \begin{equation}\label{eq:Fk}
        F_B(A) = \tr[f(A \Omega, \Omega B)] \,,
    \end{equation}
    where $A$ and $B$ are $(2n\times 2n)$ matrices, and $f$ is any arbitrary non-commutative polynomial.
    Under any symplectic transformation $V \in \Sp(2n,\mathbb{R})$ that commutes with $\Omega B$, i.e. $[V, \Omega B] = 0$, we have
    \begin{equation}
        F_B(A) = F_B(VAV^{\rm T}) \,.
    \end{equation}
\end{lemma}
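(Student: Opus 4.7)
The plan is to show that the transformation $A \mapsto V A V^{\rm T}$ induces a similarity transformation on the combination $A\Omega$, while leaving $\Omega B$ invariant; cyclicity of the trace then finishes the argument.

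First I would derive a handy identity from the symplectic condition $V \Omega V^{\rm T} = \Omega$. Left multiplication by $V^{-1}$ gives $\Omega V^{\rm T} = V^{-1}\Omega$, and rearranging (using $\Omega^2 = -I$ so $\Omega^{-1} = -\Omega$) yields
\begin{equation*}
    V^{\rm T}\Omega \;=\; \Omega V^{-1}.
\end{equation*}
Next I would apply this to compute how the argument $A\Omega$ transforms:
\begin{equation*}
    (V A V^{\rm T})\Omega \;=\; V A (V^{\rm T}\Omega) \;=\; V (A\Omega) V^{-1},
\end{equation*}
so $A\Omega$ undergoes an ordinary similarity by $V$. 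On the other hand, the assumption $[V,\Omega B]=0$ immediately gives $\Omega B = V(\Omega B) V^{-1}$, so the second argument of $f$ is also trivially conjugated by $V$.

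Since $f$ is a non-commutative polynomial, similarity conjugation commutes through: evaluating $f$ on the transformed arguments gives
\begin{equation*}
    f\bigl((V A V^{\rm T})\Omega,\, \Omega B\bigr) \;=\; V \, f(A\Omega,\Omega B)\, V^{-1}.
\end{equation*}
Taking the trace and using its cyclicity yields $F_B(V A V^{\rm T}) = \tr\bigl[V f(A\Omega,\Omega B) V^{-1}\bigr] = \tr\bigl[f(A\Omega,\Omega B)\bigr] = F_B(A)$, as required.

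The only subtle step is the first one, where one must be careful with the interplay between the transpose, $\Omega$, and the symplectic condition; once the identity $V^{\rm T}\Omega = \Omega V^{-1}$ is in hand, the rest is automatic. No difficulty beyond bookkeeping is expected.
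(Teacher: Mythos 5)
Your proof is correct and follows essentially the same route as the paper's: both use the symplectic identity to recast $VAV^{\rm T}\Omega$ as the similarity $V(A\Omega)V^{-1}$, the commutation $[V,\Omega B]=0$ to conjugate the second argument trivially, the fact that conjugation passes through a non-commutative polynomial, and cyclicity of the trace. The only difference is cosmetic (you start from $F_B(VAV^{\rm T})$ while the paper inserts $V\cdot V^{-1}$ into $F_B(A)$), so there is nothing further to add.
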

\begin{proof}
For any invertible matrix $V$,
\begin{align}
    \tr[f(A \Omega, \Omega B)]
    &=\tr[Vf(A\Omega, \Omega B)V^{-1}] \\ 
    &=\tr[f(V A\Omega V^{-1}, V \Omega B V^{-1})] \\
    &=\tr(f(VAV^{\rm T} \Omega, \Omega B)) \,,
\end{align}
where the second equality follows from the property $Vf(X,Y)V^{-1} = f(VXV^{-1}, VYV^{-1})$ of non-commutative polynomials, and the third from the assumptions that $[V, \Omega B] = 0$ and that $V$ is symplectic.
\end{proof} 
For instance, choosing $f(X,Y) = (XY)^k$, for integer $k$, and $A = \sigma$, we obtain $F_B(A) = \tr[(AB)^k]$.
Then, assuming $V$ commutes with $\Omega Q$ for $Q$ in Eq.(\ref{eq:quad_charge}), choosing $B = Q$ leads to the conserved quantity $F_Q(A) = \tr\left[(A Q)^k\right]$, which in turn implies the conservation law
\begin{equation}
    \tr\left[ (\sigma_1 Q)^k \right] = \tr\left[ (\sigma_2 Q)^k \right] \,.
\end{equation}

Similarly, if the Gaussian unitary $\hat{V}$ respects a group $G$ with symplectic representation $S: G \rightarrow \Sp(2n, \mathbb{R})$, then choosing $B = -\Omega S(g)$ yields the conservation law
\begin{equation}
    \tr\left[ (\sigma_1 \Omega S(g))^k \right] = \tr\left[ (\sigma_2 \Omega S(g))^k \right] \,.
\end{equation}
for every group element $g \in G$.

\subsection*{Example: $\SU(2)$ symmetry and conservation of angular momentum}

Recall the $\SU(2)$ example with the angular operators defined in Eq.(\ref{eq:jzjx}).
Let $\bmd_{j,a}$ and $\sigma_{j,a}$ denote the displacement vector and covariance matrix of mode $a$ in subsystem $j$, and let $\bmd_{j,b}$ and $\sigma_{j,b}$ denote the displacement vector and covariance matrix of mode $b$ in subsystem $j$.

The expectation value $\tr[\hat{\rho}\hat{J}_z]$ remains conserved under general $\SU(2)$--invariant dynamics, and, if the dynamics is also Gaussian, Eq.(\ref{eq:generic_gauss_conserve}) implies the independent conservation of the following two quantities,
\begin{align}
    \sum_{j=1}^n \left( \bmd_{j,a}^{\rm T}\bmd_{j,a} - \bmd_{j,b}^{\rm T}\bmd_{j,b} \right) \,,\quad \sum_{j=1}^n \Big( \tr[\sigma_{j,a}] - \tr[\sigma_{j,b}] \Big) \,.
\end{align}

\subsection{Violation of conservation laws
in non-Gaussian closed systems}
\label{sec:violation}

Consider a single mode with $\Un(1)$ representation $\exp({i\phi \hat{a}^\dag \hat{a}})$, $\phi\in[0,2\pi)$, which corresponds to the conserved charge $\widehat{Q}=\hat{a}^\dag \hat{a}$ and $Q=I_2$ is the $(2 \times 2)$ identity matrix. 
Consider a system initially in the pure coherent state $\ket{\alpha} \coloneqq \hat{D}_{\bmd_1} \ket{0}$, with first moment $\bmd_1 \coloneqq \sqrt{2}({\rm Re}(\alpha), {\rm Im}(\alpha))^{\rm T}$, and covariance matrix $\sigma_1 = I_2$, and suppose it evolves to state $\hat{U}\ket{\alpha}$, where $\hat{U} = \hat{I} - 2\ketbra{0}$ is a non-Gaussian U(1)--invariant unitary. Under this transformation, the state's displacement vector and covariance matrix become
\begin{align}
    \bmd_2 &= c\bmd_1 \,, \label{eq:ngmoment1}\\
    \sigma_2 &= (1+(1-c)\bmd_1^{\rm T}\bmd_1)I_2 + 2c(1-c)\bmd_1\bmd_1^{\rm T} \,, \label{eq:ngmoment2}
\end{align}
where $c \coloneqq 1-2e^{-|\alpha|^2}$.
Although the conservation law in Eq.(\ref{eq:nonGauss_law}) still holds, i.e.
\begin{equation}
    \tr\left[\left(\sigma_k + 2\bmd_k\bmd_k^{\rm T}\right)Q\right]
    = 2(1+2|\alpha|^2) \,,
\end{equation}
for $k = 1,2$, the conservation laws in Eq.(\ref{eq:generic_gauss_conserve}) are not satisfied for non-zero displacement $\alpha$.
It is worth noting that a non-zero displacement at the input causes the output covariance matrix to acquire off-diagonal elements. 
As a result, it no longer commutes with $\Omega_1$, and so, unlike the input covariance matrix, it breaks the $\Un(1)$ symmetry.

\subsection{A complete set of conservation laws for incoherent states}

An important case of interest is when the symmetry is represented by a passive quadratic charge $\widehat{Q}$. 
Let $Q = \sum_{q \in {\rm Eig}(Q)} q\Pi_q$ be the spectral decomposition of the symmetric matrix $Q$, where $\Pi_q$ is the projector on the modes $\{j: q_j = q\}$ with charge $q$.
Then, using Lemma~\ref{lem:cons}, one can directly obtain the conservation laws
\begin{equation}\label{eq:cons}
    \tr\left[(\sigma_1 (\Pi_{q}-\Pi_{-q}))^k \right] = \tr\left[(\sigma_2 (\Pi_{q}-\Pi_{-q}))^k \right] \,,
\end{equation}
for $|q| > 0$ and all integers $k$.
To see this, first note that 
according to Proposition~\ref{prop:Qprop}, $[Q,\Omega]=0$ when $\widehat{Q}$ is passive. 
Thus, any symplectic matrix that commutes with $\Omega Q$, also commutes with $Q^2=-(Q\Omega)^2$, hence also with projectors to the eigenspaces of $Q^2$, namely $\Pi_q+\Pi_{-q}$ for $q \in {\rm Eig}(Q)$. 
This holds because each $\Pi_q+\Pi_{-q}$ can be written as a polynomial $p_q$ of $Q^2$, i.e.
\begin{equation}
    \Pi_q+\Pi_{-q} = p_q(-(Q\Omega)^2) = p_q(Q^2) \,.
\end{equation}
It follows that by choosing $f(X,Y) = (XYp_q(-Y^2))^k$ in Proposition~\ref{lem:cons}, and $A = \sigma$ and $B = Q$, we obtain the above conservation laws.

The conservation laws in Eq.(\ref{eq:cons}) only restrict sectors with $q \neq 0$. 
In the sectors with $q = 0$, the symmetry does not impose any restrictions. 
Nevertheless, the fact that the dynamics is closed and Gaussian imposes non-trivial constraints on the dynamics, namely,
\begin{equation}
    \tr[(\sigma_1 \Omega \Pi_0)^k] = \tr[(\sigma_2 \Omega \Pi_0)^k] \,,
\end{equation}
The above equations can be equivalently stated as equality of symplectic eigenvalues of $\sigma_1\Pi_0$ and $\sigma_2\Pi_0$ on modes with zero charge, and equality of ordinary eigenvalues of $\sigma_1[\Pi_q-\Pi_{-q}]$ and $\sigma_2[\Pi_q-\Pi_{-q}]$ on modes with non-zero charge, i.e.
\begin{align}
    {\rm SympEig}\left(\sigma_1\Pi_0\right) &= {\rm SympEig}\left(\sigma_2\Pi_0\right) \,,\hspace{13pt} q = 0 \,, \label{eq:Qaltcond1} \\
    {\rm Eig}(\sigma_1 (\Pi_{q}-\Pi_{-q})) &= {\rm Eig}(\sigma_2 (\Pi_{q}-\Pi_{-q})) \,,\ q \neq 0 \,. \label{eq:Qaltcond2}
\end{align}
We emphasize that, for any covariance matrix $\sigma$, matrix $\sigma (\Pi_q-\Pi_{-q})$ is indeed diagonalizable, and the number of its positive and negative eigenvalues is equal to the number of modes with charge $|q|$ and $-|q|$, respectively\footnote{In general, for any positive real symmetric matrix $A$ and real symmetric matrix $B$, $AB$ is diagonalizable, and the number of its positive, negative and zero eigenvalues, are equal to those of $B$. To see this consider the  similarity transformation $AB\rightarrow A^{-1/2} AB A^{1/2}$. This implies $AB$ is similar to $A^{1/2} B A^{1/2}$. Then, applying the Sylvester's law of inertia, we find that the number of positive, negative, and zero eigenvalues of $B$ is equal to those of $A^{1/2} B A^{1/2}$,  and thus to those of $AB$. Applying this to $A=(\Pi_q+\Pi_{-q}) \sigma (\Pi_q+\Pi_{-q})$ and $B=Q$ proves the claim.}.

Interestingly, it turns out that for Gaussian states that (i) have a vanishing displacement vector, and (ii) contain no coherence with respect to the charge $\widehat{Q}$, these conservation laws capture all the implications of symmetry: if they hold, then there exists a $G$--invariant Gaussian unitary transformation that converts one into the other.
In particular, we show the following statement.
\begin{restatable}{theorem}{thminterconvert}\label{thm:interconvert}
    Consider a symmetric matrix
    \begin{equation}
        Q = \sum_{q \in {\rm Eig}(Q)} q\Pi_q \,,
    \end{equation}
    such that $[Q, \Omega] = 0$, where $\Pi_q$ projects on the sector with charge $q$, i.e. on modes $\{j: q_j = q\}$.
    
    Suppose $\sigma_1, \sigma_2$ are two covariance matrices such that
    \begin{equation}
        [\sigma_1, \Omega Q] = [\sigma_2, \Omega Q] = 0 \,.
    \end{equation}
    Then, there exists a symplectic transformation $S$ such that $SQS^{\rm T} = Q$ and $S \sigma_1 S^{\rm T} = \sigma_2$ if and only if conditions
    \begin{align}
        \tr[(\sigma_1 \Omega \Pi_0)^k] &= \tr[(\sigma_2 \Omega \Pi_0)^k] \,, \label{eq:Qcond1} \\
        \tr\left[(\sigma_1 (\Pi_{q}-\Pi_{-q}))^k \right] &= \tr\left[(\sigma_2 (\Pi_{q}-\Pi_{-q}))^k \right] \,, \label{eq:Qcond2}
    \end{align}
    hold for all $|q| > 0$ and integers $k$.
\end{restatable}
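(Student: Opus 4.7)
The plan is to reduce to individual charge sectors, handling the zero-charge block with standard Williamson's theorem and each nonzero-charge pair with the extension Theorem~\ref{thm:nmd_q}, and to read off the canonical invariants from (\ref{eq:Qcond1}) and (\ref{eq:Qcond2}) via Newton's identities. The only if direction is immediate: for symplectic $S$ with $SQS^{\mathrm T}=Q$ and $[Q,\Omega]=0$, combining $S\Omega=\Omega S^{-\mathrm T}$ with $SQS^{\mathrm T}=Q$ yields $[S,\Omega Q]=0$, so Lemma~\ref{lem:cons} applies, and choosing $f(X,Y)=(XYp_q(-Y^2))^m$ for $q\neq 0$ and $f(X,Y)=(Xp_0(-Y^2))^m$ for $q=0$ (with $p_q$ the polynomial cutting the $|q|$--charge sector out of $Q^2$) reproduces (\ref{eq:Qcond2}) and (\ref{eq:Qcond1}) exactly as in Section~\ref{sec:gauss_conserve}.

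For the converse, $[Q,\Omega]=0$ induces the block decomposition
\begin{equation*}
\mathbb{R}^{2n}=V_0\oplus\bigoplus_{q>0}V_{\pm q},\qquad V_{\pm q}\coloneqq\mathrm{range}(\Pi_q+\Pi_{-q}),
\end{equation*}
with respect to which $\Omega$ is block-diagonal so each $V_0$ and $V_{\pm q}$ inherits its own symplectic form; moreover $Q|_{V_0}=0$, $Q|_{V_{\pm q}}=q(\Pi_q-\Pi_{-q})|_{V_{\pm q}}$, and the assumption $[\sigma_k,\Omega Q]=0$ forces each $\sigma_k$ to be block-diagonal as well. It thus suffices to construct symplectic blocks $S^{(0)}$ on $V_0$ and $S^{(q)}$ on each $V_{\pm q}$ with the prescribed actions and take $S=S^{(0)}\oplus\bigoplus_{q>0}S^{(q)}$, which is automatically symplectic and preserves $Q$ block-wise. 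On $V_0$ the constraint $S^{(0)}Q|_{V_0}(S^{(0)})^{\mathrm T}=Q|_{V_0}$ is vacuous, so ordinary Williamson's theorem reduces the task to matching the symplectic spectra of $\sigma_1|_{V_0}$ and $\sigma_2|_{V_0}$; these are determined, up to sign, by $\{\tr[(\sigma_k\Omega\Pi_0)^m]\}_m$ through Newton's identities, and (\ref{eq:Qcond1}) supplies the equality. On each $V_{\pm q}$, invoke Theorem~\ref{thm:nmd_q} to produce $R_k\in\Sp(V_{\pm q})$ satisfying $R_k Q|_{V_{\pm q}} R_k^{\mathrm T}=Q|_{V_{\pm q}}$ while bringing $\sigma_k|_{V_{\pm q}}$ to a canonical form whose residual invariants are the eigenvalues of $\sigma_k(\Pi_q-\Pi_{-q})|_{V_{\pm q}}$; (\ref{eq:Qcond2}) then equates these invariants across $k=1,2$, so $S^{(q)}\coloneqq R_2^{-1}R_1$ completes the block.

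The main obstacle is the nonzero-charge step. The stabilizer of $Q$ inside $\Sp(V_{\pm q})$ is a strict subgroup of the full symplectic group, so the ordinary Williamson canonical form need not be reachable by admissible congruences; the essential content to extract from Theorem~\ref{thm:nmd_q} is that $\mathrm{Eig}(\sigma(\Pi_q-\Pi_{-q}))$ constitutes a complete orbit invariant for this restricted action on the space of covariance matrices commuting with $\Omega Q$. Once this simultaneous canonical form is secured, assembling the blocks into a global symplectic $S$ preserving $Q$ is routine.
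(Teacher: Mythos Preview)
Your proposal is correct and follows essentially the same route as the paper: reduce to charge sectors $V_{\pm q}$, invoke Theorem~\ref{thm:nmd_q} to simultaneously diagonalize $\sigma_k$ and $Q$, identify $\mathrm{Eig}(\sigma_k(\Pi_q-\Pi_{-q}))$ as the complete orbit invariant for the $Q$-preserving symplectic congruence action, and match via Newton's identities. The paper applies Theorem~\ref{thm:nmd_q} globally rather than block-by-block, but this is only an organizational difference.

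One imprecision to fix: as stated, Theorem~\ref{thm:nmd_q} only promises $R_kQR_k^{\mathrm T}=\bigoplus_j q_j' I_2$, not $R_kQ|_{V_{\pm q}}R_k^{\mathrm T}=Q|_{V_{\pm q}}$; you need a post-composed mode permutation to restore the original ordering of charges. Consequently the formula $S^{(q)}=R_2^{-1}R_1$ is not correct unless your canonical form also sorts the $\nu_j$ within each fixed-charge block. The paper makes this explicit by writing $S=S_2^{-1}OS_1$ with $O=\bigoplus_q O_q$ a charge-preserving permutation chosen so that $O(S_1\sigma_1S_1^{\mathrm T})O^{\mathrm T}=S_2\sigma_2S_2^{\mathrm T}$; the existence of such $O$ is exactly what the equality of the multisets $\{\nu_j:q_j=q\}=\{\tilde\nu_j:q_j=q\}$ (separately for each sign of $q$) guarantees. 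With this correction your argument is complete.
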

\noindent The proof is deferred to Appendix~\ref{app:interconversion}.
Note that one only needs to check the conditions of Eqs.(\ref{eq:Qcond1},\ref{eq:Qcond2}) for $k \le n(q)$, where $n(q)$ is the total number of modes $\{j: |q_j| = |q|\}$ with charge $\pm q$.

Theorem~\ref{thm:interconvert} allows one to determine what quantum states are reachable from a given initial state via invariant symplectic transformations.
As an example, suppose a system of $n$ modes is initially prepared in a product of uncorrelated thermal states, with covariance matrix $\bigoplus_{j=1}^n \nu_j I_2$. Then, a Gaussian state with covariance matrix $\sigma$ is reachable from this state via a charge-conserving symplectic transformation if and only if 
\begin{align*}
    {\rm Eig}\left(\sigma  (\Pi_q - \Pi_{-q}) \right) = \left\{\frac{q_{j}}{|q|} \nu_j: |q_j|=|q|\right\} \,,
\end{align*}
for all $|q| > 0$, and ${\rm SympEig}\left(\sigma_1\Pi_0\right) = {\rm SympEig}\left(\sigma_2\Pi_0\right)$.
We conclude that 
\begin{equation}\label{eq:Qcondk1}
    \frac{1}{2}\left(\tr[\sigma_q]-\tr[\sigma_{-q}]\right) = \sum_{j: q_j=|q|} \nu_j \hspace{2pt}- \hspace{-8pt}\sum_{j: q_j=-|q|} \nu_j \,,
\end{equation}
where we have defined $\sigma_{p} \coloneqq \Pi_p \sigma \Pi_p$ to be the restriction of covariance matrix $\sigma$ on the modes with charge $p$.

Similarly, for $k=2$ we obtain
\begin{equation}\label{eq:Qcondk2}
    \|\sigma_{q}\|_2^2+\|\sigma_{-q}\|_2^2-2\|\sigma_{q,-q}\|^2_2 = \sum_{j: |q_j|=|q|} \nu^2_j \,,
\end{equation}
where $\|A\|_2 \coloneqq (\tr[A^2])^{1/2}$, and $\sigma_{q,-q} \coloneqq \Pi_q \sigma \Pi_{-q}$.

Concretely, consider the symmetry generated by $Q = I \oplus -I$ in the $(\hat{x}_1,\hat{p}_1,\hat{x}_2,\hat{p}_2)$ basis.
According to Eq.(\ref{eq:u1M}), the invariant covariance matrices take the form
\begin{equation}
    \sigma = \begin{pmatrix}
        a_{11} I & a_{12}Ze^{\theta\Omega}\\
        a_{12}Ze^{\theta\Omega} & a_{22} I
    \end{pmatrix}
    \,,
\end{equation}
for $a_{11}, a_{22}, a_{12} \ge 0$ and $\theta \in [0,2\pi)$.
Then, the conservation laws as written in Eqs.(\ref{eq:Qcondk1},\ref{eq:Qcondk2}) imply
\begin{equation}\label{eq:2modeQcond}
\begin{aligned}
    a_{11} - a_{22} &= \nu_1 - \nu_2 \,,\\
    a_{11}^2 + a_{22}^2 -2 a_{12}^2 &= \nu_1^2 + \nu_2^2 \,,
\end{aligned}
\end{equation}
which, in particular, combine to give the conservation of the determinant of $\sigma$ under symplectic transformations,
\begin{equation}
 a_{11} a_{22} - a_{12}^2 = \nu_1 \nu_2 \,.
\end{equation}
We conclude that, once any of $a_{11}$, $a_{22}$, or $a_{12}$ is chosen, the other two are uniquely determined.

\subsection{Extension of Williamson's theorem (simultaneous normal mode decomposition)}
\label{sec:nmd}

In this section, we establish a symmetric extension of Williamson's seminal theorem~\cite{williamson1936on,nicacio2021williamson}, which is a critical step in the proof of Theorem~\ref{thm:interconvert}.

The content of Williamson's theorem~\cite{williamson1936on} is that any $(2n \times 2n)$ real symmetric positive-definite matrix $M$ can be diagonalized by congruence using a symplectic matrix $S$, such that the symplectic eigenvalues appear in pairs, i.e. $S M S^{\rm T} = \bigoplus_{j=1}^n \nu_j I_2$ for real positive $\nu_j$, where $I_2$ denotes the $(2 \times 2)$ identity matrix.
It has recently been shown~\cite{kamat2025simultaneous} that two $(2n \times 2n)$ real symmetric positive-definite matrices $M_1, M_2$ can be symplectically diagonalized by congruence using one symplectic matrix if and only if $M_1$ and $M_2$ symplectically commute, i.e. $[\Omega M_1, \Omega M_2] = 0$.
In such a scenario, we say that $M_1$ and $M_2$ admit a simultaneous normal mode decomposition.

For applications in the context of symmetry, we need to consider a different extension of Williamson's theorem, in which two $(2n \times 2n)$ real symmetric matrices $M$ and $Q$ are given, with only $M$ required to be positive-definite, and find necessary and sufficient conditions for $M$ and $Q$ to admit a simultaneous normal-mode decomposition.
\begin{restatable}{theorem}{thmnmdQ}\label{thm:nmd_q}
    Let $M$ and $Q$ be $(2n \times 2n)$ real symmetric matrices, such that $M$ is positive-definite and $[Q, \Omega] = 0$.
    
    Then, there exists a symplectic matrix $S$ that diagonalizes both $Q$ and $M$ by congruence,
    \begin{align}
        S M S^{\rm T} &= \bigoplus_{j=1}^n \nu_j I_2 \,,\quad \nu_j \ge 0 \,, \label{eq:nmd_condition1}\\
        S Q S^{\rm T} &= \bigoplus_{j=1}^n q_j I_2 \,,\quad q_j \in \mathbb{R} \,, \label{eq:nmd_condition2}
    \end{align}
    if and only if $M$ and $Q$ satisfy
    \begin{equation}\label{eq:nmd_comm}
        [\Omega M, \Omega Q] = 0 \,,
    \end{equation}
    where $I_2$ denotes the $(2 \times 2)$ identity matrix.
\end{restatable}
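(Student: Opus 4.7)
The plan is to prove both directions separately. The forward direction uses the symplectic identity $\Omega S^{-1} = S^T \Omega$, which follows from $S\Omega S^T = \Omega$. Writing $M = S^{-1} D_\nu S^{-T}$ and $Q = S^{-1} D_q S^{-T}$ with $D_\nu = \bigoplus_j \nu_j I_2$ and $D_q = \bigoplus_j q_j I_2$, one obtains $\Omega M = S^T D_\nu \Omega S^{-T}$ and $\Omega Q = S^T D_q \Omega S^{-T}$. Since $D_\nu$ and $D_q$ are block-scalar with $(2\times 2)$ blocks, each commutes with $\Omega$ and with the other; hence $\Omega M \cdot \Omega Q = -S^T D_\nu D_q S^{-T} = \Omega Q \cdot \Omega M$, yielding $[\Omega M, \Omega Q] = 0$.

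For the converse, I would reduce in two steps. First, by Williamson's theorem, choose a symplectic $S_1$ with $S_1 M S_1^T = D_\nu$ and $\nu_j > 0$. The hypothesis $[\Omega M, \Omega Q] = 0$ is preserved under symplectic congruence: using $\Omega S_1 = S_1^{-T}\Omega$, one checks $[\Omega D_\nu, \Omega Q_1] = S_1^{-T}[\Omega M, \Omega Q] S_1^T = 0$, where $Q_1 := S_1 Q S_1^T$. The task reduces to finding a symplectic $S_2$ stabilizing $D_\nu$ such that $S_2 Q_1 S_2^T = D_q$.

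Grouping indices by the common value of $\nu_j$ into eigenspaces $E_\nu$ of $D_\nu$ with multiplicity $m_\nu$, write the blockwise commutation $[\Omega D_\nu, \Omega Q_1] = 0$. Each $(2 \times 2)$ block $X_{jk}$ of $\Omega Q_1$ then satisfies $\nu_j\, \omega X_{jk} = \nu_k\, X_{jk}\, \omega$, i.e.\ $\omega X_{jk} = (\nu_k/\nu_j)\, X_{jk}\omega$. A short classification of real $(2\times 2)$ matrices in the basis $\{I,\omega,\sigma_z,\sigma_x\}$ satisfying $\omega X = \lambda X \omega$ shows that $X = 0$ unless $\lambda \in \{1,-1\}$; since $M > 0$ gives $\lambda = \nu_k/\nu_j > 0$, the case $\lambda = -1$ is excluded. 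Hence $\Omega Q_1$ is block-diagonal across the eigenspaces of $D_\nu$, and within each eigenspace every $(2\times 2)$ block lies in $\langle I, \omega \rangle$, yielding $[Q_1|_{E_\nu}, \Omega|_{E_\nu}] = 0$. Under the complex structure induced by $\Omega$, $Q_1|_{E_\nu}$ is identified with a Hermitian $m_\nu \times m_\nu$ matrix.

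To conclude, the stabilizer of $\nu I_{2m_\nu}$ in $\Sp(2m_\nu,\mathbb{R})$ is the maximal compact subgroup $\Sp(2m_\nu) \cap \Or(2m_\nu) \cong \Un(m_\nu)$, acting by unitary conjugation on the Hermitian representative of $Q_1|_{E_\nu}$. The spectral theorem supplies a unitary diagonalizing this representative to real eigenvalues $\{q_j\}$, producing an orthogonal symplectic $S_2$ that stabilizes $D_\nu$ and sends $Q_1$ to $\bigoplus_j q_j I_2$. Setting $S := S_2 S_1$ completes the construction. The main obstacle is the blockwise structural analysis and the essential role of $M > 0$: without positivity, the $\lambda = -1$ branch of the classification would permit off-diagonal couplings between eigenspaces of opposite sign, obstructing both the block-diagonalization of $Q_1$ and the subsequent reduction to Hermitian spectral theory.
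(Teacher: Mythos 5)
Your proof is correct, and it takes a genuinely different route from the paper's. The paper proves the converse in the opposite order: it first uses the hypothesis $[Q,\Omega]=0$ to bring $Q$ itself to the block form $\bigoplus_j q_j I_2$ (via an auxiliary lemma on simultaneous canonical forms of commuting antisymmetric matrices), and then re-derives a Williamson-type decomposition of $M$ compatible with that fixed $Q$: it sets $A = M^{-1/2}\,\Omega\, M^{-1/2}$, finds an orthogonal $O$ commuting with $\Omega Q$ that brings $A$ to canonical form, and needs a separate inertia argument (Sylvester's law plus explicit mode swaps) to make the canonical coefficients positive, finally taking $S = D^{1/2} O M^{-1/2}$. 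You instead apply Williamson's theorem to $M$ first and then classify the $2\times 2$ blocks of $\Omega Q_1$ in that frame, where the relation $\nu_j\,\omega X_{jk} = \nu_k X_{jk}\,\omega$ together with $\nu_k/\nu_j>0$ kills the anticommuting ($\lambda=-1$) branch, so $Q_1$ decouples across the symplectic eigenspaces of $M$ and commutes with the restricted symplectic form on each; the remaining step is ordinary Hermitian spectral theory inside the stabilizer $\Sp(2m_\nu,\mathbb{R})\cap\mathrm{O}(2m_\nu)\cong\Un(m_\nu)$. Your route is more elementary, since the paper's sign-fixing lemma is replaced by the single observation that positivity of $M$ forces $\lambda>0$, and it is slightly more general: your converse never invokes $[Q,\Omega]=0$ (that condition re-emerges only in the Williamson frame of $M$), whereas the paper uses it to preprocess $Q$. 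What the paper's organization buys is the standalone lemma on commuting antisymmetric matrices, which it reuses, and a construction in which a charge already in normal form is left exactly invariant ($SQS^{\rm T}=Q$), a feature exploited later in the proof of its interconversion theorem; if you want your construction to serve that purpose directly, you would need to note that your $S$ only returns $Q$ to some block form rather than fixing a given one.
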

\noindent The proof is deferred to Appendix~\ref{app:nmd_invariant}.
Note that the condition in Eq.(\ref{eq:nmd_comm}) can be equivalently stated as
\begin{align}
    [\Omega M, \Omega Q] = [\Omega M, Q] = [M, \Omega Q] = 0 \,.
\end{align}

If $Q$ is taken as the symmetric matrix that corresponds to a conserved charge $\widehat{Q}$ and $M$ is a Hamiltonian matrix or the covariance matrix of a quantum state, the first commutation relation in Eq.(\ref{eq:nmd_comm}) represents invariance of $M$ under the symmetry generated by $\widehat{Q}$.
The second commutation relation in Eq.(\ref{eq:nmd_comm}) implies that the charge is passive.
The standard Williamson's theorem~\cite{williamson1936on} corresponds to the special case of trivial symmetry, i.e. $Q=0$.

Normal mode decomposition when all charges $q_j$ are equal, i.e. $Q = qI$, $q \in \mathbb{R}$ is an important special case of Theorem~\ref{thm:nmd_q}, and we highlight it here.
\begin{proposition}\label{prop:passive}
    For any Gaussian state $\hat{\rho}$ with covariance matrix $\sigma$, the following statements are equivalent:
    \begin{enumerate}
        \item[(1)] The covariance matrix satisfies $[\sigma, \Omega] = 0$;
        \item[(2)] There exists a passive symplectic Gaussian unitary $\hat{S}$ such that, up to displacement,
        \begin{equation}\label{eq:thermal_decomp}
            \hat{S}\hat{\rho}\hat{S}^\dag = \bigotimes_{j=1}^n \frac{e^{-\beta_j \hat{a}_j^\dag \hat{a}_j}}{\tr[e^{-\beta_j \hat{a}_j^\dag \hat{a}_j}]} \,,
        \end{equation}
        is a product of thermal states with $\beta_j > 0$;
        \item[(3)] The covariance matrix consists of $(2 \times 2)$ blocks describing correlations between modes $j$ and $k$, that have the form $\sigma_{jk} = a_{jk}\exp(\theta_{jk} \Omega)$, with $a_{jk} = a_{kj}$, and $\theta_{jk} = -\theta_{kj}$ for all $j,k = 1, \dots, n$.
    \end{enumerate}
\end{proposition}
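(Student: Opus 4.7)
The plan is to observe that Proposition~\ref{prop:passive} is essentially the specialization of Theorem~\ref{thm:nmd_q} to $Q=I$, supplemented by a short block-matrix calculation for $(1)\Leftrightarrow(3)$.

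For $(1)\Rightarrow(2)$, I would apply Theorem~\ref{thm:nmd_q} with $M=\sigma$ and $Q=I$. Since $I$ trivially commutes with $\Omega$, the hypothesis $[\Omega M,\Omega Q]=0$ reduces, via $\Omega^{2}=-I$, to $[\sigma,\Omega]=0$, which is exactly statement $(1)$. The theorem then yields a symplectic $S$ with $S\sigma S^{\rm T}=\bigoplus_{j}\nu_{j}I_{2}$ and $SIS^{\rm T}=SS^{\rm T}=\bigoplus_{j}q_{j}I_{2}$. Because $[\sigma,\Omega]=0$, the complex-structure identification $\mathbb{R}^{2n}\cong\mathbb{C}^{n}$ (with $\Omega$ acting as multiplication by $i$) turns $\sigma$ into a positive-definite Hermitian matrix on $\mathbb{C}^{n}$, and its ordinary unitary diagonalization corresponds to a passive (orthogonal symplectic) $S$, giving $SS^{\rm T}=I$ and $q_{j}=1$ for all $j$. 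The uncertainty relation $\sigma+i\Omega\geq 0$ forces $\nu_{j}\geq 1$, so I can write $\nu_{j}=\coth(\beta_{j}/2)$ for a unique $\beta_{j}\in(0,\infty]$, matching the single-mode covariances of thermal states of $\hat{a}_{j}^{\dag}\hat{a}_{j}$ and yielding Eq.(\ref{eq:thermal_decomp}) up to a displacement that removes the first moment of $\hat{S}\hat{\rho}\hat{S}^{\dag}$. The converse $(2)\Rightarrow(1)$ is immediate: $\bigoplus_{j}\nu_{j}I_{2}$ manifestly commutes with $\Omega=\bigoplus_{j}\Omega_{1}$, and passivity ($SS^{\rm T}=I$ together with $S\Omega S^{\rm T}=\Omega$) gives $[S,\Omega]=0$, so $\sigma=S^{-1}\bigl(\bigoplus_{j}\nu_{j}I_{2}\bigr)S^{-\rm T}$ also commutes with $\Omega$.

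For $(1)\Leftrightarrow(3)$, I would proceed block by block. Since $\Omega=\bigoplus_{j=1}^{n}\Omega_{1}$, the equation $[\sigma,\Omega]=0$ decouples into independent commutation conditions $[\sigma_{jk},\Omega_{1}]=0$ on each $(2\times 2)$ block. The centralizer of $\Omega_{1}$ in $\mathbb{M}_{\mathbb{R}}(2,2)$ is the two-dimensional real algebra $\{aI+b\Omega_{1}:a,b\in\mathbb{R}\}$, isomorphic to $\mathbb{C}$, whose nonzero elements admit a unique polar form $a_{jk}\exp(\theta_{jk}\Omega)$ with $a_{jk}\geq 0$ and $\theta_{jk}\in[0,2\pi)$ (and $a_{jk}=0$ when the block vanishes, in which case $\theta_{jk}$ is a free choice). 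Enforcing $\sigma=\sigma^{\rm T}$ requires $\sigma_{kj}=\sigma_{jk}^{\rm T}$, and using $\exp(\theta\Omega)^{\rm T}=\exp(-\theta\Omega)$ I read off $a_{kj}=a_{jk}$ and $\theta_{kj}=-\theta_{jk}$.

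The substantive content of the proposition therefore rests entirely on Theorem~\ref{thm:nmd_q}, so no fundamentally new obstacle appears. The only subtlety I anticipate is ensuring that the simultaneous diagonalization can be realized by a passive (rather than merely symplectic) $S$ when $Q=I$; this is the step where the complex-structure reformulation of $\sigma$ as a Hermitian matrix is indispensable. Beyond that, the remaining work is routine block-matrix algebra and the standard parametrization $\nu_{j}=\coth(\beta_{j}/2)$ of thermal-state covariance eigenvalues.
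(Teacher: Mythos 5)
Your proof is correct and takes essentially the same route as the paper: specialize Theorem~\ref{thm:nmd_q} to a scalar charge for $(1)\Rightarrow(2)$, identify the symplectic eigenvalues $\nu_j\ge 1$ with thermal covariances, and settle $(3)$ by $2\times 2$ block algebra (your centralizer-of-$\Omega_1$ computation is exactly the content of Eq.(\ref{eq:u1M}), which the paper invokes for $(2)\Rightarrow(3)$). The one divergence is your complex-structure (Hermitian diagonalization) argument for passivity, which is sound but not needed: with $Q=I$ the theorem already gives $SS^{\rm T}=\bigoplus_j q_j I_2$, and a positive-definite symplectic matrix with scalar $2\times2$ blocks forces $q_j=1$ (each block must satisfy $q_j^2\Omega_1=\Omega_1$), so $S$ is automatically orthogonal — this is how the paper reads passivity directly off the theorem — although your argument, taken alone, would even furnish a self-contained proof of $(1)\Rightarrow(2)$ bypassing Theorem~\ref{thm:nmd_q}.
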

\begin{proof}
    $(1) \Rightarrow (2)$. If $[\sigma, \Omega] = 0$, the charge matrix $Q = q\bigoplus_{j=1}^n I_2$
    satisfies both assumptions in Eq.(\ref{eq:nmd_comm}) of Theorem~\ref{thm:nmd_q}, which then implies that there exists a symplectic matrix $S$ such that $SMS^{\rm T} = \bigoplus_{j=1}^n \nu_j I_2$ and $SS^{\rm T} = \bigoplus_{j=1}^n I_2$.
    The RHS in the former equation is exactly the covariance matrix of the thermal states in the RHS of Eq.(\ref{eq:thermal_decomp}) with $\beta_j \coloneqq \log\left(\frac{\nu_j+1}{\nu_j-1}\right)$, while the latter equation ensures that $\hat{S}$ is passive.
    
    $(2) \Rightarrow (3)$. This follows immediately from the form of all U(1)--invariant matrices given in Eq.(\ref{eq:u1M}).
    
    $(3) \Rightarrow (1)$. The form of $\sigma_{jk}$ directly yields $(1)$.
\end{proof}
\noindent Proposition~\ref{prop:passive} is the main result of Ref.~\cite{john2025complete}, where it was derived with no reference to symmetry considerations.

\section{Covariant Gaussian dilation}
\label{sec:dilation}

Any Gaussian channel can be realized via a Gaussian unitary that acts on the system and on an environment prepared in a Gaussian pure state~\cite{caruso2008multi}, a result which extends the Stinespring dilation theorem~\cite{stinespring1955positive}. 
In this section, we generalize this result to show that any covariant Gaussian channel can be realized via a symmetry-respecting Gaussian unitary that acts on the system and on an environment prepared in a Gaussian pure state that also respects the symmetry. 
To prove this generalization, we develop an approach for constructing such dilations, which is  different from both the proof of the aforementioned result for Gaussian channels in the absence of symmetries~\cite{caruso2008multi} and the proof of the covariant Stinespring dilation theorem for finite-dimensional systems, which relies on the Kraus decomposition~\cite{marvian2014symmetry}.
Naturally, the symplectic representations under consideration are assumed to admit invariant states, which, by Proposition~\ref{prop:squ}, implies that they are squeezed passive representations. 
We start by a quick review of Gaussian purifications which play a key role in our main proof of the dilation theorem (see Fig.~\ref{fig:dilation} for a proof sketch).

\subsection{Invariant Gaussian purification}

Consider an $n$--mode system $A$ and a symmetry group $G$ with symplectic Gaussian representation $\hat{S}_A(g): g \in G$ in Hilbert space $\H_A$.
Given any mixed Gaussian state $\hat{\rho} \in \B(\H_A)$ that is invariant under representation $\hat{S}_A$, does there exist a purification $\ket{\psi} \in \H_A \otimes \H_{\overbar{A}}$ of $\hat{\rho}$ that is Gaussian, and invariant under tensor representation $\hat{S}_A \otimes \hat{S}_{\overbar{A}}$, for an ancillary system $\overbar{A}$ and an appropriately chosen representation $\hat{S}_{\overbar{A}}$?
The answer is positive as we show in Appendix~\ref{app:purification}, and, analogously to the case of discrete-variable systems~\cite{nielsen2010quantum}, we prove in Appendix~\ref{app:purif_equiv} that it is unique up to a symmetry-respecting Gaussian unitary on the ancilla.
We sketch the construction of such purification in Fig.~\ref{fig:purification}.
\begin{figure}[ht]
    \centering{\includegraphics[width=0.49\textwidth]{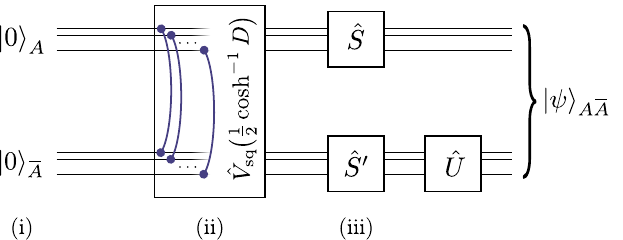}}
    \caption{\textbf{Circuit for purification}.
    We consider an invariant Gaussian state $\hat{\rho}_A$ with covariance matrix $\sigma = VDV^{\rm T}$, for symplectic $V$ and diagonal $D \ge I$.
    Registers $A$ and $\overbar{A}$ both comprise $n$ modes.
    The circuit operates in three steps: (i) preparation of the vacuum state on both registers, (ii) coupling of the two registers via $n$ 2--mode squeezers, each depending on a symplectic eigenvalue in $D$ and depicted by an arched line inside the box, and (iii) symplectic rotation on each of the registers $A$ and $\overbar{A}$ ($\hat{V}$ and $\hat{V}'$ are associated with symplectic matrices $V$ and $V\Omega$, respectively).
    Overall this circuit realizes the purification $\ket{\psi}$ with covariance matrix $\sigma_{\psi} = (V \oplus V\Omega) V_{\rm 2sq}(\tfrac{1}{2}\cosh^{-1}D)V_{\rm 2sq}(\tfrac{1}{2}\cosh^{-1}D)^{\rm T} (V \oplus V\Omega)^{\rm T}$, as given in Eq.(\ref{eq:purif_cov}).
    Lemma~\ref{lem:inv_gauss_purification} proves that this is an invariant Gaussian purification of $\hat{\rho}$ and guarantees that it is unique up to an invariant Gaussian unitary $\hat{U}$ applied on the ancillary register $\overbar{A}$ according to Eq.(\ref{eq:psi12U}).
    We note that steps (i), (ii) and (iii) are generally not individually invariant, but Proposition~\ref{prop:pure_interconversion} gives a $G$--invariant Gaussian unitary that prepares $\sigma_{\psi}$ from a $G$--invariant state that is uncorrelated across $A$ and $\overbar{A}$ which, however, requires the implementation of the globally correlated symplectic transformation $\sqrt{\sigma_{\psi}}$.}
    \label{fig:purification}
\end{figure}

\begin{lemma}[Invariant Gaussian purification]\label{lem:inv_gauss_purification}
    Consider a system with a finite number of bosonic modes and Hilbert space $\H_A$ equipped with symplectic Gaussian representation $\hat{S}_A(g): g \in G$ of a group $G$.
    Let $\hat{\rho}_A \in \B(\H_A)$ be an invariant state, i.e. $\hat{S}_A(g)\hat{\rho}_A\hat{S}_A(g)^\dag = \hat{\rho}_A$ for all $g \in G$.
    Then, the following statements hold true.
    \begin{enumerate}
        \item[(i)] There exists a Gaussian pure state $\ket{\psi}_{A\overbar{A}} \in \H_A \otimes \H_{\overbar{A}}$ for ancilla $\overbar{A}$ with Hilbert space $\H_{\overbar{A}} \cong \H_A$ equipped with representation $\hat{S}_{\overbar{A}}(g) \coloneqq \hat{S}^*_{A}(g): g\in G$, where the complex conjugate is defined in the common eigenbasis of position operators, such that $\ket{\psi}_{A\overbar{A}}$ is a purification of $\hat{\rho}_A$,
        \begin{equation}
            \hat{\rho}_A = \tr_{\overbar{A}}[\ketbra{\psi}{\psi}_{A\overbar{A}}] \,,
        \end{equation}
        and is $G$--invariant, i.e. for all $g \in G$,
        \begin{equation}
            (\hat{S}_A(g) \otimes \hat{S}_{\overbar{A}}(g))\ket{\psi}_{A\overbar{A}} = \ket{\psi}_{A\overbar{A}} \,.
        \end{equation}
        \item[(ii)] Assuming $\bmd$ and $\sigma$ are the displacement vector and covariance matrix of state $\hat{\rho}_A$, respectively, then, $\ket{\psi}$ can be chosen to have displacement vector $\bmd \oplus \bmo$, and covariance matrix given in basis $\hat{\bmr}_A \oplus \hat{\bmr}_{\overbar{A}}$ by
        \begin{equation}\label{eq:purif_cov}
            \hspace{15pt}
            \begin{pmatrix} \sigma &\hspace{-10pt} -\sqrt{-\sigma\Omega\sigma\Omega - I}\ \Omega Z_A \\[5pt] Z_A \Omega \sqrt{-\Omega\sigma\Omega\sigma - I} &\hspace{-10pt} Z_A \sigma Z_A \end{pmatrix} \,,
        \end{equation}
        where $\bmo$ is the zero vector on $\overbar{A}$, $Z_A \coloneqq (I_A \otimes Z)$, and $\Omega$ is the symplectic form on $A$.
        \item[(iii)] Suppose that $\hat{\rho}_A$ admits two Gaussian purifications $\ket{\psi_1}_{AB}$ and $\ket{\psi_2}_{AB}$ for some ancilla $B$ with Hilbert space $\H_B$ equipped with symplectic Gaussian representation $\hat{S}_B(g): g \in G$.
        If $\ket{\psi_1}_{AB}$ and $\ket{\psi_2}_{AB}$ are invariant, i.e. $(\hat{S}_A(g) \otimes \hat{S}_B(g))\ket{\psi_k}_{AB} = \ket{\psi_k}_{AB}$ for $k = 1,2$ and all $g \in G$, then there exists a Gaussian unitary $\hat{U}_B$ that commutes with representation $\hat{S}_B(g): g \in G$ such that
        \begin{equation}\label{eq:psi12U}
            \ket{\psi_2}_{AB} = \left(\hat{I}_A \otimes \hat{U}_B\right) \ket{\psi_1}_{AB} \,.
        \end{equation}
    \end{enumerate}
\end{lemma}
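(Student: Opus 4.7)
For parts (i) and (ii), my approach is to construct the purification explicitly and verify its covariance matrix matches Eq.(\ref{eq:purif_cov}). Apply Williamson's theorem to the reduced covariance matrix, writing $\sigma = V D V^{\rm T}$ with symplectic $V$ and $D = \bigoplus_{j=1}^{n_A} \nu_j I_2$, $\nu_j \ge 1$. Take the candidate Gaussian pure state on $A\overline{A}$ obtained by coupling paired modes of $A$ and $\overline{A}$, both initially in vacuum, via two-mode squeezers with parameters $r_j = \frac{1}{2}\cosh^{-1}\nu_j$, then applying suitable symplectic Gaussian unitaries associated to $V$ on $A$ and to its appropriately $Z_A$-conjugated counterpart on $\overline{A}$, and finally displacing by $\bmd \oplus \bmo$. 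A direct blockwise computation using the form of $V_{\rm 2sq}(r)$ in Eq.(\ref{eq:V2sq}), the identities $\Omega\Omega^{\rm T} = I$ and $Z_A \Omega Z_A = -\Omega$, and the fact that $D$ commutes with $Z_A$, yields exactly Eq.(\ref{eq:purif_cov}); purity is automatic because the state is generated from the vacuum by Gaussian unitaries, and can alternatively be checked from $(\sigma_\psi \Omega_{A\overline{A}})^2 = -I_{A\overline{A}}$. The upper-left block is manifestly $\sigma$, so $\ket{\psi}_{A\overline{A}}$ indeed purifies $\hat{\rho}_A$.

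Next, I equip $\overline{A}$ with the complex-conjugate representation $\hat{S}_{\overline{A}}(g) = \hat{S}_A(g)^*$; by Eq.(\ref{eq:conj_repr}) its phase-space matrix is $S_{\overline{A}}(g) = Z_A S_A(g) Z_A$. The upper-left and lower-right blocks of Eq.(\ref{eq:purif_cov}) are invariant under $S_A(g) \oplus S_{\overline{A}}(g)$ by the assumed $G$-invariance of $\sigma$ together with $Z_A S_{\overline{A}}(g) = S_A(g) Z_A$. The off-diagonal block is the subtle part: since $S_A(g)$ is symplectic and $S_A(g)\sigma S_A(g)^{\rm T} = \sigma$, the matrix $S_A(g)$ commutes with $\sigma\Omega$ by similarity and therefore with any analytic function of $\sigma\Omega$, in particular with $\sqrt{-\sigma\Omega\sigma\Omega - I}$. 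Combined with the symplectic identity $S_A(g)\Omega = \Omega S_A(g)^{-\rm T}$, this gives the invariance of the off-diagonal block. Invariance of the displacement $\bmd \oplus \bmo$ is immediate, concluding (i) and (ii).

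For part (iii), the plan is to invoke uniqueness of Gaussian purifications and then use joint invariance to force the relating Gaussian unitary to respect the symmetry. Any two Gaussian purifications of the same mixed Gaussian state, after padding one ancilla with vacuum modes if necessary, are related by a Gaussian unitary on the ancilla (a consequence of the Schmidt-like decomposition of Gaussian pure states into products of two-mode squeezed vacua dressed by passive Gaussian unitaries). Let $\hat{U}_B$ satisfy $\ket{\psi_2}_{AB} = (\hat{I}_A \otimes \hat{U}_B)\ket{\psi_1}_{AB}$. Applying $\hat{S}_A(g) \otimes \hat{S}_B(g)$ to both sides and using the $G$-invariance of $\ket{\psi_1}$ and $\ket{\psi_2}$,
\begin{equation*}
    (\hat{I}_A \otimes \hat{U}_B)\ket{\psi_1}_{AB} = (\hat{I}_A \otimes \hat{S}_B(g)\hat{U}_B \hat{S}_B(g)^\dag)\ket{\psi_1}_{AB} \,.
\end{equation*}
Because the reduced state of $\ket{\psi_1}_{AB}$ on $B$ has full support on the subspace where $\hat{U}_B$ acts nontrivially (after trimming extraneous vacuum factors), this forces $\hat{S}_B(g)\hat{U}_B\hat{S}_B(g)^\dag = \hat{U}_B$ for every $g \in G$, i.e., $\hat{U}_B$ is $G$-invariant.

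\textbf{Main obstacle.} The most delicate step is the support argument in (iii): Gaussian purifications of states with pure symplectic modes can differ by superfluous vacuum factors on $B$, so the reduced state on $B$ need not have full support on all of $\H_B$. The cleanest resolution is to first reduce to the minimal-ancilla purification via the Schmidt-like decomposition of Gaussian pure states, restricting attention to the sector of genuinely squeezed modes, where the support cancellation becomes routine and the commutation $\hat{S}_B(g)\hat{U}_B = \hat{U}_B\hat{S}_B(g)$ follows cleanly from joint invariance of the two purifications.
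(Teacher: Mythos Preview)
Your treatment of parts (i) and (ii) is correct and essentially matches the paper's own argument: the paper likewise verifies directly that the covariance matrix in Eq.(\ref{eq:purif_cov}) is well-defined, pure, and $G$-invariant block by block, using the commutation of $S_A(g)$ with $\sigma\Omega$ together with $Z_A S_{\overline{A}}(g) = S_A(g) Z_A$.

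Part (iii), however, has a genuine gap. Your support argument shows only that $\hat{U}_B$ and $\hat{S}_B(g)\hat{U}_B\hat{S}_B(g)^\dag$ agree on the support of $\rho_B = \tr_A[\ketbra{\psi_1}]$, and this support is in general a \emph{proper} subspace of $\H_B$: the modes of $B$ uncorrelated with $A$ (symplectic eigenvalue $1$) contribute a rank-one tensor factor. The proposed fix --- reduce to the minimal ancilla --- does not close the gap, because the statement demands a $G$-invariant Gaussian unitary on all of $B$. On the uncorrelated sector, $\hat{U}_B$ must still carry the pure Gaussian factor of $\ket{\psi_1}$ to the (generally different) pure Gaussian factor of $\ket{\psi_2}$, and nothing in your argument forces that part of $\hat{U}_B$ to be $G$-invariant. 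Worse, even on the correlated sector, knowing that $\hat{U}_B$ is a $G$-invariant Hilbert-space isometry between the two supports does not by itself guarantee that some \emph{Gaussian} extension of it to all of $B$ is $G$-invariant; one would have to exhibit such an extension.

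The paper proceeds quite differently for (iii), working entirely at the phase-space level. It identifies the image $B_c$ of the off-diagonal covariance block $\sigma_{1,BA}$ as a $G$-invariant symplectic subspace, shows that $W_c = V_B\Pi_c$ is a symplectic intertwiner from $B_c$ to the corresponding subspace for $\ket{\psi_2}$, and then invokes a non-trivial completion result (Lemma~\ref{prop:iso_completion}) to extend $W_c$ to a full $G$-invariant symplectic transformation on $B$. On the uncorrelated symplectic complement $B_u$, it appeals to Proposition~\ref{prop:pure_interconversion} (transitivity of free unitaries on free pure states) to match the remaining pure factors in a $G$-invariant way. Both ingredients --- the symplectic-intertwiner completion and the transitivity on invariant pure Gaussian states --- are essential and have no counterpart in your sketch.
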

\noindent We note that the purification given in Eq.(\ref{eq:purif_cov}) is a slight modification of the Gaussian purification\footnote{The original Gaussian purification~\cite{holevo2001evaluating} considers covariance matrix $Z_A \sigma_{\psi} Z_A$, which results in a symplectic form $\Omega_A \oplus -\Omega_A$ that is reflected on the second subsystem. Other Gaussian purifications have been considered, e.g. in Ref.~\cite{weedbrook2012gaussian}, but they generally do not respect the symmetries of the original state.} previously found by Holevo and Werner~\cite{holevo2001evaluating}.
We also emphasize that invariance is considered at the state vector level, i.e. $(\hat{S}_A(g) \otimes \hat{S}_B(g))\ket{\psi}_{AB} = \ket{\psi}_{AB}$, for all $g\in G$.
This is a stronger condition than the density operator being $G$--invariant, requiring that the state vector itself must carry the trivial representation of the group, as opposed to a general non-trivial one-dimensional representation.

\subsection{Covariant Gaussian dilation}

Equipped with invariant Gaussian purifications, we are ready to establish covariant Gaussian dilations of covariant Gaussian channels. 
 Recall that Gaussian channels are completely Gaussianity-preserving CPTP operations (see Appendix~\ref{app:channel_def}), i.e. CPTP maps which send Gaussian states to Gaussian states, even when acting only on a subsystem of a larger Gaussian system.

\begin{restatable}[Covariant Gaussian dilation]{theorem}{thmdilation}\label{thm:cov_gauss_dilation}
    Consider a pair of systems $A$ and $B$ with finite numbers of bosonic modes and Hilbert spaces $\H_A$ and $\H_B$. 
    Suppose each system is equipped with a symplectic Gaussian representation of a group $G$ that admits an invariant state (as characterized in Proposition~\ref{prop:squ}), denoted by $\hat{S}_A(g): g\in G$, and $\hat{S}_B(g): g\in G$, respectively.
    
    For any Gaussian channel $\E: \B(\H_A) \rightarrow \B(\H_B)$ that is covariant, i.e. $\E(\hat{S}_A(g) (\cdot) \hat{S}_A(g)^\dag) = \hat{S}_B(g) \E(\cdot) \hat{S}_B(g)^\dag$ for all $g \in G$, there exists a dilation of the form
    \begin{equation}\label{eq:stine_dilation}
        \E(\cdot) = \tr_{A\overbar{B}}\left[ \hat{V}_{AB\overbar{B}} \big((\cdot) \otimes \ketbra{\eta}{\eta}_{B\overbar{B}}\big) \hat{V}_{AB\overbar{B}}^\dag \right] \,,
    \end{equation}
    where ancilla $\overbar{B}$ has Hilbert space $\H_{\overbar{B}} \cong \H_B$ equipped with representation $\hat{S}_{\overbar{B}}(g) \coloneqq \hat{S}^*_{B}(g): g\in G$, where the complex conjugate is defined in the common eigenbasis of position operators, $\ket{\eta}_{B\overbar{B}}$ is an invariant pure Gaussian state, i.e. $(\hat{S}_B(g) \otimes \hat{S}_{\overbar{B}}(g))\ket{\eta}_{B\overbar{B}} = \ket{\eta}_{B\overbar{B}}$ for all $g \in G$, and $\hat{V}_{AB\overbar{B}}$ is a symplectic Gaussian unitary that commutes with representation $\hat{S}_A(g) \otimes \hat{S}_B(g) \otimes \hat{S}_{\overbar{B}}(g): g\in G$. 
    
    Furthermore, $\ket{\eta}_{B\overbar{B}}$ can be chosen to be uncorrelated between $B$ and $\overbar{B}$.
\end{restatable}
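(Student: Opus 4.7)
The plan is to build the covariant Gaussian dilation via an invariant Gaussian purification argument followed by a decoupling step, leveraging both the existence and uniqueness clauses of Lemma~\ref{lem:inv_gauss_purification}. The underlying idea is that any Gaussian channel is uniquely determined by its action on one half of a sufficiently correlated invariant Gaussian reference state, and that the reference state together with its image under $\E$ admits invariant Gaussian purifications that must be interconvertible by an invariant Gaussian unitary, from which the dilation is read off.

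First, I would fix a full-rank, strictly mixed, invariant Gaussian state $\hat{\tau}_A$ on $A$---for instance, a thermal state of an invariant positive-definite Gaussian Hamiltonian on $A$---and apply Lemma~\ref{lem:inv_gauss_purification}(i,ii) to obtain an invariant Gaussian purification $\ket{\phi}_{A\overbar{A}}$, where $\overbar{A}$ carries the complex-conjugate of $\hat{S}_A$. Strict mixedness of $\hat{\tau}_A$ guarantees that the $A$--$\overbar{A}$ off-diagonal block of the covariance matrix in Eq.~(\ref{eq:purif_cov}) is invertible, so $\ket{\phi}_{A\overbar{A}}$ carries full-rank Gaussian correlations. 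Applying $\E$ to the $A$-side then produces an invariant Gaussian state $\hat{\sigma}_{\overbar{A}B} = (\I_{\overbar{A}} \otimes \E)(\ketbra{\phi}{\phi}_{A\overbar{A}})$, which by Lemma~\ref{lem:inv_gauss_purification} admits a second invariant Gaussian purification $\ket{\Psi}_{\overbar{A}B\overbar{B}}$ on an ancilla $\overbar{B}$ carrying the complex-conjugate of $\hat{S}_B$.

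The decoupling step then compares two invariant Gaussian purifications of the same reduced state on $\overbar{A}$, both lifted to the common purifying system $AB\overbar{B}$,
\begin{equation*}
    \ket{\phi}_{A\overbar{A}} \otimes \ket{\eta}_{B\overbar{B}} \quad \text{and} \quad \ket{\xi}_A \otimes \ket{\Psi}_{\overbar{A}B\overbar{B}} \,,
\end{equation*}
where $\ket{\xi}_A$ and $\ket{\eta}_{B\overbar{B}}$ are invariant pure Gaussian states whose existence follows from Proposition~\ref{prop:squ}(iii) applied to each subsystem. In particular, $\ket{\eta}_{B\overbar{B}}$ can be taken as a tensor product of invariant pure Gaussian states on $B$ and $\overbar{B}$ separately, yielding the uncorrelated form claimed in the theorem. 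The uniqueness clause of Lemma~\ref{lem:inv_gauss_purification}(iii) then supplies an invariant Gaussian unitary $\hat{V}_{AB\overbar{B}}$---commuting with the joint representation $\hat{S}_A \otimes \hat{S}_B \otimes \hat{S}_{\overbar{B}}$---that maps the first state to the second. Defining $\E'(\cdot) \coloneqq \tr_{A\overbar{B}}[\hat{V}(\cdot \otimes \ketbra{\eta}{\eta}_{B\overbar{B}})\hat{V}^\dag]$ and tracing out $A$ and $\overbar{B}$ on both sides of the resulting identity immediately yields $(\I_{\overbar{A}} \otimes \E')(\ketbra{\phi}{\phi}_{A\overbar{A}}) = \hat{\sigma}_{\overbar{A}B} = (\I_{\overbar{A}} \otimes \E)(\ketbra{\phi}{\phi}_{A\overbar{A}})$.

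To upgrade this single-state agreement to the full identity $\E' = \E$, I would exploit the invertibility of the $A$--$\overbar{A}$ covariance block of $\ket{\phi}$: the triple $(X, Y, \bmxi)$ characterizing any Gaussian channel via Eq.~(\ref{eq:xychannel}) can be read off uniquely from the output moments on this reference state---$X$ from the cross-covariance block, $\bmxi$ from the output displacement on $B$, and $Y$ from the $B$-marginal covariance minus $X \sigma_{\hat{\tau}_A} X^{\rm T}$---so coincidence on $\ket{\phi}$ forces coincidence everywhere. I expect the principal obstacle to lie precisely in this determinacy step: one must verify that the off-diagonal block of Eq.~(\ref{eq:purif_cov}) is genuinely invertible whenever $\hat{\tau}_A$ is strictly mixed, and that invertibility of this block suffices to pin down the Gaussian channel on all input states. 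A secondary, largely bookkeeping, challenge is matching mode counts and representation choices across $A$, $\overbar{A}$, $B$, $\overbar{B}$ together with the padding states $\ket{\xi}_A$, $\ket{\eta}_{B\overbar{B}}$, so that Lemma~\ref{lem:inv_gauss_purification}(iii) applies cleanly on the full system $\overbar{A}AB\overbar{B}$.
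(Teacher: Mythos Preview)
Your overall strategy---invariant Gaussian purification followed by decoupling via Lemma~\ref{lem:inv_gauss_purification}(iii)---is exactly the paper's approach. However, there is a genuine gap in your purification step: you claim that $\hat{\sigma}_{\overbar{A}B}$ admits an invariant Gaussian purification $\ket{\Psi}_{\overbar{A}B\overbar{B}}$ using only an ancilla $\overbar{B}$ of $n_B$ modes. This fails in general. A Gaussian state on $n_{\overbar{A}}+n_B$ modes can have up to $n_{\overbar{A}}+n_B$ symplectic eigenvalues strictly greater than $1$, and each such eigenvalue requires one purifying mode. For a concrete obstruction, take $\E$ to be a replacement channel outputting a fixed strictly-mixed thermal state; then $\hat{\sigma}_{\overbar{A}B}=\hat{\tau}_{\overbar{A}}\otimes(\text{thermal})_B$ has all $n_A+n_B$ symplectic eigenvalues greater than $1$ and cannot be purified on $n_B$ modes whenever $n_A>0$. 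Padding afterwards with $\ket{\xi}_A$ does not rescue the argument, because the purification $\ket{\Psi}_{\overbar{A}B\overbar{B}}$ does not exist in the first place. This is not the ``secondary bookkeeping challenge'' you flagged; it is the central structural issue.

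The fix is precisely what the paper does: apply Lemma~\ref{lem:inv_gauss_purification}(i) to $\hat{\sigma}_{\overbar{A}B}$ as stated, so that the purifying ancilla is $\overbar{(\overbar{A}B)}=A\overbar{B}$ (using $\overbar{\overbar{A}}=A$), yielding a purification $\ket{\Phi}_{A\overbar{A}B\overbar{B}}$ directly on the full space with no padding needed. The other purification $\ket{\phi}_{A\overbar{A}}\otimes\ket{\eta}_{B\overbar{B}}$ already lives there, and Lemma~\ref{lem:inv_gauss_purification}(iii) applies cleanly. Your determinacy argument via the invertible cross-covariance block is valid; the paper instead uses full-rankness of $\hat{\tau}_A$ at the Hilbert-space level (every bounded $\hat{M}_A$ arises as $\tr_{\overbar{A}}[(\hat{I}\otimes\hat{N}_{\overbar{A}})\ketbra{\phi}]$), which has the mild advantage of not needing the Gaussianity of the dilated map $\E'$ for that step, but both routes are sound.
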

\noindent 
Fig.~\ref{fig:dilation} provides a sketch of the proof based on the notion of invariant purification and decoupling. The complete proof is deferred to Appendix~\ref{app:dilation_proof}, where we also present a phase space version of the statement and the converse of the theorem. Note that when the symmetry representations are trivial, we obtain the dilation of Gaussian channels, which was previously shown~\cite{caruso2008multi,serafini2017quantum} using a different approach.

\begin{figure}[t]
    \centering{\includegraphics[width=0.37\textwidth]{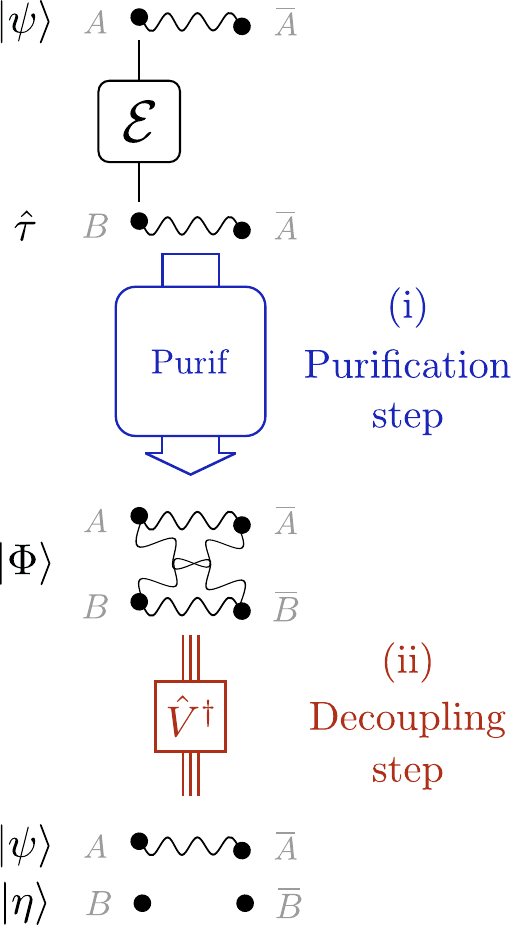}}
    \caption{
    \textbf{Gaussian Stinespring dilation via purification and decoupling}.
    Our proof of Theorem~\ref{thm:cov_gauss_dilation} provides a constructive method for implementing a covariant Gaussian channel $\mathcal{E}$ using $G$--invariant Gaussian unitaries that couple the system to an environment in a $G$--invariant Gaussian pure state.
    First, the channel is applied to the subsystem $A$ of a composite system $A\overbar{A}$, initially prepared in a $G$--invariant Gaussian pure state $\ket{\psi}_{A\overbar{A}}$, resulting in state $\hat{\tau}_{B\overbar{A}}$.
    In step~${\rm (i)}$, we construct a $G$--invariant Gaussian purification of $\hat{\tau}_{B\overbar{A}}$, denoted $\ket{\Phi}_{A\overbar{A}B\overbar{B}}$, whose existence is guaranteed by Lemma~\ref{lem:inv_gauss_purification}.  
    In step~${\rm (ii)}$, we decouple $\ket{\Phi}_{A\overbar{A}B\overbar{B}}$ via a $G$--invariant Gaussian unitary $\hat{V}^\dag_{AB\overbar{B}}$, obtaining $\ket{\psi}_{A\overbar{A}} \otimes \ket{\eta}_{B\overbar{B}}$, which is an equivalent purification of $\hat{\rho}_{\overbar{A}} \coloneqq \tr_A[\ketbra{\psi}_{A\overbar{A}}]$. 
    Lemma~\ref{lem:inv_gauss_purification} ensures that such a unitary exists.  Therefore, if $\hat{\rho}_{\overbar{A}}$ is full rank, we obtain the channel equality in Eq.~(\ref{eq:stine_dilation}). 
    Note that $\overbar{X}$ denotes a system whose Hilbert space is isomorphic to that of system $X$ but carries the complex conjugate representation of symmetry.
    This approach to dilation can also be applied in the setting of discrete variables.
    }
    \label{fig:dilation}
\end{figure}

Theorem~\ref{thm:cov_gauss_dilation} demonstrates that, for any symplectic Gaussian representation, a (covariant) Gaussian channel mapping $n_A$ to $n_B$ modes is guaranteed to have a (covariant) Gaussian dilation with ancillary qubits prepared in (pure) vacuum using at most $2n_B$ ancillary modes.
Therefore, similar to the dilation of general Gaussian channels, a covariant Gaussian channel from $n_A$ to $n_B$ modes can be realized by discarding an environment with at most $n_A + n_B$ modes.

The fact that the ancilla state $\ket{\eta}_{B\overbar{B}}$ can be chosen unentangled across the bipartition $B:\overbar{B}$ follows from Proposition~\ref{prop:squ}, which guarantees that the symplectic Gaussian representation $\hat{S}_B$ admits a $G$--invariant pure state, together with Proposition~\ref{prop:pure_interconversion} which states that $G$--invariant Gaussian transformations act transitively on the space of $G$--invariant Gaussian pure states.

It is worth noting that the proof of Theorem~\ref{thm:cov_gauss_dilation} provides a method to construct the dilating unitary $\hat{V}$.
One has freedom to choose an initial pure state $\ket{\psi}_{A\overbar{A}}$ provided that its reduced state on $A$ is full-rank and invariant under input representation $\hat{S}_A$.
Then, the purification step (i) outlined in Fig.~\ref{fig:dilation} constructs $G$--invariant Gaussian state $\ket{\Phi}_{A\overbar{A}B\overbar{B}}$, e.g. using the purification in Eq.(\ref{eq:purif_cov}).
The decoupling step (ii) consists of identifying a $G$--invariant symplectic Gaussian unitary $\hat{V}_{AB\overbar{B}}$ that maps between the two purifications of marginal state $\tr_A[\ketbra{\psi}_{A\overbar{A}}]$.
This is done by first identifying an intertwiner that maps between the modes in $AB\overbar{B}$ that are correlated with modes in $\overbar{A}$, and then completing it into a full $G$--invariant symplectic transformation (see Appendix~\ref{app:completion}).

As an example, consider a Gaussian channel $\E$ that acts on a system $A$ with $n$ modes equipped with $\Un(1)$ representation $\hat{O}(\theta) = \bigotimes_{j=1}^n \exp(i\theta q_j \hat{a}^\dag_j\hat{a}_j)$ with integer charges $q_j$.
If $\E$ is covariant with respect to the $\Un(1)$ symmetry, i.e.
\begin{equation}\label{eq:U1covdil}
    \E\left(\hat{O}(\theta)(\cdot)\hat{O}(\theta)^\dag\right) = \hat{O}(\theta)\E(\cdot)\hat{O}(\theta)^\dag \,,
\end{equation}
then Theorem~\ref{thm:cov_gauss_dilation} implies that it can be realized by coupling $A$ to $n$--mode ancillary systems $B$ with charges $\{q_j: j=1, \dots, n\}$, and $\overbar{B}$ with charges $\{-q_j: j=1, \dots, n\}$, all initially prepared in vacuum $\ket{0}^{\otimes 2n}$, using U(1)--invariant Gaussian unitaries.

We obtain covariant Gaussian dilations for all 1--mode phase-insensitive, or, equivalently, energy-conserving, channels via the constructive proof of Theorem~\ref{thm:cov_gauss_dilation} in Appendix~\ref{app:dilation_1mode}. 
In particular, we directly find dilating unitaries for all attenuator and amplifier channels, which are slight modifications of the standard dilating unitaries~\cite{weedbrook2012gaussian}, guaranteeing $\Un(1)$--invariance.

\section{Summary \& Outlook}
\label{sec:summary}

We develop a framework that characterizes symmetry-respecting Gaussian dynamics in bosonic systems.
Within this framework, we demonstrate that asymmetry splits into different types and derive families of monotones that are tractable for Gaussian states, quantifying asymmetry in the state's displacement vector and covariance matrix.
We also derive a family of conservation laws for Gaussian closed dynamics, some of which are violated when the dynamics is non-Gaussian.
Finally, we develop a new approach to channel dilation which relies only on an appropriate equivalence of purifications.
In the presence of symmetries, our dilation provides a constructive method of realizing covariant Gaussian channels via invariant Gaussian unitaries and pure states.

It is worth exploring the restrictions that phase space representations of symmetry impose on the characteristic functions of quantum states, which may lead to practical asymmetry witnesses.
In particular, the characteristic function gives rise to the $s$--ordered representations~\cite{cahill1969density}, such as the Wigner, Glauber, and Hussimi representations, which are reconstructible in a lab~\cite{banaszek1996direct,Leibfried1996experimental}.

Our statement of channel dilation can be quantified beyond the context of Gaussian covariant channels, by asking the question of whether channels with limited non-Gaussianity or asymmetry can be dilated via unitaries and ancillary states that themselves contain limited non-Gaussianity or asymmetry as measured by a chosen monotone.
For instance, one could employ a monotone such as the Wigner negativity~\cite{kenfack2004negativity} of non-Gaussian states, and ask what quantum channels may be implemented via Gaussian unitaries and ancillary states of fixed maximum non-Gaussianity as quantified by the chosen monotone.

Fermions are the counterpart of bosons, following canonical anti-commutation relations~\cite{tong2016lectures}.
Besides their natural significance, fermionic simulation is a promising path towards quantum advantage~\cite{ortiz2001quantum}, in particular due to the Jordan-Wigner transform~\cite{jordan1928paulische} which allows fermionic simulation on qubit devices.
Much like for bosonic systems, fermionic Gaussian systems are the most easily engineered, and so an analog of our analysis of symmetric Gaussian dynamics can be useful in the context of fermions as well.
It may also be of interest to investigate a resource theory of asymmetry on hybrid setups of continuous and discrete variables.

In general, specific quantum information tasks, such as property testing for Gaussian states~\cite{girardi2025gaussian}, motivate the study of specific symmetry representations.
We hope that our framework paves the way to exploring CV quantum information tasks that hinge on Gaussian representations of symmetry.

\section{Acknowledgments}

We thank David Jakab for numerous helpful discussions.
We also acknowledge support from NSF Phy-2046195, and NSF QLCI grant OMA-2120757.

\newpage
\onecolumngrid
\appendix
\newpage
\renewcommand{\tocname}{Appendix: Contents}
\tableofcontents

\resumetoc

\section{Notation and background on Gaussian CV}
\label{app:background}

\subsection{Notation}
\label{app:notation}

The complex conjugate is denoted by star ($^*$).
The zero matrix is denoted by $0$.
The inequality $A \ge B$ between square matrices $A,B$ of the same dimension means that $A-B$ has non-negative eigenvalues.
The space of bounded linear operators for Hilbert space $\H$ is written as $\B(\H)$.
The qubit Pauli operators are denoted by $I, X, Y, Z$.
We denote an operator $\hat{a}$ acting on the Hilbert space by a hat, and an operator $a$ acting in phase space without the hat.
We use the shorthand $\hat{a} \pm \text{H.c.} = \hat{a} \pm \hat{a}^\dag$ for operator $\hat{a}$.
Vectors of operators, $\bma$ or $\hat{\bma}$, are denoted by boldface.

We follow Serafini's notation~\cite{serafini2017quantum} for calculations involving vectors of operators.
We write the inner product as $\hat{\bma}^{\rm T}\hat{\bmb} = \sum_{j} \hat{a}_j \hat{b}_j$, and the outer product as $\hat{\bma}\hat{\bmb}^{\rm T}$, with components $\left[\hat{\bma}\hat{\bmb}^{\rm T}\right]_{jk} = \hat{a}_j\hat{b}_k$.
Note that in general $\left(\hat{\bma}\hat{\bma}^{\rm T}\right)^{\rm T} \neq \hat{\bma}\hat{\bma}^{\rm T}$ since operators $\hat{a}_j$ and $\hat{a}_k$ may not commute for $j \neq k$.
The commutators and anti-commutators of a list of operators can thus be succinctly expressed as 
\begin{equation}
    \left[\hat{\bma},\hat{\bma}^{\rm T}\right] = \hat{\bma}\hat{\bma}^{\rm T} - \left(\hat{\bma}\hat{\bma}^{\rm T}\right)^{\rm T} \,, \quad\quad\quad \left\{\hat{\bma},\hat{\bma}^{\rm T}\right\} = \hat{\bma}\hat{\bma}^{\rm T} + \left(\hat{\bma}\hat{\bma}^{\rm T}\right)^{\rm T} \,.
\end{equation}

\subsection{Elements of the Gaussian formalism}
\label{app:elements}

\stoptoc

\subsubsection{Bases for CV systems}

The choice of basis does not affect operators in the Hilbert space, but it determines matrix representations in the phase space.
We often make use of two common bases for CV systems.
The real basis is represented by the vector of position and momentum (quadrature) operators,
\begin{equation}\label{eq:basis_canonical}
    \hat{\bmr} \coloneqq (\hat{x}_1, \hat{x}_n, \dots, \hat{p}_1, \hat{p}_n)^{\rm T} \,.
\end{equation}
The complex basis is represented by the vector of creation and annihilation operators,
\begin{equation}\label{eq:basis_complex}
    \hat{\bma} \coloneqq (\hat{a}_1, \dots, \hat{a}_n, \hat{a}_1^\dag, \dots, \hat{a}_n^\dag)^{\rm T} \,.
\end{equation}
We use symbol $\overset{\bmr}{=}$ or $\overset{\bma}{=}$ in expressions when the RHS is a matrix representation of the LHS given in the real basis (\ref{eq:basis_canonical}) or the complex basis (\ref{eq:basis_complex}), respectively.

Both basis operator vectors satisfy the canonical commutation relation (CCR),
\begin{equation}\label{eq:ccr}
    \left[ \hat{\bmr}, \hat{\bmr}^{\rm T} \right] = \left[ \hat{\bma}, \hat{\bma}^{\rm T} \right] = i\Omega_n \,,\quad\text{where}\quad \Omega_n \overset{\bmr}{=} \bigoplus_{j=1}^n \begin{pmatrix} 0 & 1 \\ -1 & 0 \end{pmatrix} \overset{\bma}{=} -i \begin{pmatrix} 0 & I \\ -I & 0 \end{pmatrix} \,,
\end{equation}
is the $(2n \times 2n)$ symplectic form~\hspace{-3pt}\footnote{It is common in the literature (e.g.~\cite{serafini2017quantum,adesso2014continuous}) to consider the commutation relation $\left[ \hat{\bma}, \hat{\bma}^\dag \right]$ instead, leading to the mirrored symplectic form 
$\Omega_n \overset{\bma}{=} -i \begin{pmatrix} I & 0 \\ 0 & -I \end{pmatrix}$ in the complex basis.} and $I$ here denotes the $(n \times n)$ identity matrix.\\

Phase space objects that appear in Gaussian CV theory transform differently under a change of basis as demonstrated in Table~\ref{tab:transf_rules} according to their tensor type.
$(1,1)$--tensors, such as a symplectic matrix $V$, transform by similarity, while $(0,2)$--tensors, such as a Hamiltonian matrix $H$, and $(2,0)$--tensors, such as the covariance matrix $\sigma$ of a quantum state, transform by congruence.
All $(2,0)$--tensors and $(0,2)$--tensors can contract into $(1,1)$--tensors.
For example, $\Omega H$ and $\sigma\Omega^{-1}$ are $(1,1)$--tensors.
As a consequence, $\Omega H \Omega$ is a $(2,0)$--tensor and transforms identically to $\sigma$.

Any $(1,1)$--tensor $A$ expressed in terms of $(n \times n)$ real blocks as $A \overset{\bmxi}{=} \begin{pmatrix} A_1 & A_2 \\ A_3 & A_4 \end{pmatrix}$ in the permuted (``$xp$--ordered'') real basis $\hat{\bmxi} = (\hat{x}_1, \dots, \hat{x}_n, \hat{p}_1, \dots, \hat{p}_n)$, takes the following block form in the complex basis,
\begin{equation}\label{eq:real2complex}
    A \overset{\bma}{=} \begin{pmatrix} B_1 & B_2 \\ B_2^* & B_1^* \end{pmatrix} \,,\quad\text{where}\quad B_1 = \frac{A_1+A_4}{2} - i\frac{A_2-A_3}{2} \,,\quad
    B_2 = \frac{A_1-A_4}{2} + i\frac{A_2+A_3}{2} \,.
\end{equation}
A $(2,0)$--tensor or $(0,2)$--tensor, such as $\Omega_n$, requires a subsequent ``left-right mirroring'', achieved by right-multiplying with $X \otimes I_n$.

\subsubsection{Second-order Hamiltonians}

Recall that a Gaussian CV system of $n$ bosonic modes is described by a second order Hamiltonian $\hat{H}$ as given in Eq.(\ref{eq:ham}), where additive constants are ignored as they amount to unobservable global phases in the dynamics.
Using Eq.(\ref{eq:real2complex}), we express the Hamiltonian matrix $H$ in the complex basis in terms of two $(n \times n)$ blocks\footnote{In the main text, we use $H^{\text{p}}$ instead of $F$ and $H^{\text{np}}$ instead of $G$.} $F$ and $G$ as
\begin{equation}\label{eq:Hmatrix}
    H \overset{\bma}{=} \begin{pmatrix} G & F \\ F^* & G^* \end{pmatrix} \,,
\end{equation}
where we require that $G = G^{\rm T}$, and $F = F^\dag$ because $H$ is symmetric in the real basis.

For purely quadratic $\hat{H}_A, \hat{H}_B, \hat{H}_C$, the relation $[i\hat{H}_B, i\hat{H}_C] \overset{\bmr}{=} i\frac{1}{2}\hat{\bmr}^{\rm T} \Big( \Omega [-\Omega H_B, -\Omega H_C] \Big) \hat{\bmr}$ leads to the equivalence
\begin{equation}\label{eq:Hcomm}
    i\hat{H}_A = [i\hat{H}_B, i\hat{H}_C] \quad\Longleftrightarrow\quad -\Omega H_A = [-\Omega H_B, -\Omega H_C] \,.
\end{equation}
This allows us to prove Proposition~\ref{prop:Qprop} on passive purely quadratic observables.
\propQ*
\begin{proof}
    $(1) \Leftrightarrow (3)$. $[\widehat{Q}, \hat{H}_{\rm free}] = 0$ implies $[e^{i \widehat{Q} s}, \hat{H}_{\rm free}] = 0$ for all $s\in\mathbb{R}$.
    Conversely, by differentiating $[e^{i \widehat{Q} s}, \hat{H}_{\rm free}] = 0$ with respect to $s$ and setting $s=0$, we get $[\widehat{Q}, \hat{H}_{\rm free}] = 0$.

    $(3) \Leftrightarrow (4)$. This follows directly from Eq.(\ref{eq:Hcomm}) with $H_A = 0$, $H_B = H_{\rm free}$ and $H_C = Q$.
    
    $(4) \Leftrightarrow (2)$. Matrix $e^{-\Omega Q s}$ is the symplectic matrix associated with operator $e^{i \widehat{Q} s}$, for all $s\in\mathbb{R}$.
    Note that $\left(e^{-\Omega Q s}\right)^{\rm T} = e^{Q\Omega s}$.
    Then, $e^{-\Omega Q s}\left(e^{-\Omega Q s}\right)^{\rm T} = I$ for all $s\in\mathbb{R}$ if and only if $[Q,\Omega] = 0$.
    
    $(3) \Rightarrow (5)$ and
    $(5) \Rightarrow (2)$. The vacuum state is a non-degenerate eigenstate of $\hat{H}_{\rm free}$.
    Hence, if $\widehat{Q}$ commutes with $\hat{H}_{\rm free}$, it is also an eigenstate of $\widehat{Q}$.
    Conversely, if the vacuum state is an eigenstate of $\widehat{Q}$, then $e^{-\Omega Q s}\left(e^{-\Omega Q s}\right)^{\rm T} = I$ in order to preserve its covariance matrix.
\end{proof}

\subsubsection{Gaussian unitaries}

A purely quadratic Hamiltonian $\hat{H}$ generates a symplectic Gaussian unitary $\hat{S} = e^{i\hat{H}} = e^{\tfrac{i}{2}\hat{\bmxi}^{\raisebox{-0.3ex}{\small $\scriptstyle T$}} H \hat{\bmxi}}$, which acts on the basis $\hat{\bmxi}$ as $\hat{S}^\dag\hat{\bmxi}\hat{S} = S\hat{\bmxi}$, where the symplectic matrix $S$ associated with operator $\hat{S}$ is
\begin{equation}\label{eq:H2S}
    S = e^{-\Omega H} \overset{\bma}{=} \begin{pmatrix} U & V \\ V^* & U^* \end{pmatrix} \,,
\end{equation}
for $(n \times n)$ blocks $U,V$ satisfying $UU^\dag - VV^\dag = I$ and $UV^{\rm T} - VU^{\rm T} = 0$, in order to ensure that the transformed basis $S\hat{\bmxi}$ satisfies the CCR in Eq.(\ref{eq:ccr}).
Passive symplectic Gaussian unitaries do not mix creation and annihilation operators, so their associated symplectic matrices simplify to
$S \overset{\bma}{=} U \oplus U^*$, where $U$ is now a $(n \times n)$ unitary.

The real symplectic group $\Sp(2n,\mathbb{R})$ is connected, but not compact, so not all elements can be expressed as a single exponential $S = e^{-\Omega H}$.
Instead, every $S \in \Sp(2n,\mathbb{R})$ admits a singular value decomposition~\cite{arvind1995real,braunstein2005squeezing} in terms of compact and non-compact symplectic transformations,
\begin{equation}\label{eq:ozo}
    S \overset{\bmr}{=} O' \left(\bigoplus_{j=1}^n \begin{pmatrix} e^{r_j} & 0 \\ 0 & e^{-r_j} \end{pmatrix}\right) O^{\rm T} \overset{\bma}{=} O' \begin{pmatrix} \cosh{r} & \sinh{r} \\ \sinh{r} & \cosh{r} \end{pmatrix} O^{\rm T} \,,
\end{equation}
where $O, O' \in \Sp(2n,\mathbb{R}) \cap \Or(2n)$ and $r = \diag(r_1, \dots, r_n) \in \mathbb{R}^n$.
Phase-shifters and beam-splitters generate the maximal compact subgroup $\Sp(2n,\mathbb{R}) \cap \Or(2n) \cong \Un(n)$~\cite{bloch1962canonical}.
Together with the 1--mode squeezer, they generate the entire symplectic group $\Sp(2n,\mathbb{R})$~\cite{knill2001scheme}.

Finally, recall that every Gaussian unitary can be brought into the standard form $\hat{U} = \hat{D}_{\bmxi} \hat{S}$ for some Weyl displacement operator $\hat{D}_{\bmxi}$, up to an unobservable global phase.

\subsubsection{Gaussian states}

The set of Gaussian states is the collection of thermal and ground states of positive semi-definite second-order Hamiltonians.
They are fully characterized by the first two statistical moments, as given in Eq.(\ref{eq:moments}).
The covariance matrix takes the form 
\begin{equation}\label{eq:cov}
    \sigma \overset{\bma}{=} \begin{pmatrix} G & F \\ F^* & G^* \end{pmatrix} \,,
\end{equation}
for $(n \times n)$ blocks $G = G^{\rm T}$, and $F = F^\dag$ given by
\begin{equation}\label{eq:sigmaFG}
    F_{jk} = \langle \hat{a}_j\hat{a}^\dag_k + \hat{a}^\dag_k\hat{a}_j \rangle_{\hat{\rho}} - 2\langle\hat{a}_j\rangle_{\hat{\rho}} \langle\hat{a}^\dag_k\rangle_{\hat{\rho}} \,,\quad\quad
    G_{jk} = \langle \hat{a}_j\hat{a}_k + \hat{a}_k\hat{a}_j \rangle_{\hat{\rho}} - 2\langle\hat{a}_j\rangle_{\hat{\rho}} \langle\hat{a}_k\rangle_{\hat{\rho}} \,,
\end{equation}
where $\langle (\cdot) \rangle_{\hat{\rho}} \coloneqq \tr[(\cdot)\hat{\rho}]$.
The covariance matrix $\sigma$ satisfies the uncertainty relation $\sigma + i\Omega \ge 0$ in the real basis~\hspace{-2pt}\footnote{In our complex basis, the uncertainty relation looks like $(\sigma + i\Omega)(I \otimes X) \ge 0$ for Pauli $X$.}, and its symplectic eigenvalues are the ordinary eigenvalues of $|i\Omega\sigma|$. 
The uncertainty relation is equivalent to the symplectic eigenvalues being greater than 1, i.e. $|i\Omega\sigma|^2 \ge I$, giving the relation
\begin{equation}\label{eq:swsw}
    -\sigma\Omega\sigma\Omega - I \ge 0 \,,
\end{equation}
with equality if and only if $\sigma$ is the covariance matrix of a pure state, i.e. all symplectic eigenvalues are 1.

Three notable 1--mode states that we use throughout our discussion are the thermal state $\hat{\rho}_{\rm th}(\bar{n})$ of the free Hamiltonian in Eq.(\ref{eq:ham_free}) with $\bar{n} \coloneqq (e^\beta-1)^{-1}$, the coherent state $\ket{\alpha} \coloneqq \hat{D}_{\alpha}\ket{0}$ and the squeezed state $\ket{r} \coloneqq \hat{V}_{\rm 1sq}\ket{0}$.
Their decompositions~\cite{glauber1963coherent,stoler1970equivalence} in the eigenbasis of the number operator $\hat{a}^\dag\hat{a}$ (Fock basis) $\{\ket{k}: k \in \mathbb{N}\}$ are
\begin{equation}\label{eq:std_decomps}
    \hat{\rho}_{\rm th}(\bar{n}) = \sum_{k=0}^{\infty} \frac{\bar{n}^k}{(\bar{n}+1)^{k+1}} \ketbra{k}{k} \,,\quad 
    \ket{\alpha} = e^{-\frac{|\alpha|^2}{2}}\sum_{k=0}^\infty \frac{\alpha^k}{\sqrt{k!}}\ket{k} \,,\quad 
    \ket{r} = \frac{1}{\sqrt{\cosh r}} \sum_{k=0}^{\infty} \frac{(-\tanh r)^k}{\sqrt{(2k)!}} \ket{2k} \,.
\end{equation}

\resumetoc

\subsection{Engineering second-order Hamiltonians with negative energies}
\label{app:neg_freq}

We address the question of applicability of $\Un(1)$ representations with negative charges in the context of quantum thermodynamics, where physical Hamiltonians have positive frequencies.

We can engineer negative frequencies via modulation~\cite{caves1985new,tsang2012evading}.
Consider a pair of harmonic oscillators with frequencies $\omega_c \pm \delta > 0$, where $\omega_c > 0$ is called the carrier frequency and $\delta > 0$ is called the detuning from the carrier frequency. 
The total Hamiltonian is
\begin{equation}
    \hat{H}(t) = (\omega_{c}+\delta) \hat{a}_1^\dag\hat{a}_1 +   (\omega_{c}-\delta)  \hat{a}_2^\dag\hat{a}_2 + f(t)i\left( e^{i2\omega_{c}t} \hat{a}_1^\dag\hat{a}_2^\dag - e^{-i2\omega_{c}t} \hat{a}_1\hat{a}_2 \right) \,,
\end{equation}
where $f(t)$ is a time-dependent real function.
In the rotating frame defined by the free Hamiltonian $\omega_c (\hat{a}_1^\dag\hat{a}_1 + \hat{a}_2^\dag\hat{a}_2)$, $\hat{H}(t)$ is transformed to
\begin{equation}
    \hat{H}_0 + f(t)\hat{H}_{\text{int}} \coloneqq \delta \hat{a}_1^\dag\hat{a}_1 - \delta \hat{a}_2^\dag\hat{a}_2 + f(t) i\left( \hat{a}_1^\dag\hat{a}_2^\dag - \hat{a}_1\hat{a}_2 \right) \,,
\end{equation}
which corresponds to two harmonic oscillators with frequencies $\pm \delta$.
Since the interaction Hamiltonian $f(t)\hat{H}_{\text{int}}$ commutes with the intrinsic Hamiltonian $\hat{H}_0$, it generates an energy-conserving unitary. 
More precisely, the overall realized unitary in the rotating frame is
\begin{equation}
    \hat{U}(t)=\exp(-i t \hat{H}_0)\exp(-i \hat{H}_{\text{int}}\int_0^t ds f(s)) \,.
\end{equation}

\section{Symplectic symmetry representations}
\label{app:gauss_repr}

The central objects in the theory of asymmetry are operators that remain invariant under the action of a symmetry representation $\hat{U}$ and the operations that are covariant with the action of $\hat{U}$.
\begin{definition}\label{def:invariance_covariance}
    Let $G$ be a symmetry group with unitary representation $\{\hat{U}_A(g): g \in G\}$ in Hilbert space $\H_A$.
    
    We call an operator $\hat{V}: \H_A \rightarrow \H_A$ invariant under representation $\hat{U}_A$ if
    \begin{equation}
        [\hat{V}, \hat{U}_A(g)] = 0 \,,\quad\quad \text{ for all } g \in G \,.
    \end{equation}
    Let $\{\hat{U}_B(g): g \in G\}$ be a unitary representation of $G$ in Hilbert space $\H_B$.
    
    We call a quantum channel $\E: \B(\H_A) \rightarrow \B(\H_B)$ covariant under the input/output representations $\hat{U}_A, \hat{U}_B$ if
    \begin{equation}
        \E\left(\hat{U}_A(g)(\cdot)\hat{U}^\dag_A(g)\right) = \hat{U}_B(g)\E(\cdot)\hat{U}_B(g)^\dag \,,\quad\quad \text{ for all } g \in G \,.
    \end{equation}
\end{definition}
\noindent We may write $G$--invariant/$G$--covariant if the group representations are clear from context, or, simply invariant/covariant if the group is also clear from context.

\subsection{Squeezed passive representations in phase space}
\label{app:passive}

A general Gaussian transformation with non-zero displacement $\hat{U} = \hat{D}_{\bmxi}\hat{S}$ transforms the basis $\hat{\bmxi}$ as
\begin{equation}\label{eq:gauss_basis_transformation_xi}
    \hat{U}^\dag\hat{\bmxi}\hat{U} = S\hat{\bmxi} - \bmxi \,.
\end{equation}
Therefore, every Gaussian representation $g \mapsto \hat{U}(g) = \hat{D}_{\bmxi(g)}\hat{S}(g)$ in the Hilbert space corresponds to a semi-direct product in the phase space,
\begin{equation}\label{eq:gauss_repr_ps}
    S(g_2g_1) = S(g_2)S(g_1) \quad\quad\text{and}\quad\quad \bmxi(g_2g_1) = S(g_2)\bmxi(g_1) + \bmxi(g_2) \,, \quad\quad\text{for all } g_1, g_2 \in G \,.
\end{equation}
When interested in symplectic representations, i.e. representations with $\bmxi(g) = 0$ for all $g \in G$.

Recall that every symplectic representation that admits an invariant state is a squeezed passive representation $g \mapsto S(g) \overset{\bma}{=} R(U(g) \oplus U(g)^*)R^{-1}$ due to Proposition~\ref{prop:squ} which is equivalent to a passive representation up to a symplectic matrix $R$.
We thus focus on passive representations.

Every passive representation $g \mapsto \hat{S}(g)$ in the Hilbert space induces a unitary representation in the phase space via
\begin{equation}\label{eq:unitary_repr_ps}
    g \mapsto S(g) \overset{\bma}{=} U(g) \oplus U(g)^* \,,
\end{equation}
where $U(g)$ is unitary, for all $g \in G$.
In order to make the unitarity of the passive representation apparent, we work in the complex basis using the block form of Eq.(\ref{eq:unitary_repr_ps}).
To unify notation for this discussion, consider the $(1,1)$--tensor
\begin{equation}\label{eq:block_repr}
    V \overset{\bma}{=} \begin{pmatrix} F & G \\ G^* & F^* \end{pmatrix} \,,
\end{equation}
which may, for instance, represent a symplectic matrix $V$, a Hamiltonian matrix $V = \Omega H$, or a covariance matrix $V = \sigma \Omega^{-1}$.
Hence, blocks $F,G$ may need to satisfy additional constraints imposed by the represented tensor, e.g. $V\Omega V^{\rm T} = \Omega$, $H = H^{\rm T}$ or $\sigma = \sigma^{\rm T}$.

Using the transformation rule for $(1,1)$--tensors listed in Table~\ref{tab:transf_rules}, we find that block $F$ transforms by similarity, $F \rightarrow U(g)FU(g)^\dag$, while $G$ transforms by congruency, $G \rightarrow U(g)GU(g)^{\rm T}$,
\begin{equation}\label{eq:H_passive_transformation}
    \begin{pmatrix} F & G \\ G^* & F^* \end{pmatrix} \rightarrow \begin{pmatrix} U(g) & 0 \\ 0 & U(g)^* \end{pmatrix} \begin{pmatrix} F & G \\ G^* & F^* \end{pmatrix} \begin{pmatrix} U(g) & 0 \\ 0 & U(g)^* \end{pmatrix}^\dag = \begin{pmatrix} U(g)FU(g)^\dag & U(g)GU(g)^{\rm T} \\ \left( U(g)GU(g)^{\rm T} \right)^* & \left( U(g)FU(g)^\dag \right)^* \end{pmatrix} \,,
\end{equation}
for all $g \in G$.
Therefore, a $(1,1)$--tensor $V$ is invariant if and only if, for all $g \in G$,
\begin{equation}\label{eq:conditions_invariance}
\begin{split}
    U(g)FU(g)^\dag &= F \,,\\
    U(g)GU(g)^{\rm T} &= G \,.
\end{split}
\end{equation}
Importantly, had we chosen a $(2,0)$--tensor $\begin{pmatrix} G & F \\ F^* & G^* \end{pmatrix}$ instead, such as $\Omega H \Omega$ or $\sigma$, we would have arrived at the same invariance conditions in Eq.(\ref{eq:conditions_invariance}) for $F$ and $G$, because a $(2,0)$--tensor transforms by congruence rather than similarity.

\stoptoc

\subsection*{Invariance of transposed (1,1)--tensors and inverse (2,0)--tensors}

It is clear that if a $(1,1)$--tensor $V$ is invariant under a squeezed passive representation, then $V^{-1}$ is also invariant.
However, it is less clear whether $V^{\rm T}$ is invariant.
If $V$ corresponds to symplectic Gaussian unitary $\hat{V}$, then $V^{\rm T} = \Omega^{\rm T} V^{-1} \Omega$, with the transposes taken in the real basis, corresponds to symplectic Gaussian unitary
\begin{equation}
    \hat{V}' = \exp(-i\frac{\pi}{2} \hat{H}_{\rm free}) \hat{V}^{-1} \exp(i\frac{\pi}{2} \hat{H}_{\rm free})
\end{equation}
In general, if $\hat{V}$ respects a passive representation of symmetry $g \mapsto O(g)$, $\hat{V}'$ also respects the symmetry, i.e. $[V, O(g)] = 0$ for all $g \in G$ implies $[V^{\rm T}, O(g)] = 0$ for all $g \in G$.
However, this is not generally true for squeezed passive representations. 
In fact, $[V, S(g)] = 0$ for all $g \in G$ implies that $[V^{\rm T}, S'(g)] = 0$ for all $g \in G$, where $g \mapsto S'(g) = \Omega^{\rm T} S(g) \Omega$ is an equivalent, but, in general, different representation of symmetry.
We provide a condition that guarantees invariance of $V^{\rm T}$ under squeezed passive representations.
\begin{proposition}\label{prop:VT}
    Consider a group $G$ with squeezed passive representation $g \mapsto S(g) = R O(g) R^{-1}$, where $R$ is a symplectic matrix and $g \mapsto O(g)$ is a passive representation.
    Denote by $C \coloneqq {\rm comm}\{O(g): g \in G\}$ the commutant of the passive representation and suppose that $R^{\rm T}R$ is in the normalizer of $C$, i.e. $(R^{\rm T}R)C(R^{\rm T}R)^{-1} = C$.
    Then, a $(1,1)$--tensor $V$ is invariant only if $V^{\rm T}$ is also invariant under $S$.
\end{proposition}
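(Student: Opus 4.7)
The plan is to translate the invariance condition from the squeezed picture into the passive picture and then exploit the fact that the passive representation is orthogonal.

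First, I would rewrite the hypothesis $[V, S(g)]=0$ for all $g\in G$ by substituting $S(g) = R O(g) R^{-1}$. This gives $[R^{-1} V R,\, O(g)] = 0$, i.e.\ the matrix $M \coloneqq R^{-1} V R$ lies in the commutant $C$. The goal becomes showing $R^{-1} V^{\rm T} R \in C$.

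The key observation is that $C$ is closed under transposition: for any passive representation, $O(g)^{\rm T} = O(g)^{-1} = O(g^{-1})$ is itself an element of $\{O(g): g\in G\}$, so if $M$ commutes with every $O(g)$, then taking the transpose of $MO(g)=O(g)M$ yields $O(g)^{\rm T} M^{\rm T} = M^{\rm T} O(g)^{\rm T}$, i.e.\ $M^{\rm T}$ commutes with every $O(g^{-1})$, hence $M^{\rm T} \in C$.

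Next I would compute, using $V = R M R^{-1}$, the identity
\begin{equation*}
R^{-1} V^{\rm T} R \;=\; R^{-1} R^{-\rm T} M^{\rm T} R^{\rm T} R \;=\; (R^{\rm T} R)^{-1}\, M^{\rm T}\, (R^{\rm T} R).
\end{equation*}
By the previous step $M^{\rm T} \in C$, and by the normalizer hypothesis $(R^{\rm T} R) C (R^{\rm T} R)^{-1} = C$, the conjugate $(R^{\rm T} R)^{-1} M^{\rm T} (R^{\rm T} R)$ again lies in $C$. Thus $R^{-1} V^{\rm T} R$ commutes with $O(g)$ for every $g$, which rewinds to $[V^{\rm T}, S(g)]=0$ for all $g\in G$, proving the claim.

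The substantive step is the normalizer condition, and I expect that is the subtle place to frame correctly. When $R$ is itself orthogonal symplectic (the unsqueezed case), $R^{\rm T} R = I$ trivially normalizes $C$ and the statement reduces to the routine fact that transposition preserves invariance under a passive representation. In the truly squeezed case the positive-definite symplectic matrix $R^{\rm T} R$ measures the squeezing, and the hypothesis is precisely what is needed to close the gap between conjugating $M^{\rm T}$ by $R^{\rm T} R$ and staying inside the commutant. I would flag at the end that this normalizer condition is automatically satisfied, for example, when the squeezing is aligned with the irrep decomposition of $O$, which is the setting relevant for Fig.~\ref{fig:repr} and for the discussion in Section~\ref{sec:type2mono}.
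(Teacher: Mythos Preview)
Your proof is correct and follows essentially the same approach as the paper: both use that $R^{-1}VR\in C$, that $C$ is closed under transposition because $O$ is orthogonal, and that conjugation by $R^{\rm T}R$ preserves $C$; you simply apply these in a different order (transpose first, then normalize) whereas the paper normalizes first and then transposes. One caveat on your closing remark: the example in Fig.~\ref{fig:repr} is actually presented in the paper as a case where $R^{\rm T}R$ \emph{fails} to normalize $C$, so you should not cite it as a setting where the hypothesis is automatically satisfied.
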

\begin{proof}
    $V$ is invariant under $S$, i.e. $[V, R O(g) R^{-1}] = 0$, or, equivalently,
    \begin{equation}
        O(g) (R^{-1} V R) = (R^{-1} V R) O(g) \,,
    \end{equation}
    for all $g \in G$, which implies that $R^{-1}VR \in C$.
    Assuming that $R^{\rm T}R$ normalizes the commutant $C$, we also have
    \begin{equation}
        R^{\rm T}VR^{\rm -T} = (R^{\rm T}R) (R^{-1}VR) (R^{\rm T}R)^{-1} \in C \,.
    \end{equation}
    Therefore, $O(g) (R^{\rm T} V R^{-\rm T}) = (R^{\rm T} V R^{-\rm T}) O(g)$ for all $g \in G$, which implies, by taking the transpose, that
    \begin{equation}\label{eq:R-1VTR}
        O(g)^{\rm T} (R^{-1} V^{\rm T} R) = (R^{-1} V^{\rm T} R) O(g)^{\rm T} \,,
    \end{equation}
    for all $g \in G$.
    Since $g \mapsto O(g)$ is an orthogonal representation, we have $O(g)^{\rm T} = O(g^{-1})$ for all $g \in G$, so, by replacing $g^{-1}$ with $g$ in Eq.(\ref{eq:R-1VTR}), we obtain
    \begin{equation}
        O(g) (R^{-1} V^{\rm T} R) = (R^{-1} V^{\rm T} R) O(g) \,,
    \end{equation}
    for all $g \in G$, which is equivalent to
    $[V^{\rm T}, R O(g) R^{-1}] = 0$ for all $g \in G$, i.e. $V^{\rm T}$ is invariant under $S$.
\end{proof}
\noindent The statement of Proposition~\ref{prop:VT} holds also for a $(2,0)$--tensor $V$ and its inverse $V^{-1}$ instead of its transpose.
The assumption of Proposition~\ref{prop:VT} is trivially satisfied when $R^{\rm T}R \in C$, for example when $R = I$.
On the contrary, consider $R = V_{\rm 1sq}(r)$ for some $r \neq 0$ and a passive representation $g \mapsto O(\theta)$ of $\Un(1)$ symmetry on 1 mode (see Fig.~\ref{fig:repr}), so that $\theta \mapsto RO(\theta)R^{-1}$ admits $RR^{\rm T}$ as an invariant covariance matrix, corresponding to some squeezing parameter $r$.
Due to Eq.(\ref{eq:u1M}), the relevant commutant is $C = \{ m\exp(\theta\Omega): m \in \mathbb{R} \,, \theta \in [0,2\pi)\}$, so $R^{\rm T}R = V_{\rm 1sq}(2r)$ does not normalize $C$.
As a consequence, the squeezed state with opposite squeezing $-r$ is not invariant because its covariance matrix is the inverse $(RR^{\rm T})^{-1}$.

\subsection*{Complex conjugate representation}

\resumetoc

The complex conjugate representation $g \mapsto \hat{S}(g)$ has associated phase space representation $g \mapsto S(g)$ given by
\begin{equation}\label{eq:conj_repr_app}
    \widetilde{S}(g) \overset{\bmr}{=} (I_n \otimes Z)S(g)(I_n \otimes Z) \overset{\bma}{=} (X \otimes I_n)S(g)(X \otimes I_n) \overset{\bma}{=} U(g)^* \oplus U(g) \,,
\end{equation}
where $I_n \otimes Z$ maps $\hat{x}_j \mapsto \hat{x}_j$ and $\hat{p}_j \mapsto -\hat{p}_j$ for all $j$ in the real basis, and $X,Z$ are the qubit Pauli operators.

\subsection{Invariance in phase space}
\label{app:ps_invariance}

The transformation rules in Table~\ref{tab:transf_rules} determine the invariance conditions under symplectic representations.

\begin{proposition}\label{prop:symmetry_phasespace}
    Let $G$ be a symmetry group with symplectic Gaussian representation $g \mapsto \hat{S}(g)$ in Hilbert space $\H$, and associated representation $g \mapsto S(g)$ in phase space.
    Then,
    \begin{enumerate}
        \item[(i)] a second-order Hamiltonian $\hat{H} = \frac{1}{2} \hat{D}_{\bmxi}^\dag \hat{\bmxi}^{\raisebox{-0.5ex}{$\scriptstyle T$}} H \hat{\bmxi} \hat{D}_{\bmxi}$ is invariant if and only if
        \begin{equation}\label{eq:invariance_conditions_H}
            S(g) \bmxi = \bmxi \,,\quad\text{ and }\quad S(g) \Omega H \Omega S(g)^{\rm T} = \Omega H \Omega \,,\quad\quad \text{ for all } g \in G \,,
        \end{equation}
        \item[(ii)] a Gaussian state $\hat{\rho} \in \B(\H)$ is invariant if and only if its displacement vector $\bmd$ and covariance matrix $\sigma$ satisfy
        \begin{equation}\label{eq:invariance_conditions_rho}
            S(g) \bmd = \bmd \,,\quad\text{ and }\quad S(g) \sigma S(g)^{\rm T} = \sigma \,,\quad\quad \text{ for all } g \in G \,,
        \end{equation}
        \item[(iii)] a Gaussian unitary $\hat{U} = \hat{D}_{\bmxi}\hat{V}: \B(\H) \rightarrow \B(\H)$ is invariant if and only if
        \begin{equation}\label{eq:invariance_conditions_U}
            S(g) \bmxi = \bmxi \,,\quad\text{ and }\quad S(g) V S(g)^{-1} = V \,,\quad\quad \text{ for all } g \in G \,.
        \end{equation}
    \end{enumerate}
    Let $G$ have squeezed passive representations $g \mapsto \hat{S}_A(g)$ and $g \mapsto \hat{S}_B(g)$ in input and output Hilbert spaces $\H_A, \H_B$, with associated representations $g \mapsto S_A(g)$ and $g \mapsto S_B(g)$ in phase space.
    Then,
    \begin{enumerate}
        \item[(iv)] a Gaussian channel $\E: \B(\H_A) \rightarrow \B(\H_B)$ with phase space action as given in Eq.(\ref{eq:xychannel}) is covariant if and only if
        \begin{equation}\label{eq:covariance_conditions}
            S_B(g) \bmxi = \bmxi \,,\quad XS_A(g) = S_B(g)X \,,\quad\text{ and }\quad S_B(g)YS_B(g)^{\rm T} = Y \,,\quad\quad \text{ for all } g \in G \,.
        \end{equation}
    \end{enumerate}
\end{proposition}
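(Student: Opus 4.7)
My plan is to derive each of the four equivalences directly from the single fundamental identity $\hat{S}(g)^\dag \hat{\bmr}\, \hat{S}(g) = S(g)\hat{\bmr}$ (equivalently, $\hat{S}(g)\hat{\bmr}\hat{S}(g)^\dag = S(g)^{-1}\hat{\bmr}$), combined with the uniqueness of the phase space characterization of Gaussian objects. The common strategy for all four parts is: push the conjugation $\hat{S}(g)(\cdot)\hat{S}(g)^\dag$ through to the level of phase space tensors, read off how each tensor has transformed, and then match to Table~\ref{tab:transf_rules} to identify the necessary and sufficient invariance/covariance condition. Uniqueness of the Gaussian parametrization (a Gaussian state by $(\bmd,\sigma)$, a Gaussian unitary by $(\bmxi,V)$ up to a global phase, a Gaussian channel by $(\bmxi,X,Y)$) converts Hilbert-space equalities into equalities of matrices and vectors.

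For part (i), I will first establish the key auxiliary fact $\hat{S}(g)\hat{D}_{\bmxi}\hat{S}(g)^\dag = \hat{D}_{S(g)\bmxi}$, which follows from the identity $\Omega S(g)^{-1} = S(g)^{\rm T}\Omega$ (a rewriting of $S\Omega S^{\rm T}=\Omega$). This reduces conjugation of $\hat{H}=\tfrac{1}{2}\hat{D}_{\bmxi}^\dag\hat{\bmr}^{\rm T}H\hat{\bmr}\hat{D}_{\bmxi}$ to conjugation of $\hat{\bmr}^{\rm T}H\hat{\bmr}$, giving a transformed quadratic form with matrix $S(g)^{-\rm T}H S(g)^{-1}$ and displacement $S(g)\bmxi$. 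A symplectic manipulation $S\Omega H\Omega S^{\rm T} = \Omega (S^{-\rm T}HS^{-1})\Omega$ then converts $H = S(g)^{\rm T} H S(g)$ into the stated condition $S(g)\Omega H\Omega S(g)^{\rm T} = \Omega H\Omega$. Part (iii) runs along the same lines: the unique decomposition $\hat{U}=\hat{D}_{\bmxi}\hat{V}$ transforms to $\hat{D}_{S(g)\bmxi}\hat{V}'$, where $\hat{V}'$ is the symplectic Gaussian unitary with phase-space matrix $S(g)VS(g)^{-1}$ (the $(1,1)$-tensor transformation rule).

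For parts (ii) and (iv), I will apply the transformation rules for $\bmd$ and $\sigma$: under $\hat{\rho}\mapsto \hat{S}(g)\hat{\rho}\hat{S}(g)^\dag$ one computes directly from the definitions in Eq.(\ref{eq:moments}) that $\bmd\mapsto S(g)\bmd$ and $\sigma\mapsto S(g)\sigma S(g)^{\rm T}$. Because a Gaussian state is uniquely determined by these two moments, the operator-level equation $\hat{S}(g)\hat{\rho}\hat{S}(g)^\dag=\hat{\rho}$ is equivalent to Eqs.(\ref{eq:invariance_conditions_rho}). For the channel in part (iv), I will evaluate $\E(\hat{S}_A(g)\hat{\rho}\hat{S}_A(g)^\dag)$ and $\hat{S}_B(g)\E(\hat{\rho})\hat{S}_B(g)^\dag$ on an arbitrary Gaussian input, propagate both sides through the phase-space action $(\bmd,\sigma)\mapsto (X\bmd+\bmxi,X\sigma X^{\rm T}+Y)$, and match the resulting displacements and covariances: the $\bmd$-linear part yields $XS_A(g)=S_B(g)X$, the $\bmd$-independent part yields $S_B(g)\bmxi=\bmxi$, and the $\sigma$ part yields $S_B(g)YS_B(g)^{\rm T}=Y$. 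Since Gaussian channels are determined by their action on Gaussian inputs (as free parameters $X,Y,\bmxi$), these conditions are necessary and sufficient.

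The main technical obstacle is keeping careful track of the distinction between transposition, inversion, and symplectic conjugation, since part (i) invariance is expressed through $\Omega H\Omega$ rather than $H$ itself, while parts (ii)--(iv) use $\sigma$, $V$, $X,Y$ directly. Resolving this requires systematic use of the identities $\Omega S^{\rm T} = S^{-1}\Omega$ and $S\Omega = \Omega S^{-\rm T}$ to interchange the forms. A minor subtlety in part (iii) is the global phase in the decomposition $\hat{U}=\hat{D}_{\bmxi}\hat{V}$: one must argue that invariance of $\hat{U}$ up to a phase forces invariance of both $\bmxi$ and $V$ separately, which follows from the uniqueness of the polar Gaussian decomposition modulo phase.
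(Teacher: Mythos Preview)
Your approach to parts (i), (ii), and (iv) matches the paper's: push conjugation through to phase-space tensors, read off the transformation rules from Table~\ref{tab:transf_rules}, and invoke uniqueness of the Gaussian parametrization. That is fine.

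The gap is in the \emph{converse} direction of part~(iii). You write that ``uniqueness of the Gaussian parametrization\ldots converts Hilbert-space equalities into equalities of matrices and vectors'' and later treat the resulting phase issue as ``minor''. But for symplectic Gaussian unitaries, the correspondence $\hat V \leftrightarrow V$ is the metaplectic double cover: two operators $\hat V$ and $-\hat V$ share the same symplectic matrix $V$. Consequently, from $S(g)VS(g)^{-1}=V$ you can only conclude $\hat S(g)\hat V\hat S(g)^{\dag}=\pm\hat V$. The sign defines a homomorphism $G\to\{\pm 1\}$, and nothing in your argument rules out its being nontrivial (for disconnected $G$, continuity does not force it to be $+1$). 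This is precisely why the paper flags that ``the converse needs care''.

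The paper closes this gap not by a uniqueness argument but by an explicit construction at the Hamiltonian level: it takes the polar decomposition $V=O\sqrt{V^{\rm T}V}$, uses Proposition~\ref{prop:VT} (that $V^{\rm T}$ is invariant whenever $V$ is, for passive and hence squeezed passive representations) to show that both factors $O$ and $\sqrt{V^{\rm T}V}$ are individually $G$-invariant symplectic matrices, and then writes each as a \emph{single} exponential $e^{-\Omega H_i}$ of an invariant symmetric $H_i$ (the positive-definite factor via its logarithm, the orthogonal factor via surjectivity of $\exp$ on the compact centralizer). Part~(i) then gives invariance of each $\hat H_i$, hence of $\hat V=e^{i\hat H_1}e^{i\hat H_2}$ exactly, with no sign ambiguity. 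Your proposal does not supply any substitute for this step.
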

\begin{proof}
    First, note that a displacement operator $\hat{D}_{\bmxi}$ is invariant under a representation $g \mapsto S(g)$ if and only if vector $\bmxi$ is invariant, because $\hat{S}(g) \hat{D}_{\bmxi} \hat{S}(g)^\dag = \hat{D}_{S(g)\bmxi}$, for all $g \in G$.
    
    Statement \emph{(i)} follows directly from $\hat{S}(g)^\dag \hat{H} \hat{S}(g) = \frac{1}{2} \hat{D}_{S(g)\bmxi}^\dag \hat{\bmxi}^{\raisebox{-0.5ex}{$\scriptstyle T$}} S(g)^{\rm -T}HS(g) \hat{\bmxi} \hat{D}_{S(g)\bmxi}$, for all $g \in G$ and the transformation rule for $H$ listed in Table~\ref{tab:transf_rules}.

    Statement \emph{(ii)} follows directly from the fact that Gaussian state $\hat{\rho}$ is uniquely specified by its displacement vector $\bmd$ and covariance matrix $\sigma$ and the transformation rules for $\bmd, \sigma$ listed in Table~\ref{tab:transf_rules}.

    The direct part of statement \emph{(iii)}, i.e. the part where we assume that $\hat{U}$ is invariant, follows immediately from the transformation rule for $V$ listed in Table~\ref{tab:transf_rules}.
    The converse needs care, since, in general, a symplectic matrix $V$ corresponds to a symplectic Gaussian unitary $\hat{V} = e^{i\hat{H}_1}e^{i\hat{H}_2}$ that cannot be expressed as a single exponential due to the non-compactness of the real symplectic group.
    Consider the singular value decomposition in Eq.(\ref{eq:ozo}), which can be re-expressed as $V = O\sqrt{V^{\rm T}V}$ for an orthogonal symplectic $O$ and positive-definite symplectic $\sqrt{V^{\rm T}V}$.
    First, note that we can write $\sqrt{V^{\rm T}V}$ in terms of an exponential, i.e. $\sqrt{V^{\rm T}V} = e^{-\Omega H}$, because, by Eq.(\ref{eq:ozo}), $V^{\rm T}V \overset{\bmr}{=} O\left(\bigoplus_{j=1}^n \diag\left(e^{2r_j}, e^{-2r_j}\right)\right)O^{\rm T}$, and taking the square root gives
    \begin{equation}
        \sqrt{V^{\rm T}V} = \exp(O\left(\bigoplus_{j=1}^n \diag\left(r_j,-r_j\right)\right)O^{\rm T}) = e^{-\Omega H} \,,
    \end{equation}
    for $H \coloneqq O\left(\bigoplus_{j=1}^n \Omega\diag\left(r_j,-r_j\right)\right)O^{\rm T} = H^{\rm T}$ where we used that $O$ is orthogonal symplectic so $\Omega_n O = O\Omega_n$.
    
    Next, consider the case when the representation $S$ is passive.
    Then, according to Proposition~\ref{prop:VT}, $V^{\rm T}$ is invariant, and as a consequence $\sqrt{V^{\rm T}V}$ and $O$ are also invariant.
    Therefore, we can write $O = e^{-\Omega H_1}$ and $\sqrt{V^{\rm T}V} = e^{-\Omega H_2}$, for some invariant $H_1, H_2$.
    Then, $\hat{H}_1, \hat{H}_2$ are both invariant under $g \mapsto \hat{S}(g)$, and so is $\hat{V} = e^{i\hat{H}_1}e^{i\hat{H}_2}$.
    
    If the representation $g \mapsto S(g) = RO(g)R^{-1}$ is squeezed passive for a symplectic Gaussian unitary $\hat{R}$ and passive representation $g \mapsto O(g)$, then $R^{-1}VR$ is invariant under $g \mapsto O(g)$, so there exist $\hat{R}^\dag\hat{H}_1\hat{R}$ and $\hat{R}^\dag\hat{H}_2\hat{R}$ that are invariant under $g \mapsto \hat{O}(g)$ such that $\hat{R}^\dag\hat{V}\hat{R} = \exp(i\hat{R}^\dag\hat{H}_1\hat{R})\exp(i\hat{R}^\dag\hat{H}_2\hat{R})$ is also invariant.
    Therefore, $\hat{H}_1, \hat{H}_2$ and $\hat{V} = e^{i\hat{H}_1}e^{i\hat{H}_2}$ are invariant under $g \mapsto \hat{S}(g)$.
    
    Statement \emph{(iv)} in the case of equal input/output dimensions follows directly from the fact that Gaussian channel $\hat{\E}$ is uniquely specified by $\bmxi, X, Y$ and their transformation rules listed in Table~\ref{tab:transf_rules}.
    The assumption of squeezed passive representation guarantees, due to Proposition~\ref{prop:squ}, that the representations admit invariant states, which is relevant in the direct part, i.e. when the channel is assumed covariant.
    
    To prove statement \emph{(iv)} in the general case where input and output dimensions may not be equal, let $\hat{\rho}$ be a Gaussian quantum state with displacement vector $\bmd$ and covariance matrix $\sigma$.
    Consider the action of group element $g \in G$ followed by the channel,
    \begin{equation}
    \begin{split}
        &\bmd \xrightarrow{\hat{S}_A(g)} S_A(g)\bmd \xrightarrow{\E} XS_A(g)\bmd + \bmxi \eqqcolon \bmd_1' \,, \text{ and } \\
        &\sigma \xrightarrow{\hat{S}_A(g)} S_A(g)\sigma S_A(g)^{\rm T}  \xrightarrow{\E} XS_A(g)\sigma S_A(g)^{\rm T}X^{\rm T} + Y \eqqcolon \sigma_1' \,.
    \end{split}
    \end{equation}
    Now consider the action of the channel followed by group element $g \in G$,
    \begin{equation}
    \begin{split}
        &\bmd \xrightarrow{\E} X\bmd + \bmxi \xrightarrow{\hat{S}_B(g)} S_B(g)X\bmd + S_B(g)\bmxi \eqqcolon \bmd_2' \,, \text{ and } \\
        &\sigma \xrightarrow{\E} X\sigma X^{\rm T} + Y \xrightarrow{\hat{S}_B(g)} S_B(g)X\sigma X^{\rm T}S_B(g)^{\rm T} + S_B(g)YS_B(g)^{\rm T} \eqqcolon \sigma_2' \,.
    \end{split}
    \end{equation}
    Assume that $\E$ is covariant according to Definition~\ref{def:invariance_covariance}. 
    Then, we must have $\bmd_1' = \bmd_2'$ for arbitrary initial $\bmd$, leading to the conditions $S_B(g) \bmxi = \bmxi$ and $XS_A(g) = S_B(g)X$.
    We must also have $\sigma_1' = \sigma_2'$ for arbitrary initial $\sigma$, which leads to the condition $S_B(g)YS_B(g)^{\rm T} = Y$.
    
    Conversely, assume that the conditions in Eq.(\ref{eq:covariance_conditions}) hold. 
    Then, $\bmd_1' = \bmd_2'$ and $\sigma_1' = \sigma_2'$ for an arbitrary initial quantum state with displacement vector $\bmd$ and covariance matrix $\sigma$, so the Gaussian quantum channel $\E$ is covariant according to Definition~\ref{def:invariance_covariance}.
\end{proof}

It is clear that if either Hamiltonian $\hat{H}$ or state $\hat{\rho} = e^{i\beta\hat{H}} / \tr[e^{i\beta\hat{H}}]$ at any temperature $\beta^{-1} > 0$ is invariant, then the other one is as well.
Moreover, if any one of the $(1,1)$--tensors, $\Omega H$ or $\sigma\Omega^{-1} = i\coth\left(\frac{1}{2}i\Omega H\right)$ is invariant, then the other one is as well.
As a corollary of Proposition~\ref{prop:symmetry_phasespace}, proving one of the above invariances either in phase space or in Hilbert space, guarantees the rest.

\subsection{Irrep decomposition of squeezed passive representations}
\label{app:irreps}

Let $G$ be a symmetry group with representation $\hat{U}(g)$ for all $g \in G$ in Hilbert space $\H \cong L^2(\mathbb{R}^n)$ of a bosonic system.
Under the action of a representation of $G$, the Hilbert space admits a decomposition of the form
\begin{equation}\label{eq:Hilbert_space_decomp}
    L^2(\mathbb{R}^n) \cong \bigoplus_{\lambda} \mathbb{C}^{d_\lambda} \otimes \M_\lambda \,,
\end{equation}
where there may be arbitrarily many inequivalent irreps $\lambda$, while $\mathbb{C}^{d_\lambda}$ and $\M_\lambda$ can be finite or infinite dimensional subspaces.
To characterize the asymmetry of general systems, we would need to understand the asymmetry in all these, potentially infinite, irreps.

However, the CV systems under consideration are described by a finite number $n$ of modes in phase space.
A passive operator $\hat{U}$ has an associated symplectic matrix $S \overset{\bma}{=} U \oplus U^*$ according to Eq.(\ref{eq:unitary_repr_ps}).
Since passive representations do not mix creation and annihilation operators, the phase space decomposes identically within the block of creation operators and within the block of annihilation operators. 
Therefore, we introduce a decomposition of symmetry in the linear span of annihilation operators, which is isomorphic to $\mathbb{C}^n$,
\begin{equation}\label{eq:phase_space_decomp}
    \mathbb{C}^n \cong \bigoplus_{\lambda \in \Lambda} \mathbb{C}^{d_\lambda} \otimes \M_\lambda \cong \bigoplus_{\lambda \in \Lambda} \mathbb{C}^{d_\lambda} \otimes \mathbb{C}^{m(\lambda)} \,,
\end{equation}
where $\Lambda$ is the set of inequivalent irreps, $\mathbb{C}^{d_\lambda}$ is the $d_\lambda$--dimensional invariant subspace of irrep $\lambda$, and $\M_\lambda$ is the subspace associated with all $m(\lambda)$ multiplicities of $\lambda$.
According to this decomposition, by a proper change of basis, $U(g)$ decomposes in irreps as
\begin{equation}
    U(g) = \bigoplus_{\lambda \in \Lambda} U_{\lambda}(g) \otimes I_{m(\lambda)} \,,\quad \text{for all } g \in G \,.
\end{equation}

Consider a $(1,0)$--tensor $\bmd \overset{\bma}{=} \bmalpha \oplus \bmalpha^*$ in phase space.
Invariance under representation $S \overset{\bma}{=} U \oplus U^*$ means that $\bmalpha = U(g)\bmalpha$ for all $g \in G$.
This occurs if and only if $\bmalpha$ contains non-zero components within the trivial subrepresentation $\lambda_{\rm trivial}$, which is a 1-dimensional irrep.
Therefore, $\bmalpha$ is invariant if and only if
\begin{equation}
    \bmalpha = \bmo \oplus \bmalpha_{\lambda_{\rm trivial}} \,,
\end{equation}
where $\bmalpha_{\lambda_{\rm trivial}}$ is an arbitrary complex vector of $m(\lambda_{\rm trivial})$ components in the multiplicity space of $\lambda_{\rm trivial}$, and $\bmo$ is the zero vector of $n-m(\lambda_{\rm trivial})$ in the compliment space $\Lambda\setminus\{\lambda_{\rm trivial}\}$.

In order to fully characterize invariance of larger phase space tensors under passive representations, we need to introduce the notion of a dual irrep.
For any irrep $\lambda \in \Lambda$, its dual $\lambda^* \in \Lambda^* \coloneqq \{\mu^*: \mu \in \Lambda\}$ is defined such that irrep $U_{\lambda^*}$ is equivalent to $U^*_{\lambda}$, i.e. there exists a matrix $B_\lambda$ such that
\begin{equation}\label{eq:Blambda}
    U_{\lambda^*}(g)^* = B_{\lambda}^{-1}U_{\lambda}(g)B_{\lambda} \,,
\end{equation}
for all $g \in G$.
Dual irreps have equal dimensions, $d_{\lambda^*} = d_\lambda$, so there exists a bijection between the bases of their corresponding invariant subspaces, but the corresponding multiplicity spaces are unconstrained.
Since $U_{\lambda}(g), U_{\lambda^*}(g)$ are unitary, $B_{\lambda}$ can in general be chosen to be a multiple of a unitary matrix~\cite{fulton2013representation}.

We can provide a classification of irreps in terms of their duals:
\begin{enumerate}
    \item[(1)] Irrep $\lambda \in \Lambda$ is called \emph{self-dual} if $\lambda^* \cong \lambda$, i.e. there exists an invertible matrix $B_{\lambda}$ such that $U^*_{\lambda}(g) = B_{\lambda}^{-1}U_{\lambda}(g)B_{\lambda}$ for all $g \in G$.
    One can check using Schur's lemma that, for self-dual $\lambda$, $B_{\lambda}$ is either symmetric or anti-symmetric, and, additionally, if $B_{\lambda}$ is symmetric, then $U_{\lambda}$ is equivalent to an irrep with real matrix elements~\cite{fulton2013representation, bouchard2020ma}.
    
    If $B_{\lambda}$ is symmetric, the self-dual irrep $\lambda$ is called \emph{real}.
    We denote the set of real irreps by $\Lambda_{\rm real}$.
    Any symmetry that is reducible over the reals has real irreps, i.e. $U^*_{\lambda}(g) = U_{\lambda}(g)$ for all $g \in G$, with the permutation group $\operatorname{S}_n$ serving as an example. 
    The adjoint representation of a simple compact Lie group is also always a real irrep~\cite{fulton2013representation, hall2013lie}.
    
    If $B_{\lambda}$ is anti-symmetric, the self-dual irrep $\lambda$ is called \emph{pseudo-real}.
    We denote the set of pseudo-real irreps by $\Lambda_{\rm pseu}$.
    A prominent example of a pseudo-real irrep is $\SU(2)$ with total angular momentum $1/2$ in which case $U^*_{\lambda}(g) = YU_{\lambda}(g)Y$ for all $g \in G$, where Pauli matrix $Y$ is anti-symmetric.
    
    \item[(2)] Irrep $\lambda \in \Lambda$ is called \emph{complex} if $\lambda^* \not\cong \lambda$.
    It is then relevant whether its inequivalent dual irrep $\lambda^*$ belongs to the set of inequivalent irreps $\Lambda$.
    We denote the irreps with no dual in $\Lambda$, and the irreps with an inequivalent dual pair, respectively, as
    \begin{equation}
        \Lambda_{\rm nd} \coloneqq \Lambda \setminus \Lambda^* \,,\quad\quad \Lambda_{\rm pd} \coloneqq (\Lambda \cap \Lambda^*) \setminus (\Lambda_{\rm real} \sqcup \Lambda_{\rm pseu}) \,.
    \end{equation}
    The $\Un(1)$ symmetry is an important example of a symmetry with complex irreps. 
    Specifically, $\Lambda$ contains inequivalent dual irreps of $\Un(1)$ if the representation of $\Un(1)$ is generated by a charge operator that contains charges of equal magnitude but opposite signs.
    Additionally, the defining representation of $\SU(3)$ is a complex irrep, i.e. is not equivalent to its complex conjugate.
\end{enumerate}
According to this characterization, the set of inequivalent irreps naturally partitions in four disjoint subsets,
\begin{equation}\label{eq:Lambda_partition}
    \Lambda = \Lambda_{\rm nd} \sqcup \Lambda_{\rm real} \sqcup \Lambda_{\rm pseu} \sqcup \Lambda_{\rm pd} \,.
\end{equation}
Let us now consider an invariant $(1,1)$--tensor represented by blocks $F, G$ according to Eq.(\ref{eq:block_repr}).
Our goal is to decompose $F$ and $G$ into the irreps induced by representation $U$.
Recall that $(2,0)$--tensors and $(0,2)$--tensors can be converted into $(1,1)$--tensor by appropriate contractions with the symplectic form.

The invariance condition for component $F$ as given in Eq.(\ref{eq:H_passive_transformation}) is $F = U(g)FU(g)^\dag$.
Denoting by $F_{\lambda\lambda'}$ the component of $F$ that maps irrep $\lambda'$ to irrep $\lambda$, Schur's lemma requires that
\begin{equation}
    F_{\lambda\lambda'} \propto I_{d_\lambda} \delta_{\lambda'\lambda} \,,
\end{equation}
for all $\lambda,\lambda' \in \Lambda$.
This decomposition is independent of the nature of the dual irrep $\lambda^*$.
Accounting for multiplicities, $F$ takes the following explicit form,
\begin{equation}
    F = \bigoplus_{\lambda \in \Lambda} I_{d_\lambda} \otimes F_{\lambda} = \sum_{\substack{\lambda \in \Lambda\\s,\alpha,\alpha'}} F_{\lambda, \alpha, \alpha'} \ketbra{\lambda, s, \alpha}{\lambda, s, \alpha'} \,,
\end{equation}
where index $s = 1, \dots, d_{\lambda}$ labels a basis for irrep $\lambda$ of dimension $d_{\lambda}$, and indices $\alpha, \alpha' = 1, \dots, m(\lambda)$ label a basis in the multiplicity space of irrep $\lambda$ of dimension $m(\lambda)$.
This fully characterizes an invariant $(1,1)$--tensor in the case of $G = 0$, which may correspond to absence of squeezing in an appropriate context.

The invariance condition for component $G$ as given in Eq.(\ref{eq:H_passive_transformation}) is $G = U(g)GU(g)^{\rm T}$.
Denoting by $G_{\lambda\lambda'}$ the component of $G$ that maps irrep $\lambda'$ to irrep $\lambda$, Schur's lemma requires that
\begin{equation}
    G_{\lambda\lambda'} \propto 
    \begin{cases}
        0 \,, &\text{for } \lambda \in \Lambda_{\rm nd} \,,\\
        B_{\lambda}\delta_{\lambda'\lambda} \,, &\text{for } \lambda \in \Lambda_{\rm real} \sqcup \Lambda_{\rm pseu} \,,\\
        B_{\lambda}\delta_{\lambda'\lambda^*} \,, &\text{for } \lambda \in \Lambda_{\rm pd} \,.
    \end{cases}
\end{equation}
Block $B_{\lambda}$ is diagonal, symmetric and unitary for each real irrep $\lambda \in \Lambda_{\rm real}$, and diagonal, anti-symmetric and unitary for each pseudo-real irrep $\lambda \in \Lambda_{\rm pseu}$. 
For each inequivalent dual pair $\lambda,\lambda^* \in \Lambda_{\rm pd}$, unitary block $B_{\lambda} = B_{\lambda^*}^{\rm T}$ is off-diagonal with respect to the irrep bases of $\lambda$ and $\lambda^*$.
Accounting for multiplicities, $G$ takes the following explicit form,
\begin{equation}
    G = \hspace{-12pt}\sum_{\substack{\lambda \in \Lambda_{\rm real} \sqcup \Lambda_{\rm pseu}\\s,s',\alpha,\alpha'}}\hspace{-15pt} B_{\lambda, s, s'}G_{\lambda, \alpha, \alpha'} \ketbra{\lambda, s, \alpha}{\lambda, s', \alpha'} \hspace{5pt}+\hspace{-6pt} \sum_{\substack{\lambda,\lambda^* \in \Lambda_{\rm pd}\\s, s',\alpha,\alpha'}}\hspace{-7pt} \left( B_{\lambda,s,s'}G_{\lambda, \alpha, \alpha'} \ketbra{\lambda, s, \alpha}{\lambda^*, s', \alpha'} + B_{\lambda,s,s'}^{\rm T}G_{\lambda^*, \alpha, \alpha'} \ketbra{\lambda^*, s', \alpha'}{\lambda, s, \alpha} \right) \,,
\end{equation}
where $s, s' = 1, \dots, d_{\lambda}$, $\alpha = 1,\dots,m(\lambda)$, and $\alpha' = 1,\dots,m(\lambda^*)$.
Note that for a self-dual irrep $\lambda$, $m(\lambda^*) = m(\lambda)$.

We summarize the above analysis in the following statement, where we provide a full characterization of the irrep decompositions of blocks $F$ and $G$ under passive representations.
\begin{proposition}\label{prop:irreps}
    Let $G$ be a symmetry group with passive representation $g \mapsto S(g) \overset{\bma}{=} U(g) \oplus U(g)^*$ in phase space. 
    
    There exists a basis, called the irrep basis, such that the unitary $U(g)$ decomposes in irreps as
    \begin{equation}\label{eq:uirreps}
        U(g) = \left( \bigoplus_{\lambda \in \Lambda\setminus\Lambda_{\rm pd}} U_{\lambda}(g) \otimes I_{m(\lambda)} \right) \oplus \left( \bigoplus_{\lambda,\lambda^* \in \Lambda_{\rm pd}} \begin{pmatrix} U_{\lambda} \otimes I_{m(\lambda)} & 0 \\[5pt] 0 & U_{\lambda^*} \otimes I_{m(\lambda^*)} \end{pmatrix} \right) \,,
    \end{equation}
    where every inequivalent irrep $\lambda \in \Lambda$ has dimension $d_\lambda$, and multiplicity $m(\lambda)$.

    Every invariant $(1,0)$--tensor $\bmd \overset{\bma}{=} \bmalpha \oplus \bmalpha^*$ is expressed in the irrep basis as
    \begin{equation}\label{eq:airreps}
        \bmalpha = \bmo \oplus \bmalpha_{\lambda_{\rm trivial}} \,,
    \end{equation}
    where $\bmalpha_{\lambda_{\rm trivial}}$ is an arbitrary vector in the trivial irrep $\lambda_{\rm trivial}$, and $\bmo$ is the zero vector in $\Lambda \setminus \{\lambda_{\rm trivial}\}$.
    
    Every invariant $(1,1)$--tensor $\begin{pmatrix} F & G \\ G^* & F^* \end{pmatrix}$ and $(2,0)$--tensor $\begin{pmatrix} G & F \\ F^* & G^* \end{pmatrix}$ are represented by blocks $F$ and $G$ expressed in the irrep basis as
    \begin{equation}\label{eq:FGinv}
        F = \bigoplus_{\lambda \in \Lambda} I_{d_\lambda} \otimes F_{\lambda} \,,\quad\quad
        G = \left( \bigoplus_{\lambda \in \Lambda_{\rm real} \sqcup \Lambda_{\rm pseu}} B_{\lambda} \otimes G_{\lambda} \right) \oplus \left( \bigoplus_{\lambda,\lambda^* \in \Lambda_{\rm pd}} \begin{pmatrix} 0 & B_{\lambda} \otimes G_\lambda \\[5pt] B_{\lambda}^{\rm T} \otimes G_{\lambda^*} & 0 \end{pmatrix} \right) \,,
    \end{equation}
    where $F_{\lambda}, G_\lambda$ and $G_{\lambda^*}$ are complex matrices of appropriate dimensions and $B_{\lambda}$ are determined by the symmetry conditions $U_{\lambda^*}(g) = B_{\lambda}^{-1}U_{\lambda}(g)B_{\lambda}$ for all $\lambda$ such that $\lambda^* \in \Lambda$.
\end{proposition}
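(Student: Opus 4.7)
The proposition combines three claims: the irrep block-diagonalization of the passive unitary $U(g)$, the form of invariant first-moment vectors, and the Schur-type constraints on the blocks $F$ and $G$ of an invariant $(1,1)$- or $(2,0)$-tensor. My plan is to dispatch the first two by standard unitary representation theory and concentrate the work on the $G$-block, where the congruence (rather than similarity) condition forces the duality partition $\Lambda=\Lambda_{\rm nd}\sqcup\Lambda_{\rm real}\sqcup\Lambda_{\rm pseu}\sqcup\Lambda_{\rm pd}$ to play a role.

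First, I would invoke complete reducibility (Maschke's theorem for finite or compact $G$; more generally an assumed semisimple structure): since $g\mapsto U(g)$ is unitary on $\mathbb{C}^n$, there is a unitary change of basis putting $U(g)$ into the isotypic block form $\bigoplus_{\lambda\in\Lambda}U_\lambda(g)\otimes I_{m(\lambda)}$. The reorganization in Eq.(\ref{eq:uirreps}) is purely notational: I would group together the dual pairs $\lambda,\lambda^*\in\Lambda_{\rm pd}$ because those are precisely the subspaces that can be coupled by the $G$-block below. For claim (\ref{eq:airreps}), the condition $U(g)\bmalpha=\bmalpha$ says $\bmalpha$ lies in the fixed subspace of the representation, and by Schur's lemma the fixed subspace is exactly the isotypic component of the trivial irrep, giving $\bmalpha=\bmo\oplus\bmalpha_{\lambda_{\rm trivial}}$. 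Claim (\ref{eq:FGinv}) for the block $F$ is equally immediate: the condition $F=U(g)FU(g)^\dag$ is a similarity intertwining condition, so Schur's lemma applied block-by-block yields $F_{\lambda\lambda'}=\delta_{\lambda\lambda'}\,I_{d_\lambda}\otimes F_\lambda$, where $F_\lambda$ acts freely on the multiplicity space $\mathbb{C}^{m(\lambda)}$.

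The technical core is the $G$-block. I would rewrite $G=U(g)GU(g)^{\rm T}$ as $U_\lambda(g)\,G_{\lambda\lambda'}=G_{\lambda\lambda'}\,U_{\lambda'}(g)^*$, which identifies $G_{\lambda\lambda'}$ as an intertwiner from the complex conjugate irrep $U_{\lambda'}^*$ to $U_\lambda$. By Schur's lemma this intertwiner space is nontrivial iff $U_\lambda\cong U_{\lambda'}^*$, i.e.\ $\lambda'=\lambda^*$. This immediately forces $G_{\lambda\lambda'}=0$ whenever $\lambda\in\Lambda_{\rm nd}$, since then $\lambda^*\notin\Lambda$. For self-dual $\lambda\in\Lambda_{\rm real}\sqcup\Lambda_{\rm pseu}$, I would use the defining relation $U_\lambda(g)^*=B_\lambda^{-1}U_\lambda(g)B_\lambda$ from Eq.(\ref{eq:Blambda}) to convert the intertwining condition into a genuine commutation $U_\lambda(g)\,(G_{\lambda\lambda}B_\lambda^{-1})=(G_{\lambda\lambda}B_\lambda^{-1})\,U_\lambda(g)$; another application of Schur's lemma (now on the full isotypic block, taking multiplicities into account) gives $G_{\lambda\lambda}B_\lambda^{-1}=I_{d_\lambda}\otimes G_\lambda$, which rearranges to $G_{\lambda\lambda}=B_\lambda\otimes G_\lambda$ with $G_\lambda$ free on the multiplicity space. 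The dual-pair case $\lambda,\lambda^*\in\Lambda_{\rm pd}$ is handled analogously and produces the two off-diagonal blocks $B_\lambda\otimes G_\lambda$ and $B_\lambda^{\rm T}\otimes G_{\lambda^*}$, with $B_{\lambda^*}=B_\lambda^{\rm T}$ following from the duality of Eq.(\ref{eq:Blambda}).

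\textbf{Main obstacle.} The delicate part is the tensor-product bookkeeping: the matrix $B_\lambda$ acts on the $d_\lambda$-dimensional irrep factor while $G_\lambda$ acts on the $m(\lambda)$-dimensional multiplicity factor, and one has to verify that Schur's lemma applied to the reduced condition really does factor through the multiplicity space in the way written. I would handle this by writing every block in the basis $\{\ket{\lambda,s,\alpha}\}$ of Eq.(\ref{eq:Hilbert_space_decomp}) and checking the indices explicitly on a representative $\lambda$ from each of the four classes in Eq.(\ref{eq:Lambda_partition}). No additional hypotheses such as $G=G^{\rm T}$ or $G=G^\dag$ are needed for the statement of the proposition; those would enter only when specializing to a Hamiltonian or covariance matrix, imposing constraints on $F_\lambda,G_\lambda,G_{\lambda^*}$ beyond their arbitrariness here.
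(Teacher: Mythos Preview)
Your proposal is correct and follows essentially the same route as the paper's argument in Appendix~\ref{app:irreps}: complete reducibility gives the isotypic decomposition of $U(g)$; Schur's lemma applied to the similarity condition $F=U(g)FU(g)^\dag$ yields the $F$-block; and for the $G$-block the paper likewise rewrites $G=U(g)GU(g)^{\rm T}$ as an intertwining condition $G_{\lambda\lambda'}U_{\lambda'}(g)^*=U_\lambda(g)G_{\lambda\lambda'}$, applies Schur to conclude $\lambda'=\lambda^*$, and uses $B_\lambda$ to reduce to a commutation relation and extract the multiplicity-space factor. The paper's presentation differs only cosmetically, writing out the resulting $F$ and $G$ explicitly in the $\ket{\lambda,s,\alpha}$ basis rather than leaving this as a bookkeeping step.
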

Note that this decomposition can be readily employed when dealing with squeezed passive representations of the form $R (U(g) \oplus U(g)^*) R^{-1}$, where $R$ is a symplectic matrix that may contain squeezing, as equivalent representations have the same irrep decomposition up to a basis change by $R$.

\subsection{Irrep decomposition of invariant Gaussian symplectic Gaussian unitaries and quantum states}
\label{app:state_irreps}

Proposition~\ref{prop:irreps} has direct implications on the structure of invariant symplectic and covariance matrices under a squeezed passive representation $g \mapsto R(U(g) \oplus U(g)^*)R^{-1}$ of a symmetry group $G$, with symplectic $R$, and in particular on the entanglement between modes.
We focus on passive representation $g \mapsto U(g) \oplus U(g)^*$ and recall that the invariant unitaries and states of the squeezed passive representation can be obtained by the invariant states of the passive representation via a basis transformation by $\hat{R}$.

Any symplectic Gaussian unitary $\hat{S}$ cannot correlate modes carrying irreps of different type among the four types that appear in Eq.(\ref{eq:Lambda_partition}).
The symplectic matrix $S \overset{\bma}{=} \begin{pmatrix} F & G \\ G^* & F^* \end{pmatrix}$ associated with $\hat{S}$ is expressed in the irrep basis according to Eq.(\ref{eq:FGinv}), where, in order to satisfy the symplectic matrix constraints $FF^\dag - GG^\dag = I$ and $FG^{\rm T} - GF^{\rm T} = 0$, we require that
\begin{enumerate}
    \item[(i)] $F_{\lambda}$ is a unitary matrix for all $\lambda \in \Lambda_{\rm nd}$\,,
    \item[(ii)] $F_{\lambda}F_{\lambda}^\dag - G_\lambda G_{\lambda}^\dag = I_{m(\lambda)}$ for all $\lambda \in \Lambda\setminus\Lambda_{\rm nd}$\,,
    \item[(iii)] $F_{\lambda}G_{\lambda}^{\rm T} - G_{\lambda}F_{\lambda}^{\rm T} = 0$ for all $\lambda \in \Lambda_{\rm real}$\,,
    \item[(iv)] $F_{\lambda}G_{\lambda}^{\rm T} + G_{\lambda}F_{\lambda}^{\rm T} = 0$ for all $\lambda \in \Lambda_{\rm pseu}$\,,
    \item[(v)] $F_{\lambda} G_{\lambda^*}^{\rm T} - G_{\lambda} F_{\lambda^*}^{\rm T} = 0$ for all $\lambda,\lambda^* \in \Lambda_{\rm pd}$\,.
\end{enumerate}

Analogously, any invariant state $\hat{\rho}$ does not contain correlations between modes carrying irreps of different type among the four types that appear in Eq.(\ref{eq:Lambda_partition}). 
Concretely,
\begin{equation}
    \hat{\rho} = \tr_{\Lambda \setminus \Lambda_{\rm nd}}[\hat{\rho}] \otimes \tr_{\Lambda \setminus \Lambda_{\rm real}}[\hat{\rho}] \otimes \tr_{\Lambda \setminus \Lambda_{\rm pseu}}[\hat{\rho}] \otimes \tr_{\Lambda \setminus \Lambda_{\rm pd}}[\hat{\rho}] \,.
\end{equation}
The symmetry necessitates entanglement between inequivalent dual irreps, and allows for arbitrary entanglement within multiplicity subspaces, but all other irrep subspaces are uncorrelated with each other.

Consider the decomposition of $U(g)$ in the irrep basis given in Eq.(\ref{eq:uirreps}).
The displacement vector $\bmd \overset{\bma}{=} \bmalpha \oplus \bmalpha^*$ of invariant state $\hat{\rho}$ is expressed in the irrep basis via Eq.(\ref{eq:airreps}), meaning that it is non-zero only on modes where the symmetry acts trivially.
The covariance matrix $\sigma \overset{\bma}{=} \begin{pmatrix} G & F \\ F^* & G^* \end{pmatrix}$ of $\hat{\rho}$ is expressed in the irrep basis according to Eq.(\ref{eq:FGinv}), where, in order to satisfy the covariance matrix constraints $F = F^\dag$ and $G = G^{\rm T}$, we require that 
\begin{enumerate}
    \item[(i)] $F_{\lambda}$ is a $(m(\lambda) \times m(\lambda))$ Hermitian matrix for all $\lambda \in \Lambda$\,,
    \item[(ii)] $G_{\lambda}$ is a $(m(\lambda) \times m(\lambda))$ complex symmetric matrix for all $\lambda \in \Lambda_{\rm real}$\,,
    \item[(iii)] $G_{\lambda}$ is a $(m(\lambda) \times m(\lambda))$ complex anti-symmetric matrix for all $\lambda \in \Lambda_{\rm pseu}$\,,
    \item[(iv)] $G_{\lambda} = G_{\lambda^*}^{\rm T}$ is a $(m(\lambda) \times m(\lambda^*))$ complex matrix, for pair $\lambda,\lambda^* \in \Lambda_{\rm pd}$\,.
\end{enumerate}
We can therefore decompose the covariance matrix in the irrep basis as
\begin{equation}
    \sigma = \left( \bigoplus_{\lambda \in \Lambda_{\rm nd}} \sigma_{\lambda} \right) \oplus \left( \bigoplus_{\lambda \in \Lambda_{\rm real} \sqcup \Lambda_{\rm pseu}} \sigma_{\lambda} \right) \oplus \left( \bigoplus_{\lambda,\lambda^* \in \Lambda_{\rm pd}} \sigma_{\lambda} \right) \,,
\end{equation}
where the restriction of $\sigma$ within each disjoint irrep subspace takes different forms.

For $\lambda \in \Lambda_{\rm nd}$, the invariance of $\sigma$ ensures that there is no squeezing, so
\begin{equation}\label{eq:normal_modes_nd}
    \sigma_\lambda \overset{\bma}{=} \begin{pmatrix} 0 & I_{d_\lambda} \otimes F_\lambda \\ I_{d_\lambda} \otimes F_\lambda^* & 0 \end{pmatrix} \,.
\end{equation}

For $\lambda \in \Lambda_{\rm real} \sqcup \Lambda_{\rm pseu}$, squeezing is generally allowed, 
\begin{equation}\label{eq:normal_modes_sd}
    \sigma_\lambda \overset{\bma}{=} \begin{pmatrix} B_\lambda \otimes G_\lambda & I_{d_\lambda} \otimes F_\lambda \\ I_{d_\lambda} \otimes F_\lambda^* & B_\lambda^* \otimes G_\lambda^* \end{pmatrix} \,,
\end{equation}

For inequivalent dual pairs $\lambda, \lambda^* \in \Lambda_{\rm pd}$, entanglement between the irrep subspaces is also generally present,
\begin{equation}\label{eq:normal_modes_pd}
    \sigma_{\lambda} 
    \overset{\bma}{=} \begin{pmatrix} 0 & I_{d_\lambda} \otimes F_\lambda & B_{\lambda} \otimes G_\lambda & 0 \\[2pt] I_{d_\lambda} \otimes F_\lambda^* & 0 & 0 & B_{\lambda}^* \otimes G_\lambda^* \\[2pt] B_{\lambda}^{\rm T} \otimes G_\lambda^{\rm T} & 0 & 0 & I_{d_\lambda} \otimes F_{\lambda_*} \\[2pt] 0 & B_{\lambda}^\dag \otimes G_\lambda^\dag & I_{d_\lambda} \otimes F_{\lambda^*}^* & 0 \end{pmatrix}
    \,.
\end{equation}

\subsection{Examples of passive representation decompositions}
\label{app:examples}

For the sake of concreteness, we provide examples of Gaussian operators that are invariant under specific symmetries of interest.
Using Proposition~\ref{prop:irreps}, we can also immediately derive invariant tensors in phase space.

\vspace{10pt}
{\bf Trivial symmetry.}
In the case of no symmetry, the trivial representation corresponds to $\Lambda = \{\lambda_{\rm trivial}\}$, where $\lambda_{\rm trivial}$ is a 1-dimensional, self-dual irrep with multiplicity equal to the number of total modes of the system, $m(\lambda) = n$.

\vspace{10pt}
{\bf $\SU(2)$ symmetry.}
Recall the Schwinger representation $\hat{S}(U): U \in \SU(2)$ from Eq.(\ref{eq:repSU}).
This is a pseudo-real irrep satisfying $U^* = \Omega^{\rm T}U\Omega$.
An $\SU(2)$--invariant covariance matrix $\sigma$ on 2 modes takes the form
\begin{equation}
    \sigma \overset{\bma}{=} \begin{pmatrix} 0 & F_1 I_2 \\ F_1 I_2 & 0 \end{pmatrix} \,,
\end{equation}
where $I_2$ is the $(2 \times 2)$ identity matrix, and $F_1 > 1$ is a real number. 
Recall that in the case of a pseudo-real irrep, the non-passive block is $B_\lambda \otimes G_\lambda$ for an anti-symmetric matrix $G_\lambda$. 
Here, $B_\lambda = \Omega$ and $G_\lambda$ is a 1-dimensional anti-symmetric matrix, so it must equal 0.
Hence, the only 2--mode $\SU(2)$--invariant Gaussian states are two uncorrelated modes at equal temperature.
Similarly, one finds that the only 2--mode $\SU(2)$--invariant Gaussian unitary is the identity operator.

A less trivial example arises when we consider the group $\SU(2)$ on 6 modes, with representation
\begin{equation}
    U = U(\alpha, \beta,\gamma) \otimes I_3 \,.
\end{equation}
Therefore, the 6--mode $\SU(2)$--invariant covariance matrix $\sigma$ takes the form
\begin{align}
    \sigma \overset{\bma}{=} \begin{pmatrix} \Omega \otimes G_3 & I_2 \otimes F_3 \\ I_2 \otimes F_3^* & \Omega \otimes G_3^* \end{pmatrix} \,,
\end{align}
where $F_3 \ge I_3$ is a $(3 \times 3)$ Hermitian matrix and $G_3$ is a $(3 \times 3)$ anti-symmetric matrix.

\vspace{10pt}
{\bf $\Un(1)$ symmetry.}
Consider $\Un(1)$ symmetry on 3 modes, with representation $\hat{U}(t) = e^{it \hat{H}}$, where $\hat{H} = \omega \hat{a}_1^\dag\hat{a}_1 + \omega \hat{a}_2^\dag\hat{a}_2 - \omega\hat{a}_3^\dag\hat{a}_3$ for some rational $\omega > 0$.
The phase space representation $S(t) \overset{\bma}{=} U(t) \oplus U^*(t)$, with
\begin{equation}
    U(t) = \left( 
    \begin{array}{cc|c}
    & & 0 \\
    \multicolumn{2}{c|}{\smash{\raisebox{.5\normalbaselineskip}{$e^{i\omega t} I_2$}}} & 0 \\
    \hline & & \\[-1.0em]
    0 & 0 & e^{-i\omega t}
    \end{array}
    \right) \,,
\end{equation}
in the irrep basis, contains two distinct 1-dimensional irreps $U_\lambda(t) = e^{i\omega t}$ and $U_{\lambda^*}(t) = e^{-i\omega t}$, which are dual to each other. 
Irrep $\lambda$ has multiplicity 2.

For any invariant matrix, we can directly use our analysis to express components $F$ and $G$ in the irrep basis,
\begin{align}
    F = \left( 
    \begin{array}{cc|c}
    & & 0 \\
    \multicolumn{2}{c|}{\smash{\raisebox{.5\normalbaselineskip}{$F_2$}}} & 0 \\
    \hline & & \\[-1.0em]
    0 & 0 & F_1
    \end{array}
    \right),\quad\quad
    G &= \left( 
    \begin{array}{cc|c}
    0 & 0 & \\
    0 & 0 & \multicolumn{1}{c}{\smash{\raisebox{.5\normalbaselineskip}{$G_1$}}} \\
    \hline & & \\[-1.0em]
    \multicolumn{2}{c|}{G_1^{\rm T}} & 0
    \end{array}
    \right) \,,
\end{align}
where $F_2$ is a $(2 \times 2)$ Hermitian matrix, $F_1$ is a real number, and $G_1$ is a $(2 \times 1)$ complex matrix.

\vspace{10pt}
{\bf $\operatorname{S}_3$ and $\mathbb{Z}_3$ symmetries and composite representations.}
Consider an interaction Hamiltonian $\hat{H}_{AB}^{\rm int}$ between system $A$ with $n_A$ modes and representation $g \mapsto S_A(g) \overset{\bma}{=} U_A(g) \oplus U_A(g)^*$, and system $B$ with $n_B$ modes and representation $g \mapsto S_B(g) \overset{\bma}{=} U_B(g) \oplus U_B(g)^*$.
The transformation $\hat{H}_{AB}^{\rm int} \mapsto (\hat{S}_A \otimes \hat{S}_B)\hat{H}_{AB}^{\rm int}(\hat{S}_A \otimes \hat{S}_B)^\dag$ reads
\begin{equation}
    \hat{H}_{AB}^{\rm int} = \sum_{j=1}^{n_A}\sum_{k=1}^{n_B} \left( F_{jk} \hat{a}^\dag_j\hat{b}_k + G_{jk} \hat{a}_j\hat{b}_k \right) + \text{H.c.} 
    \mapsto \sum_{j,j'=1}^{n_A}\sum_{k,k'=1}^{n_B} \left( F_{jk} U_{A,j'j}U^*_{B,k'k}\hat{a}^\dag_j\hat{b}_k + G_{jk} U_{A,j'j}U_{B,k'k} \hat{a}_j\hat{b}_k \right) + \text{H.c.} \,.
\end{equation}
Therefore, $\hat{H}_{AB}^{\rm int}$ is invariant under the symmetry if $U_A(g) F U_B(g)^\dag = F$ and $U_A(g) G U_B(g)^{\rm T} = G$, for all $g \in G$.
These conditions can be vectorized to
\begin{equation}\label{eq:inv_int_cond}
    (U_A(g) \otimes U_B(g)^* - I_A \otimes I_B)\ket{{\rm vec}(F)} = (U_A(g) \otimes U_B(g) - I_A \otimes I_B)\ket{{\rm vec}(G)} = 0 \,,
\end{equation}
where $I_A,I_B$ denote the identity on systems $A,B$.
If $U_B(g) = U_B(g)^*$ is real, the conditions for $F$ and $G$ are identical.

Consider the $\operatorname{S}_3$ symmetry on the composite system highlighted in Fig.~\ref{fig:intro}, with elements $\{e, r, r^2, t_1, t_2, t_3\}$, where $e, r, r^2$ indicate rotations and $t_1, t_2, t_3$ indicate two-element transpositions.
The standard 2-dimensional irreducible representation of $\operatorname{S}_3$ is
\begin{align}
    O_2(e) =
    \begin{pmatrix}
    1 & 0 \\
    0 & 1
    \end{pmatrix} \,,\quad
    O_2(r) =
    \begin{pmatrix}
    -\frac{1}{2} & -\frac{\sqrt{3}}{2} \\[0.5em]
    \frac{\sqrt{3}}{2} & -\frac{1}{2}
    \end{pmatrix} \,,\quad
    O_2(r^2) =
    \begin{pmatrix}
    -\frac{1}{2} & \frac{\sqrt{3}}{2} \\[0.5em]
    -\frac{\sqrt{3}}{2} & -\frac{1}{2}
    \end{pmatrix} \,,\nonumber\\
    O_2(t_1) =
    \begin{pmatrix}
    -1 & 0 \\
    0 & 1
    \end{pmatrix} \,,\quad
    O_2(t_2) =
    \begin{pmatrix}
    \frac{1}{2} & \frac{\sqrt{3}}{2} \\[0.5em]
    \frac{\sqrt{3}}{2} & -\frac{1}{2}
    \end{pmatrix} \,,\quad
    O_2(t_3) =
    \begin{pmatrix}
    \frac{1}{2} & -\frac{\sqrt{3}}{2} \\[0.5em]
    -\frac{\sqrt{3}}{2} & -\frac{1}{2}
    \end{pmatrix} \,.
\end{align}
whereas the 3-dimensional natural representation of $\operatorname{S}_3$ is
\begin{align}
    O_3(e) =
    \begin{pmatrix}
    1 & 0 & 0 \\
    0 & 1 & 0 \\
    0 & 0 & 1
    \end{pmatrix} \,,\quad
    O_3(r) =
    \begin{pmatrix}
    0 & 0 & 1 \\
    1 & 0 & 0 \\
    0 & 1 & 0
    \end{pmatrix} \,,\quad
    O_3(r^2) =
    \begin{pmatrix}
    0 & 1 & 0 \\
    0 & 0 & 1 \\
    1 & 0 & 0
    \end{pmatrix} \,,\nonumber\\
    O_3(t_1) =
    \begin{pmatrix}
    0 & 1 & 0 \\
    1 & 0 & 0 \\
    0 & 0 & 1
    \end{pmatrix} \,,\quad
    O_3(t_2) =
    \begin{pmatrix}
    1 & 0 & 0 \\
    0 & 0 & 1 \\
    0 & 1 & 0
    \end{pmatrix} \,,\quad
    O_3(t_3) =
    \begin{pmatrix}
    0 & 0 & 1 \\
    0 & 1 & 0 \\
    1 & 0 & 0
    \end{pmatrix} \,.
\end{align}
We will ignore the self-interactions of each subsystem, and focus on determining the interaction parts $F_{\rm int}$ and $G_{\rm int}$, where $[F_{\rm int}]_{jk}$ and $[G_{\rm int}]_{jk}$ are the coefficients of terms $\hat{a}_j^\dag\hat{b}_k$ and $\hat{a}_j\hat{b}_k$, respectively.

First, suppose systems $A$ and $B$ have 3 modes, and representations $U_A = U_B = O_3$.
Then, solving Eq.(\ref{eq:inv_int_cond}) gives
\begin{equation}
    F_{\rm int} = f_1 \begin{pmatrix} 1 & 0 & 0 \\ 0 & 1 & 0 \\ 0 & 0 & 1 \end{pmatrix} + f_2 \begin{pmatrix} 0 & 1 & 1 \\ 1 & 0 & 1 \\ 1 & 1 & 0 \end{pmatrix} \,,\quad\quad G_{\rm int} = g_1 \begin{pmatrix} 1 & 0 & 0 \\ 0 & 1 & 0 \\ 0 & 0 & 1 \end{pmatrix} + g_2 \begin{pmatrix} 0 & 1 & 1 \\ 1 & 0 & 1 \\ 1 & 1 & 0 \end{pmatrix} \,,
\end{equation}
for $f_1, f_2, g_1, g_2 \in \mathbb{C}$.
Therefore, every invariant Hamiltonian interaction is of the form
\begin{equation}
    \hat{H}_{AB}^{\rm int} = \sum_{k=1}^{3} \left( f_1 \hat{a}^\dag_k\hat{b}_k + g_1 \hat{a}_k\hat{b}_k \right) + \sum_{k=1}^{3} \left( f_2 \hat{a}^\dag_k\hat{b}_{k+1} + f_2 \hat{a}^\dag_k\hat{b}_{k+2} + g_2 \hat{a}_k\hat{b}_{k+1} + g_2 \hat{a}_k\hat{b}_{k+2} \right) +  \text{H.c.} \,,
\end{equation}
where subscript addition is considered mod 3.
Suppose system $C$ has 2 modes and representation $U_C = O_2$.
Then,
\begin{equation}
    F_{\rm int} = f \begin{pmatrix} 1 & \frac{1}{\sqrt{3}} \\[0.5em] -1 & \frac{1}{\sqrt{3}} \\[0.5em] 0 & -\frac{2}{\sqrt{3}} \end{pmatrix} \,,\quad\quad G_{\rm int} = g \begin{pmatrix} 1 & \frac{1}{\sqrt{3}} \\[0.5em] -1 & \frac{1}{\sqrt{3}} \\[0.5em] 0 & -\frac{2}{\sqrt{3}} \end{pmatrix} \,,
\end{equation}
for $f, g \in \mathbb{C}$, so every invariant Hamiltonian interaction is of the form
\begin{equation}
    \hat{H}_{AC}^{\rm int} = f\left( \hat{a}^\dag_1\hat{b}_1 + \frac{1}{\sqrt{3}}\hat{a}^\dag_1\hat{b}_2 - \hat{a}^\dag_2\hat{b}_1 + \frac{1}{\sqrt{3}}\hat{a}^\dag_2\hat{b}_2 - \frac{2}{\sqrt{3}}\hat{a}^\dag_3\hat{b}_2 \right) + g\left( \hat{a}_1\hat{b}_1 + \frac{1}{\sqrt{3}}\hat{a}_1\hat{b}_2 - \hat{a}_2\hat{b}_1 + \frac{1}{\sqrt{3}}\hat{a}_2\hat{b}_2 - \frac{2}{\sqrt{3}}\hat{a}_3\hat{b}_2 \right) + \text{H.c.} \,.
\end{equation}

Considering the subgroup $\mathbb{Z}_3 = \{e,r,r^2\}$ on system $AC$ leads to different invariant Hamiltonian interaction, with
\begin{equation}
    F_{\rm int} = f_1 \begin{pmatrix} 1 & 0 \\[0.5em] -\frac{1}{2} & \frac{\sqrt{3}}{2} \\[0.5em] -\frac{1}{2} & -\frac{\sqrt{3}}{2} \end{pmatrix} + f_2 \begin{pmatrix} 0 & 1 \\[0.5em] -\frac{\sqrt{3}}{2} & -\frac{1}{2} \\[0.5em] \frac{\sqrt{3}}{2} & -\frac{1}{2} \end{pmatrix} \,,
\end{equation}
for $f_1, f_2 \in \mathbb{C}$, so that all invariant terms of the form $[F_{\rm int}]_{jk} \hat{a}^\dag_j\hat{b}_k$ are
\begin{equation}
    f_1 \hat{a}^\dag_1\hat{b}_1 + f_2 \hat{a}^\dag_1\hat{b}_2 + \left(-\frac{1}{2}f_1 - \frac{\sqrt{3}}{2}f_2\right) \hat{a}^\dag_2\hat{b}_1 + \left(\frac{\sqrt{3}}{2}f_1 - \frac{1}{2}f_2\right) \hat{a}^\dag_2\hat{b}_2 + \left(-\frac{1}{2}f_1 + \frac{\sqrt{3}}{2}f_2\right) \hat{a}^\dag_3\hat{b}_1 + \left(-\frac{\sqrt{3}}{2}f_1 - \frac{1}{2}f_2\right) \hat{a}^\dag_3\hat{b}_2 + \text{H.c.} \,.
\end{equation}
Here, we are not considering the non-passive terms $G_{\rm int}$.

\subsection{Local decomposition of U(1)--invariant Gaussian unitaries}
\label{app:local}

We prove that U(1)--invariant Gaussian unitaries decompose to 2--local U(1)--invariant Gaussian unitaries.
\local*
\begin{proof}
    Recall the general $\Un(1)$ representation $\hat{O}(\theta) = \bigotimes_{j=1}^n e^{i\theta q_j \hat{a}^\dag_j\hat{a}_j}$ with integer charges $q_j \neq 0$.
    A Hamiltonian $\hat{H} = -i\sum_{j,k} \left( F_{jk} \hat{a}_j^\dag\hat{a}_k + G_{jk} \hat{a}_j\hat{a}_k \right) + \text{H.c.}$ is invariant under $\hat{O}$ if and only if $F_{jk} = e^{i(q_k-q_j)} F_{jk}$ and $G_{jk} = e^{i(q_k+q_j)} G_{jk}$, so we can decompose any invariant Hamiltonian as
    \begin{align}
        i\hat{H} 
        = \sum_{j} F_{jj} \hat{a}_j^\dag\hat{a}_j\ 
        + \sum_{\substack{j,k: j \neq k \\ q_j = q_k}} F_{jk} \left( \hat{a}_j^\dag\hat{a}_k - \hat{a}_j\hat{a}_k^\dag \right)
        + \sum_{q_j = -q_k} G_{jk} \left( \hat{a}_j^\dag\hat{a}_k^\dag - \hat{a}_j\hat{a}_k \right) \,.
    \end{align}
    The Lie algebra of invariant Hamiltonians is therefore spanned and hence generated by phase-shifters, beam-splitters between sectors of equal charge and 2--mode squeezers between sectors of opposite charge.
    According to Proposition~\ref{prop:symmetry_phasespace}, the group of $G$--invariant symplectic transformations is connected.
    As a consequence, the Lie group of U(1)--invariant symplectic matrices is generated by exponentials of multiples of these generators.
\end{proof}

As an example, suppose we have two equal charges, i.e. $Q = q(I \oplus I)$.
Then, any invariant covariance matrix $\sigma$ as given by Eq.(\ref{eq:u1M}) decomposes as
\begin{align*}
    \sigma = (V_{\rm ps}(\phi) \hspace{-1pt}\oplus\hspace{-1pt} I)V_{\rm bs}(\phi')\hspace{-2pt}
    \begin{pmatrix}
        \nu_1 I & \hspace{-4pt}0 \\
        0 & \hspace{-4pt}\nu_2 I
    \end{pmatrix}\hspace{-2pt}
    V_{\rm bs}(\phi')^{\rm T}(V_{\rm ps}(\phi) \hspace{-1pt}\oplus\hspace{-1pt} I)^{\rm T} \,,
\end{align*}
for symplectic eigenvalues $\nu_1, \nu_2 > 1$, and parameters $\phi = \tan^{-1}(\theta_{12})$, and $\phi' = \frac{1}{2}\tan^{-1}\left(\frac{2a_{12}^2}{a_{11} - a_{22}}\right)$.

On the other hand, suppose that we have two opposite charges, i.e. $Q = q(I \oplus -I)$.
Then, $\sigma$ decomposes as
\begin{align*}
    \sigma = (V_{\rm ps}(\phi) \hspace{-0.5pt}\oplus\hspace{-0.5pt} I)V_{\rm 2sq}(r)\hspace{-2pt}
    \begin{pmatrix}
        \nu_1 I & \hspace{-4pt}0 \\
        0 & \hspace{-4pt}\nu_2 I
    \end{pmatrix}
    \hspace{-2pt}V_{\rm 2sq}(r)^{\rm T}\hspace{-1.5pt}(V_{\rm ps}(\phi) \hspace{-0.5pt}\oplus\hspace{-0.5pt} I)^{\rm T} \,,
\end{align*}
for symplectic eigenvalues $\nu_1, \nu_2 > 1$, and parameters $\phi = -\tan^{-1}(\theta_{12})$, and $r = -\frac{1}{2}\tanh^{-1}\left(\frac{2a_{12}^2}{a_{11} + a_{22}}\right)$.

We highlight that in the case of equal charges, a beam-splitter was used for the decomposition, whereas in the case of opposite charges, a 2--mode squeezer was used instead.
It is natural to ask if there is a systematic way to decompose U(1)--invariant symplectic matrices on an arbitrary number of modes.
If all charges $q_j = q$ are equal, then invariant symplectic matrices coincide with passive symplectic matrices, and there exists a known algorithm~\cite{reck1994experimental} that decomposes them into phase-shifters and beam-splitters.
In the context of symmetries, this algorithm hinges on the fact that the symplectic form $\Omega_n$ is proportional to the charge $Q\Omega_n = q\Omega_n$, so the conditions for a matrix to be symplectic and U(1)--invariant coincide.
This reduces the problem into decomposing an $(n \times n)$ unitary that mixes annihilation operators into $(2 \times 2)$ unitaries and phases.
If at least two charges have different signs, then the conditions for a matrix to be symplectic and U(1)--invariant no longer coincide, so the algorithm cannot be easily extended to the general case of $\Un(1)$ symmetry.
We leave the task of finding such an algorithm as an open question.

\subsection{Squeezed passive representations as symplectic representations that admit invariant states}
\label{app:squ_repr}

We prove that all symplectic representations that admit an invariant state are equivalent to a passive representation, and further admit a pure invariant state.
As noted in the main text, the first statement on compact groups is an immediate consequence of the Cartan-Iwasawa-Malcev theorem~\cite{malcev1945theory,iwasawa1949some,onishchik2012lie}. 
Here, we present a self-contained proof.
\squrepr*
\begin{proof}
    We first prove statements (i) and (iii).
    Compact groups always admit an invariant quantum state,
    \begin{equation}
        \sigma = r\int {\rm d}g\ S(g)S(g)^{\rm T} \,.
    \end{equation}
    Indeed, $\sigma$ is symmetric, $G$--invariant, i.e. $S(g) \sigma S(g)^{\rm T} = \sigma$ for all $g \in G$, and satisfies the uncertainty relation $\sigma + i\Omega \ge 0$ for sufficiently high $r>0$.
    
    In both cases of a compact and non-compact group $G$, given a covariance matrix $\sigma$ that remains invariant under the action of symmetry, we can obtain a covariance matrix $\sigma_\infty$ of a pure state by applying the transformation $\lim_{\alpha \rightarrow \infty} \sigma_{\alpha} \coloneqq \coth(\alpha \coth^{-1}(\sigma i\Omega))i\Omega$.
    Here, if $\sigma$ is the covariance matrix of Gaussian state $e^{-\beta \hat{H}}/\tr[e^{-\beta \hat{H}}]$, then $\sigma_{\alpha}$ is the covariance matrix of a Gaussian state with the same Hamiltonian $\hat{H}$ and temperature $(\alpha\beta)^{-1}$.
    Finally, $\sqrt{\sigma_\infty}$ is a symplectic transformation as discussed in the proof of Proposition~\ref{prop:pure_interconversion}. 
    Choosing $R = \sqrt{\sigma_\infty}$ completes the proof.

    We now prove statement (ii).
    The symplectic transformation $Q = \widetilde{R}^{-1}R$ satisfies $\widetilde{O}(g) = Q O(g) Q^{-1}$. 
    Consider the Bloch-Messiah decomposition $Q = O_2 D O_1$ (according to Eq.(\ref{eq:ozo})), where $D$ is a positive-definite diagonal symplectic matrix, and $O_1$ and $O_2$ are orthogonal symplectic. 
    Define the orthogonal transformation $I(g) \coloneqq O_1 O(g) O_1^{\rm T}$.
    Then, 
    \begin{equation}\label{eq:ogodo}
        \widetilde{O}(g) = (O_2 D O_1) O(g) (O_2 D O_1)^{-1} = O_2 D I(g) D^{-1} O_2^{\rm T}
    \end{equation}
    is orthogonal if and only if $D I(g) D^{-1}$ is orthogonal, i.e.
    \begin{equation}
        (D I(g) D^{-1}) (D I(g) D^{-1})^{\rm T} = I \,,
    \end{equation}
    or equivalently,
    \begin{equation}
        D I(g) D^{-2} I(g)^{\rm T} D = I \,,
    \end{equation}
    where we have used the fact that $D$ is diagonal, and therefore $D=D^{\rm T}$.
    This equivalently means that $[D^{-2}, I(g)]=0$, or, since $D$ is positive-definite, $[D, I(g)]=0$, for all $g \in G$.
    Together with Eq.(\ref{eq:ogodo}), we get
    \begin{equation}
        \widetilde{O}(g) = O_2 D I(g) D^{-1} O_2^{\rm T} = O_2 I(g) O_2^{\rm T} = (O_2O_1) O(g) (O_2O_1)^{\rm T} \,,
    \end{equation}
    where $T \coloneqq O_2O_1$ is orthogonal and symplectic, concluding the proof.
\end{proof}

An important corollary of Proposition~\ref{prop:squ} is that symplectically equivalent representations have the same irrep multiplicities.
\begin{proposition}\label{prop:mlambda}
    Consider two $G$--invariant symplectic subspaces $F_1$ and $F_2$ with equal dimensions, symplectic forms $\Omega_1$ and $\Omega_2$, and squeezed passive representations $S_1$ and $S_2$, respectively.
    Suppose there exists a symplectic intertwiner $V: F_1 \rightarrow F_2$, such that $V^{\rm T} \Omega_2 V = \Omega_1$, and $V S_1(g) = S_2(g) V$ for all $g \in G$. 
    Then, for each irrep $\lambda$ of group $G$ in the associated unitary representation in the space of annihilation (or creation) operators, the multiplicity $m(\lambda)$, as defined by Eq.(\ref{eq:uirreps}), is identical for $S_1$ and $S_2$.
    
    Conversely, if all multiplicities $m(\lambda)$ are identical for two symplectic representations $S_1$ and $S_2$, then the representations are equivalent up to a similarity transformation by a symplectic matrix $V$, i.e. $S_2(g) = V^{-1} S_1(g) V: g\in G$. If $S_1$ and $S_2$ are both symplectic and orthogonal, then $V$ can also be chosen to be symplectic and orthogonal.
\end{proposition}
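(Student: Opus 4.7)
The plan is to reduce both directions to the uniqueness of the squeezed passive form (Proposition~\ref{prop:squ}(ii)) together with the fact that an orthogonal symplectic transformation is encoded in the complex basis by a unitary acting on annihilation operators. Since $F_1$ and $F_2$ have equal dimensions, I first fix any symplectic isomorphism $\phi:F_1\to F_2$ and pull $S_2$ back via $\phi$, so that both representations may be taken to act on a single symplectic vector space. Applying Proposition~\ref{prop:squ} to each, I obtain squeezed passive decompositions $S_i(g)=R_i O_i(g)R_i^{-1}$ with $R_i$ symplectic and $O_i$ orthogonal symplectic; in the complex basis, $O_i(g)\overset{\bma}{=}U_i(g)\oplus U_i(g)^*$, where $U_i$ is a genuine unitary representation on $\mathbb{C}^n$ whose irrep multiplicities are, by definition, the integers $m(\lambda)$ attached to $S_i$.

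For the forward direction, the symplectic intertwiner $V$ gives $S_2=VS_1V^{-1}$, so that $S_2$ admits the two squeezed passive decompositions $S_2=R_2 O_2 R_2^{-1}=(V R_1)O_1(V R_1)^{-1}$, where $VR_1$ is symplectic because it is a product of symplectic maps. Proposition~\ref{prop:squ}(ii) then produces an orthogonal symplectic $T$ with $O_2(g)=T O_1(g)T^{-1}$ for all $g\in G$. Writing $T\overset{\bma}{=}U\oplus U^*$ in the complex basis for some unitary $U$, this reads $U_2(g)=U U_1(g)U^\dagger$, so that $U_1\cong U_2$ as unitary representations. Standard representation theory then yields $m_1(\lambda)=m_2(\lambda)$ for every irrep $\lambda$.

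For the converse, equal multiplicities imply $U_1\cong U_2$, so there exists a unitary $U$ with $UU_1(g)U^\dagger=U_2(g)$; the associated orthogonal symplectic matrix $T\overset{\bma}{=}U\oplus U^*$ satisfies $TO_1(g)T^{-1}=O_2(g)$, and then $V:=R_2 T R_1^{-1}$ is a symplectic intertwiner between $S_1$ and $S_2$, because $V S_1 V^{-1}=R_2 T O_1 T^{-1} R_2^{-1}=R_2 O_2 R_2^{-1}=S_2$. When $S_1$ and $S_2$ are themselves orthogonal symplectic, I may take the trivial decompositions $R_1=R_2=I$ and $O_i=S_i$, in which case $V=T$ is itself orthogonal symplectic, proving the last clause.

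The main bit of care will go into the very first step: ensuring that, after identifying $F_1$ with $F_2$, the pushed-forward decomposition $(VR_1)O_1(VR_1)^{-1}$ is indeed a legitimate squeezed passive decomposition of $S_2$ on a common symplectic space, so that Proposition~\ref{prop:squ}(ii) genuinely applies, and also verifying that a real orthogonal symplectic $T$ corresponds in the complex basis to a single unitary $U$ acting only on annihilation operators (with $T$ carrying no creation--annihilation mixing block). Once this correspondence is in hand, both directions of the equivalence are essentially bookkeeping, and the last statement about the orthogonal symplectic case is immediate from choosing trivial squeezers.
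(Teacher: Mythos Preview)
Your proposal is correct and follows essentially the same route as the paper: both reduce the forward direction to Proposition~\ref{prop:squ}(ii) applied to the two squeezed passive decompositions $S_2=R_2O_2R_2^{-1}=(VR_1)O_1(VR_1)^{-1}$, obtain an orthogonal symplectic $T$ intertwining $O_1$ and $O_2$, and pass to the complex basis to conclude $U_1\cong U_2$ as unitary representations. Your write-up is in fact more complete than the paper's, since you also spell out the converse and the orthogonal symplectic clause (both of which the paper leaves implicit), and you are explicit about the preliminary identification of $F_1$ with $F_2$ via a symplectic isomorphism.
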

\begin{proof}
    Consider the decomposition $S_1(g) = R_1 O_1(g) R_1^{-1}$ for symplectic matrix $R_1$ and orthogonal symplectic representation $O_1(g): g \in G$, with decomposition  
    \begin{equation}
        O_1(g) \overset{\bma}{=} U_1(g) \oplus U_1(g)^* \,.
    \end{equation}
    A similar decomposition exists for $S_2(g)$, with $S_2(g) = R_2 O_2(g) R_2^{-1}$, and $O_2(g) \overset{\bma}{=} U_2(g) \oplus U_2(g)^*$. 
    
    Since $S_2(g) = VS_1(g)V^{-1} = (V R) O_1(g) (VR)^{-1} $, and $V$ is symplectic, the second part of Proposition~\ref{prop:squ} implies that there exists an orthogonal symplectic transformation $\widetilde{O}$, such that
    \begin{equation}
        O_2(g) = \widetilde{O} O_1(g) \widetilde{O}^{\rm T} \,,
    \end{equation}
    for all $g\in G$. 
    Writing this in the complex basis, we find that there exists a unitary transformation $\tilde{U}$ such that
    \begin{equation}
        U_2(g) = \tilde{U} U_1(g) \tilde{U}^\dag \,,
    \end{equation}
    for all $g\in G$.
    This implies that $U_1$ and $U_2$ are equivalent unitary representations, and thus have the same isotypic decomposition,
    \begin{equation}
        U_2(g) = \tilde{U} U_1(g) \tilde{U}^\dag = \bigoplus_{\lambda \in \Lambda} U_{\lambda}(g) \otimes I_{m(\lambda)} \,,\quad \text{for all } g \in G \,.
    \end{equation}
    In particular, this means that for each irrep $\lambda$, the  multiplicity $m(\lambda)$ is identical for representations $U_1$ and $U_2$.
\end{proof}

\subsection{U(1) symmetry breaking for Gaussian systems}
\label{app:u1breaking}

In Gaussian systems, $\Un(1)$ symmetry breaking is equivalent to $\mathbb{Z}_p$ symmetry breaking for some $p>2$.
\begin{proposition}
    Let $\hat{O}(\theta) = \bigotimes_{j=1}^n \exp(i\theta q_j \hat{a}^\dag_j\hat{a}_j)$, with integer $q_j \neq 0$ for $j = 1, \dots, n$, be a representation of $\Un(1)$ on $n$ modes.
    A Gaussian Hamiltonian breaks the $\Un(1)$ symmetry if and only if it breaks a $\mathbb{Z}_p$ symmetry for some integer $p > 2\max_{j \in \mathbb{Z}_n}|q_j|$.
\end{proposition}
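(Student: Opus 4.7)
The plan is to reduce both $\Un(1)$-invariance and $\mathbb{Z}_p$-invariance of a Gaussian Hamiltonian to conditions on its matrix elements, and show these conditions coincide once $p$ is large enough. The easy direction is to observe that every $\mathbb{Z}_p$ (generated by $\theta=2\pi/p$) is a subgroup of $\Un(1)$, so if $\hat H$ is $\Un(1)$-invariant it is automatically $\mathbb{Z}_p$-invariant for every $p$. Hence breaking any $\mathbb{Z}_p$ (in particular some $\mathbb{Z}_p$ with $p>2\max_j|q_j|$) implies breaking $\Un(1)$. The content of the proposition is the converse.

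For the converse, I would start from the characterization in Eq.(\ref{eq:quadH_cond}) and the transformation rules of Section~\ref{sec:u1}. Writing the Gaussian Hamiltonian as $\hat H = \tfrac12 \hat{\bmr}^{\rm T} H \hat{\bmr} + \hat{\bmr}^{\rm T}\bmxi$ and decomposing $\hat H = \hat H^{\text{p}} + \hat H^{\text{np}} + \hat{\bmr}^{\rm T}\bmxi$ in the creation/annihilation basis, the eigenvalues of $\hat O(\theta)$ on the building blocks $\hat a_j^\dag \hat a_k$, $\hat a_j \hat a_k$, and $\hat a_j^{(\dag)}$ are $e^{i\theta(q_k-q_j)}$, $e^{-i\theta(q_j+q_k)}$, and $e^{\mp i\theta q_j}$, respectively. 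Invariance under the full $\Un(1)$ therefore becomes:
\begin{equation*}
    [H^{\text{p}}]_{jk}=0 \text{ unless } q_j=q_k, \quad [H^{\text{np}}]_{jk}=0 \text{ unless } q_j=-q_k, \quad \xi_j=0 \text{ unless } q_j=0.
\end{equation*}

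Next I would redo the same computation at the discrete angles $\theta = 2\pi m/p$, $m=0,\dots,p-1$, obtaining the $\mathbb{Z}_p$-invariance conditions
\begin{equation*}
    [H^{\text{p}}]_{jk}=0 \text{ unless } q_j\equiv q_k \!\!\!\pmod{p}, \quad [H^{\text{np}}]_{jk}=0 \text{ unless } q_j+q_k\equiv 0\!\!\!\pmod{p}, \quad \xi_j=0 \text{ unless } q_j\equiv 0\!\!\!\pmod{p}.
\end{equation*}
The crux is now an elementary arithmetic observation: whenever $p > 2\max_j|q_j|$, the three integer quantities $q_j-q_k$, $q_j+q_k$, and $q_j$ all lie strictly in $(-p,p)$, so each congruence modulo $p$ is equivalent to equality. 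Consequently the $\mathbb{Z}_p$ conditions reduce exactly to the $\Un(1)$ conditions, and a Gaussian Hamiltonian respects $\mathbb{Z}_p$ if and only if it respects $\Un(1)$. Contrapositively, if $\hat H$ breaks $\Un(1)$, then for any (and in particular for some) $p>2\max_j|q_j|$ it breaks $\mathbb{Z}_p$.

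There is no real obstacle beyond bookkeeping: the only mild subtlety is that charges may be negative, so one must use $|q_j\pm q_k|\le 2\max_j|q_j|$ (rather than the naive $|q_j|+|q_k|$) to ensure the bound $p>2\max_j|q_j|$ covers both signs uniformly and both the quadratic and linear parts of $\hat H$. With this, the two characterizations collapse and the equivalence follows.
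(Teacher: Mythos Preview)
Your proposal is correct and follows essentially the same approach as the paper: both reduce the question to the phase factors $e^{i\theta(q_k-q_j)}$, $e^{i\theta(q_j+q_k)}$ (and $e^{i\theta q_j}$ for the linear part) and use the arithmetic fact that for $p>2\max_j|q_j|$ the relevant integers lie in $(-p,p)$, so congruence modulo $p$ collapses to equality. The only cosmetic difference is that the paper disposes of the displacement term in one line (since all $q_j\neq 0$, any nonzero $\bmxi$ breaks both symmetries), whereas you fold it into the general analysis.
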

\begin{proof}
    Since no mode carries the trivial representation, i.e. $q_j \neq 0$ for all $j$, we can assume that there is no displacement, otherwise both symmetries are broken.
    
    Any purely quadratic Hamiltonian on $n$ modes takes the form
    \begin{equation}
        \hat{H} = \sum_{j,k=1} F_{jk} \hat{a}_j^\dag\hat{a}_k + G_{jk} \hat{a}_j\hat{a}_k + \text{H.c.} \,.
    \end{equation}
    For an arbitrary representation of $\Un(1)$ on $n$ modes, as given by Eq.(\ref{eq:u1_repr_hat}), the subgroup $\mathbb{Z}_p$ admits subrepresentation
    \begin{equation}
        \hat{O}(r) = \bigotimes_{j=1}^n e^{i r\frac{2\pi}{p} q_j \hat{a}^\dag_j\hat{a}_j} \,,\quad \text{for all } r\in\mathbb{Z}_p \,.
    \end{equation}
    The Hamiltonian transforms under the action of $\Un(1)$ or $\mathbb{Z}_p$ as
    \begin{equation}
    \begin{split}
        \hat{O}(\theta)\hat{H}\hat{O}(\theta)^\dag = \sum_{j,k=1} F_{jk} e^{i\theta(q_k - q_j)} \hat{a}_j^\dag\hat{a}_k + G_{jk} e^{i\theta(q_k + q_j)} \hat{a}_j\hat{a}_k + \text{H.c.} \,,
    \end{split}
    \end{equation}
    where $\theta \in [0, 2\pi)$ if the symmetry is U(1), and $\theta \in \left\{r\frac{2\pi}{p}: r \in \mathbb{Z}_p \right\}$ if the symmetry is $\mathbb{Z}_p$.
    Operator $\hat{H}$ breaks the symmetry if and only if any of the phases $e^{i\theta(q_k - q_j)}, e^{i\theta(q_k + q_j)}$ is non-zero for any permissible $\theta$.
    
    In the case of $\mathbb{Z}_p$ symmetry with $p < \max\{q_k - q_j, q_j + q_k\}$, there may be phases that are zero independently of $r$.
    However, for $p > 2\max_{j \in \mathbb{Z}_n}|q_j|$, it is guaranteed that $p > q_k - q_j$ and $p > q_j + q_k$, so $\hat{H}$ has the same non-zero terms under the action of both $\Un(1)$ and $\mathbb{Z}_p$.
\end{proof}
This result applies to any symplectic Gaussian unitary generated by / any quantum state prepared as the thermal state of a second-order Hamiltonian that breaks the $\Un(1)$ symmetry.

\subsection{Mode-coupling and entanglement in the presence of symmetries}
\label{app:entanglement}

We introduce a primitive that entangles systems in a symmetry-respecting manner.
Starting with a simple scenario, let $\hat{a}$ and $\hat{b}$ be the annihilation operators on two 1--mode systems $A$ and $B$ respectively.
Then, passive operations, such as the beam-splitter cannot create correlations from the vacuum state $\ket{0}^{\otimes 2}$.
On the other hand, the 2--mode squeezer $\hat{V}_{\rm{2sq}}(r)$, with $r \neq 0$, acts on the vacuum as
\begin{equation}\label{eq:2mode_squeezing}
    \hat{V}_{\rm{2sq}}(r) \ket{0}^{\otimes 2} = \sech{r}\sum_{j=0}^\infty (-\tanh r)^j \ket{j} \otimes \ket{j} \,.
\end{equation}
This state is known as the EPR state and it is entangled whenever $r \neq 0$, as its reduced 1--mode states are thermal states with non-zero von Neumann entropy (Eq.(\ref{eq:vonneumann})).
We generalize the 2--mode squeezer $\hat{V}_{\rm{2sq}}(r)$ to obtain an operator that entangles 2 systems of arbitrary numbers of modes, and we characterize the conditions on the squeezing correlation matrix $r$ that ensure invariance of the operator under a general passive representation of a symmetry.
\begin{proposition}\label{prop:entangling}
    Suppose that $A$ and $B$ are two bosonic systems of $n$ modes, and let $G$ be a symmetry group with squeezed passive representation $g \mapsto \hat{S}_A(g) = \hat{R}^{-1}\hat{O}_A(g)\hat{R}$ in Hilbert space $\H_A$, where $\hat{R}$ is a symplectic Gaussian unitary and $\hat{O}_A$ has associated phase space representation $g \mapsto O_A(g) \overset{\bma}{=} U(g) \oplus U(g)^*$.
    
    Define the following symplectic Gaussian unitary,
    \begin{equation}\label{eq:inv_entangle_operator}
        \hat{V}_{\rm sq}(r) \coloneqq \hat{R}\ \exp\left( \sum_{j,k=1}^n r_{jk} \hat{a}^\dag_j \hat{b}^\dag_k - r_{jk}^* \hat{a}_j \hat{b}_k \right) \hat{R}^{-1} \,,
    \end{equation}
    for any $(n \times n)$ complex, invertible matrix $r$.
    Then, operator $\hat{V}(r)$ has associated symplectic matrix
    \begin{equation}\label{eq:inv_entangle_matrix}
        V_{\rm sq}(r) \overset{\bma}{=} R \begin{pmatrix} \cosh\sqrt{r r^\dag} & 0 & 0 & {\displaystyle \frac{\sinh\sqrt{r r^\dag}}{\sqrt{r r^\dag}}\ r} \\ 0 & \left(\cosh\sqrt{r r^\dag}\right)^* & {\displaystyle \left(\frac{\sinh\sqrt{r r^\dag}}{\sqrt{r r^\dag}}\ r\right)^*} & 0 \\ 0 & {\displaystyle \left(\frac{\sinh\sqrt{r^\dag r}}{\sqrt{r^\dag r}}\ r^\dag \right)^*} & \left(\cosh\sqrt{r^\dag r}\right)^* & 0 \\ {\displaystyle \frac{\sinh\sqrt{r^\dag r}}{\sqrt{r^\dag r}}\ r^\dag }  & 0 & 0 & \cosh\sqrt{r^\dag r} \end{pmatrix} R^{-1} \,,
    \end{equation}
    where the basis is $(\hat{a}_1, \dots, \hat{a}_n, \hat{a}_1^\dag, \dots, \hat{a}_n^\dag, \hat{b}_1,\dots, \hat{b}_n, \hat{b}_1^\dag, \dots, \hat{b}_n^\dag)$.
    
    Moreover, $\hat{V}(r)$ is invariant under the tensor representation $g \mapsto \hat{S}_A(g) \otimes \hat{S}_B(g)$ if and only if $r = U_A(g) r U_B(g)^{\rm T}$ for all $g \in G$.
    In particular, if $\hat{S}_B(g) = \hat{S}^*_A(g)$, this condition becomes $[r, U_A(g)] = 0$ for all $g \in G$.
\end{proposition}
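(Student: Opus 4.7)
The plan is to factor out the Bogoliubov transformation $\hat{R}$ at the outset and reduce both claims to the passive frame, then (i) evaluate the matrix exponential of the generator and (ii) track how the generator transforms under a passive symmetry. Writing
\begin{equation*}
\hat{V}_{\rm sq}(r) = \hat{R}\,\hat{V}_0(r)\,\hat{R}^{-1}, \qquad \hat{V}_0(r)\coloneqq\exp\!\Big(\sum_{j,k} r_{jk}\,\hat{a}_j^\dag\hat{b}_k^\dag - r_{jk}^*\,\hat{a}_j\hat{b}_k\Big),
\end{equation*}
the symplectic matrix factors as $V_{\rm sq}(r) = R\,V_0(r)\,R^{-1}$, and the tensor representation obeys $\hat{S}_A\otimes\hat{S}_B = (\hat{R}\otimes\hat{R})^{-1}(\hat{O}_A\otimes\hat{O}_B)(\hat{R}\otimes\hat{R})$. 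Consequently, invariance of $\hat{V}_{\rm sq}(r)$ under $\hat{S}_A\otimes\hat{S}_B$ is equivalent to invariance of the purely quadratic generator $\hat{G}(r)\coloneqq\sum_{jk} r_{jk}\hat{a}_j^\dag\hat{b}_k^\dag - r_{jk}^*\hat{a}_j\hat{b}_k$ under $\hat{O}_A\otimes\hat{O}_B$, and it suffices to compute $V_0(r)$ for the explicit formula.

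For the symplectic matrix, I would cast $\hat{G}(r) = i\hat{H}(r)$ with $\hat{H}(r) = \tfrac{1}{2}\hat{\bmc}^{\rm T} H\hat{\bmc}$ in the ordered complex basis $\hat{\bmc} = (\hat{a}_1,\ldots,\hat{a}_n,\hat{a}_1^\dag,\ldots,\hat{a}_n^\dag,\hat{b}_1,\ldots,\hat{b}_n,\hat{b}_1^\dag,\ldots,\hat{b}_n^\dag)^{\rm T}$, so that $V_0(r) = e^{-\Omega H}$ with $\Omega = \Omega_A\oplus\Omega_B$. Symmetry of $H$ fixes the only nonzero blocks to be $H_{(a^\dag_A,b^\dag_B)} = -ir$ and $H_{(a_A,b_B)} = ir^*$, together with their transposes. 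A short computation shows that $-\Omega H$ has only four nonzero blocks: $(a_A,b_B^\dag)\mapsto r$, $(b_B^\dag,a_A)\mapsto r^\dag$, and the corresponding complex-conjugate pair on $(a_A^\dag,b_B)$. Permuting the basis to $(\hat{\bma}_A,\hat{\bma}_B^\dag)\oplus(\hat{\bma}_A^\dag,\hat{\bma}_B)$ block-diagonalises $-\Omega H$ into $M\oplus M^*$ with $M = \begin{pmatrix} 0 & r \\ r^\dag & 0\end{pmatrix}$. The even and odd powers
\begin{equation*}
M^{2k} = \begin{pmatrix}(rr^\dag)^k & 0 \\ 0 & (r^\dag r)^k\end{pmatrix}, \qquad M^{2k+1} = \begin{pmatrix}0 & (rr^\dag)^k r \\ (r^\dag r)^k r^\dag & 0\end{pmatrix},
\end{equation*}
sum to the cosh/sinh expression via the analytic functional-calculus identity $f(rr^\dag)\,r = r\,f(r^\dag r)$. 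Undoing the basis permutation and the outer conjugation by $R$ reproduces Eq.~(\ref{eq:inv_entangle_matrix}).

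For the invariance claim, the passivity relations $\hat{O}_A(g)\hat{a}_j^\dag\hat{O}_A(g)^\dag = \sum_l[U_A(g)]_{lj}\hat{a}_l^\dag$ and $\hat{O}_B(g)\hat{b}_k^\dag\hat{O}_B(g)^\dag = \sum_m[U_B(g)]_{mk}\hat{b}_m^\dag$ plug into $\hat{G}(r)$ to give, after relabeling indices,
\begin{equation*}
(\hat{O}_A(g)\otimes\hat{O}_B(g))\,\hat{G}(r)\,(\hat{O}_A(g)\otimes\hat{O}_B(g))^\dag = \hat{G}\!\left(U_A(g)\,r\,U_B(g)^{\rm T}\right).
\end{equation*}
Since the map $r\mapsto\hat{G}(r)$ is injective on this family of purely quadratic generators, $\hat{V}_0(r)$ is invariant if and only if $r = U_A(g)\,r\,U_B(g)^{\rm T}$ for all $g\in G$. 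When $\hat{S}_B = \hat{S}_A^*$ one has $U_B = U_A^*$ and $U_B^{\rm T} = U_A^\dag$, whence the condition collapses to $[r,U_A(g)] = 0$.

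The only real obstacle is the block-level book-keeping in the symplectic-matrix computation: identifying the four non-vanishing blocks of $-\Omega H$ and spotting the basis permutation that block-diagonalises the generator into $M\oplus M^*$. Once this is in place, the power-series summation is routine, with the only subtlety being that $r$ need not be normal; this is handled uniformly by the identity $f(rr^\dag)\,r = r\,f(r^\dag r)$ applied to $\sinh(\sqrt{\cdot})/\sqrt{\cdot}$, which is an entire even function of its argument and hence well-defined on the positive matrices $rr^\dag$ and $r^\dag r$.
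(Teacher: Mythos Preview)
Your proof is correct. The invariance argument is essentially identical to the paper's: both reduce to the passive frame via $\hat{R}$ and then track how the generator transforms under $\hat{O}_A\otimes\hat{O}_B$, arriving at $\hat{G}(r)\mapsto\hat{G}(U_A(g)\,r\,U_B(g)^{\rm T})$.

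For the symplectic matrix, however, you take a genuinely different route. The paper proceeds by first taking the singular value decomposition $r=U^\dag D W$, which defines new annihilation operators $\tilde a_\ell=\sum_k U_{\ell k}\hat a_k$ and $\tilde b_\ell=\sum_k W^*_{\ell k}\hat b_k$ that decouple $\hat V_0(r)$ into a product of $n$ independent two-mode squeezers with parameters $\lambda_\ell$; the action on each pair is then the textbook $\cosh\lambda_\ell\,/\,\sinh\lambda_\ell$ transformation, and one converts back to the original basis using $U^\dag W=(rr^\dag)^{-1/2}r$. Your approach instead computes $e^{-\Omega H}$ directly: after the basis permutation you exponentiate $M=\begin{pmatrix}0&r\\ r^\dag&0\end{pmatrix}$ via its power series and the intertwining identity $f(rr^\dag)\,r=r\,f(r^\dag r)$. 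The paper's method is more operational (it exhibits the physical content of $\hat V_0(r)$ as a product of elementary squeezers), while yours is more purely algebraic and, as you note, handles the potential non-normality of $r$ uniformly through the entire function $\sinh(\sqrt{\,\cdot\,})/\sqrt{\,\cdot\,}$ without needing the SVD as an intermediate step.
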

\noindent The correlation matrix $r$ describes how operator $\hat{V}_{\rm sq}(r)$ entangles the annihilation (creation) operators on system $A$ with the annihilation (creation) operators on system $B$. 
Operator $\hat{V}_{\rm sq}(r)$ is generated by the Hamiltonian matrix $H_{\rm sq}(r) \overset{\bma}{=} r^* \oplus r \oplus r^* \oplus r$ and it is an $n$--mode generalization of the 2--mode squeezer $\hat{V}_{\rm{2sq}}(r)$, which can be obtained by setting $n=1$ and $r \in \mathbb{R}$.
\begin{proof}
    Throughout the proof, we assume that the representations $\hat{S}_A$ and $\hat{S}_B$ are passive, which can be achieved by simply changing the basis by $\hat{R}$.
    
    First, we prove the invariance condition for operator $\hat{V}(r)$,
    \begin{equation}
    \begin{split}
        (\hat{S}_A(g) \otimes \hat{S}_B(g)) \hat{V}_{\rm sq}(r) (\hat{S}_A(g) \otimes \hat{S}_B(g))^\dag
        &= \exp\left( \sum_{j,k=1}^n \left( r_{jk} \left(\hat{S}_A(g) \hat{a}_j^\dag \hat{S}_A(g)^\dag\right)\left(\hat{S}_B(g) \hat{b}_k^\dag \hat{S}_B(g)^\dag\right) - \text{H.c.} \right) \right) \\
        &\overset{\bma}{=} \exp\left( \sum_{j,k=1}^n \left( r_{jk} \sum_{j',k'=1}^n [U_A(g)]_{j'j} \hat{a}_{j'}^\dag [U_B(g)]_{k'k} \hat{b}_{k'}^\dag - \text{H.c.} \right) \right) \\
        &\overset{\bma}{=} \exp\left( \sum_{j',k'=1}^n \left( \left[U_A(g) r U_B(g)^{\rm T}\right]_{j'k'} \hat{a}_{j'}^\dag \hat{b}_{k'}^\dag - \text{H.c.} \right) \right) \\
        &= \hat{V}_{\rm sq}(U_A(g) r U_B(g)^{\rm T}) \,,
    \end{split}
    \end{equation}
    so $[\hat{S}_A(g) \otimes \hat{S}_B(g),\hat{V}(r)]=0$ if and only if $r = U_A(g) r U_B(g)^{\rm T}$, which holds if $S_B(g) \overset{\bma}{=} \widetilde{S}_A(g)$ and $[r, U_A(g)]=0$, for all $g \in G$.
    
    Next, we derive the symplectic matrix $V_{\rm sq}(r)$ associated to the operator $\hat{V}_{\rm sq}(r)$ given in Eq.(\ref{eq:inv_entangle_operator}). 
    To this end, we evaluate the operator action on $\hat{\bmxi} = (\hat{a}_1, \dots, \hat{a}_n, \hat{a}_1^\dag, \dots, \hat{a}_n^\dag, \hat{b}_1,\dots, \hat{b}_n, \hat{b}_1^\dag, \dots, \hat{b}_n^\dag)$, such that $\hat{V}_{\rm sq}(r)^\dag \hat{\bmxi} \hat{V}_{\rm sq}(r) = V_{\rm sq}(r)\hat{\bmxi}$.
    
    We first derive the action of $\hat{V}(r)$ on $\hat{\bmxi}$ by considering the singular value decomposition (SVD)
    \begin{equation}
        r = U^\dag D W \,,
    \end{equation}
    for some unitary matrices $U,W$ and $D = \diag(\lambda_1, \dots, \lambda_n)$, where $\lambda_\ell \ge 0$ for $\ell = 1, \dots, n$.
    The symplectic Gaussian unitary $\hat{V}_{\rm sq}(r)$ can be re-expressed as
    \begin{equation}
        \hat{V}_{\rm sq}(r) = \exp\left( \sum_{\ell} \lambda_\ell \left( \tilde{a}^\dag_\ell \tilde{b}^\dag_\ell - \tilde{a}_\ell \tilde{b}_\ell \right) \right) \,,
    \end{equation}
    where the new annihilation operators can be expressed in terms of the original annihilation operators as
    \begin{equation}\label{eq:decoupled_basis}
    \begin{split}
        \tilde{a}_\ell &\coloneqq \sum_{k=1}^n U_{\ell k} \hat{a}_k \,, \\
        \tilde{b}_\ell &\coloneqq \sum_{k=1}^n W^*_{\ell k} \hat{b}_k \,.
    \end{split}
    \end{equation}
    The modes are locally decoupled in the new basis, so the action of $\hat{V}(r)$ is independent on each pair of modes,
    \begin{equation}
    \begin{split}
        \hat{V}_{\rm sq}(r)^\dag\tilde{a}_\ell\hat{V}_{\rm sq}(r) 
        &= \cosh{\lambda_\ell}\ \tilde{a}_\ell + \sinh{\lambda_\ell}\ \tilde{b}^\dag_\ell \,, \\
        \hat{V}_{\rm sq}(r)^\dag\tilde{b}_\ell\hat{V}_{\rm sq}(r) 
        &= \cosh{\lambda_\ell}\ \tilde{b}_\ell + \sinh{\lambda_\ell}\ \tilde{a}^\dag_\ell \,.
    \end{split}
    \end{equation}
    This is the action of a 2--mode squeezer.
    
    Converting back to the original basis using the transformations in Eq.(\ref{eq:decoupled_basis}),
    \begin{align}
        \hat{V}_{\rm sq}(r)^\dag\hat{a}_\ell\hat{V}_{\rm sq}(r) 
        &= \sum_{k=1}^n \left( \left[ U^\dag \cosh{D}\ U \right]_{\ell k} \hat{a}_k + \left[ U^\dag \sinh{D}\ W \right]_{\ell k} \hat{b}^\dag_k \right) \nonumber\\
        &= \sum_{k=1}^n \left( \left[ \cosh{\sqrt{rr^\dag}} \right]_{k\ell}\ \hat{a}_k + \left[ \left(\sinh{\sqrt{rr^\dag}}\right)(rr^\dag)^{-\frac{1}{2}}r \right]_{k\ell}\ \hat{b}^\dag_k \right) \,, \label{eq:proof_atransform1}\\
        \hat{V}_{\rm sq}(r)^\dag\hat{b}_\ell\hat{V}_{\rm sq}(r) 
        &= \sum_{k=1}^n \left( \left[ W^\dag \cosh{D}\ W \right]_{k\ell} \hat{b}_k + \left[ U^\dag \sinh{D}\ W \right]_{k\ell} \hat{a}^\dag_k \right) \nonumber\\
        &= \sum_{k=1}^n \left( \left[ \cosh{\sqrt{r^\dag r}} \right]^*_{\ell k}\ \hat{b}_k + \left[ \left( \sinh{\sqrt{r^\dag r}} \right)(r^\dag r)^{-\frac{1}{2}}r^\dag \right]^*_{\ell k}\ \hat{a}^\dag_k \right) \,, \label{eq:proof_btransform1}
    \end{align}
    where we have used that $(r r^\dag)^{-\frac{1}{2}} = ({U^\dag D W (U^\dag D W)^{\dag}})^{-1/2} = U^\dag D^{-1} U$, which implies 
    \begin{equation}
        U^\dag W= [U^\dag D^{-1} U] U^\dag D W =(r r^\dag)^{-\frac{1}{2}}r \,.
    \end{equation}

    The associated symplectic matrix $V_{\rm sq}(r)$ provided in Eq.(\ref{eq:inv_entangle_matrix}) can be read directly from Eq.(\ref{eq:proof_atransform1}) and Eq.(\ref{eq:proof_btransform1}) up to the basis change induced by $\hat{R}$.
\end{proof}

\section{Resource theory of Gaussian asymmetry}

Here, we provide proofs and derivations for Sections~\ref{sec:res_theory} and~\ref{sec:monotones}.

\subsection{Transitivity of free unitaries on free pure states}
\label{app:res_transverse}

We prove the existence of a free unitary that converts between any two free pure states in the resource theory of Gaussian asymmetry and provide an explicit form for it in terms of the states' statistical moments.
\pureinterconvert*
\begin{proof}
    We first consider the passive representation $g \mapsto \hat{O}(g)$ and make use of the irrep decomposition of an invariant state $\ket{\psi}$ (see Section~\ref{app:state_irreps}) to find an invariant Gaussian unitary $\hat{V}$ that maps the vacuum state $\ket{0}^{\otimes n}$ to $\ket{\psi}$.
    
    Let $S(g) \overset{\bma}{=} U(g) \oplus U(g)^*$ be the associated phase space representation.
    Let $\bmd$ and $\sigma$ be the displacement vector and covariance matrix of state $\ket{\psi}$. 
    The displacement operator $\hat{D}_{\bmd}$ is invariant as proven in Proposition~\ref{prop:symmetry_phasespace}.
    Due to Eq.(\ref{eq:ozo}), the covariance matrix of any pure state can be written as $\sigma \overset{\bmr}{=} SS^{\rm T} \overset{\bmr}{=} O\left(\bigoplus_{j=1}^n \diag\left(z_j^2, z_j^{-2}\right)\right)O^{\rm T}$, where $O$ is orthogonal symplectic.
    Then, we construct the matrix
    \begin{equation}
        V \coloneqq \sqrt{\sigma} \overset{\bmr}{=} O\left(\bigoplus_{j=1}^n \diag\left(z_j, z_j^{-1}\right)\right)O^{\rm T} \,,
    \end{equation}
    which is invariant under $S$, symplectic, i.e. $\sqrt{\sigma} \Omega_n \sqrt{\sigma}^{\rm T} = \Omega_n$.
    Then, $\hat{D}_{\bmd}\hat{V}$ maps $\ket{0}^{\otimes n}$ to $\ket{\psi}$.
    
    For squeezed passive representations of the form $\hat{R}\hat{S}(g)\hat{R}^\dag$, where $\hat{S}$ is passive, we can simply use the transformation Table~\ref{tab:transf_rules}, to deduce that an invariant symplectic matrix that sends $\hat{R}\ket{0}^{\otimes n}$ to $\ket{\psi}$ is $V = R\sqrt{R^{-1} \sigma R^{\rm -T}}R^{-1}$.
    
    Applying first $\hat{V}_1^\dag$ to send $\ket{\psi_1}$ to the vacuum, then $\hat{V}_2$ to send the vacuum to $\ket{\psi_2}$, we obtain Eq.(\ref{eq:rho1to2}).
\end{proof}

\subsection{Interconversion between types of asymmetry}
\label{app:types}

Here we provide the proof of Proposition~\ref{prop:types} on the convertibility between type--1 and type--2 asymmetry.
\asymtypes*
\begin{proof}
    As mentioned in the main text, both types of asymmetry remain distinct under Gaussian covariant channels, i.e. the output state contains each type of asymmetry, only if the input state contains it.
    
    Consider a quantum state with zero type--1 asymmetry.
    In such a case, the state's first moment is non-zero only in the trivial irreps of the representation, thus the state respects the $\mathbb{Z}_2$ symmetry.
    If a quantum channel acts on the state increasing its type--1 asymmetry, then the new first moment is non-zero in at least one non-trivial irrep, and the state no longer respects the $\mathbb{Z}_2$ symmetry.
    Therefore, such a channel cannot be $G$--covariant.

    To prove that a pure Gaussian state $\ket{\psi}$ with non-zero type--1 asymmetry can generate type--2 asymmetry via a $G$--covariant operation, consider the passive representation $g \mapsto \hat{S}(g)$ of a group $G$ on $n$ modes, with associated phase space representation $g \mapsto S(g) \overset{\bma}{=} U(g) \oplus U(g)^*$.
    Without loss of generality, we assume that it does not contain a trivial subrepresentation, otherwise we can increase type--2 asymmetry arbitrarily on modes with trivial symmetry.
    
    Firstly, we can convert the covariance matrix $\sigma$ of $\ket{\psi}$ to the identity $I$ by applying the $G$--invariant symplectic transformation $\sqrt{\sigma}$ (see Proposition~\ref{prop:pure_interconversion}).
    Therefore, we can write $\ket{\psi}$ as an $n$--mode coherent state $\ket{\psi} = \ket{\alpha} = \bigotimes_{j=1}^n \ket{\alpha_j}$, for vector $\alpha = (\alpha_1, \dots, \alpha_n)$. 
    We then act on it with the non-Gaussian $G$--invariant unitary $\hat{U} \coloneqq \hat{I} - 2\ketbra{0}{0}$.
    According to Eqs.(\ref{eq:ngmoment1},\ref{eq:ngmoment2}), we can write the output covariance matrix in the complex basis as
    \begin{equation}
        \sigma \overset{\bma}{=} \begin{pmatrix} G & F \\ F^* & G^* \end{pmatrix} \,,\quad\quad F = I + 2(1-c^2) \alpha\alpha^\dag\,,\quad\quad G = 2c(1-c) \alpha\alpha^{\rm T} \,.
    \end{equation}
    Whenever the vector $\alpha$ is not invariant under $g \mapsto U(g)$, the matrix $G \propto \alpha\alpha^{\rm T}$ is also not invariant in the sense of Eq.(\ref{eq:conditions_invariance}), i.e. $U\alpha \neq \alpha$ implies $U G U^{\rm T} \propto U\alpha(U\alpha)^{\rm T} \neq \alpha\alpha^{\rm T}$ for any unitary $U$.
\end{proof}

In fact, if $\alpha$ is arbitrarily high and the symmetry group $G$ is compact, we can output any arbitrary state $\hat{\tau}$ with arbitrarily high probability.
Define the measure-and-prepare operation $\E_{\hat{\tau}}: \B(\H) \rightarrow \B(\H)$ that acts on a state $\hat{\rho}$ as
\begin{equation}\label{eq:mnp_channel}
    \E_{\hat{\tau}}(\hat{\rho}) = \int {\rm d}g\ \tr[\hat{\rho}\ketbra{\alpha_g}] \hat{S}(g) \hat{\tau} \hat{S}(g)^\dag + \tr\left[\hat{\rho}\left(\hat{I} - \int {\rm d}g\ \ketbra{\alpha_g}\right)\right] \ketbra{0} \,,
\end{equation}
where $\hat{\tau}$ is a fixed state, and $\ket{\alpha_g} \coloneqq \hat{S}(g)\ket{\alpha} = \ket{U(g)\alpha}$.
Choose $\hat{\rho} = \ketbra{\alpha}$ and label the POVM implicitly defined in Eq.(\ref{eq:mnp_channel}) as $\{\hat{M}_g: g \in G \sqcup \{0\}\}$, with $g=0$ corresponding to the preparation of the vacuum.
Then, the density function,
\begin{equation}
    p(\alpha|g) = 
    \begin{cases}
        |\braket{\alpha}{\alpha_g}|^2 \,, &g \in G \,, \\
        1 - \int {\rm d}g\ |\braket{\alpha}{\alpha_g}|^2 \,, &g = 0 \,,
    \end{cases}
\end{equation}
satisfies $p(\alpha|g) \neq p(\alpha|g')$ for $g \neq g'$, because the representation acts non-trivially on $\ket{\alpha}$, and it is uniquely maximized at the identity element $g = e$.
Consider the limit of high $\|\alpha\|_2$ which, in the absence of a trivial representation, can be taken in any direction $|\alpha_i|$.
In this limit, and for $\hat{\rho} = \ketbra{\alpha}$, $p$ decays exponentially in $\|\alpha\|_2^2$ tending to a sharp distribution around $g = e$ (Dirac delta $\delta(g-e)$ if $G$ is continuous or Kronecker delta $\delta_{g,e}$ if $G$ is finite) due to the approximate orthogonality of coherent states, and channel $\E_{\hat{\tau}}$ outputs state $\hat{\tau}$ with arbitrarily high probability.
This holds for any input state with first moment defined by $\alpha \neq 0$.

In fact, the maximum likelihood estimator can be chosen~\cite{newey1994large} to estimate $e$.
Importantly, the density function $p$ is continuous over the compact parameter space $G \sqcup \{0\}$, and identifiable, i.e. $g \neq g'$ implies $p(\alpha|g) \neq p(\alpha|g')$, so the maximum likelihood estimator converges in probability to the unique maximum $g = e$ for a sequence of input states with increasingly high $\|\alpha\|_2$.

\subsection{First moments of reachable states under Gaussian covariant channels}
\label{app:suppd}

We provide the proof of Proposition~\ref{prop:suppd} which describes the complete set of first moments of states that are reachable from a fixed state under Gaussian covariant channels.
\suppsubset*
\begin{proof}
    First, we note that for a general covariant channel the displacement vector transforms as $\bmd_B = X\bmd_A + \bmxi$, where $X$ satisfies $X S_A(g)=S_B(g) X$ for all $g\in G$, and $ \bmxi$ is zero in all modes except the mode carrying the trivial representation. It follows that the projection of $\bmd_B$ to the trivial modes is fully unrestricted and can be independent of the input. Therefore, in the following, we focus on modes carrying non-trivial representations of symmetry and find the condition for the existence of an invariant $X$ satisfying $\bmd_B = X\bmd_A+\bmxi$.  
        
    Consider input and output representations
    \begin{equation}
        K_A S_A(g) K_A^{-1} = \bigoplus_{\mu \in M_A} U_\mu \otimes I_{m(\mu)} \,,\quad\quad K_B S_B(g) K_B^{-1} = \bigoplus_{\nu \in M_B} U_\nu \otimes I_{m'(\nu)} \,,
    \end{equation}
    where $K_A, K_B$ are defined according to Eq.(\ref{eq:uustar_decomp}), $m(\mu)$ and $m'(\nu)$ are the multiplicities of irrep $\mu$ in the input irrep decomposition and of irrep $\nu$ in the output irrep decomposition, respectively.
    
    Given a complex matrix $X$ and $g \in G$, we have $XS_A(g) = S_B(g)X$ if and only if $X' \coloneqq K_B^{-1} X K_A$ satisfies
    \begin{equation}\label{eq:KXK}
        X' \left( \bigoplus_{\mu \in M_A} U_\mu \otimes I_{m(\mu)} \right) = \left( \bigoplus_{\nu \in M_B} U_\nu \otimes I_{m'(\nu)} \right) X' \,,
    \end{equation}
    which leads to the following form for $X'$ according to Schur's lemma,
    \begin{equation}\label{eq:X'}
        X' = \bigoplus_{\mu \in M_A \cap M_B} I_{d_\mu} \otimes X'_\mu \,,
    \end{equation}
    for some $(m'(\mu) \times m(\mu))$ matrices $X'_\mu$.

    Defining $\bmd'_A \coloneqq K_A \bmd_A$ and $\bmd'_B \coloneqq K_B \bmd_B$, we now determine the necessary and sufficient condition for existence of $X'$ in the form given in Eq.(\ref{eq:X'}), which satisfies (i) $X'\bmd'_A = \bmd'_B$, and (ii) $X = K_B X' K^{-1}_A$ is real.  

    First, note that there exists $X'$ in the above form satisfying $X'\bmd'_A=\bmd'_B$ if and only if for all $\mu \in M_A\cap M_B$, there exists $X'_\mu$ such that
    \begin{equation}\label{eq:X'Pi}
        (I_{d_\mu} \otimes X'_\mu) \Pi_\mu \bmd'_A= \Pi_\mu \bmd'_B \,.
    \end{equation}
    According to Lemma~\ref{lem:supp}, such $X_\mu$ exists if and only if
    \begin{equation}
        {\rm supp}\left( \tr_{\mathcal{N}_\mu}[\Pi_\mu \bmd'_B {\bmd'_B}^{\dag} \Pi_\mu]  \right) \subseteq {\rm supp}\left( \tr_{\mathcal{N}_\mu}[\Pi_\mu \bmd'_A {\bmd'_A}^{\dag} \Pi_\mu]  \right) \,,
    \end{equation}
    for all $\mu \in M_A\cap M_B$, or, equivalently,
    \begin{equation}
          {\rm supp}\left( T^{(\mu)}\left(\bmd_B\bmd_B^{\dag}\right) \right) \subseteq {\rm supp}\left(T^{(\mu)}\left(\bmd_A\bmd_A^\dag\right)\right) \,,
    \end{equation}
    for all $\mu \in M_A\cap M_B$, where 
    \begin{equation}
        T^{(\mu)}(\cdot) \coloneqq \tr_{\mathcal{N}_\mu}[\Pi_\mu K(\cdot)K^\dag \Pi_\mu] \,.
    \end{equation}
    This proves the necessity of the condition in Eq.(\ref{eq:suppsubseq}).
    To prove the sufficiency of this condition, consider a set of $\{X'_\mu\}$ satisfying Eq.(\ref{eq:X'Pi}) for all $\mu\in M_A \cap M_B$, and consider the corresponding $X'$ obtained from Eq.(\ref{eq:X'}). 
    Then, $X = K_B X' K^{-1}_A$ satisfies the desired equation $X\bmd_A = \bmd_B$.
    However, in general, this $X$ will not be real.
    To show that $X$ can be chosen to be real, first we note that, because $\bmd_A$ and $\bmd_B$ are real, then $X^*$, the complex conjugate of $X$ also satisfies this equation. Furthermore, because $S_A$ and $S_B$ are real, it also respects the symmetry. It follows that $\tfrac{1}{2}(X+X^*)$ is a $G$--invariant real matrix satisfying $\tfrac{1}{2}(X+X^*) \bmd_A = \bmd_B$. This proves the necessity and sufficiency of the condition in Eq.(\ref{eq:suppsubseq}).
\end{proof}

We prove a standard result for bipartite vectors, needed in the proof of Proposition~\ref{prop:suppd}.
\begin{lemma}\label{lem:supp}
    Given bipartite vectors $\ket{\psi},\ket{\psi'} \in \H_A \otimes \H_B$, there exists a linear operator $N_B$ acting on $\H_B$ such that $(I_A \otimes N_B)\ket{\psi} = \ket{\psi'}$ if and only if
    \begin{equation}\label{eq:supp_condition}
        {\rm supp}\left(\tr_B[\ketbra{\psi'}{\psi'}]\right) \subseteq {\rm supp}\left(\tr_B[\ketbra{\psi}{\psi}]\right) \,.
    \end{equation}
\end{lemma}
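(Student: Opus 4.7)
The plan is to prove both implications by invoking the Schmidt decomposition, since the lemma is a standard consequence of the one-to-one correspondence between bipartite pure states and operators, extended to possibly infinite-dimensional separable Hilbert spaces.

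For the easy direction, I would assume $(I_A \otimes N_B)\ket{\psi} = \ket{\psi'}$ and compute
\begin{equation*}
\tr_B[\ketbra{\psi'}{\psi'}] = \tr_B\bigl[(I_A \otimes N_B)\ketbra{\psi}{\psi}(I_A \otimes N_B^\dag)\bigr].
\end{equation*}
Writing the Schmidt decomposition $\ket{\psi} = \sum_i \sqrt{\lambda_i}\ket{a_i}\ket{b_i}$, the reduced state $\tr_B[\ketbra{\psi}{\psi}] = \sum_i \lambda_i \ketbra{a_i}{a_i}$ has support $\mathrm{span}\{\ket{a_i}: \lambda_i > 0\}$. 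A direct expansion then shows $\tr_B[\ketbra{\psi'}{\psi'}] = \sum_{i,j}\sqrt{\lambda_i\lambda_j}\braket{N_B b_j|N_B b_i}\ketbra{a_i}{a_j}$, whose support is manifestly contained in $\mathrm{span}\{\ket{a_i}: \lambda_i > 0\}$, establishing the inclusion in Eq.(\ref{eq:supp_condition}).

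For the converse, I would also start from Schmidt decompositions $\ket{\psi} = \sum_i \sqrt{\lambda_i}\ket{a_i}\ket{b_i}$ and $\ket{\psi'} = \sum_j \sqrt{\mu_j}\ket{a'_j}\ket{b'_j}$, with the $\ket{a_i}$ and $\ket{a'_j}$ orthonormal on $\H_A$ and the $\ket{b_i}$ and $\ket{b'_j}$ orthonormal on $\H_B$. The support condition tells me that every $\ket{a'_j}$ with $\mu_j > 0$ can be expanded as $\ket{a'_j} = \sum_{i:\lambda_i > 0} c_{ji}\ket{a_i}$. Substituting this expansion into $\ket{\psi'}$ and grouping by $\ket{a_i}$ reveals what $N_B\ket{b_i}$ must equal for each $i$ with $\lambda_i > 0$, namely $N_B\ket{b_i} = \lambda_i^{-1/2}\sum_j \sqrt{\mu_j}\,c_{ji}\ket{b'_j}$. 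Since $\{\ket{b_i}: \lambda_i > 0\}$ is an orthonormal set, this defines $N_B$ consistently on its linear span, and I can extend it to all of $\H_B$ by declaring it to vanish on the orthogonal complement, yielding a bounded linear operator satisfying $(I_A \otimes N_B)\ket{\psi} = \ket{\psi'}$.

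No step is expected to be a serious obstacle: the only subtlety is making sure the construction of $N_B$ in the reverse direction is well-defined when the Schmidt spectrum of $\ket{\psi}$ has infinitely many nonzero terms, in which case I would note that $N_B$ is only required to be a linear operator, not necessarily bounded, so defining it on the orthonormal set $\{\ket{b_i}: \lambda_i > 0\}$ and extending by zero suffices for the algebraic identity to hold on the dense subspace where both sides are defined.
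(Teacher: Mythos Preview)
Your proof is correct and takes essentially the same approach as the paper: both exploit the Schmidt decomposition (the paper writes it via the vectorization $\ket{\psi}=(\sqrt{M_A}\otimes U_B)\sum_k\ket{k}\ket{k}$) and construct $N_B$ through the pseudoinverse of the Schmidt coefficients. Your explicit coordinate version and the paper's operator-level formula $\tilde{N}_B^{\rm T}=\sqrt{M_A^{-1}}\sqrt{M'_A}$ are the same construction in different notation.
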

\begin{proof}
    Arbitrary bipartite vectors $\ket{\psi}$ and $\ket{\psi'}$ can be written as $\ket{\psi} = \left(\sqrt{M_A} \otimes U_B\right) \sum_k \ket{k}\otimes\ket{k}$ and $\ket{\psi'} = \left(\sqrt{M'_A} \otimes U'_B \right) \sum_k \ket{k}\otimes\ket{k}$, where $M_A = \tr_B[\ket{\psi}\bra{\psi}]$ and $M'_A = \tr_B[\ket{\psi'}\bra{\psi'}]$, and $U_B, U'_B$ are arbitrary unitaries.
    It follows that there exists an operator $N_B$ acting on $\H_B$ such that $(I_A \otimes N_B)\ket{\psi} = \ket{\psi'}$, if and only if $N_B U_B\sqrt{M_A}^{\rm T} = U'_B \sqrt{M'_A}^{\rm T}$, or, equivalently, $\tilde{N}_B \coloneqq U^{\prime\dag}_B N_B U_B$ satisfies $\sqrt{M_A}\tilde{N}_B^{\rm T} = \sqrt{M'_A}$.     
    Such $\tilde{N}_B$ exists only if the support of $M'_A$ is contained in $M_A$, which can be seen, e.g. by noting that $M'_A=\sqrt{M_A}\tilde{N}_B^{\rm T} \tilde{N}^*_B \sqrt{M_A}$. 
    Conversely, suppose this condition is satisfied. 
    Then, by defining $M_A^{-1}$ to be the inverse of $M_A$ on its support and zero in the orthogonal subspace, we find $\tilde{N}_B^{\rm T}=\sqrt{M^{-1}_A}\sqrt{M'_A}$ satisfies the above equation. This completes the proof.
\end{proof}

\subsection{SU(2)--covariant channels}
\label{app:su2cov}

Consider the passive irreducible representation of $\SU(2)$ given by
\begin{equation}
    U \mapsto S(U) \overset{\bma}{=} U \oplus U^* = (I \oplus \Omega^{\rm T}) (U \oplus U)(I \oplus \Omega) \,,
\end{equation}
where $\Omega = \begin{pmatrix} 0 & I \\ -I & 0 \end{pmatrix}$ throughout this section.
Suppose a channel characterized by its action on the phase space via Eq.(\ref{eq:xychannel}), is covariant under input and output representation $S$.
Then, its displacement vector $\bmxi$ must be 0 because representation $S$ does not contain the trivial irrep, while matrices $X,Y$ must satisfy
\begin{equation}\label{eq:appXYsu2}
    [X, S(U)] = 0 \,, \quad S(U) Y S(U)^{\rm T} = Y \,,
\end{equation}
for all $U \in \mathrm{SU}(2)$.
The conditions in Eq.(\ref{eq:appXYsu2}) are equivalent to
\begin{equation}
    \left[(I \oplus -\Omega) X (I \oplus \Omega), U \oplus U\right] = 0 \,,\quad [(I \oplus -\Omega) Y (\Omega \oplus I), U \oplus U] = 0 \,,
\end{equation}
for all $U \in \SU(2)$, where we have used $U^{\rm T} = \Omega U^\dag \Omega^{\rm T} = -\Omega U^\dag \Omega$.
Expressing $X \overset{\bma}{=} \begin{pmatrix} X_1 & X_2 \\ X_3 & X_4 \end{pmatrix}$ and $Y \overset{\bma}{=} \begin{pmatrix} Y_1 & Y_2 \\ Y_3 & Y_4 \end{pmatrix}$ in block form, the conditions in Eq.(\ref{eq:appXYsu2}) are further equivalent to
\begin{align}
    \begin{pmatrix} U & 0 \\ 0 & U \end{pmatrix}
    \begin{pmatrix} X_1 & X_2 \Omega \\ \Omega X_3 & \Omega X_4 \Omega \end{pmatrix}
    \begin{pmatrix} U^\dag & 0 \\ 0 & U^\dag \end{pmatrix} 
    &= \begin{pmatrix} X_1 & X_2 \Omega \\ \Omega X_3 & \Omega X_4 \Omega \end{pmatrix} \,,\\
    \begin{pmatrix} U & 0 \\ 0 & U \end{pmatrix}
    \begin{pmatrix} Y_1\Omega & Y_2 \\ \Omega Y_3 \Omega & \Omega Y_4 \end{pmatrix}
    \begin{pmatrix} U^\dag & 0 \\ 0 & U^\dag \end{pmatrix} 
    &= \begin{pmatrix} Y_1\Omega & Y_2 \\ \Omega Y_3 \Omega & \Omega Y_4 \end{pmatrix} \,.
\end{align}
Since $U$ is an irrep of $\SU(2)$, Schur's lemma implies that
\begin{equation}
    X_1, X_4, Y_2, Y_3 \propto I \,,\quad X_2, X_3, Y_1, Y_4 \propto \Omega \,,
\end{equation}
where the proportionality constants are any complex numbers provided that $Y = Y^{\rm T}$ and that $X$ and $Y$ are of the form of Eq.(\ref{eq:real2complex}) to ensure that they are real in the real basis, i.e.
\begin{align}
    X &\overset{\bma}{=} \begin{pmatrix} aI & b\Omega \\ b^*\Omega & a^*I \end{pmatrix} \overset{\bmr}{=} e^{\phi \Omega} \begin{pmatrix} x_+ I & x_-Ze^{\theta \Omega} \\ -x_-Ze^{\theta \Omega} & x_+ I \end{pmatrix} \,,\\
    Y &\overset{\bma}{=} \begin{pmatrix} 0 & yI \\ yI & 0 \end{pmatrix} \overset{\bmr}{=} y I_4 \,,
\end{align}
where $a, b \in \mathbb{C}$, and $y \in \mathbb{R}$ are arbitrary, and $x_+ \coloneqq |a|, x_- \coloneqq |b|, \phi \coloneqq -\arg(a), \theta \coloneqq \arg(b)-\arg(a)$.
This form for $X$ is equivalent to Eq.(\ref{eq:Xchar}).
The channel uncertainty relation $Y + i(\Omega \oplus \Omega - X (\Omega \oplus \Omega) X^{\rm T}) \ge 0$ takes the form
\begin{equation}
    \begin{pmatrix} yI + i(1 - x_+^2 + x_-^2)\Omega & 2ix_+x_- (Z\Omega) e^{(\theta-\phi)\Omega)} \\ -2ix_+x_- e^{-(\theta-\phi)\Omega)}(Z\Omega) & yI + i(1 - x_+^2 + x_-^2)\Omega \end{pmatrix} \ge 0 \,,
\end{equation}
which holds if and only if
\begin{equation}
    y \ge \sqrt{\left((x_+ - 1)^2 + x_-^2\right)\left((x_+ + 1)^2 + x_-^2\right)} \,.
\end{equation}

\subsection{Relative entropy of asymmetry on 1 mode}
\label{app:monotone}

Consider $\Un(1)$ symmetry on 1 mode, with representation $\phi \mapsto \hat{V}_{\rm ps}(\phi)$, and uniform distribution $p(\phi) = \tfrac{1}{2\pi}$.
We can calculate the relative entropy of asymmetry given in Eq.(\ref{eq:relent_asym}) of a coherent state $\ket{\alpha} \coloneqq D_{\alpha}\ket{0}$ as follows,
\begin{align}
    \Gamma_p(\ketbra{\alpha}) 
    &\overset{(1)}{=} S_v\left( \frac{1}{2\pi} \int {\rm d}\phi\ \hat{V}_{\rm ps}(\phi) \ketbra{\alpha} \hat{V}_{\rm ps}(\phi)^\dag \right) - S_v(\ketbra{\alpha}) \nonumber\\
    &\overset{(2)}{=} S_v\left( \frac{1}{2\pi} \int {\rm d}\phi\ e^{i\phi(j-k)} e^{-|\alpha|^2}\sum_{j,k=0}^\infty \frac{\alpha^j\alpha^{*k}}{\sqrt{j!k!}}\ketbra{j}{k} \right) \nonumber\\
    &\overset{(3)}{=} S_v\left( e^{-|\alpha|^2}\sum_{k=0}^\infty \frac{|\alpha|^{2k}}{k!}\ketbra{k}{k} \right) \nonumber\\
    &\overset{(4)}{=} |\alpha|^2 \log\frac{e}{|\alpha|^2} + e^{-|\alpha|^2} \sum_{k=0}^{\infty} \frac{|\alpha|^{2k} \log(k!)}{k!}
    \overset{|\alpha| \gg 0}{=} \log(\sqrt{2\pi e} |\alpha|) + O(|\alpha|^{-2}) \,,
\end{align}
where $(1)$ follows from the monotone definition in Eq.(\ref{eq:relent_asym}), $(2)$ from the fact that pure states have zero entropy (e.g. because they have unit symplectic eigenvalues), and from the decomposition of a coherent state $\ket{\alpha}$ in the Fock basis given in Eq.(\ref{eq:std_decomps}), $(3)$ from explicit calculation of the integral, and $(4)$ from the definition of the von Neumann entropy given in Eq.(\ref{eq:vonneumann}).
The asymptotic behavior of $\Gamma_p(\ketbra{\alpha})$ corresponds to the well-known Poisson distribution.

The Holevo asymmetry monotone of a 1--mode squeezed state $\ket{r} \coloneqq V_{\rm 1sq}(r)\ket{0}$ is
\begin{align}
    \Gamma_p(\ketbra{r}) 
    &\overset{(1)}{=} S_v\left( \frac{1}{2\pi} \int {\rm d}\phi\ \hat{V}_{\rm ps}(\phi) \ketbra{r} \hat{V}_{\rm ps}(\phi)^\dag \right) - S_v(\ketbra{r}) \\
    &\overset{(2)}{=} S_v\left( \frac{1}{2\pi} \int {\rm d}\phi\ e^{i2\phi(j-k)} \frac{1}{\cosh(r)} \sum_{j,k=0}^{\infty} \left( \frac{\tanh(r)}{2} \right)^{j+k} \frac{\sqrt{(2j)!(2k)!}}{j!k!} \ketbra{2j}{2k} \right) \nonumber\\
    &\overset{(3)}{=} S_v\left( \sech(r) \sum_{k=0}^{\infty} \left( \frac{\tanh(r)}{2} \right)^{2k} \binom{2k}{k} \ketbra{2k}{2k} \right) \nonumber\\
    &\overset{(4)}{=} -\log(\sech(r)) - \sinh|r|\log(\frac{\tanh|r|}{2}) - \sech(r) \sum_{k=0}^{\infty} \left( \frac{\tanh(r)}{2} \right)^{2k} \binom{2k}{k} \log{\binom{2k}{k}} \nonumber\\
    &\overset{|r| \gg 0}{=} 2\log(e)\ |r| + \log(2^{-5}\sqrt{\pi}e^{2-\gamma/2}) + o(1)
    \,,
\end{align}
where steps $(1)-(3)$ follow similar justifications as for coherence.

Below, we provide a detailed derivation of step $(4)$ and the asymptotic behavior of $\Gamma_p(\ketbra{r})$, which is not well-known, as dephased squeezed states don't correspond to a known distribution, unlike dephased coherence states that correspond to the Poisson distribution.
The entropy of a dephased squeezed state is
\begin{align}
    &S_v\left( \sech(r) \sum_{k=0}^{\infty} \left( \frac{\tanh(r)}{2} \right)^{2k} \binom{2k}{k} \ketbra{2k}{2k} \right) \\
    \overset{(1)}{=} &-\sum_{k=0}^{\infty} \sech(r)\left(\frac{\tanh(r)}{2} \right)^{2k} \binom{2k}{k} \log\left(\sech(r)\left(\frac{\tanh(r)}{2} \right)^{2k} \binom{2k}{k}\right) \\
    \overset{(2)}{=} &-\sqrt{1-4x} \sum_{k=0}^{\infty} x^k \binom{2k}{k} \left(\frac{1}{2}\log(1-4x) + k\log(x) + \log{\binom{2k}{k}} \right) \\
    \overset{(3)}{=} &-\frac{1}{2}\log(1-4x) - \frac{2x\log(x)}{1-4x} - \sqrt{1-4x} \sum_{k=0}^{\infty} x^k \binom{2k}{k} \log{\binom{2k}{k}} \,. \label{eq:squ_ent_exact}
\end{align}
where $(1)$ follows from the definition of entropy in Eq.(\ref{eq:vonneumann}), $(2)$ follows by simply expanding the logarithm into three terms, and setting $x = \left(\frac{\tanh(r)}{2}\right)^2$, so $\sech^2(r) = 1-4x$, while $(3)$ follows from evaluating the first two sums.
This leaves us with the third sum, which diverges as $|r| \rightarrow \infty$ ($x \rightarrow \frac{1}{4}^-$), and is dominated by terms with high $k$. 
We provide an estimate of the sum to leading order,
\begin{align}
    &\sum_{k=0}^{\infty} x^k \binom{2k}{k} \log{\binom{2k}{k}} \\
    \overset{(1)}{\approx}\ &\frac{2\log(2)}{\sqrt{\pi}} \sum_{k=1}^\infty (4x)^k k^{1/2} - \frac{\log(\pi)}{2\sqrt{\pi}} \sum_{k=1}^\infty (4x)^k k^{-1/2} - \frac{1}{2\sqrt{\pi}} \sum_{k=1}^\infty (4x)^k k^{-1/2} \log(k) \\
    \overset{(2)}{\approx}\ &\frac{2\log(2)}{\sqrt{\pi}} \int_{0}^\infty {\rm d}k\ e^{-zk} k^{1/2} - \frac{\log(\pi)}{2\sqrt{\pi}} \int_{0}^\infty {\rm d}k\ e^{-zk} k^{-1/2} - \frac{1}{2\sqrt{\pi}} \int_{0}^\infty {\rm d}k\ e^{-zk} k^{-1/2} \log(k) \\
    \overset{(3)}{\approx}\ &\frac{2\log(2)}{\sqrt{\pi}z^{3/2}} \Gamma\left(\frac{3}{2}\right) - \frac{\log(\pi)}{2\sqrt{\pi}z^{1/2}} \Gamma\left(\frac{1}{2}\right) - \frac{\log(z)}{2\sqrt{\pi}z^{1/2}} \Gamma\left(\frac{1}{2}\right) + \frac{1}{2\sqrt{\pi}z^{1/2}} \Gamma'\left(\frac{1}{2}\right) \\
    \overset{(4)}{\approx}\ &\frac{\log(2)}{(1-4x)^{3/2}} - \frac{\log(\pi)}{2(1-4x)^{1/2}} - \frac{\log(1-4x)}{2(1-4x)^{1/2}} - \frac{\gamma\log(e) + 2\log(2)}{2(1-4x)^{1/2}} \,, \label{eq:squ_ent_approx}
\end{align}
where in $(1)$ we use Stirling's approximation,
\begin{equation}
    \frac{n!}{\sqrt{2\pi n}\left(\frac{n}{e}\right)^n} = 1 + o(1) \,,
\end{equation}
in $(2)$ we approximate the sums by integrals and set $z = -\ln(4x)$ ($\approx 1-4x$ around $x=1/4$), in $(3)$ we evaluate the integrals, using the Gamma function $\Gamma$ and its derivative $\Gamma'$, and in $(4)$ we evaluate the terms, where $\gamma \approx 0.577$ is the Euler–Mascheroni constant.

Combining Eq.(\ref{eq:squ_ent_exact}) with Eq.(\ref{eq:squ_ent_approx}), we obtain
\begin{align}
    S_v\left( \sech(r) \sum_{k=0}^{\infty} \left( \frac{\tanh(r)}{2} \right)^{2k} \binom{2k}{k} \ketbra{2k}{2k} \right)
    \overset{|r| \gg 0}{\approx}\ 2\log(e)\ |r| + \log(2^{-5}\sqrt{\pi}e^{2-\gamma/2}) \,.
\end{align}

\vspace{10pt}

One can similarly calculate that the 2--mode squeezed state $\ket{r}_{\rm EPR}$ defined in Eq.(\ref{eq:2mode_squeezing}),
under $\Un(1)$ representation $\theta \mapsto e^{i\theta\left(q_1 \hat{a}_1^\dag\hat{a}_1 + q_2 \hat{a}_2^\dag\hat{a}_2\right)}$ with integer charges $q_1 \neq -q_2$, results in
\begin{align}
    \Gamma(\ketbra{r}_{\rm EPR}) 
    &= \cosh^2(r)\log\left(\cosh^2(r)\right) - \sinh^2(r)\log\left(\sinh^2(r)\right) \nonumber\\
    &\hspace{-4pt}\overset{r \gg 0}{=} \log\langle n \rangle - \log{4} \,, \label{eq:asym_2squeezed}
\end{align}
where $\langle n \rangle = \langle \hat{a}_1^\dag\hat{a}_1 + \hat{a}_2^\dag\hat{a}_2 \rangle = 2\sinh^2(r)$.
Since $\ketbra{r}_{\rm EPR}$ can be obtained by 1--mode squeezed states via beam-splitters,
\begin{equation}
    V_{\rm{2sq}}(r) = V_{\rm{bs}}(\pi/4) \big( V_{\rm{ps}}(\pi/2) \oplus I \big) \big( V_{\rm{1sq}}(r) \oplus V_{\rm{1sq}}(r) \big) \big( V_{\rm{ps}}(\pi/2) \oplus I \big)^{\rm T} V_{\rm{bs}}(\pi/4)^{\rm T} \,,
\end{equation}
we can confirm the sub-additivity of the relative entropy of asymmetry,
\begin{equation}
    \Gamma(\ketbra{r}_{\rm EPR}) = \Gamma(\ketbra{r} \otimes \ketbra{r}) < 2\Gamma(\ketbra{r}) \,,
\end{equation}
where the equality follows for $q_1 = q_2$ ensuring that the beam-splitter operations are phase-covariant, and the inequality follows from Eq.(\ref{eq:asym_1squeezed}) and Eq.(\ref{eq:asym_2squeezed}).

\subsection{Gaussian asymmetry monotone based on the Petz-R{\'e}nyi divergence}
\label{app:PRentropy}

Consider a state $\hat{\rho} = e^{-\beta H}/Z_{\sigma}$ as the thermal state of second-order Hamiltonian $\hat{H}$ at temperature $\beta^{-1}$, where we denote its displacement vector and covariance matrix by $\bmd$ and $\sigma$, respectively, and its partition function by
\begin{equation}
    Z_{\sigma} \coloneqq \tr[e^{-\beta H}] = \sqrt{\det[(\sigma + i\Omega)/2]} \,.
\end{equation}
Then, the state $\hat{\rho}^\alpha/\tr[\hat{\rho}^\alpha]$ with $\alpha > 0$, i.e. the thermal state of Hamiltonian $\hat{H}$ at temperature $(\alpha\beta)^{-1}$, has first moment $\bmd$ and covariance matrix given by
\begin{equation}\label{eq:sigma_a}
    \sigma_{\alpha}i\Omega \coloneqq \coth(\alpha\ {\rm coth}^{-1}(\sigma i\Omega)) \,,\quad \text{with } \sigma_1 \equiv \sigma \,.
\end{equation}

For Gaussian states $\hat{\rho}$, with displacement vector $\bmd$ and covariance matrix $\sigma$, and $\hat{\rho}'$, with displacement vector $\bmd'$ and covariance matrix $\sigma'$, the Petz-R{\'e}nyi $\alpha$--divergence, with $\alpha \in (0,1)$, has been calculated~\cite{seshadreesan2018renyi} explicitly in terms of the moments,
\begin{align}
    D_\alpha(\hat{\rho} \| \hat{\rho}') &= \frac{1}{\alpha-1} \log Q_\alpha(\hat{\rho} \| \hat{\rho}') \,,\quad \text{where} \\
    Q_\alpha(\hat{\rho} \| \hat{\rho}') &= \frac{Z_{\sigma_\alpha}Z_{\sigma'_{1-\alpha}}}{Z_{\sigma}^\alpha Z_{\sigma'}^{1-\alpha}\sqrt{\rule{0pt}{2ex}\det[(\sigma_\alpha + \sigma'_{1-\alpha})/2]}} \exp(-(\bmd-\bmd')^{\rm T}(\sigma_\alpha + \sigma'_{1-\alpha})^{-1}(\bmd-\bmd')) \,. \label{eq:quasirelent}
\end{align}
If $\hat{\rho}' = \hat{S}\hat{\rho}\hat{S}^\dag$ for some symplectic Gaussian unitary $\hat{S}$, then $\bmd' = S\bmd$ and $\sigma' = S\sigma S^{\rm T}$.
Therefore, the exponent in Eq.(\ref{eq:quasirelent}) becomes $\bmd^{\rm T}(I - S)^{\rm T}\left(\sigma_\alpha + S\sigma_{1-\alpha}S^{\rm T}\right)^{-1}(I - S)\bmd$, giving the expression of $f_{\alpha,g}^{(1)}(\hat{\rho})$ in Eq.(\ref{eq:f1}).

To obtain the expression of $f_{\alpha,g}^{(2)}(\hat{\rho})$ in Eq.(\ref{eq:f2}), first note that $Z_{S\sigma S^{\rm T}} = Z_{\sigma}$ because $\det[S] = 1$ for symplectic $S$, so
\begin{equation}
    Z_{\sigma}^\alpha Z_{S\sigma S^{\rm T}}^{1-\alpha} = Z_{\sigma} \,.
\end{equation}
Then, we can calculate
\begin{align}
    \frac{Z_{\sigma_\alpha}Z_{\sigma_{1-\alpha}}}{Z_{\sigma}} 
    &= \sqrt{\det[(\sigma_\alpha + i\Omega)(\sigma + i\Omega)^{-1}(\sigma_{1-\alpha} + i\Omega)/2]}
    = \sqrt{\det[(I + \sigma_\alpha i\Omega)(I + \sigma i\Omega)^{-1}(I + \sigma_{1-\alpha} i\Omega) i\Omega/2]} \nonumber\\
    &= \sqrt{\det[(\sigma_\alpha + \sigma_{1-\alpha})/2]} \,, \label{eq:Zratio}
\end{align}
where the first step follows by properties of the determinant, and in the third step we used Eq.(\ref{eq:sigma_a}) and the well-known trigonometric formula
\begin{equation}
    \frac{\Big(1+\coth(a\coth^{-1}x)\Big)\Big(1+\coth((1-a)\coth^{-1}x)\Big)}{1+x} = \coth(a\coth^{-1}x) + \coth((1-a)\coth^{-1}x) \,,
\end{equation}
which holds whenever $|x| > 1$.
Inserting Eq.(\ref{eq:Zratio}) into Eq.(\ref{eq:quasirelent}), we can identify the expression of $f_{\alpha,g}^{(2)}(\hat{\rho})$ in Eq.(\ref{eq:f2}) as minus the logarithm of the fraction in front of the exponential in Eq.(\ref{eq:quasirelent}).

\subsection{Second derivative of Gaussian asymmetry monotone}
\label{app:smallt}

We derive the expressions for $F_Q^{(1)}, F_Q^{(2)}$ in Eqs.(\ref{eq:FQ1},\ref{eq:FQ2}), when the group is generated by the family of Gaussian unitaries $\hat{S}(t) = e^{i\widehat{Q}t}$ with associated symplectic matrices $S(t) = e^{-\Omega Qt}$ for $t \in \mathbb{R}$.

We first focus on the first moment quantity, and obtain the zeroth, first, and second derivatives of $f_{1/2,t}^{(1)}(\hat{\rho}) = \bmd^{\rm T}(I-S(t))^{\rm T} \left(\sigma_{1/2} + S(t)\sigma_{1/2}S(t)^{\rm T}\right)^{-1}(I-S(t))\bmd$ at $t=0$,
\begin{align}
    f_{1/2,0}^{(1)}(\hat{\rho}) 
    &= 0 \,,\\
    \frac{d}{dt}\left. f_{1/2,t}^{(1)}(\hat{\rho}) \right|_{t=0} 
    &= 0 \,,\\
    \frac{d^2}{dt^2} \left. f_{1/2,t}^{(1)}(\hat{\rho}) \right|_{t=0} &= -\bmd^{\rm T} Q\Omega \sigma^{-1} \Omega Q \bmd \eqqcolon F_Q^{(1)}(\hat{\rho}) \,,
\end{align}
where the zeroth and first derivatives are 0 because all terms contain a factor of $\left. (I-S(t))\right|_{t=0} = 0$.
Then, the quantity $f_{1/2}^{(1)}$ takes the form,
\begin{align}
    f_{1/2,t}^{(1)}(\hat{\rho}) = \sum_{k=0}^2 \frac{t^k}{k!}\left.\frac{d^k f_{1/2,t}^{(1)}(\hat{\rho})}{dt^k}\right|_{t=0} + O(t^3) = \frac{t^2}{2} F_Q^{(1)}(\hat{\rho}) + O(t^3)\,.
\end{align}

We now turn to the type--2 monotone. 
Let $A(t) \coloneqq \left(\sigma_{1/2} + e^{-\Omega Qt}\sigma_{1/2}e^{Q \Omega t}\right)/2$.
We find that
\begin{align}
    A(0) &= \sigma_{1/2} \,,\\
    \frac{d}{dt} \left. A(t) \right|_{t=0} &= \frac{1}{2} \left( -\Omega Q \sigma_{1/2} + \sigma_{1/2} Q \Omega \right) \,,\\
    \frac{d^2}{dt^2} \left. A(t) \right|_{t=0} &= \frac{1}{2} \left( \Omega Q \Omega Q \sigma_{1/2} - 2 \Omega Q \sigma_{1/2} Q \Omega + \sigma_{1/2} Q \Omega Q \Omega \right) \,.
\end{align}
Using Jacobi's derivative formulas,
\begin{align}
    \frac{d}{dt} \log\det[A] = \tr\left[A^{-1} \frac{d}{dt} A\right] \,,\quad\quad
    \frac{d^2}{dt^2} \log\det[A] = \tr\left[A^{-1}\frac{d^2A}{dt^2} \right]-\tr\left[\left(A^{-1}\frac{dA}{dt} \right)^2\right] \,,
\end{align}
we can obtain the zeroth, first, and second derivatives of $f_{1/2,t}^{(2)}(\hat{\rho}) \coloneqq -\frac{1}{2}\log\left(\det[\sigma_{1/2}]\right) +\frac{1}{2}\log\left( \det[A(t)] \right)$ at $t=0$,
\begin{align}
    f_{1/2,0}^{(2)}(\hat{\rho}) 
    &= 0 \,,\\
    \frac{d}{dt}\left. f_{1/2,t}^{(2)}(\hat{\rho}) \right|_{t=0} 
    &= 0 \,,\\
    \frac{d^2}{dt^2} \left. f_{1/2,t}^{(2)}(\hat{\rho}) \right|_{t=0} &= \frac{1}{4} \tr\Big[ \Omega Q \Omega Q - \sigma_{1/2}^{-1} \Omega Q \sigma_{1/2} Q \Omega \Big] \eqqcolon F_Q^{(2)}(\hat{\rho}) \,.
\end{align}
Then, the monotone $f_{1/2}^{(2)}$ takes the form,
\begin{align}
    f_{1/2,t}^{(2)}(\hat{\rho}) = \sum_{k=0}^2 \frac{t^k}{k!}\left.\frac{d^k f_{1/2,t}^{(2)}(\hat{\rho})}{dt^k}\right|_{t=0} + O(t^3) = \frac{t^2}{2} F_Q^{(2)}(\hat{\rho}) + O(t^3)\,.
\end{align}

\section{Conservation laws in closed Gaussian dynamics}
\label{app:conserve}

\subsection{Invariant normal mode decomposition in the presence of a passive conserved charge}
\label{app:nmd_invariant}

Here, we prove Theorem~\ref{thm:nmd_q}, an extension of Williamson's theorem in the presence of conserved charges.
\thmnmdQ*
\noindent Note that the conditions $[\Omega M,\Omega Q] = [Q, \Omega] = 0$ are equivalent to $[M,\Omega Q] = [Q, \Omega] = 0$.
The theorem is a corollary of the following lemma, which is of independent interest.
\begin{lemma}\label{lem:anti-sym_canonical}
    Let $B$ be a $(2n \times 2n)$ real symmetric positive-definite matrix such that
    \begin{equation}
        [B, \Omega_nQ] = 0 \,,
    \end{equation}
    where $Q = \bigoplus_{j=1}^n q_j I_2$, $q_1, \dots, q_n \in \mathbb{R}$.
    Let $A$ be defined by
    \begin{equation}
        A \coloneqq B \Omega_n B \,.
    \end{equation}
    There exists a $(2n \times 2n)$ orthogonal matrix $O$ such that
    \begin{equation}\label{eq:oaoT}
        O A O^{\rm T} = \bigoplus_{j=1}^n a_j \Omega_1 \,,\quad\quad [O, \Omega_nQ] = 0 \,,
    \end{equation}
    for real positive $a_1, \dots, a_n$.
\end{lemma}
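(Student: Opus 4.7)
The plan is to reduce the statement to a spectral problem on the eigenspaces of $Q^2$, and on each such eigenspace turn the real antisymmetric operator $A$ into a complex anti-Hermitian one using the complex structure supplied by $\Omega_n Q$. First I would record two elementary but essential facts: $A = B\Omega_n B$ is antisymmetric, since $A^{\rm T} = B\Omega_n^{\rm T} B = -A$; and $[A,\Omega_n Q] = 0$, which follows by expanding the commutator using $[B,\Omega_n Q] = 0$ together with $[\Omega_n, Q] = 0$. In particular, both $A$ and $B$ commute with $(\Omega_n Q)^2 = -Q^2$, and therefore preserve the orthogonal decomposition $\mathbb{R}^{2n} = W_0 \oplus \bigoplus_{|q|>0} W_{|q|}$ into eigenspaces of $Q^2$, so the problem decouples across these sectors.

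On the kernel sector $W_0$ the charge acts trivially, so the commutation requirement on $O$ is automatic and one applies the classical canonical form of real antisymmetric matrices under orthogonal conjugation to obtain the $\bigoplus a_j \Omega_1$ block form with non-negative $a_j$. On each $W_{|q|}$ with $|q|>0$, the operator $J_q \coloneqq \Omega_n Q / |q|$ satisfies $J_q^{\rm T} = -J_q$ and $J_q^2 = -I$, so it is an orthogonal complex structure; equipping $W_{|q|}$ with the Hermitian form $h(u,v) = \langle u, v\rangle + i\langle u, J_q v\rangle$ turns it into a Hermitian complex space on which real orthogonal operators commuting with $\Omega_n Q$ are exactly the $\mathbb{C}$-linear unitaries. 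The real antisymmetric $A|_{W_{|q|}}$, being $\mathbb{C}$-linear with respect to $J_q$ and satisfying $h(Au,v) = -h(u,Av)$, then becomes complex anti-Hermitian. The spectral theorem yields a complex unitary $U_q$ diagonalising it, whose real counterpart $O_q$ commutes with $\Omega_n Q$ by construction, and in the induced real basis $(e_k, J_q e_k)$ each complex one-dimensional eigenspace produces a $2\times 2$ block of $A$ proportional to $\Omega_1$. Assembling $O \coloneqq O_0 \oplus \bigoplus_{|q|>0} O_q$ supplies the global orthogonal matrix with the required properties.

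The principal obstacle will be securing the positivity of the $a_j$. Their magnitudes are determined by the real eigenvalues of the Hermitian operator $H_q$ appearing in $A|_{W_{|q|}} = iH_q$, but the sign of each eigenvalue depends on the signature of $H_q$, which Sylvester's law of inertia controls via the congruence $H_q = B|_{W_{|q|}}\, \hat{\epsilon}_q\, B|_{W_{|q|}}$ that emerges in the complex picture, with $\hat{\epsilon}_q$ encoding the sign pattern of $Q$ on $W_{|q|}$. I expect this sign bookkeeping --- aligning the orientation of the complex eigenvectors within each $2 \times 2$ block so that the resulting block is $+a_j \Omega_1$ with $a_j > 0$ in a manner compatible with the $\Omega_n Q$-commutation constraint, possibly by pairing blocks of opposite sign arising from $q_j$ of opposite signs in the same $W_{|q|}$ --- to be the most delicate part of the proof.
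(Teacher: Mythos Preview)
Your approach is correct and runs parallel to the paper's, but with a genuinely different engine driving the diagonalisation step. The paper first establishes a standalone auxiliary lemma: any family of pairwise-commuting real antisymmetric matrices can be \emph{simultaneously} brought to canonical $\bigoplus a_j\Omega_1$ form by a single orthogonal transformation. Applying this to the pair $\{A,\Omega_nQ\}$ yields an $O$ with $[O,\Omega_nQ]=0$ and $OAO^{\rm T}=\bigoplus a_j\Omega_1$ with $a_j$ of undetermined sign. You instead exploit that on each $W_{|q|}$ the matrix $\Omega_nQ$ itself supplies a complex structure $J_q$, so the diagonalisation of $A$ becomes the spectral theorem for the Hermitian operator $H_q=-J_qA|_{W_{|q|}}$; the auxiliary lemma is bypassed entirely. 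Your route is more conceptual and shorter for this particular statement, while the paper's lemma is slightly more general (it does not require one of the matrices to square to $-I$).

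Both arguments converge at the sign analysis: the paper computes eigenvalues of $-AQ_q=B\big(\bigoplus_{|q_j|=q}q_jI_2\big)B$ and invokes Sylvester's law of inertia, which is exactly your congruence $H_q=B|_{W_{|q|}}\hat{\epsilon}_qB|_{W_{|q|}}$ in different notation. Where the paper is more explicit is the sign-fixing itself: it observes that the number of ``wrong-sign'' $a_j$ among $q_j>0$ modes equals the number among $q_j<0$ modes, then exhibits a concrete orthogonal correction $W_q=\text{SWAP}_{\pm}\big(\bigoplus X(j)\big)$ that pairs them off and flips the signs while still commuting with $\Omega_nQ$. In your language this is precisely a $J_q$-unitary permutation of the complex eigenbasis, sending positive eigenvalues of $H_q$ into the $\epsilon_k=+1$ slots and negative ones into the $\epsilon_k=-1$ slots (so that $a_k=\epsilon_k\lambda_k>0$); your anticipated ``pairing'' intuition is right, and Sylvester guarantees the counts match so the reordering is possible. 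One small caution: the naive basis $(e_k,J_qe_k)$ you mention gives $J_q$ the uniform matrix $\bigoplus(-\Omega_1)$ rather than $\bigoplus\epsilon_j\Omega_1$, so that particular choice of $O_q$ does \emph{not} commute with $\Omega_nQ$ unless all $\epsilon_j$ agree --- the correct real counterpart of $U_q$ must absorb an $\epsilon_j$-dependent twist on each mode, which is the content of the bookkeeping you flag as delicate.
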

Using this lemma, we now prove Theorem~\ref{thm:nmd_q} via an argument similar to the proof of the standard Williamson's theorem, as presented in Ref.~\cite{serafini2020gaussian}.
\begin{proof}(Theorem~\ref{thm:nmd_q})
    For clarity, we use $\Omega_n$ to denote symplectic form on $n$ modes.

    The direct statement follows because, for any symplectic matrix $S$,
    \begin{equation}
        \left[\Omega_n M, \Omega_n Q\right] = S^{\rm T}S^{- \rm T}\left[\Omega_n M, \Omega_n Q\right]S^{\rm T}S^{-\rm T} = S^{\rm T}\left[\Omega_n SMS^{\rm T}, \Omega_n SQS^{\rm T}\right]S^{-\rm T} = 0 \,.
    \end{equation}
    
    To prove the converse, first off, assume that $Q = \bigoplus_{j=1}^n q_j I_2$.
    Matrix $M^{-1/2}$ is real symmetric and well-defined because $M$ is positive-definite.
    Consider the anti-symmetric matrix
    \begin{equation}
        A \coloneqq M^{-1/2} \Omega_n M^{-1/2} \,.
    \end{equation}
    In Lemma~\ref{lem:anti-sym_canonical}, we show that for any matrix $M$ satisfying the assumptions of the lemma, $A = M^{-1/2} \Omega_n M^{-1/2}$ can be block-diagonalized using an orthogonal transformation that commutes with $Q\Omega_n$, i.e. there exists a $(2n \times 2n)$ orthogonal matrix $O$ such that
    \begin{align}
        O A O^{\rm T} &= \bigoplus_{j=1}^n \nu_j^{-1} \Omega_1 \,, \label{eq:oaot}\\
        [O, Q\Omega_n] &= 0 \,,\label{eq:oq0}
    \end{align}
    for some real positive $\nu_1, \dots, \nu_n$.
    Let $D \coloneqq \bigoplus_{j=1}^n \nu_j I_2$.
    Then, the matrix $S \coloneqq D^{1/2} O M^{-1/2}$ is symplectic, because
    \begin{align}
        S\Omega_nS^{\rm T} = D^{1/2} \left( \bigoplus_{j=1}^n \nu_j^{-1} \Omega_1 \right) D^{-1/2} = \Omega_n \,,
    \end{align}
    where we used Eq.(\ref{eq:oaot}).
    Furthermore, it brings $M$ to the canonical form in Eq.(\ref{eq:nmd_condition1}),
    \begin{align}
        SMS^{\rm T} = D = \bigoplus_{j=1}^n \nu_j I_2 \,,
    \end{align}
    and satisfies the invariance condition of Eq.(\ref{eq:nmd_condition2}).
    This follows from the fact that $O$, $D$, and $M$ all commute with $Q\Omega_n = \Omega_nQ$, which in turn implies $S \coloneqq D^{1/2} O M^{-1/2}$ commutes with $\Omega_n Q = Q \Omega_n$, and so
    \begin{equation}
        S Q S^{\rm T} = S Q \Omega_n S^{-1} \Omega_n^{\rm T}=Q \Omega_n \Omega_n^{\rm T} = Q \,, 
    \end{equation}
    where the first identity follows from $S \Omega_n S^{\rm T} = \Omega_n$, and the second from the fact that $S$ commutes with $Q\Omega_n$.

    This proves the claim for the special case $Q=\bigoplus_{j=1}^n q_j I_2$. Next, we generalize the result to arbitrary symmetric $Q$ that commutes with $\Omega_n$.
    Let $Q$ be any $(2n \times 2n)$ real symmetric matrix such that $\left[Q, \Omega_n \right] = 0$, or, equivalently, $\left[Q\Omega_n, \Omega_n \right] = 0$.
    Using Lemma~\ref{lem:anti-sym_canonical} for the special case of $B = I_{2n}$ and thus $A = \Omega_n$, we find a $(2n \times 2n)$ orthogonal matrix $O$ such that
    \begin{equation}
        O \Omega_n O^{\rm T} = \Omega_n \,,\quad\quad O Q\Omega_n O^{\rm T} = \bigoplus_{j=1}^n q_j \Omega_1 \,,
    \end{equation}
    which, in particular, implies that $O$ is symplectic.
    
    Then, condition $[M,Q\Omega_n]=0$ is equivalent to $[\widetilde{M},\widetilde{Q}\Omega_n]=0$, where $\widetilde{Q} \coloneqq \bigoplus_{j=1}^n q_j I_2$ is block-diagonal and $\widetilde{M} \coloneqq OMO^{\rm T}$.
    Hence, according to the above argument, there exists a symplectic matrix $\widetilde{S}$ such that
    \begin{align}
        \widetilde{S} \widetilde{M} \widetilde{S}^{\rm T} = \bigoplus_{j=1}^n \nu_j I_2 \,,\quad\quad
        \quad\quad \widetilde{S} \widetilde{Q} \widetilde{S}^{\rm T} = \widetilde{Q} \,,
    \end{align}
    and choosing $S = \widetilde{S} O$ concludes the proof.
\end{proof}

The charge operator $\widehat{Q} = \sum_j q_j \hat{a}_j^\dag\hat{a}_j$ partitions the modes into sectors of constant charge $q$ in the sense that a strictly positive purely quadratic Hamiltonian $\hat{H}$ commuting with $\widehat{Q}$ can be decomposed, due to Theorem~\ref{thm:nmd_q}, as
\begin{equation}
    \hat{H}=\sum_{q\in \text{Eig}(Q)} \sum_{\alpha}\omega_{q,\alpha}\ \hat{b}^\dag_{q,\alpha}\hat{b}_{q,\alpha} \,,
\end{equation}
for some frequencies $\omega_{q,\alpha} > 0$ with a multiplicity index $\alpha$, where each creation operator is expressed in terms of the original operators $\{\hat{a}_j\}$ as
\begin{align}
    \hat{b}_{|q|,\alpha}^{\dag} &= \sum_{j: q_j=|q|} r^+_{j,\alpha} \hat{a}^\dag_j+\sum_{j: q_j=-|q|} s^+_{j,\alpha} \hat{a}_j \,,\\
    \hat{b}_{-|q|,\alpha}^{\dag} &= \sum_{j: q_j=|q|} r^-_{j,\alpha} \hat{a}_j+\sum_{j: q_j=-|q|} s^-_{j,\alpha} \hat{a}^\dag_j
\end{align}
for some constants $r^{\pm}_{j,\alpha}, s^{\pm}_{j,\alpha}$.
In particular, $\hat{b}_q^\dag$ contains terms $a^\dag_j$ when $q_j$ and $q$ have the same sign, and terms $a_j$ when $q_j$ and $q$ have opposite sign.
Therefore, $\hat{b}_q^{\dag}$ shifts the charge operator $\widehat{Q}$ by $q \hat{I}$,
\begin{equation}
    [\widehat{Q}, \hat{b}^\dag_{q,\alpha}] = q\hat{b}_{q,\alpha}^\dag \,,
\end{equation}
so that $\hat{b}^\dag_{q,\alpha}\hat{b}_{q,\alpha}$ stays within the sector of charge $q$.

Theorem~\ref{thm:nmd_q} also implies a symmetric decoupling for symplectic matrices. 
Suppose symplectic matrix $V = e^{H\Omega_n}$ is generated by a Hamiltonian matrix $H$ such that the assumption in Eq.(\ref{eq:nmd_comm}) holds.
Then, there exists a symplectic matrix $S$ which simultaneously diagonalizes $Q$ by congruence, $SQS^{\rm T} =  \bigoplus_{j=1}^n q_j I_2$ and $V$ by similarity,
\begin{equation}
    SVS^{-1} = e^{SHS^{\rm T} \Omega_n} = \bigoplus_{j=1}^n e^{\nu_j \Omega_1} \,,\quad \nu_j \in \mathbb{R} \,.
\end{equation}

\stoptoc

\subsection*{Proof of Lemma~\ref{lem:anti-sym_canonical}}

\resumetoc

\begin{proof}(Lemma~\ref{lem:anti-sym_canonical})
    First, we note that since $B$ is symmetric, $A$ is anti-symmetric.
    For any $q>0$, let
    \begin{equation}
        Q_q = \bigoplus_{j: |q_j|=q} q_j \Omega_1
    \end{equation}
    be the restriction of $\Omega_nQ$ to the subspace $\{j: |q_j| = q\}$. 
    Note that $Q_q$ can be written as a polynomial of $Q\Omega_n$.
    To see this, observe that $(Q\Omega_n)^2 = -\sum_{j} q_j^2 I_2$, which implies that the projector $\bigoplus_{j: |q_j|=q} I_2$ can be expressed as a polynomial of $(Q\Omega_n)^2$.
    Then, $Q_q = Q\Omega_n \bigoplus_{j: |q_j|=q} I_2$  can also be expressed as a polynomial of $Q\Omega_n$.
    
    Since $\Omega_nQ$ and $A$ commute and are both anti-symmetric, according to Lemma~\ref{lem:anti-sym_commute}, there exists an orthogonal matrix $O$ such that
    \begin{equation}\label{sar}
        O A O^{\rm T} = \bigoplus_{j=1}^n a_j \Omega_1 \,,\quad\quad [O, \Omega_nQ] = 0 \,,
    \end{equation}
    where $a_j$ are not necessarily positive.
    Then, 
    \begin{equation}
        \left( O A O^{\rm T} \right) Q_q = \left( \bigoplus_{j: |q_j|=q} a_j \Omega_1 \right) \bigoplus_{j: |q_j|=q} q_j \Omega_1 = -\bigoplus_{j: |q_j|=q} a_j q_j I_2 \, .
    \end{equation}
    Furthermore, since $Q_q$ can be expressed as a polynomial of $\Omega_n Q=Q\Omega_n$, and $O$ commutes with $\Omega_nQ$, it also commutes with $Q_q$ which, in turn, implies that $O A O^{\rm T} Q_q = O A Q_q O^{\rm T}$.
    
    Hence, the eigenvalues of $O A O^{\rm T} Q_q = O A Q_q O^{\rm T}$ are equal to the eigenvalues of $A Q_q$. 
    We conclude that the non-zero eigenvalues of $-A Q_q$ are exactly the multi-set $$\{a_j q_j : |q_j|=q\} \,,$$ with an additional multiplicity 2 for all elements.  
    
    Next, we show that the number of positive and negative eigenvalues of $-A Q_q$ are indeed equal to the number of positive and negative elements of the multi-set $$\{q_j : |q_j|=q\}$$ with an additional multiplicity 2.
    
    To show this, we note that, by assumption, $B$ commutes with $\Omega_n Q$, and therefore it also commutes with $Q_q$, which in turn implies that $ A \coloneqq B \Omega_n B$ satisfies
    \begin{align}
        A Q_q 
        = B \Omega_n B Q_q=B \Omega_n Q_q B  
        = -B \left(\bigoplus_{j: |q_j|=q} q_j I_2 \right) B \,.
    \end{align}
    
    Since $B$ is positive, it is symmetric and full-rank. 
    Then, applying Sylvester's law of inertia, the number of positive and negative eigenvalues of $-AQ_q$ is equal to the number of positive and negative eigenvalues of $\bigoplus_{j: |q_j|=q} q_j I_2$, which are twice the number of positive and negative elements of multi-set $\{q_j: |q_j|=q\}$.  
    
    In summary, by looking at the eigenvalues of $-A Q_q$, we found that  the number of positive and negative elements of multi-set $\{q_j: |q_j|=q\}$ is equal to the number of positive and negative elements of multi-set $\{a_j q_j: |q_j|=q\}$. 
    This means that, for all $j$ appearing in this sector, either $a_j$ are all positive, or there exists an even number of $j$ for which $a_j$ is negative. 
    Furthermore, for exactly half of such $j$, denoted by $S_+$, $q_j$ is positive, and for the other half, denoted by $S_-$, $q_j$ is negative. 
    
    Based on this observation, we can  construct another orthogonal transformation $\widetilde{O}$, which ensures that $a_j$ are positive. Recall that the projector $\bigoplus_{j: |q_j|=q} I_2$ can be expressed as a polynomial of $(Q\Omega_n)^2$. Since $O$ commutes with $Q\Omega_n$, it also commutes with this projector and therefore it is block-diagonal with respect to subspaces $\{j: |q_j|=q\}$.
    Let $O_q$ be the component of $O$ in the subspace $\{j: |q_j|=q\}$. 
    Then, define
    \begin{equation}
        \widetilde{O}_q = \text{SWAP}_{\pm} \left( \bigoplus_{j \in S_+ \sqcup S_-} X(j) \right) O_q \eqqcolon W_q O_q \,,
    \end{equation}
    where $X(j)$ is the Pauli $X$ operator on mode $j$, and $\text{SWAP}_{\pm}$, exchanges the modes in $S_+$ and $S_-$ (recall that they have equal size).
    With this definition, we obtain
    \begin{align}
        W_q Q_q W_q^{\rm T}
        &= \bigoplus_{\substack{j: |q_j|=q, \\ q\notin S_+ \sqcup S_-}} q_j \Omega_1 \oplus \left(\text{SWAP}_{\pm} \left(\bigoplus_{j\in S_+ \sqcup S_-}  q_j X(j) \Omega_1 X(j)\right) \text{SWAP}_{\pm}\right) \nonumber\\
        &= \bigoplus_{\substack{j: |q_j|=q, \\ q\notin S_+ \sqcup S_-}} q_j \Omega_1 \oplus (-1)\left(\text{SWAP}_{\pm} \left(\bigoplus_{j\in S_+ \sqcup S_-} q_j  \Omega_1 \right) \text{SWAP}_{\pm}\right) \nonumber\\
        &= \bigoplus_{\substack{j: |q_j|=q, \\ q\notin S_+ \sqcup S_-}} q_j \Omega_1 \oplus \bigoplus_{j\in S_+ \sqcup S_-}  q_j  \Omega_1 =Q_q \,,
    \end{align}
    where the second equality follows from the fact that $X(j) q_j\Omega_1 X(j)=-q_j\Omega_1$.  
    
    Then, the orthogonal transformation $\widetilde{O} = \bigoplus_{q\ge 0} \widetilde{O}_q = W O$, where $W \coloneqq \bigoplus_q W_q$ commutes with $Q\Omega_n$, i.e.
    \begin{equation}
        \widetilde{O} Q\Omega \widetilde{O}^{\rm T} = W O Q\Omega O^{\rm T} W^{\rm T} = W Q\Omega W^{\rm T}=\bigoplus_{q \ge 0} W_q Q_q W_q^{\rm T} = \bigoplus_{q \ge 0} Q_q = Q\Omega \,.
    \end{equation}
    Furthermore, it brings $A$ to the form required in Eq.(\ref{sar}) with positive $a_j$, i.e.
    \begin{equation}
        \widetilde{O} A \widetilde{O}^{\rm T}
        = W O A O^{\rm T} W^{\rm T}
        = W \left( \bigoplus_{j=1}^n a_j \Omega_1 \right) W^{\rm T}
        = \bigoplus_{j=1}^n |a_{\pi(j)}| \Omega_1 \,,
    \end{equation}
    where $\pi$ is an unspecified permutation.
    The case of $q = 0$ corresponds to the standard canonical decomposition of an anti-symmetric matrix within that subspace.
\end{proof}

\begin{lemma}\label{lem:anti-sym_commute}
    Consider a set of real anti-symmetric matrices $\{A_1, \dots, A_N\}$ that pairwise commute.
    There exists a real orthogonal matrix $O$ that simultaneously brings them into canonical form,
    \begin{equation}\label{eq:can_A}
        O A_j O^{\rm T} = \left( \bigoplus_{k=1}^{{\rm rank}(A_j)/2} a_{jk} \Omega_1 \right) \oplus 0_{{\rm dim}(A_j)-{\rm rank}(A_j)} \,,\quad\quad \text{for all } j = 1, \dots, N \,,
    \end{equation}
    where $\{\pm ia_{jk}: k = 1, \dots, {\rm rank}(A_j)/2\}$ are the eigenvalues of $A_j$ and $0_n$ is the $(n \times n)$ zero matrix.
\end{lemma}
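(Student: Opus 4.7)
The plan is to exploit the fact that every real antisymmetric matrix is normal, since $A^{\rm T}A = AA^{\rm T}=-A^{2}$ whenever $A=-A^{\rm T}$. Hence a pairwise commuting family $\{A_1,\dots,A_N\}$ of real antisymmetric matrices is a commuting family of normal operators on $\mathbb{C}^n$, and so admits a common orthonormal eigenbasis over $\mathbb{C}$. Writing $\mathbb{C}^n=\bigoplus_{\vec\mu} V_{\vec\mu}$ as an orthogonal direct sum of joint eigenspaces labeled by tuples $\vec\mu=(\mu_1,\dots,\mu_N)$ of purely imaginary scalars (with $A_j$ acting on $V_{\vec\mu}$ as $\mu_j$), the job reduces to converting this complex diagonalization into a real orthogonal one of the prescribed block form.

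The key move is that, because each $A_j$ is real, complex conjugation $v\mapsto v^{*}$ sends $V_{\vec\mu}$ isometrically onto $V_{-\vec\mu}$. I would therefore group the eigenspaces into unordered pairs $\{V_{\vec\mu},V_{-\vec\mu}\}$, together with the self-conjugate joint kernel $V_{\vec 0}$. On $V_{\vec 0}$ I would pick any real orthonormal basis (which exists since $V_{\vec 0}$ is closed under conjugation). For $\vec\mu\neq\vec 0$, fix a complex orthonormal basis $\{v_k\}$ of $V_{\vec\mu}$; then $\{v_k^{*}\}$ is an orthonormal basis of $V_{-\vec\mu}$, and I would form the real vectors
\begin{equation*}
u_k \;=\; \tfrac{1}{\sqrt{2}}(v_k+v_k^{*}) \,,\qquad w_k \;=\; \tfrac{-i}{\sqrt{2}}(v_k-v_k^{*}) \,.
\end{equation*}
A direct computation using $A_j v_k=\mu_j v_k$ and $A_j v_k^{*}=-\mu_j v_k^{*}$ gives $A_j u_k = a_{jk}w_k$ and $A_j w_k=-a_{jk}u_k$ with $a_{jk}=\mu_j/i\in\mathbb{R}$, so in the ordered basis $(u_k,w_k)$ the operator $A_j$ acts precisely as $a_{jk}\Omega_1$.

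Orthonormality of the full collected family follows from three facts: distinct joint eigenspaces $V_{\vec\mu}$, $V_{\vec\mu'}$ are orthogonal in $\mathbb{C}^n$; conjugation is an antiunitary isometry of the standard Hermitian inner product, so $\{v_k\}\cup\{v_k^{*}\}$ is orthonormal in $V_{\vec\mu}\oplus V_{-\vec\mu}$; and the unitary change of basis from $(v_k,v_k^{*})$ to $(u_k,w_k)$ is manifestly unitary. Arranging all such real vectors as the columns of $O^{\rm T}$ yields a real orthogonal matrix that simultaneously puts every $A_j$ in the canonical form of Eq.~(\ref{eq:can_A}); the trivial blocks (pairs with $a_{jk}=0$, together with the contribution from $V_{\vec 0}$) assemble into the zero block of size $\dim(A_j)-\mathrm{rank}(A_j)$, possibly after a harmless reordering of the basis.

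The only real subtlety I expect is the bookkeeping of choosing one representative from each conjugate pair $\{\vec\mu,-\vec\mu\}$, so that the $(u_k,w_k)$ construction is not applied twice, and of handling large multiplicities inside a single $V_{\vec\mu}$ (one applies the construction basis-vector by basis-vector, using that any complex orthonormal basis of $V_{\vec\mu}$ works). Both points are purely organizational, and the orthogonality required for the final matrix $O$ never relies on choosing the bases of different eigenspaces in any compatible way. This completes the plan.
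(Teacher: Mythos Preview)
Your proof is correct and takes a genuinely different route from the paper's. You work over $\mathbb{C}$: the $A_j$ are commuting normal operators, hence simultaneously unitarily diagonalizable, and you descend to $\mathbb{R}$ by pairing each joint eigenspace $V_{\vec\mu}$ with its conjugate $V_{-\vec\mu}$ and passing to the real/imaginary parts of the eigenvectors. The paper instead stays over $\mathbb{R}$ throughout: it uses the symmetric, pairwise-commuting family $\{A_i A_j, A_j^2\}$ to split the space into orthogonal subspaces on which each of these products is scalar, brings $A_1$ to canonical form on each piece, and then reads off the canonical form of $A_2,\dots,A_N$ from the scalar relations $A_1 A_j = b_j I$ on each piece. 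Your approach is cleaner and handles all $N$ simultaneously without an inductive step; the paper's approach avoids complexification entirely, at the cost of a slightly more hands-on argument.

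One small caveat on your final remark: the ``harmless reordering'' that collects all zero $2\times 2$ blocks of $A_j$ into a single trailing zero block would have to be done with a \emph{single} $O$ for all $j$, and in general this is not possible (take $A_1=\Omega_1\oplus 0_2$, $A_2=0_2\oplus\Omega_1$). What your construction actually delivers --- and what the paper's proof delivers, and what the application in Lemma~\ref{lem:anti-sym_canonical} uses --- is the simultaneous $2\times 2$ block form $\bigoplus_k a_{jk}\Omega_1$ (plus a one-dimensional zero block if the ambient dimension is odd), with some $a_{jk}$ possibly zero. The display in Eq.~(\ref{eq:can_A}) should be read in that sense; the literal ``all nonzero blocks first, for every $j$'' reading is too strong.
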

\begin{proof}
    For $N=1$, the statement reduces to the regular canonical decomposition of anti-symmetric matrix $A_1$.
    Consider the case of $N=2$.
    Since $A_1$ and $A_2$ commute and are anti-symmetric, matrices $A_1A_2 = A_2A_1$, $A_1^2$ and $A_2^2$ are all symmetric and pairwise commute. 
    Therefore, the algebra generated by $A_1A_2$, $A_1^2$, and  $A_2^2$ is commutative and decomposes the vector space $\V$ into orthogonal subspaces, $\V = \bigoplus_k \V_k$, such that $A_1A_2$, $A_1^2$, and $A_2^2$ are constant over each $\V_k$, and for each distinct pair of subspaces $\V_k, \V_{k'}$, at least one of $A_1A_2$, $A_1^2$, and $A_2^2$ takes different values between $\V_k$ and $\V_{k'}$. 
    
    Let $\Pi_k$ be the projector to $\V_k$. 
    Since matrices $A_1A_2$, $A_1^2$, and $A_2^2$ are all symmetric, and $\Pi_k$ lives in the algebra generated by these matrices, it is also symmetric.
    
    Now since $A_1$ and $A_2$ commute with $A_1^2, A_2^2$, and $A_1A_2$, they also commute with all projectors $\Pi_k$.
    Then, 
    \begin{equation}
        \Pi_k A_1 \Pi_k = \Pi_k A_1 = A_1 \Pi_k \,,
    \end{equation}
    are anti-symmetric and commute with $\Pi_k A_2 \Pi_k$, for all $k$. 
    
    Since $A^2_1$ is constant over subspace $\V_k$, the symmetric positive operator $-\Pi_k A^2_1 \Pi_k = a_{1,k}^2 \Pi_k$ has a non-negative eigenvalue $a_{1,k}^2$. 
    Then, the eigenvalues of $A_1 \Pi_k$ are all in the form $\pm i a_{1,k}$. 
    The fact that $\Pi_k A_1 \Pi_k$ is anti-symmetric implies that there exists an orthogonal transformation on $\V_k$ that brings $\Pi_k A_1 \Pi_k$ to block-diagonal form 
    \begin{equation}\label{eq:a1}
        O(\Pi_k A_1 \Pi_k)O^{\rm T}= i a_{1,k} \Pi_k \bigoplus_j c_j \Omega_1 \,,
    \end{equation}
    where $c_j \in \{-1,1\}$.
    
    Now recall that $O(\Pi_k A_2 \Pi_k)O^{\rm T}$ is also anti-symmetric and restricted to $\V_k$. 
    Furthermore,
    \begin{equation}
        \Pi_k A_1 A_2 \Pi_k = b \Pi_k \,, 
    \end{equation}
    for some real number $b$.
    This implies that
    \begin{equation}\label{eq:a2}
        O (\Pi_k A_1 \Pi_k) O^{\rm T} O (\Pi_k A_2 \Pi_k) O^{\rm T} = O (\Pi_k A_1 A_2 \Pi_k) O^{\rm T} =  b \Pi_k \,, 
    \end{equation}
    and, consequently,
    \begin{equation}
     O(\Pi_k A_2 \Pi_k)O^{\rm T} 
     = b \Pi_k O (\Pi_k A_1 \Pi_k)^{-1} O^{\rm T}
     = i b a_{1,k}^{-1} \Pi_k \bigoplus_j c_j \Omega_1 \,,
    \end{equation}
    where the inverse is defined  on the support of $\Pi_k$. This means that the same orthogonal transformation $O$ also brings $A_2$ to block-diagonal form.
    
    Extending the proof to $N>2$ is straightforward and involves first bringing $A_1$ to its canonical form in each subspace as in Eq.(\ref{eq:a1}), then repeating the step in Eq.(\ref{eq:a2}) $N-1$ times for each matrix $A_j$, $j=2, \dots, N$.
\end{proof}

\subsection{Conditions for state interconversion in the presence of a passive conserved charge}
\label{app:interconversion}

\thminterconvert*
\begin{proof}
    On modes that carry the trivial representation, i.e. $q=0$, $\sigma_1\Pi_0$ and $\sigma_2\Pi_0$ are related by a symplectic transformation if and only if they have equal symplectic eigenvalues, i.e. Eq.(\ref{eq:Qcond1}) holds.
    Hence, we now focus on Eq.(\ref{eq:Qcond2}).

    We first prove the necessity of Eq.(\ref{eq:Qcond2}) when $|q| > 0$.
    The conditions $[M, \Omega_n Q] = [Q, \Omega_n] = 0$ imply that
    \begin{equation}
        [M, Q^2] = -[M, \Omega_n Q \Omega_n Q] = 0 \,,
    \end{equation}
    for any matrix $M$.
    Considering the decomposition
    \begin{equation}
        Q^2 = \sum_{q \in {\rm Eig}(Q)} q^2 (\Pi_q + \Pi_{-q}) \,,
    \end{equation}
    we also get $[M, \Pi_q + \Pi_{-q}] = 0$, for all $q \in {\rm Eig}(Q)$. 
    Similarly, $S Q S^{\rm T}=Q$ implies that $S$ commutes with  $Q\Omega_n$, and hence with $\Pi_q + \Pi_{-q}$.
    
    We now assume that there exists symplectic matrix $S$ such that $SQS^{\rm T} = Q$ and $S \sigma_1 S^{\rm T} = \sigma_2$.
    In particular, this implies that $[S,Q\Omega_n] = 0$, leading to
    \begin{equation}\label{eq:sigmaQOmega}
        \sigma_2Q\Omega_n
        = S\sigma_1S^{\rm T}Q\Omega_n 
        = S\sigma_1Q\Omega_n S^{\rm T} \,,
    \end{equation}
    where the second equality follows from the facts that $[Q,\Omega_n]=0$ and $Q=Q^{\rm T}$ which imply that $[S^{\rm T},Q\Omega_n] = 0$.
    Right-multiplying Eq.(\ref{eq:sigmaQOmega}) with $\Omega_n(\Pi_q + \Pi_{-q})$, we obtain
    \begin{equation}
        \sigma_2Q(\Pi_q + \Pi_{-q})
        = S\sigma_1Q(\Pi_q + \Pi_{-q})S^{-1} \,,
    \end{equation}
    where in the right-hand side we again used the fact that $S$ commutes with $Q\Omega_n$, and hence with $Q^2 = -(Q\Omega_n)^2$, which in turn implies it commutes with $\Pi_q + \Pi_{-q}$.
    Equivalently this can be written as 
    \begin{equation}
        q\sigma_2(\Pi_q - \Pi_{-q})
        = qS\sigma_1(\Pi_q - \Pi_{-q})S^{-1} \,,
    \end{equation}

    Hence, $\sigma_1(\Pi_q - \Pi_{-q})$ and $\sigma_2(\Pi_q - \Pi_{-q})$ have equal ordinary eigenvalues when $q \neq 0$, so Eq.(\ref{eq:Qcond2}) holds for all $k$.
    
    Now we proceed to show the converse.
    According to Theorem~\ref{thm:nmd_q}, there exist symplectic matrices $S_1, S_2$ such that
    \begin{align}
        S_1 \sigma_1 S_1^{\rm T} = \bigoplus_{j=1}^n \nu_j I_2 \,, \quad\quad
        S_2 \sigma_2 S_2^{\rm T} = \bigoplus_{j=1}^n \widetilde{\nu}_j I_2 \,, \quad\quad
        S_1 Q S_1^{\rm T} = S_2 Q S_2^{\rm T} = \bigoplus_{j=1}^n q_j I_2 \,, \label{eq:simul_diag}
    \end{align}
    where $\nu_j, \widetilde{\nu}_j \ge 0$ (in fact $\nu_j, \widetilde{\nu}_j \ge 1$), and
    \begin{equation}
        {\rm SympEig}(Q) \coloneqq \{q_j: 1, \dots, n\} = {\rm Eig}(Q) \,.
    \end{equation}
    The fact that the symplectic and ordinary eigenvalues of $Q$ are identical is a consequence of $[\Omega_n, Q]=0$ and can be seen, e.g. using Theorem~\ref{thm:nmd_q} with $M = I_{2n}$.
    The equations in Eq.(\ref{eq:simul_diag}) imply that
    \begin{align}
        S_1\sigma_1 \Omega S_1^{-1} = \bigoplus_j \nu_j \Omega_1 \,,\quad\quad
        S_1 Q \Omega S_1^{-1} = \bigoplus_j q_j \Omega_1 \,. \label{eq:simul_diag2}
    \end{align}
    Using the fact that $\Pi_q+\Pi_{-q}$ can be written as a polynomial of $Q^2=- (Q\Omega)^2$, the latter equation implies that
    \begin{align}
        S_1 (\Pi_q+\Pi_{-q}) S_1^{-1} = \bigoplus_{j: |q_j|=|q|} I_2 \,.
    \end{align}
    Consider $q\sigma_1 (\Pi_q-\Pi_{-q}) = \sigma_1 Q (\Pi_q+\Pi_{-q})$.
    Sandwiching this between $S_1$ and $S_1^{-1}$ we obtain 
    \begin{align}
        q S_1 \sigma_1 (\Pi_q-\Pi_{-q}) S^{-1}_{1} 
        &= S_1 \left(\sigma_1 Q (\Pi_q+\Pi_{-q})\right)S_1^{-1} 
        = -S_1 \left(\sigma_1 \Omega Q\Omega (\Pi_q+\Pi_{-q})\right)S_1^{-1} \\
        &= -\left(\bigoplus_j \nu_j \Omega_1\right) \left(\bigoplus_{j: |q_j|=q} q_j \Omega_1\right) 
        = \bigoplus_{j: |q_j|=q} \nu_j q_j I_2 \\ 
        &= |q| \bigoplus_{j: |q_j|=|q|} \nu_j \frac{q_j}{|q|} I_2 \,, 
    \end{align}
    where in the second line we used Eq.(\ref{eq:simul_diag2}) and in the third line we assumed $q \neq 0$.
    We conclude that, for $|q|>0$, $\sigma_1 (\Pi_q-\Pi_{-q})$ is diagonalizable and its positive and negative eigenvalues are, respectively, $\{\nu_j: q_j=|q|\}$ and $\{-\nu_j: q_j=-|q|\}$, with an additional multiplicity of 2, and similarly for $\sigma_2 (\Pi_q-\Pi_{-q})$. 
    Therefore, the assumption of the theorem that ${\rm Eig}(\sigma_1 (\Pi_q-\Pi_{-q})) = {\rm Eig}(\sigma_2 (\Pi_q-\Pi_{-q}))$ implies that for $|q| > 0$, 
    \begin{align}
        \{\nu_j: q_j = |q|\} = \{\widetilde{\nu}_j: q_j = |q|\} \quad\text{and}\quad
        \{-\nu_j: q_j = -|q|\} = \{-\widetilde{\nu}_j: q_j = -|q|\} \,.
    \end{align}
    Consider the set of modes $\{j: q_j=q\}$ where the diagonal matrix $S_1 Q S_1^{\rm T} = S_2 Q S_2^{\rm T}$ takes the value $q \neq 0$.
    We have shown that on these modes, $S_1 \sigma_1 S_1^{\rm T}$ and $S_2 \sigma_2 S_2^{\rm T}$ are diagonal and, up to a possible permutation of modes, have an identical set of diagonal elements. 
    Therefore, by simply reordering these modes, we can convert $S_1 \sigma_1 S_1^{\rm T}$ to $S_2 \sigma_2 S_2^{\rm T}$. 
    That is, there exists an orthogonal symplectic transformation $O=\bigoplus_q O_q$, where $O_q$ permutes modes appropriately within $\{j: q_j=q\}$ and commutes with $S_1 Q S_1^{\rm T}$, such that $O S_1 \sigma_1 S_1^{\rm T} O^{\rm T}= S_2 \sigma_2 S_2^{\rm T}$.
    This, in turn, implies that the matrix $S \coloneqq S^{-1}_2 O S_1$ is symplectic, and satisfies $\sigma_2 = S\sigma_1S^{\rm T}$ and $Q = SQS^{\rm T}$.
\end{proof}

\section{Invariant Gaussian purification}

In this section, we provide the proof of Lemma~\ref{lem:inv_gauss_purification}, which we split into two parts, first proving the existence of an invariant Gaussian purification, and then showing that all such purifications are equivalent up to application of a Gaussian unitary on the ancillary system.

\subsection{A standard form for invariant Gaussian purifications}
\label{app:purification}

Here we provide the proof of the invariant Gaussian purification as stated in Lemma~\ref{lem:inv_gauss_purification}, which is a slight modification of the original Gaussian purification by Holevo and Werner~\cite{holevo2001evaluating}.
\begin{lemma}\label{lem:inv_gauss_purification1}
    Consider a system with a finite number of bosonic modes and Hilbert space $\H_A$ equipped with symplectic Gaussian representation $\hat{S}_A(g): g \in G$ of a group $G$ on $\H_A$.
    
    For any Gaussian state $\hat{\rho}_A \in \B(\H_A)$ which is invariant under $\hat{S}_A$, i.e. $\hat{S}_A(g)\hat{\rho}_A\hat{S}_A(g)^\dag = \hat{\rho}_A$ for all $g \in G$, there exists a Gaussian pure state $\ket{\psi}_{A\overbar{A}} \in \H_A \otimes \H_{\overbar{A}}$ where ancilla $\overbar{A}$ has Hilbert space $\H_{\overbar{A}} \cong \H_A$ and is equipped with representation $\hat{S}_{\overbar{A}}(g) \coloneqq \hat{S}^*_{A}(g): g\in G$, such that $\ket{\psi}_{A\overbar{A}}$ is a purification of $\hat{\rho}_A$,
    \begin{equation}
        \hat{\rho}_A = \tr_{\overbar{A}}[\ketbra{\psi}{\psi}_{A\overbar{A}}] \,,
    \end{equation}
    and is invariant, i.e. $\ket{\psi}_{A\overbar{A}} = (\hat{S}_A(g) \otimes \hat{S}_{\overbar{A}}(g))\ket{\psi}_{A\overbar{A}}$ for all $g \in G$.

    Assuming $\bmd$ and $\sigma$ are the displacement vector and covariance matrix of state $\hat{\rho}_A$, respectively, then, $\ket{\psi}$ can be chosen to have displacement vector $\bmd_\psi = \bmd \oplus \bmo$, and covariance matrix given in basis $\hat{\bmr} = \hat{\bmr}_A \oplus \hat{\bmr}_{\overbar{A}}$ by
    \begin{equation}\label{eq:purif_cov_app}
        \sigma_{\psi} = \begin{pmatrix} \sigma &\hspace{-8pt} -\sqrt{-\sigma\Omega\sigma\Omega - I}\ \Omega Z_A \\[5pt] Z_A \Omega \sqrt{-\Omega\sigma\Omega\sigma - I} &\hspace{-8pt} Z_A \sigma Z_A \end{pmatrix} \,,
    \end{equation}
    where $\bmo$ is the zero vector on $\overbar{A}$, $Z_A \coloneqq (I_A \otimes Z)$, and $\Omega$ is the symplectic form on $A$.
\end{lemma}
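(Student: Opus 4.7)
The plan is to verify directly that the covariance matrix $\sigma_\psi$ in Eq.(\ref{eq:purif_cov_app}) together with displacement $\bmd\oplus\bmo$ specifies a Gaussian pure state on $A\overbar{A}$ whose reduced state on $A$ is $\hat{\rho}_A$ and which is invariant under the tensor representation $\hat{S}_A\otimes\hat{S}_A^*$. The formula is a variant of the Holevo--Werner purification, with factors of $Z_A=I_A\otimes Z$ and $\Omega$ inserted so that the ancilla literally carries the complex conjugate representation $\hat{S}_{\overbar{A}}=\hat{S}_A^*$, whose phase-space action is $Z_A S Z_A$ by Eq.(\ref{eq:conj_repr_app}).

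First I would check the structural properties of $\sigma_\psi$. Symmetry follows from $Z_A^{\rm T}=Z_A$, $\Omega^{\rm T}=-\Omega$, and the identity $(-\sigma\Omega\sigma\Omega-I)^{\rm T}=-\Omega\sigma\Omega\sigma-I$, which makes the two off-diagonal blocks mutual transposes. The reduced state on $A$ is the top-left block $\sigma$ with displacement $\bmd$, reproducing $\hat{\rho}_A$. For purity, write $\Sigma\coloneqq\sigma\Omega$ and $M\coloneqq\sqrt{-\Sigma^2-I}$: by Williamson's theorem, $\Sigma$ is similar to $\bigoplus_j\nu_j\Omega_1$ with symplectic eigenvalues $\nu_j\ge 1$, so $-\Sigma^2-I$ has non-negative eigenvalues $\nu_j^2-1$ and the square root is unambiguously defined via spectral calculus on $\Sigma^2$; in particular $[M,\Sigma]=0$. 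A direct block computation using $Z_A\Omega=-\Omega Z_A$, $\Omega M^{\rm T}\Omega=-M$ (which follows from $M^{\rm T}=\Omega M\Omega^{-1}$), $Z_A^2=I$, and $M^2=-\Sigma^2-I$ then yields $(\sigma_\psi(\Omega\oplus\Omega))^2=-I_{4n}$, establishing purity.

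Next I would establish invariance in phase space. The key identity is $[S,\sigma\Omega]=0$ for any invariant symplectic $S$: from $S\sigma S^{\rm T}=\sigma$ together with the symplectic relation $S^{-\rm T}\Omega=\Omega S$ one obtains $S\sigma\Omega=\sigma S^{-\rm T}\Omega=\sigma\Omega S$. Hence $S$ commutes with $\Sigma^2$ and with its function $M$. A block-wise computation then gives $(S\oplus Z_A S Z_A)\sigma_\psi(S\oplus Z_A S Z_A)^{\rm T}=\sigma_\psi$: the diagonal blocks use $S\sigma S^{\rm T}=\sigma$ and $Z_A^2=I$, while in the $(1,2)$ block one commutes $S$ past $M$ and then uses $S\Omega S^{\rm T}=\Omega$ to absorb the remaining $S$'s. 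The displacement $\bmd\oplus\bmo$ is manifestly invariant since $S\bmd=\bmd$, so the density operator $\ketbra{\psi}_{A\overbar{A}}$ is invariant under the tensor representation.

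The main obstacle is upgrading from density-operator invariance to state-vector invariance, $(\hat{S}_A(g)\otimes\hat{S}_A^*(g))\ket{\psi}_{A\overbar{A}}=\ket{\psi}_{A\overbar{A}}$. A priori this yields only $(\hat{S}_A\otimes\hat{S}_A^*)\ket{\psi}=\chi(g)\ket{\psi}$ for some character $\chi$ of $G$; the projective 2-cocycles of $\hat{S}_A$ and $\hat{S}_A^*$ are complex conjugates of each other and cancel in the tensor product, so the tensor representation is a true representation and $\chi$ is a genuine character. To show $\chi\equiv 1$ I would reduce to a passive representation via Proposition~\ref{prop:squ}, apply Theorem~\ref{thm:nmd_q} to simultaneously diagonalize $\sigma$ and the symmetry generator by a symmetry-respecting symplectic transformation $\hat{S}_0$, and then observe that in this normal-mode basis $\ket{\psi}$ decomposes into a tensor product of two-mode squeezed vacua across matched mode pairs whose charges on $A$ and $\overbar{A}$ are exactly opposite. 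Each such EPR pair $\sum_k\lambda_k\ket{k}_{A_j}\ket{k}_{\overbar{A}_j}$ is manifestly invariant under its local (charge)$\otimes$(negative charge) action, since the phases $e^{iq_j\theta}$ and $e^{-iq_j\theta}$ cancel term by term, forcing $\chi=1$; the invariant unitary $\hat{S}_0\otimes\hat{S}_0^*$ returning to the original basis preserves this invariance.
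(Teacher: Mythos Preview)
Your core verification---well-definedness of $\sigma_\psi$, symmetry, purity via $(\sigma_\psi(\Omega\oplus\Omega))^2=-I$, and covariance-matrix invariance via $[S,\sigma\Omega]=0$---matches the paper's proof essentially step for step, with a bit more algebraic detail. You go beyond the paper by explicitly addressing the upgrade from density-operator invariance to state-vector invariance $(\hat{S}_A\otimes\hat{S}_A^*)\ket{\psi}=\ket{\psi}$; the paper's appendix proof establishes only the former, even though the statement claims the latter and the main text emphasizes that the distinction matters.

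Your resolution of this point has a gap for general groups. Theorem~\ref{thm:nmd_q} requires $[\Omega\sigma,\Omega Q]=0$, which for a passive generator $Q$ is equivalent to invariance of $\sigma$ under the entire one-parameter subgroup $e^{-t\Omega Q}$. Applied generator-by-generator this handles connected Lie groups, but for a discrete element $g$ of a finite group there is no reason the one-parameter family through $S(g)$ should leave $\sigma$ invariant at intermediate $t$, so the hypothesis of Theorem~\ref{thm:nmd_q} need not hold. A cleaner route that covers all groups: after reducing to a passive representation via Proposition~\ref{prop:squ}, the passive generators $\sum F_{jk}\hat{a}_j^\dag\hat{a}_k$ annihilate the vacuum, so $\ket{0}^{\otimes 2n}$ is state-vector invariant under $\hat{O}_A\otimes\hat{O}_A^*$; then Proposition~\ref{prop:pure_interconversion} supplies a $G$-invariant Gaussian unitary carrying this to any density-operator-invariant pure Gaussian state, which transports state-vector invariance for free.
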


\begin{proof}
    The displacement vector $\bmd_\psi$ is invariant under any choice of $S_{\overbar{A}}$ provided that $\bmd$ is invariant under $S_A$.

    Regarding the covariance matrix, we first note that $\sigma_{\psi}$ is a well-defined covariance matrix, for any covariance matrix $\sigma$ on system $A$.
    Write $\Sigma \coloneqq -\sigma\Omega\sigma\Omega - I$, and note that $\Sigma \ge 0$ according to Ineq.(\ref{eq:swsw}).
    Then, the square root of $\Sigma$ and its transpose $\sqrt{\Sigma^{\rm T}} = \sqrt{\Sigma}^{\rm T}$ are well-defined.
    Therefore, $\sigma_{\psi}$ is a well-defined, symmetric matrix.
    
    The Schur complement $\sigma_\psi/\sigma$ of the top left block $\sigma > 0$ is
    \begin{align}
        \sigma_\psi/\sigma 
        = -Z_A \sigma Z_A - \left(Z_A\Omega\sqrt{\Sigma^{\rm T}}\right)\sigma^{-1}\left(-\sqrt{\Sigma}\Omega Z_A\right)
        = (Z_A\Omega)\sigma^{-1} (Z_A\Omega)^{\rm T}
        > 0 \,.
    \end{align} 
    This implies that $\sigma_\psi > 0$ by the Schur complement condition~\cite{johnson1985matrix}, and $\det[\sigma_\psi] = \det[\sigma]\ \det\left[(Z_A\Omega) \sigma^{-1} (Z_A\Omega)^{\rm T}\right] = 1$.

    Moreover, all symplectic eigenvalues of $\sigma_\psi$ are exactly equal to 1, i.e. $- \sigma_\psi (\Omega \oplus \Omega) \sigma_\psi (\Omega \oplus \Omega) - I = 0$, which can by shown be explicit calculation,
    \begin{align}
        - \sigma_\psi (\Omega \oplus \Omega) \sigma_\psi (\Omega \oplus \Omega) - I
        = - \left( \begin{pmatrix} \sigma & -\sqrt{-\sigma\Omega\sigma\Omega - I}\ \Omega Z_A \\[5pt] Z_A \Omega \sqrt{-\Omega\sigma\Omega\sigma - I} & Z_A \sigma Z_A \end{pmatrix} \begin{pmatrix} \Omega & 0 \\[5pt] 0 & \Omega \end{pmatrix} \right)^2
        - \begin{pmatrix} I  & 0 \\[5pt] 0 & I \end{pmatrix} = 0 \,.
    \end{align}
    Therefore, $\sigma_\psi$ is the covariance matrix of a pure state and it satisfies the uncertainty relation, $\sigma_\psi + i(\Omega \oplus \Omega)\ge 0$.

    The invariance of $\sigma_\psi$ under representation $g \mapsto S_A(g) \oplus S_{\overbar{A}}(g) = S_A(g) \oplus \widetilde{S}_A(g)$ follows directly,
    \begin{align}
        (S_A(g) \oplus \widetilde{S}_A(g)) \hspace{-2pt}\begin{pmatrix} \sigma &\hspace{-20pt} -\sqrt{-\sigma\Omega\sigma\Omega - I}\ \Omega Z_A \\[5pt] Z_A \Omega \sqrt{-\Omega\sigma\Omega\sigma - I} &\hspace{-20pt} Z_A \sigma Z_A \end{pmatrix}\hspace{-2pt} (S_A(g) \oplus \widetilde{S}_A(g))^{\rm T}
        \hspace{-2pt}=\hspace{-2pt} \begin{pmatrix} \sigma &\hspace{-20pt} -\sqrt{-\sigma\Omega\sigma\Omega - I}\ \Omega Z_A \\[5pt] Z_A \Omega \sqrt{-\Omega\sigma\Omega\sigma - I} &\hspace{-20pt} Z_A \sigma Z_A \end{pmatrix}\hspace{-2pt} \,,
    \end{align}
    where we have made use of the invariance of $\sigma$ under $S_A$, the fact that $S_A(g)$ is symplectic, and the definition of the time-reversed representation, $Z_AS_A(g)Z_A$ .
\end{proof}

In Fig.~\ref{fig:purification} of the main text, we illustrate a procedure that prepares the purification in Eq.(\ref{eq:purif_cov_app}), which we describe here for the case of a passive representation.
As usual, performing a symplectic basis change generalizes the result to any squeezed passive representation.
Assume system $A$ has $n$ modes, and consider the normal mode decomposition of the covariance matrix of $\hat{\rho}$,
\begin{equation}\label{eq:nmd}
    \sigma \overset{\bma}{=} V \begin{pmatrix} 0 & D \\ D & 0 \end{pmatrix} V^{\rm T} \,,
\end{equation}
for some symplectic matrix $V$ and diagonal $D \ge I \equiv I_n$.
\begin{enumerate}
    \item[(i)] Prepare the vacuum on both subsystems $A$ and $\overbar{A}$, whose covariance matrix is expressed in basis $\hat{\bma}_A \oplus \hat{\bma}_{\overbar{A}}$ as
    \begin{equation}
        \sigma_0 \overset{\bma}{=} \begin{pmatrix} 0 & I & 0 & 0 \\ I & 0 & 0 & 0 \\ 0 & 0 & 0 & I \\ 0 & 0 & I & 0 \end{pmatrix} \,.
    \end{equation}
    \item[(ii)] Apply the symplectic transformation $V_{\rm sq}(r)$ from Proposition~\ref{prop:entangling} on $\sigma_0$ with
    \begin{equation}
        r = \frac{1}{2}\cosh^{-1}{D}\,,
    \end{equation}
    which is well-defined for $D \ge I$.
    $V_{\rm sq}(r)$ represents $n$ 2--mode squeezers coupling each mode in $A$ with a mode in $\overbar{A}$, leading to the covariance matrix
    \begin{equation}
        \sigma_1 
        = V_{\rm sq}(r) \sigma_0 V_{\rm sq}(r)^{\rm T}
        \overset{\bma}{=} \begin{pmatrix} 0 & D & \sqrt{D^2 - I} & 0 \\ D & 0 & 0 & \sqrt{D^2 - I} \\ \sqrt{D^2 - I} & 0 & 0 & D \\ 0 & \sqrt{D^2 - I} & D & 0 \end{pmatrix} \,.
    \end{equation}
    \item[(iii)] Apply the symplectic transformation $V \oplus V\Omega$ on $\sigma_1$, to obtain
    \begin{equation}
        \sigma_2 
        = (V \oplus V\Omega) \sigma_1 (V \oplus V\Omega)^{\rm T} = \begin{pmatrix} \sigma & -\sqrt{-\sigma\Omega\sigma\Omega - I}\ \Omega \\[5pt] \Omega\sqrt{-\Omega\sigma\Omega\sigma - I} & \sigma \end{pmatrix} \,,
    \end{equation}
    where we have used that $\sqrt{-\sigma\Omega\sigma\Omega - I} = V \big( \sqrt{D^2 - I} \oplus \sqrt{D^2 - I} \big) V^{-1}$.
\end{enumerate}
Finally, we apply $Z_A$ on the second subsystem to restore the symplectic form $\Omega \oplus \Omega$, resulting in the purification of Eq.(\ref{eq:purif_cov_app}).
We can obtain any other $G$--invariant purification of $\hat{\rho}$ by acting on $\overbar{A}$ with a $G$--invariant Gaussian unitary.

\subsection{Equivalence of Gaussian purifications up to local Gaussian operations}
\label{app:purif_equiv}

Here, we prove that all $G$--invariant Gaussian purifications of a $G$--invariant Gaussian state are equivalent up to a Gaussian unitary on the environment. 
In the absence of symmetry constraints, this result has been shown previously~\cite{botero2003modewise}.
Since our proof relies on this special case, we include it here.
\begin{lemma}\label{lem:inv_gauss_purification0}
    Consider a pair of systems $A$ and $B$ with finite numbers of bosonic modes and Hilbert spaces $\H_A$ and $\H_B$.
    Suppose two pure Gaussian states $\ket{\psi_1}, \ket{\psi_2} \in \H_A \otimes \H_B$ have the same reduced state on 
    $A$, i.e. $\tr_B[\ketbra{\psi_1}_{AB}] = \tr_B[\ketbra{\psi_2}_{AB}]$. 
    Then, there exists a Gaussian unitary $\hat{U}_B$ acting on $B$ that transforms $\ket{\psi_1}_{AB}$ to $\ket{\psi_2}_{AB}$, i.e. $\ket{\psi_2}_{AB} = (\hat{I}_A \otimes \hat{U}_B) \ket{\psi_1}_{AB}$.
\end{lemma}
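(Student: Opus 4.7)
The plan is to reduce the claim to the Gaussian analog of the Schmidt decomposition, namely the Botero--Reznik mode-wise decomposition of pure Gaussian bipartite states~\cite{botero2003modewise}. First, I would strip off displacements: since Weyl displacements are Gaussian unitaries, applying $\hat{D}_{-\bmd_k^{AB}}$ to $\ket{\psi_k}$ yields a zero-mean pure Gaussian state $\ket{\psi_k'}$ on $AB$, and the hypothesis $\tr_B[\ketbra{\psi_1}]=\tr_B[\ketbra{\psi_2}]$ forces the $A$-parts of the displacements to agree. Because the eventual unitary $\hat{U}_B$ can absorb the residual $B$-displacements at the end, it suffices to prove the statement for zero-mean purifications of the same zero-mean marginal $\hat{\rho}_A$.

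Second, I would apply Williamson's theorem to the shared covariance matrix $\sigma_A$, producing a symplectic Gaussian unitary $\hat{S}_A$ on $A$ that brings $\hat{\rho}_A$ into a tensor product of thermal modes with symplectic eigenvalues $\nu_j\ge 1$, ordered non-increasingly. Then I would invoke the Botero--Reznik decomposition applied to each $\ket{\psi_k'}$: there exists a Gaussian unitary $\hat{W}_k$ on $B$ such that
\begin{equation*}
    (\hat{S}_A \otimes \hat{W}_k)\ket{\psi_k'} = \bigotimes_j \ket{r_j}_{A_j B_j} \otimes \ket{0}_{B'} \,,
\end{equation*}
where $\ket{r_j}$ is a two-mode squeezed vacuum with $r_j = \tfrac{1}{2}\operatorname{arccosh}\nu_j$ (with $r_j=0$ corresponding to an uncorrelated vacuum pair), and $\ket{0}_{B'}$ is the vacuum on any leftover modes of $B$ not needed for the purification. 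The right-hand side depends only on $\{\nu_j\}$, hence on $\sigma_A$, so the canonical forms of $\ket{\psi_1'}$ and $\ket{\psi_2'}$ coincide. Unwinding the local unitaries yields $\ket{\psi_2'} = (\hat{I}_A \otimes \hat{W}_2^\dag \hat{W}_1)\ket{\psi_1'}$, and reinstating the displacements produces the desired Gaussian unitary $\hat{U}_B$.

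The main obstacle is establishing the mode-wise decomposition in a form that handles the mismatch between the mode count of $B$ and the number of nontrivial normal modes of $A$: any excess modes of $B$ must be shown to factor out as vacuum (up to a local Gaussian unitary on $B$), which is the content of Botero--Reznik and can alternatively be deduced by comparing with the canonical purification of Eq.(\ref{eq:purif_cov_app}) and uniqueness of purification up to isometries in finite-dimensional restrictions of the Fock space. A secondary subtlety is non-uniqueness of the decomposition when some $\nu_j$ coincide, but this only introduces a Gaussian-unitary ambiguity acting trivially on the marginal, which is precisely the freedom absorbed into $\hat{U}_B$.
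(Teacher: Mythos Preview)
Your proposal is correct and takes essentially the same approach as the paper: both reduce to the Botero--Reznik mode-wise decomposition~\cite{botero2003modewise}, handling displacements separately and then matching covariance matrices via a local symplectic on $B$. The paper's proof is terser---it directly cites that $\sigma_{2,AB} = (I_A \oplus V_B)\sigma_{1,AB}(I_A \oplus V_B)^{\rm T}$ for some symplectic $V_B$---whereas you spell out the intermediate canonical form of two-mode squeezed vacua, but this is the same argument at a different level of detail.
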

\noindent To reach this result, first note that the first moments of $\ket{\psi_k}_{AB}$ are of the form $\bmd_A \oplus \bmd_{k,B}$, $k=1,2$, ensuring that the restriction on $A$ is the same.
Their covariance matrices $\sigma_{k,AB}$, $k=1,2$, are related by $\sigma_{2,AB} = (I_A \oplus V_B) \sigma_{1,AB} (I_A \oplus V_B)^{\rm T}$ for some symplectic matrix $V_B$ acting on $B$~\cite{botero2003modewise,serafini2020gaussian}.
Therefore, we can relate the two purifications as follows,
\begin{equation}
    \ket{\psi_2}_{AB} = (\hat{I}_A \otimes \hat{D}_{\bmd_{2,B}-\bmd_{1,B}}\hat{V}_B) \ket{\psi_1}_{AB} \eqqcolon (\hat{I}_A \otimes \hat{U}_B) \ket{\psi_1}_{AB} \,.
\end{equation}
Next, we show how this result extends in the presence of symmetry constraints.
\begin{lemma}\label{lem:inv_gauss_purification2}
    Consider a pair of systems $A$ and $B$ with finite numbers of bosonic modes, Hilbert spaces $\H_A$ and $\H_B$, and symplectic Gaussian representations of a group $G$, denoted by $\hat{S}_A(g): g\in G$ and $\hat{S}_B(g): g\in G$, respectively.

    Then, two pure Gaussian states $\ket{\psi_1}, \ket{\psi_2} \in \H_A \otimes \H_B$ that are $G$--invariant, i.e. $(\hat{S}_A(g) \otimes \hat{S}_B(g))\ket{\psi_k} = \ket{\psi_k}$ for $k = 1,2$ and all $g \in G$, have the same reduced state on $A$, i.e. $\tr_B[\ketbra{\psi_1}_{AB}] = \tr_B[\ketbra{\psi_2}_{AB}]$, if and only if there exists a $G$--invariant Gaussian unitary $\hat{U}_B$ acting on $B$ that transforms $\ket{\psi_1}_{AB}$ to $\ket{\psi_2}_{AB}$, i.e. 
    \begin{equation}
        \ket{\psi_2}_{AB} = (\hat{I}_A \otimes \hat{U}_B) \ket{\psi_1}_{AB} \,,\qquad [\hat{U}_B, \hat{S}_B(g)] = 0 \text{ for all } g \in G \,.
    \end{equation}
\end{lemma}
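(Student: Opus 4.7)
The $(\Leftarrow)$ direction is immediate: tracing out $B$ from both sides of $\ket{\psi_2}_{AB}=(\hat{I}_A\otimes\hat{U}_B)\ket{\psi_1}_{AB}$ yields $\tr_B[\ketbra{\psi_2}]=\tr_B[\ketbra{\psi_1}]$ regardless of whether $\hat{U}_B$ is $G$--invariant.

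For the converse, I would begin by applying the non-symmetric Lemma~\ref{lem:inv_gauss_purification0} to produce some Gaussian unitary $\hat{V}_B$ (a priori not $G$--invariant) with $\ket{\psi_2}=(\hat{I}_A\otimes\hat{V}_B)\ket{\psi_1}$, and then use the joint $G$--invariance of both purifications to force $\hat{V}_B$ into a $G$--invariant form. Acting with $\hat{S}_A(g)\otimes\hat{S}_B(g)$ on both sides of this equation and invoking the invariance of $\ket{\psi_1}$ and $\ket{\psi_2}$ yields
\begin{equation*}
    \bigl(\hat{I}_A \otimes \hat{V}_B\bigr)\ket{\psi_1}_{AB}
    = \bigl(\hat{I}_A \otimes \hat{S}_B(g)\hat{V}_B\hat{S}_B(g)^\dag\bigr)\ket{\psi_1}_{AB}\,,
\end{equation*}
so $\hat{V}_B$ and $\hat{S}_B(g)\hat{V}_B\hat{S}_B(g)^\dag$ agree on the support of the $B$--marginal $\hat{\rho}_{1,B}\coloneqq\tr_A[\ketbra{\psi_1}]$. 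When $\hat{\rho}_{1,B}$ is full rank, this immediately gives $[\hat{V}_B,\hat{S}_B(g)]=0$ for all $g\in G$, and the proof concludes with $\hat{U}_B=\hat{V}_B$.

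The remaining case is when $\hat{\rho}_{1,B}$ has a nontrivial kernel, corresponding to extra modes of $B$ that decouple from $A$ in the vacuum. Here I would invoke the modewise (Gaussian Schmidt) decomposition of pure bipartite Gaussian states to split $B$ symplectically as $B=B_e\oplus B_0$, so that $\ket{\psi_1}_{AB}$ (and, since it shares the same $A$--marginal, $\ket{\psi_2}_{AB}$) decomposes as a product of two--mode squeezed pairs across $A$--$B_e$ tensored with a vacuum state on $B_0$. Since $\hat{\rho}_{1,B}$ is $G$--invariant, its symplectic support is a $G$--invariant symplectic subspace of $B$, and Proposition~\ref{prop:squ} combined with the irrep decomposition of squeezed passive representations (Section~\ref{sec:irreps}) allows this split $B=B_e\oplus B_0$ to be chosen compatibly with the symmetry, so that $\hat{S}_B$ restricts to squeezed passive subrepresentations on each factor. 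On $B_e$ the full-rank argument applies, yielding a $G$--invariant Gaussian unitary. On $B_0$ both $\ket{\psi_1}$ and $\ket{\psi_2}$ restrict to $G$--invariant pure Gaussian states, so Proposition~\ref{prop:pure_interconversion} supplies a $G$--invariant Gaussian unitary between them. Their tensor product defines the desired $\hat{U}_B$.

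The main technical obstacle is verifying that the symplectic split $B=B_e\oplus B_0$ can be taken simultaneously compatible with (i) the modewise bipartite structure of $\ket{\psi_1}$ and $\ket{\psi_2}$ and (ii) the $G$--action on $B$; once both compatibilities are secured, the gluing of the two pieces on $B_e$ and $B_0$ is straightforward.
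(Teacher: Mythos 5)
Your $(\Leftarrow)$ direction is fine, and your full-rank case is also correct: forcing $\hat{S}_B(g)\hat{V}_B\hat{S}_B(g)^\dag=\hat{V}_B$ directly at the Hilbert-space level from agreement of the two operators on the support of $\hat{\rho}_{1,B}$ is a clean, slightly different route from the paper, which never splits off this case and instead works throughout with the covariance matrices and the cross-block $\sigma_{BA}$ in phase space. When the $B$--marginal is full rank (all its symplectic eigenvalues exceed 1, i.e.\ every mode of $B$ is genuinely correlated with $A$), your argument suffices and is arguably more direct.

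The gap is in the degenerate case. You assume a \emph{single} symplectic split $B=B_e\oplus B_0$ adapted simultaneously to the modewise structure of $\ket{\psi_1}$ \emph{and} $\ket{\psi_2}$; such a split does not exist in general, because the correlated subspaces of $B$ for the two states (the images of $\sigma_{1,BA}$ and $\sigma_{2,BA}$) are typically \emph{different} subspaces even though the $A$--marginals coincide. Concretely, let $A$ be one mode of charge $-q$ and $B$ two modes of charge $q$ under a $\Un(1)$ representation; take $\ket{\psi_1}$ a two-mode squeezed state between $A$ and the first $B$ mode with vacuum on the second, and $\ket{\psi_2}$ the same with the two $B$ modes exchanged. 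Both are $G$--invariant with identical $A$--marginals, yet no common split of your kind exists (the required unitary is the mode swap). The $G$--invariance of each state's own split is actually automatic, since $S_B(g)\sigma_{k,BA}S_A(g)^{\rm T}=\sigma_{k,BA}$ preserves the image of $\sigma_{k,BA}$ and its symplectic complement; so the part of your ``main technical obstacle'' you emphasize is the easy part, while the two-state compatibility you also list is where the construction genuinely fails as stated. This is exactly where the paper does its real work: it restricts $V_B$ to obtain a $G$--invariant symplectic intertwiner $W_c:B_c\to B'_c$ between the two correlated subspaces, then invokes the completion result of Lemma~\ref{prop:iso_completion} (which rests on the multiplicity matching of Propositions~\ref{prop:squ} and~\ref{prop:mlambda}) to extend $W_c$ to a full $G$--invariant symplectic transformation of $B$, and only afterwards adjusts the uncorrelated parts using Proposition~\ref{prop:pure_interconversion}, as you envisage (note also that the uncorrelated factor is a general invariant pure Gaussian state, not necessarily vacuum). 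Without an analogue of that intertwiner-plus-completion step, your gluing of the $B_e$ and $B_0$ pieces cannot go through.
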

\begin{proof}
    The necessity of the equality of reduced states is obvious.
    Here, we prove the converse.
    From Lemma~\ref{lem:inv_gauss_purification0} we know that there exists, a possibly symmetry-breaking, Gaussian unitary $\hat{U}_B$ acting on $B$ that converts  $\ket{\psi_1}$ to $\ket{\psi_2}$.
    In the following, we prove that $\hat{U}_B$ can be chosen to be $G$--invariant.
    
    The overall unitary can be realized via $\hat{U}_B = \hat{D}_{\bmd_{2,B}-\bmd_{1,B}}\hat{V}_B$, where $\hat{V}_B$ is a purely symplectic Gaussian unitary, and $\bmd_{k,B}$ is the restriction of the displacement vector $\bmd_k$ of state $\ket{\psi_k}$ on $B$ for $k=1,2$. 
    Since both states $\ket{\psi_1}, \ket{\psi_2}$ are $G$--invariant, their displacement vectors $\bmd_1, \bmd_2$ are also $G$--invariant, which in turn implies that $\hat{D}_{\bmd_{2,B}-\bmd_{1,B}}$ is invariant under $S_B$.
    Without loss of generality, we can thus assume that the displacement vectors of $\ket{\psi_1}$ and $\ket{\psi_2}$ are zero and proceed to show that the symplectic Gaussian unitary $\hat{V}_B$ can also be chosen to be $G$--invariant.
    
    The fact that there exists a symplectic Gaussian unitary $\hat{V}_B$ such that $\ket{\psi_2}_{AB} = (\hat{I}_A \otimes \hat{V}_B) \ket{\psi_1}_{AB}$ implies the existence of a symplectic matrix $V_B$ such that
    \begin{equation}\label{eq:cov2Vcov1V}
        \sigma_2 = (I \oplus V_B)\sigma_1(I \oplus V_B)^{\rm T} \,.
    \end{equation}
    With respect to this direct sum decomposition, the covariance matrix can be written as 
    \begin{equation}\label{eq:cov_bipart}
            \sigma_k = \begin{pmatrix} \sigma_{k,A} & \sigma_{k,BA}^{\rm T} \\ \sigma_{k,BA} & \sigma_{k,B} \end{pmatrix} \,,\ k=1,2 \,,
    \end{equation}
    with $\sigma_{k,BA}: A \rightarrow B$ mapping from the subspace associated with modes in $A$ to the subspace associated with modes in $B$.
    The invariance of $\sigma_k$ under representation $g \mapsto S_A(g) \oplus S_B(g)$ implies that
    \begin{equation}\label{eq:sigmaAB}
        S_B(g) \sigma_{k,BA} S_A(g)^{\rm T} = \sigma_{k,BA} \,,
    \end{equation}
    for $k=1,2$ and all $g \in G$.
    Then,
    \begin{align}
        S_B(g) V_B S_B(g)^{-1} \sigma_{1,BA} 
        &= S_B(g) V_B \sigma_{1,BA} S_{A}(g)^{\rm T}
        = S_B(g) \sigma_{2,BA} S_{A}(g)^{\rm T}
        = \sigma_{2,BA} \nonumber\\
        &= V_B \sigma_{1,BA} \,, \label{eq:VBsigmaAB}
    \end{align}
    for all $g \in G$, where the first and third equalities follow from the invariance of $\sigma_{1,BA}$ and $\sigma_{2,BA}$, respectively, as given in Eq.(\ref{eq:sigmaAB}), while the second and fourth equalities follow from Eq.(\ref{eq:cov2Vcov1V}). 
    
    As a consequence, $S_B(g) V_B S_B(g)^{-1}$ and $V_B$ are equal in the image of $\sigma_{1,BA}$, denoted by $B_c$, i.e. for all $g \in G$,
    \begin{equation}\label{eq:svsp}
        S_B(g) V_B \Pi_c = V_B S_B(g) \Pi_c \,,
    \end{equation}
    where $\Pi_c$ is any, possibly non-symmetric, projector to the image of $\sigma_{1,BA}$, i.e. any matrix satisfying the properties (i) $\Pi_c^2=\Pi_c$, (ii) $\Pi_c \bmd\in B_c$ for all $\bmd\in B$, and (iii) $\Pi_c \bmd = \bmd$ for all $\bmd \in B_c$, which, in particular, imply $\Pi_c \sigma_{1,BA} = \sigma_{1,BA}$.
    Moreover, left-multiplying both sides of Eq.(\ref{eq:sigmaAB}) with $\Pi_c$ we find that
    \begin{equation}
        \Pi_c S_B(g) \Pi_c \sigma_{1,BA} S_A(g)^{\rm T} = \Pi_c \sigma_{1,BA} = \sigma_{1,BA} = S_B(g) \Pi_c\sigma_{1,BA} S_A(g)^{\rm T}
    \end{equation}
    for all $g \in G$, which implies that, for all $g \in G$,
    \begin{equation}\label{eq:psp_sp}
        S_c(g) \coloneqq \Pi_c S_B(g) \Pi_c = S_B(g) \Pi_c
    \end{equation}
    or, equivalently, $(I-\Pi_c) S_B(g) \Pi_c= 0$. 
    Therefore, the subspace $B_c$ is invariant under the action of the symmetry representation $S_c(g): g\in G$.
    Defining $W_c \coloneqq V_B \Pi_c$ and combining Eq.(\ref{eq:psp_sp}) with Eq.(\ref{eq:svsp}), we get
    \begin{equation}\label{eq:wpsp_sw}
        W_c S_c(g) = S_B(g) W_c \,,
    \end{equation}
    for all $g \in G$, i.e. $W_c$ is the intertwiner for representations $S_c(g): g \in G$ and $S_B(g): g \in G$.
    It also satisfies
    \begin{equation}\label{eq:bilinear}
        W^{\rm T}_c \Omega_B W_c = \Pi_c^{\rm T} V_B^{\rm T} \Omega_B V_B \Pi_c = \Pi_c^{\rm T} \Omega_B \Pi_c \eqqcolon \Omega_c \,.
    \end{equation}
    While $\Omega_c$ is always an anti-symmetric matrix, it may, in general, vanish or have rank smaller than the dimension of $B_c$.
    However, this cannot occur when $\sigma_{k}$ is the covariance matrix of a pure state (note that the assumption of state purity has not been used up to this point).
    According to Refs.~\cite{botero2003modewise,serafini2020gaussian}, in this case, the image $B_c$ of $\sigma_{BA}$ is a subspace that respects the symplectic structure.
    
    More precisely, the covariance matrices $\sigma_1$ and $\sigma_2$ induce decompositions of the $(2n_B)$--dimensional space $B$ as
    \begin{equation}
        B = B_c \oplus B_u = B'_c \oplus B'_u \,,
    \end{equation}
    where $B_c$ and $B_u$ are symplectic subspaces that are symplectically orthogonal, with $B_c$ the image of $\sigma_{1,BA}$.
    With respect to the decomposition $(A\oplus B_c) \oplus B_u$, $\sigma_1$ takes the form
    \begin{equation}
    \sigma_1=\omega_{1}\oplus \nu_1 \,, 
    \end{equation}
    where $\omega_1$ is restricted to subspace $A\oplus B_c$ and $\nu_1$ is restricted to $B_u$.
    In particular, $B_c$ corresponds to the modes that are correlated with system $A$ for the state $\ket{\psi_1}$. 

    Similarly, $B'_c$ and $B'_u$ are symplectically orthogonal symplectic subspaces, with $B'_c$ being the image of $\sigma_{2,BA}$. With respect to the decomposition $(A\oplus B'_c) \oplus B'_u$, $\sigma_2$ takes the form
    \begin{equation}
        \sigma_2 = \omega_{2} \oplus \nu_2 \,. 
    \end{equation}
    In the above, we established that the $G$--invariant symplectic intertwiner $W_c \coloneqq V_B \Pi_c$ maps $B_c$ to $B'_c$ while preserving the symplectic form.
    Choosing $\Pi_c$ to be the symplectic projector to $B_c$, $W_c=V\Pi_c$ maps $\sigma_1$ to $\omega_2$, i.e.
    \begin{equation}
        (I_A\oplus W_c) \sigma_1 (I_A\oplus W_c)^{\rm T}=  (I_A\oplus V_B\Pi_c) \sigma_{1} (I_A\oplus V_B \Pi_c)^{\rm T} = (I_A\oplus V_B\Pi_c) \omega_{1} (I_A\oplus V_B \Pi_c)^{\rm T} = \omega_{2} \,.
    \end{equation}

    We now proceed to show that there exists a full $G$--invariant symplectic transformation acting on $B$ that maps $\sigma_1$ to $\sigma_2$.
    Firstly, as shown in Lemma~\ref{prop:iso_completion}, any such intertwiner $W_c$ that preserves the symplectic form can be extended to a full $G$--invariant symplectic transformation $W_B$ satisfying
    \begin{equation}
        W_B \Pi_c = W_c = V_B \Pi_c \,, \qquad
        W_B^{\rm T} \Omega_B W_B = \Omega_B \,, \qquad
        W_B S_B(g) = S_B(g) W_B \qquad \text{for all } g \in G \,.
    \end{equation}
   Since $W_B$ maps the subspace $B_c$ to $B'_c$, it should also map $B_u$, namely the symplectically orthogonal complement of $B_c$, to $B'_u$, namely the symplectically orthogonal complement of $B'_c$.
    Furthermore, with respect to decomposition $(A \oplus B_c) \oplus B_u$, $\sigma_1$ is block-diagonal, hence $\tilde{\sigma}_1 \coloneqq W_B \sigma_1 W_B^{\rm T}$ is also block-diagonal with respect to $(A \oplus B'_c) \oplus B'_u$, i.e.
    \begin{equation}
        \tilde{\sigma}_1 \coloneqq (I_A\oplus W_B) \sigma_1  (I_A\oplus W_B)^{\rm T} = \omega_2 \oplus \tilde{\nu}_1 \,,
    \end{equation}
    where $\tilde{\nu}_1$ is restricted to $B'_u$.
    Roughly speaking, this means that state $\tilde{\sigma}_1$ is equal to the target state $\sigma_2$, when restricted to the modes in  $A$ and modes of $B$ that are correlated with $A$.
    It is left to show that there exists a $G$--invariant symplectic transformation acting only on $B$ which transforms $\tilde{\nu}_1$ to $\nu_2$, hence transforming $\tilde{\sigma}_1$ to $\sigma_2$.

    The fact that $\tilde{\sigma}_1$ and $\sigma_2$ are covariance matrices of pure states immediately implies that $\tilde{\nu}_1$ and $\nu_2$ should also correspond to covariance matrices of pure states. This can be seen, e.g. by noting that, as covariance matrices of pure Gaussian states, all symplectic eigenvalues of $\tilde{\sigma}_1$ and $\sigma_2$ are 1.
    Since $\tilde{\sigma}_1 = \omega_2 \oplus \tilde{\nu}_1$ is block-diagonal with respect to decomposition $(A \oplus B'_c) \oplus B'_u$, where subspaces $A \oplus B'_c$ and $B'_u$ are symplectic, its symplectic eigenvalues are the union of the symplectic eigenvalues of $\omega_2$ and of $\tilde{\nu}_1$, which in turn,  implies that all symplectic eigenvalues of $\tilde{\nu}_1$ should be 1.
    Similarly, $\sigma_2 = \omega_2 \oplus \nu_2$ with respect to the same decomposition, so all symplectic eigenvalues of $\tilde{\nu}_2$ should be 1.

    Based on this observation and the fact that $B'_c$ and $B'_u$ are both $G$--invariant symplectic subspaces we construct a $G$--invariant symplectic transformation that maps $\tilde{\sigma}_1$ to $\sigma_2$.

    First, it is useful to apply a change of basis by a symplectic transformation $T$ on $B$, such that  the subspaces $T B'_c$ and $T B'_u$ are mutually orthogonal with respect to both the Euclidean inner product and the symplectic form.
    In other words, they can be thought of as subspaces corresponding to independent bosonic modes.
    This transformation brings the covariance matrices to the form
    \begin{equation}\label{eq:kappa}
        (I_A \oplus T) \tilde{\sigma}_1 (I_A \oplus T)^{\rm T} = \kappa_{c} \oplus \kappa_{u} \,,\qquad (I_A\oplus T)\tilde{\sigma}_2 (I_A\oplus T)^{\rm T} = \kappa_{c} \oplus \kappa'_{u} \,,
    \end{equation}
    where $\kappa_{u}$ and $\kappa'_{u}$ are both pure state covariance matrices, and the direct sum on the right-hand sides is with respect to the orthogonal subspaces $A \oplus T B'_c$ and $T B'_u$.
    
    This transformation also brings the representation of symmetry to the form
    \begin{equation}
        \tilde{S}_{c}(g) \oplus \tilde{S}_{u}(g) \coloneqq TS_B(g)T^{-1}: g \in G \,,
    \end{equation}
    The covariance matrices $\kappa_{u}$ and $\kappa'_{u}$ are invariant under the representation $\tilde{S}_{u}$. Then, according to part (i) of Proposition~\ref{prop:squ}, the representation $\tilde{S}_{u}$ is a squeezed passive representation, i.e. it is equivalent to an orthogonal symplectic representation, up to a similarity transformation by a symplectic matrix.  
    
    Since subspace $T B_u'$ corresponds to a collection of bosonic modes, we can apply Proposition~\ref{prop:pure_interconversion} on subspace $T B_u'$, which guarantees that there exists a symplectic transformation $\tilde{N}_u$ acting on $T B_u'$, such that (i) it commutes with $\tilde{S}_{u}(g)$ for all $g \in G$, and (ii) it maps $\kappa_{u}$ to  $\kappa'_{u}$, i.e.
    \begin{equation}
        \tilde{N}_u \kappa_{u} \tilde{N}_u^{\rm T} = \kappa'_{u} \,.
    \end{equation}
    We extend this to a symplectic transformation on $TB$ by acting with the identity operator on $TB'_c$, i.e. we consider
    \begin{equation}
        \tilde{N}_B \coloneqq I_{TB'_c} \oplus \tilde{N}_u \,,
    \end{equation}
    which clearly commutes with $\tilde{S}_{c}(g) \oplus \tilde{S}_{u}(g)$ for all $g \in G$, and satisfies
    \begin{equation}\label{eq:Ntilde}
        (I_A \oplus \tilde{N}_B) (\kappa_{c} \oplus \kappa_{u}) (I_A \oplus \tilde{N}_B)^{\rm T} =(\kappa_{c} \oplus \kappa'_{u}) \,.
    \end{equation}
    Returning to the original basis, with respect to decomposition $B = B'_c \oplus B'_u$, we obtain the symplectic transformation
    \begin{equation}\label{eq:TNT}
        N_B = T^{-1} \tilde{N}_B T = T^{-1}(I_{TB'_c} \oplus \tilde{N}_u)T \,,
    \end{equation}
    which acts on $B$, is $G$--invariant, i.e. it commutes with representation $S_B(g) : g \in G$, and transforms $\tilde{\sigma}_1$ to $\sigma_2$, i.e.
    \begin{equation}
        \left(I_A \oplus N_B\right) \tilde{\sigma}_1 \left(I_A \oplus N_B\right)^{\rm T} = (I_A \oplus T^{-1})( \kappa_{c} \oplus \kappa'_{u}) (I_A\oplus T^{-1})^{\rm T} = \sigma_2 \,.
    \end{equation}
    where the first equality follows from Eqs.(\ref{eq:Ntilde},\ref{eq:TNT}) and the second equality follows from Eq.(\ref{eq:kappa}).
    Hence, the $G$--invariant symplectic transformation $I_A \oplus N_BW_B$ acts non-trivially only on $B$ and maps $\sigma_1$ to $\sigma_2$, i.e.
    \begin{equation}
        (I_A \oplus N_B W_B) \sigma_1 (I_A \oplus N_B W_B)^{\rm T} = \sigma_2 \,,
    \end{equation}
    concluding the proof.
\end{proof}

\subsection{Completion of symplectic intertwiner}
\label{app:completion}

We now complete the proof of equivalence between invariant Gaussian purifications, by providing a procedure that completes a symplectic intertwiner, i.e. an intertwiner that preserves the symplectic form, into a full $G$--invariant symplectic transformation.
\begin{lemma}\label{prop:iso_completion}
    Consider a system of $n$ bosonic modes, with phase space $\mathbb{R}^{2n}$, and symplectic form 
    \begin{equation}
        \Omega \equiv \Omega_n = \bigoplus_{j=1}^n \begin{pmatrix} 0 & 1 \\ -1 & 0 \end{pmatrix} \,.
    \end{equation}
    Let $S: G\rightarrow  \Sp(2n,\mathbb{R})$ be a symplectic representation that admits a $G$--invariant state, i.e.
    \begin{equation}
        S(g) \Omega S(g)^{\rm T} = \Omega \,, \qquad S(g) \sigma S(g)^{\rm T} = \sigma \,,
    \end{equation}
    for all $g \in G$ and some covariance matrix $\sigma$.
    Let $F_{\text{in}}$ and $F_{\text{out}}$ be $G$--invariant symplectic subspaces of equal dimension. 
    Suppose there exists an intertwiner $V: F_{\text{in}} \rightarrow F_{\text{out}}$ preserving the symplectic form, i.e. 
    \begin{equation}
        V S(g) = S(g) V \,,
        \qquad
        \Pi_{F_{\text{in}}}^{\rm T} V^{\rm T} \Omega V \Pi_{F_{\text{in}}} = \Pi_{F_{\text{in}}}^{\rm T}\Omega\Pi_{F_{\text{in}}} \,,
    \end{equation}
    where $\Pi_{F_{\text{in}}}$ is any projector onto $F_{\text{in}}$.
    Then, $V$ can be extended to a global $G$--invariant symplectic transformation $W\in \Sp(2n,\mathbb{R})$, i.e. $W$ satisfies
    \begin{equation}
        W^{\rm T} \Omega W = \Omega \,,
        \qquad
        W S(g) = S(g) W \,,
        \qquad
        W\Pi_{F_{\text{in}}} = V \,.
    \end{equation}
\end{lemma}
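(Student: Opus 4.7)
The plan is to extend $V$ by invoking the multiplicity-based classification of squeezed passive representations in Proposition~\ref{prop:mlambda}. First, I would form the symplectic orthogonal complement $F_{\text{in}}^{\Omega} := \{u \in \mathbb{R}^{2n} : u^{\rm T} \Omega v = 0 \text{ for all } v \in F_{\text{in}}\}$. Because $F_{\text{in}}$ is a symplectic subspace, $F_{\text{in}} \cap F_{\text{in}}^{\Omega} = \{0\}$, and one obtains the symplectic direct sum $\mathbb{R}^{2n} = F_{\text{in}} \oplus F_{\text{in}}^{\Omega}$ with $F_{\text{in}}^{\Omega}$ itself symplectic. Since each $S(g)$ preserves $\Omega$ and leaves $F_{\text{in}}$ invariant, it also leaves $F_{\text{in}}^{\Omega}$ invariant. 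The analogous decomposition $\mathbb{R}^{2n} = F_{\text{out}} \oplus F_{\text{out}}^{\Omega}$ holds on the output side, again with $G$-invariant symplectic summands of equal dimension.

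Next, I would verify that the restricted representations $S|_{F_{\text{in}}^{\Omega}}$ and $S|_{F_{\text{out}}^{\Omega}}$ are squeezed passive, which is needed to invoke Proposition~\ref{prop:mlambda}. Let $\pi_{\text{in}}^{\Omega}: \mathbb{R}^{2n} \to F_{\text{in}}^{\Omega}$ denote the symplectic projection with kernel $F_{\text{in}}$. The compression $\pi_{\text{in}}^{\Omega} \sigma (\pi_{\text{in}}^{\Omega})^{\rm T}$ is a valid covariance matrix on $F_{\text{in}}^{\Omega}$ (it corresponds, after a symplectic change of basis, to the partial trace of the Gaussian state $\hat{\rho}_\sigma$ onto the modes complementary to $F_{\text{in}}$) and is invariant under $S|_{F_{\text{in}}^{\Omega}}$ because $\pi_{\text{in}}^{\Omega}$ commutes with $S(g)$. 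Proposition~\ref{prop:squ} then yields that $S|_{F_{\text{in}}^{\Omega}}$ and $S|_{F_{\text{out}}^{\Omega}}$ are both squeezed passive. Since irrep multiplicities $m(\lambda)$ are additive across symplectic direct sums, the forward direction of Proposition~\ref{prop:mlambda} applied to $V$ gives $m_{F_{\text{in}}}(\lambda) = m_{F_{\text{out}}}(\lambda)$, so subtracting from the ambient multiplicities yields $m_{F_{\text{in}}^{\Omega}}(\lambda) = m_{F_{\text{out}}^{\Omega}}(\lambda)$ for every $\lambda$. The converse direction then produces a $G$-equivariant symplectic intertwiner $V': F_{\text{in}}^{\Omega} \to F_{\text{out}}^{\Omega}$.

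Finally, I would define $W := V \oplus V'$ with respect to the symplectic direct sum $\mathbb{R}^{2n} = F_{\text{in}} \oplus F_{\text{in}}^{\Omega}$, viewed as a map into $F_{\text{out}} \oplus F_{\text{out}}^{\Omega}$. A direct block calculation shows $W^{\rm T} \Omega W = \Omega$, because $V$ and $V'$ each preserve the symplectic form on their respective summands and the two summands are symplectically orthogonal, and $W S(g) = S(g) W$ for all $g \in G$, because each summand intertwines the corresponding restriction of $S$. Taking $\Pi_{F_{\text{in}}}$ to be the symplectic projection onto $F_{\text{in}}$ (the one compatible with the decomposition above) gives $W \Pi_{F_{\text{in}}} = V$ by construction.

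The main obstacle I anticipate is the verification that $\pi_{\text{in}}^{\Omega} \sigma (\pi_{\text{in}}^{\Omega})^{\rm T}$ is a legitimate covariance matrix on $F_{\text{in}}^{\Omega}$, i.e. satisfies the uncertainty relation with respect to the restricted symplectic form $\pi_{\text{in}}^{\Omega} \Omega (\pi_{\text{in}}^{\Omega})^{\rm T}$. This is morally the statement that marginals of Gaussian states are Gaussian, but it must be phrased in the intrinsic symplectic language of the subspace rather than via coordinate modes. A clean route is to precompose with a global symplectic change of basis that maps $F_{\text{in}}$ and $F_{\text{in}}^{\Omega}$ to complementary coordinate mode blocks (which exists precisely because the decomposition is a symplectic direct sum), reducing the compression to the standard partial trace of a bosonic Gaussian state, for which the uncertainty relation is textbook. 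A secondary subtlety, namely the dependence of the statement on the choice of $\Pi_{F_{\text{in}}}$, is resolved by always using the symplectic projection in the final identification.
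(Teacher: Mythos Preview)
Your proposal is correct and follows essentially the same route as the paper's proof: both decompose $\mathbb{R}^{2n}$ into the given symplectic subspace and its symplectic complement, show the restricted representations on each piece are squeezed passive (the paper does this by explicitly passing to coordinate modes via a transformation $T$ and then to passive form via Proposition~\ref{prop:squ}, exactly the ``clean route'' you sketch at the end), use Proposition~\ref{prop:mlambda} in both directions to match irrep multiplicities on the complements, and finally take $W$ as the direct sum of $V$ with the resulting intertwiner on the complements. The only cosmetic difference is that the paper carries the explicit basis changes $K_{\text{in}}, K_{\text{out}}$ throughout, while you work intrinsically with the symplectic subspaces; your identification of the compression-to-coordinate-modes step as the one requiring care matches the paper's treatment.
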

\begin{proof}
    For any symplectic subspace $F \subset \mathbb{R}^{2n}$, the restriction of the symplectic form $\Omega_n$ to $F$ is, by definition, non-degenerate. 
    It follows that 
    \begin{equation}
        \mathbb{R}^{2n} = F \oplus F^\perp \,,
    \end{equation}
    where $F^\perp$ denotes the symplectic complement of $F$,  i.e. the set of vectors that are symplectically orthogonal to $F$, and satisfies $\dim(F^\perp) = 2n - \dim(F)$. 
    Moreover, there exists a symplectic transformation $T$ (i.e. $T^{\rm T} \Omega_n T = \Omega_n$) such that the subspaces
    \begin{equation}\label{eq:tre}
        A = T F \,,\qquad B = T F^\perp \,,
    \end{equation}
    are mutually orthogonal with respect to both the Euclidean inner product and the symplectic form $\Omega_n$.
    
    It follows that, relative to the decomposition $\mathbb{R}^{2n}=A\oplus B$, $\Omega_n$ takes the standard block-diagonal form 
    \begin{equation}
        \Omega_n = \Omega_{a} \oplus \Omega_{n-a} \,,
    \end{equation}
    where $2a = \dim(A)=\dim(F)$. In other words, $A$ and $B$ describe independent bosonic modes. 
    
    Under this change of basis, the original representation $S(g): g \in G$, is transformed to 
    \begin{equation}
        \widetilde{S}(g) = T S(g) T^{-1} \,: g \in G \,.
    \end{equation}
    The assumptions that $S(g)$ leaves $F$ invariant and $S(g)$ is symplectic imply that $S(g)$ also leaves $F^\perp$ invariant. This, in turn, implies that $\widetilde{S}(g)$ leaves both $A$ and $B$ invariant. Then, the restrictions of $\widetilde{S}(g)$ to these subspaces are also symplectic representations of $G$, i.e. $\widetilde{S}(g)$ and can be decomposed as 
    \begin{equation}
        \widetilde{S}(g) = \widetilde{S}_A(g) \oplus \widetilde{S}_B(g) \,,
    \end{equation}
    for all $g\in G$, where the direct sum is with respect to the decomposition $\mathbb{R}^{2n} = A \oplus B$.
    Furthermore, the assumption that $S(g)$ admits a $G$--invariant state implies that these restricted representations also have invariant states. Since $A$ and $B$ are both symplectic and orthogonal complements, i.e. they correspond to independent bosonic modes, we can apply part (i) of Proposition~\ref{prop:squ} to each of them. Thus, there exist symplectic transformations $R_A$ and $R_B$ acting on $A$ and $B$, respectively, such that
    \begin{align}
         O_A(g) = R_A\, \widetilde{S}_A(g)\, R_A^{-1} \,,\qquad O_B(g)= R_B\, \widetilde{S}_B(g)\, R_B^{-1} \,,
    \end{align}
    where $O_A(g)$ and $O_B(g)$ are orthogonal and symplectic for all $g \in G$. 
    
    In summary, there exists a symplectic transformation
    \begin{equation}\label{eq:KAKBT}
        K \coloneqq (R_A \oplus R_B)\ T \,,
    \end{equation}
    such that (i) it maps the subspaces $F$ and $F^\perp$ to fully independent sets of bosonic modes, and (ii) it transforms the representation $S(g)$ into an orthogonal symplectic (passive) representation on each subsystem,
    \begin{equation}\label{eq:KSK}
        K S(g) K^{-1} = O(g) = O_A(g) \oplus O_B(g) \,,
    \end{equation}
    where the direct sum in both Eq.(\ref{eq:KAKBT}) and Eq.(\ref{eq:KSK}) is taken with respect to $\mathbb{R}^{2n} = A \oplus B$. 
    
    Since $A$ and $B$ are orthogonal  with respect to both Euclidean and symplectic forms, there exists a basis 
    $(\bmr_1, \bmr_2, \dots, \bmr_{2a}, \dots, \bmr_{2n})$
    where $\{\bmr_1, \dots, \bmr_{2 a}\}$ and $\{\bmr_{2a+1}, \dots, \bmr_{2n}\}$  span $A$ and $B$, respectively, and satisfy 
    \begin{equation*}
        \bmr_{2i}^{\mathrm{T}} \Omega_n \bmr_{2j} 
        = \bmr_{2i-1}^{\mathrm{T}} \Omega_n \bmr_{2j-1} = 0 \,, 
        \qquad 
        \bmr_{2i}^{\mathrm{T}} \Omega_n \bmr_{2j-1} = \delta_{ij} \,, 
        \quad i,j = 1,\dots,n \,.
    \end{equation*}
    Based on this basis, one can define the complex basis ${\bma}$, as  
    \begin{equation}
    {\bma}_j=\frac{\bmr_{2j-1}+i \bmr_{2j}}{\sqrt{2}}\ \ \ \,,\ \ \ \ {\bma}_{n+j}=\bma_j^*=\frac{\bmr_{2j-1}-i \bmr_{2j}}{\sqrt{2}} \,,
    \end{equation}
    for $j=1,\cdots, n$. Relative to this basis, the subspaces $A$ and $B$ correspond to basis elements 
    $\{\bma_j, \bma^*_j={\bma}_{n+j}: j = 1, \dots, a\}$, and $\{\bma_j, {\bma}^*_j={\bma}_{n+j}: j = a+1, \dots, n\}$, respectively. 
    
    Then, relative to this basis we obtain 
    \begin{equation}
        K S(g) K^{-1} = O(g) \overset{\bma}{=} U(g)\oplus U^{*}(g) \,,
    \end{equation}
    where $U(g)$ is a unitary transformation on $\mathbb{C}^{n}$, satisfying
    \begin{equation}
        U(g)=U_A(g)\oplus U_B(g) \,,
    \end{equation}
    for all $g\in G$, and
    \begin{equation}
        O_{A}(g) \overset{\bma}{=} U_{A}(g) \oplus U_{A}(g)^* \,,
    \end{equation}
    where $U_{A}(g)$ is a unitary transformation on $\mathbb{C}^{a}$, for all $g \in G$. 
    A similar decomposition yields $O_{B}(g) \overset{\bma}{=} U_{B}(g) \oplus U_{B}(g)^*$, where $U_{B}(g)$ is a unitary transformation on  $\mathbb{C}^{n-a}$, for all $g \in G$. 
    
    Further decomposing into irreps of $G$, we obtain
    \begin{equation}
        U_A(g) = \bigoplus_{\lambda \in \Lambda} U_\lambda(g) \otimes I_{m_A(\lambda)} \,,
    \end{equation}
    where $\Lambda$ is the set of all irreps of $G$, and $m_A(\lambda)$ denotes the multiplicity of irrep $\lambda$ in $U_A(g)$.
    
    In summary, we found 
    \begin{equation}
        K S(g) K^{-1} = O(g)=O_A(g) \oplus O_B(g) \overset{\bma}{=} U(g)\oplus U^{*}(g) \,,
    \end{equation}
    where $U(g)=U_A(g)\oplus U_B(g)$. It follows that the  multiplicity of $\lambda$ in $U(g): g\in G$ is 
    \begin{equation}
        m(\lambda)=m_{A}(\lambda)+m_{B}(\lambda) \,,
    \end{equation}
    where $m_{A}(\lambda)$ and $m_{B}(\lambda)$ are multiplicities of irrep $\lambda$ in representations $U_{A}$ and $U_B$, respectively. 
    
    Recall that the above decomposition and symplectic transformation $K$ depend on the original symplectic decomposition $\mathbb{R}^{2n}=F\oplus F^\perp$. In particular, choosing different symplectic decompositions 
    $\mathbb{R}^{2n}=F\oplus F^\perp$,  will result in different multiplicities $m_{A}(\lambda)$ and $ m_{B}(\lambda)$. However, part (ii) of Proposition~\ref{prop:squ} guarantees that their sum $m(\lambda)=m_{A}(\lambda)+m_{B}(\lambda)$ is independent of the decomposition.
    \vspace{10pt}
    
    Now we apply this to the subspaces $F_{\text{in}}$ and $F_{\text{out}}$ given in the statement of the lemma. Let $\mathbb{R}^{2n}=A_\text{in}\oplus B_\text{in}$ and $\mathbb{R}^{2n}=A_\text{out}\oplus B_\text{out}$ be the decompositions obtained from  symplectic  subspaces $F_{\text{in}}$ and $F_{\text{out}}$, respectively, such that
    \bes\label{eq:KSKOO}
    \begin{align}
        K_{\text{in}} S(g) K^{-1}_{\text{in}}&=O_{A_{\text{in}}}(g) \oplus O_{B_{\text{in}}}(g) \\ 
        K_{\text{out}} S(g) K^{-1}_{\text{out}}&=O_{A_{\text{out}}}(g) \oplus O_{B_{\text{out}}}(g) \,,
    \end{align}
    \ees
    for all $g\in G$. Then, for all $\lambda\in\Lambda$,
    \begin{equation}\label{eq:msum}
        m_{A_\text{in}}(\lambda) + m_{B_\text{in}}(\lambda) = m_{A_\text{out}}(\lambda) + m_{B_\text{out}}(\lambda)
    \end{equation}
    where $m_{A_\text{in}}(\lambda)$ is the multiplicity of irrep $\lambda$ in 
    $A_\text{in}$, the subspace obtained from $F_{\text{in}}$ via Eq.(\ref{eq:tre}), and $m_{A_\text{out}}(\lambda)$, $m_{A_\text{out}}(\lambda)$, $m_{B_\text{out}}(\lambda)$, are defined in a similar fashion.
    
    To construct the extension of $G$--invariant symplectic intertwiner $V$ to $W$, we consider the chain of transformations
    \begin{equation}
        F_\text{in}\oplus F^\perp_\text{in}
        \xlongrightarrow{K_{\text{in}}} A_\text{in} \oplus B_\text{in}
        \xlongrightarrow{\widetilde{W}} A_\text{out} \oplus B_\text{out}
        \xlongrightarrow{K^{-1}_{\text{out}}} F_\text{out} \oplus F^\perp_\text{out} \,,
    \end{equation}
    where $\widetilde{W}$ is a symplectic, $G$--invariant transformation defined below in Eq.(\ref{eq:WVQ}). The corresponding representations are 
    \begin{equation}
        S(g) 
        \xlongrightarrow{K_{\text{in}}} O_{A_\text{in}}(g) \oplus O_{B_\text{in}}(g)
        \xlongrightarrow{\widetilde{W}} O_{A_\text{out}}(g)\oplus O_{B_\text{out}}(g)
        \xlongrightarrow{K^{-1}_{\text{out}}} S(g) \,,
    \end{equation}
    To construct $\widetilde{W}$, first recall that, by assumption, $V: F_{\text{in}}\rightarrow F_{\text{out}}$ is a symplectic intertwiner commuting with $S(g)$, i.e.
    \begin{equation}
        V S(g) \Pi_{F_{\text{in}}} = S(g) V \Pi_{F_{\text{in}}} \,,
    \end{equation}
    which, in turn, implies
    \begin{equation}
        V K^{-1}_{\text{in}} (K_{\text{in}} S(g) K_{\text{in}}^{-1}) \Pi_{A_{\text{in}}} = K^{-1}_{\text{out}} (K_{\text{out}} S(g) K_{\text{out}}^{-1}) K_{\text{out}} V K^{-1}_{\text{in}} \Pi_{A_{\text{in}}}
    \end{equation}
    or, equivalently,
    \begin{equation}
        \widetilde{V} O_{A_{\text{in}}}(g) = O_{A_{\text{out}}}(g) \widetilde{V} 
    \end{equation}
    for all $g\in G$, where
    \begin{equation}\label{eq:KVK}
        \widetilde{V} \coloneqq K_{\text{out}} V K^{-1}_{\text{in}} \,,
    \end{equation}
    maps $A_{\text{in}}$ to $A_{\text{out}}$, and preserves the symplectic form. That is,
    \begin{equation}
        O_{A_{\text{in}}}(g) \Pi_{A_{\text{in}}} = \widetilde{V}^{-1} O_{A_{\text{out}}}(g) \widetilde{V} \Pi_{A_{\text{in}}}
    \end{equation}
    where $\widetilde{V}^{-1}$ is defined on subspace $A_{\text{out}}$. Then, again part (ii) of Proposition~\ref{prop:squ} implies that the orthogonal symplectic representations $O_{A_{\text{in}}}$ and $O_{A_{\text{out}}}$ are equivalent, up to an orthogonal symplectic transformation, and therefore 
    \begin{equation}
        m_{A_{\text{in}}}(\lambda)=m_{A_{\text{out}}}(\lambda)
    \end{equation}
    for all $\lambda\in\Lambda$.
    Combined with Eq.(\ref{eq:msum}) we find that
    \begin{equation}
        m_{B_{\text{in}}}(\lambda) = m_{B_{\text{out}}}(\lambda) \,,
    \end{equation}
    for all $\lambda\in\Lambda$.
    By Proposition~\ref{prop:mlambda}, this immediately implies that unitary representations $U_{B_{\text{in}}}$ and $U_{B_{\text{out}}}$ are equivalent up to conjugation by a unitary, which, in turn implies that orthogonal symplectic representations $O_{B_{\text{in}}}$ and $O_{B_{\text{out}}}$ are equivalent up to an orthogonal symplectic transformation. That is, there exists an orthogonal symplectic intertwiner $Q: B_{\text{in}} \rightarrow B_{\text{out}}$, such that, for all $g\in G$,
    \begin{equation}
        Q O_{B_{\text{in}}}(g) = O_{B_{\text{out}}}(g) Q \,.
    \end{equation}
    Then, define the symplectic transformation $\widetilde{W} \in \Sp(2n, \mathbb{R})$, as   
    \begin{equation}\label{eq:WVQ}
        \widetilde{W} = \widetilde{V} \oplus Q
    \end{equation}
    where $\widetilde{V}: A_{\text{in}} \rightarrow B_{\text{in}}$ and $Q: A_{\text{out}}\rightarrow B_{\text{out}}$. By definition, on subspace $A_{\text{in}}$, $\widetilde{W}$ is equal to 
    $\widetilde{V}$, i.e.
    \begin{equation}\label{eq:tildeWinVin}
        \widetilde{W}\Pi_{A_{\text{in}}} = \widetilde{V}\Pi_{A_{\text{in}}} \,.
    \end{equation}
    Furthermore, $\widetilde{W}$ is a $G$--invariant symplectic intertwiner,  i.e. for all $g\in G$, it satisfies 
    \bes\label{eq:tildeWGinv}
    \begin{align}
        \widetilde{W} \big(O_{A_{\text{in}}}(g) \oplus O_{B_{\text{in}}}(g)\big) 
        &= (\widetilde{V} \oplus Q) \big(O_{A_{\text{in}}}(g)\oplus O_{B_{\text{in}}}(g)\big) \\ 
        &= \big(O_{A_{\text{out}}}(g) \oplus O_{B_{\text{out}}}(g)\big) (\widetilde{V} \oplus Q)\\ 
        &= \big(O_{A_{\text{out}}}(g) \oplus O_{B_{\text{out}}}(g)\big) \widetilde{W} \,.
        \end{align}
    \ees
    Finally, we go back to the original basis and define the symplectic transformation
    \begin{equation}\label{eq:KWK}
        W \coloneqq K_{\text{out}}^{-1} \widetilde{W} K_{\text{in}} \,.
    \end{equation}
    Then, using Eq.(\ref{eq:KVK}) and Eq.(\ref{eq:KWK}), Eq.(\ref{eq:tildeWinVin}) can be rewritten as
    \begin{equation}
        K_{\text{out}} W K^{-1}_{\text{in}} \Pi_{A_{\text{in}}}= K_{\text{out}} V K^{-1}_{\text{in}} \Pi_{A_{\text{in}}} \,.
    \end{equation}
    By multiplying from left with $K_{\text{out}}^{-1}$ and from right with $K_{\text{in}}$, this implies
    \begin{equation}\label{eq:WinVin}
        W \Pi_{F_{\text{in}}} = V \Pi_{F_{\text{in}}} \,,
    \end{equation}
    i.e. $V$ and $W$ act  identically on subspace $F_{\text{in}}$.
    Furthermore, using Eq.(\ref{eq:KSKOO}), we find that the left-hand side of Eq.(\ref{eq:tildeWGinv}) is equal to
    \begin{align}
        \widetilde{W} \big(O_{A_{\text{in}}}(g)\oplus O_{B_{\text{in}}}(g)\big) = \widetilde{W} K_{\text{in}} S(g) K^{-1}_{\text{in}} = K_{\text{out}} W S(g) K^{-1}_{\text{in}} \,,
    \end{align}
    and the right-hand side of Eq.(\ref{eq:tildeWGinv}) is equal to
    \begin{align}
        \big(O_{A_{\text{out}}}(g)\oplus O_{B_{\text{out}}}(g)\big) \widetilde{W} = K_{\text{out}} S(g) K^{-1}_{\text{out}} \widetilde{W}=K_{\text{out}} S(g) W K^{-1}_{\text{in}} \,.
    \end{align}
    We conclude that the symplectic transformation $W$ is $G$--invariant, i.e. it satisfies 
    $S(g) W = W S(g)$ for all $g \in G$, and $W \Pi_{F_{\text{in}}} = V \Pi_{F_{\text{in}}}$.
    In summary, this map is symplectic, $G$--invariant and as shown in Eq.(\ref{eq:WinVin}), its action on $F_{\text{in}}$ is equal to the action of $V$. This completes the proof.
\end{proof}

\section{Covariant Gaussian dilation}
\label{app:dilation}

In this section, we derive the channel dilation as stated in Theorem~\ref{thm:cov_gauss_dilation}.
Our derivation works in both the discrete and continuous setting, and can be extended to hold in the presence of symmetries.
We start our discussion by motivating a rigorous definition of Gaussian channels (in the absence of symmetries) as completely Gaussianity-preserving CPTP operations.
We then prove that a channel is a (covariant) Gaussian channel if and only if it admits a (covariant) Gaussian Stinespring dilation.
We provide examples of our dilation on 1 mode and in a discrete setting.

\subsection{Definition of Gaussian channels}
\label{app:channel_def}

Several recent works~\cite{serafini2017quantum,weedbrook2012gaussian} introduce Gaussian channels by automatically adopting their Gaussian dilation as a defining property. 
We say that a quantum channel $\E: \B(\H_A) \rightarrow \B(\H_B)$ admits a Gaussian dilation if there exists a Gaussian state $\hat{\tau}_{BE} \in \B(\H_B \otimes \H_E)$, where $E$ is the environment, and a symplectic Gaussian unitary $\hat{V}_{ABE}$ acting on the composite system $ABE$, such that
\begin{equation}\label{eq:gauss_dilation}
    \E(\hat{\rho}_A) = \tr_{AE}\left[ \hat{V}_{ABE}(\hat{\rho}_A \otimes \hat{\tau}_{BE})\hat{V}_{ABE}^\dag \right] \,,
\end{equation}
for all input states $\hat{\rho}_A \in \B(\H_A)$.
Proposition~\ref{prop:cov_E2XY} shows any such dilation immediately implies that $\E$ preserves the Gaussian character of quantum states and acts in phase space via matrices $X,Y$ according to Eq.(\ref{eq:xychannel}).

A more axiomatic approach~\cite{demoen1977completely,demoen1979completely,lindblad2000cloning,holevo2001evaluating,caruso2008multi} involves shifting the focus from the Schr{\"o}dinger to the Heisenberg picture by defining the dual $\E^\dag$ of a channel $\E$ via the implicit relation $\tr[\E(\rho)\hat{O}] = \tr[\rho\E^\dag(\hat{O})]$.
Then, one can define Gaussian channels as follows.
\begin{definition}\label{def:gauss_channel}
    A bosonic quantum channel $\E: \B(\H_A) \rightarrow \B(\H_B)$ is called a Gaussian channel if its dual channel $\E^\dag$ maps displacement operators to displacement operators according to
    \begin{equation}\label{eq:dual_channel}
        \E^\dag(D_{\bmd}) \overset{\bmr}{=} D_{X^{\rm T}\bmd}\ \exp(-\frac{1}{2}\bmd^{\rm T} Y \bmd + i\bmd^{\rm T}\bmxi) \,,
    \end{equation}
    for real vector $\bmxi \in \mathbb{R}^{2n_B}$, $(2n_B \times 2n_A)$ real matrix $X$ and $(2n_B \times 2n_B)$ real matrix $Y$, where $n_A$ and $n_B$ are the numbers of modes in systems $A$ and $B$ respectively.
\end{definition}
From this definition, one can directly recover the channel uncertainty relation (Ineq.(\ref{eq:channel_uncertainty})) and the action of $\E$ in phase space (Eq.(\ref{eq:xychannel}))~\cite{lindblad2000cloning,demoen1977completely}.
The phase $f(\bmd) \coloneqq \exp(-\frac{1}{2}\bmd^{\rm T} Y \bmd + i\bmd^{\rm T}\bmxi)$ is a Gaussian function of $\bmd$, so the characteristic function~\cite{barnett2002methods} of output $\E(\rho)$ is Gaussian whenever the characteristic function of input $\rho$ is Gaussian.
As a consequence, $\E$ sends Gaussian states to Gaussian states.
In fact, $\E$ is completely Gaussianity-preserving, i.e. for any environment $E$, $(\E \otimes \I_E) (\hat{\rho}_{AE})$ is Gaussian whenever $\hat{\rho}_{AE}$ is Gaussian, where the phase space action of $\E \otimes \I_E$ is characterized by vector $\bmxi \oplus 0$, and matrices $X \oplus I$ and $Y \oplus 0$~\cite{serafini2017quantum}.

\subsection{Covariant Gaussian Stinespring dilation}
\label{app:dilation_proof}

We first show that every covariant Gaussian dilation is a covariant Gaussian channel.
\begin{proposition}\label{prop:cov_E2XY}
    Consider a pair of systems $A$ and $B$ with a finite number of bosonic modes, and Hilbert spaces $\H_A$ and $\H_B$. 
    Suppose each system is equipped with a symplectic Gaussian representation of a group $G$ that admits an invariant state, denoted as $\hat{S}_A(g): g\in G$, and $\hat{S}_B(g): g\in G$, respectively (Recall that such representations are characterized in Proposition~\ref{prop:squ}).

    Let $\E: \B(\H_A) \rightarrow \B(\H_B)$ be a linear map defined by the dilation
    \begin{equation}\label{eq:cov_dilation}
        \E(\hat{\rho}_A) = \tr_{A\overbar{B}}\left[ \hat{V}_{AB\overbar{B}}(\hat{\rho}_A \otimes \hat{\tau}_{B\overbar{B}})\hat{V}_{AB\overbar{B}}^\dag \right] \,,\quad \text{ for all } \hat{\rho}_A\in\B(\H_A) \,,
    \end{equation}
    where $\hat{\tau}_{B\overbar{B}} \in \B(\H_B \otimes \H_{\overbar{B}})$ is a Gaussian state which commutes with representation $\hat{S}_B \otimes \hat{S}_{\overbar{B}}$ and $\hat{V}_{AB\overbar{B}}$ is a symplectic Gaussian unitary which commutes with representation $\hat{S}_{A} \otimes \hat{S}_B \otimes \hat{S}_{\overbar{B}}$.
    Then, $\E$ is a covariant Gaussian channel under input/output representations $\hat{S}_A, \hat{S}_B$, i.e. $\E(\hat{S}_A(g) (\cdot) \hat{S}_A(g)^\dag) = \hat{S}_B(g) \E(\cdot) \hat{S}_B(g)^\dag$ for all $g \in G$.
    
    Furthermore, there exist real matrices $X,Y$ that satisfy 
    \begin{enumerate}
        \item[(i)] consistency with dilation, $V (\sigma \oplus \sigma_\tau) V^{\rm T} \bigg|_B = X\sigma X^{\rm T} + Y$, where $\sigma_\tau$ is the covariance matrix of state $\hat{\tau}$;
        \item[(ii)] the uncertainty relation, $Y + i(\Omega_B - X \Omega_A X^{\rm T}) \ge 0$, where $\Omega_A, \Omega_B$ are the symplectic forms on $A, B$;
        \item[(iii)] the covariance conditions, $S_B(g)XS_A(g)^{-1} = X$ and $S_B(g) Y S_B(g)^{\rm T} = Y$ for all $g \in G$.
    \end{enumerate}
\end{proposition}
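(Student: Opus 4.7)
The approach is to decode the phase-space content of the dilation in Eq.(\ref{eq:cov_dilation}). Since $\hat{V}$ is a symplectic Gaussian unitary and $\hat{\tau}$ is Gaussian, applying $\hat{V}$ to $\hat{\rho}_A \otimes \hat{\tau}$ preserves Gaussianity whenever $\hat{\rho}_A$ is Gaussian, and the partial trace over $A\overbar{B}$ corresponds in phase space to restricting the joint covariance matrix and displacement vector to the $B$ coordinates. The same argument extends when $\E$ acts on a subsystem of a larger Gaussian state, by simply extending the dilating unitary with the identity on any additional environment, so $\E$ is completely Gaussianity-preserving and thus a Gaussian channel in the sense of Definition~\ref{def:gauss_channel}.

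I would then block-decompose the symplectic matrix $V$ associated with $\hat{V}$ according to the tripartition $A \oplus B \oplus \overbar{B}$. Writing $X$ for the $B|A$ block of $V$ and $Z$ for the $B|B\overbar{B}$ block, and setting $\bmxi \coloneqq Z \bmd_\tau$ and $Y \coloneqq Z \sigma_\tau Z^{\rm T}$, direct expansion of $V(\sigma \oplus \sigma_\tau)V^{\rm T}$ restricted to $B$ yields the output displacement $X \bmd + \bmxi$ and covariance $X\sigma X^{\rm T} + Y$, establishing property (i). For the uncertainty relation (ii), I would use the symplectic identity $V \Omega_{\rm tot} V^{\rm T} = \Omega_{\rm tot}$, where $\Omega_{\rm tot} = \Omega_A \oplus \Omega_B \oplus \Omega_{\overbar{B}}$. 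Projecting this identity onto the $B$ subspace yields $\Omega_B - X\Omega_A X^{\rm T} = Z(\Omega_B \oplus \Omega_{\overbar{B}}) Z^{\rm T}$, whence
\begin{equation*}
   Y + i(\Omega_B - X\Omega_A X^{\rm T}) = Z\big(\sigma_\tau + i(\Omega_B \oplus \Omega_{\overbar{B}})\big) Z^{\rm T} \ge 0 \,,
\end{equation*}
with the final inequality being the uncertainty relation for $\sigma_\tau$ preserved under congruence by a real matrix.

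For the covariance statements, I would translate the $G$-invariance of $\hat{V}$ into the phase-space identity $V S(g) = S(g) V$ with $S(g) = S_A(g) \oplus S_B(g) \oplus S_{\overbar{B}}(g)$ via Proposition~\ref{prop:symmetry_phasespace}, and sandwich between the appropriate projectors to extract $X S_A(g) = S_B(g) X$ and $Z(S_B(g) \oplus S_{\overbar{B}}(g)) = S_B(g) Z$. Combined with the $G$-invariance of $\sigma_\tau$ and $\bmd_\tau$, which follows from $\hat{\tau}$ commuting with $\hat{S}_B \otimes \hat{S}_{\overbar{B}}$, these give $S_B(g) Y S_B(g)^{\rm T} = Y$ and $S_B(g) \bmxi = \bmxi$, proving property (iii). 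The covariance of $\E$ itself then follows either from a direct operator-level computation—the tensor-product representation commutes with $\hat{V}$ and leaves $\hat{\tau}$ invariant, so the symmetry threads cleanly through the dilation and the partial trace—or equivalently by applying Proposition~\ref{prop:symmetry_phasespace}(iv) to the matrices $X, Y, \bmxi$ just derived.

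The step I expect to require the most care is the derivation of the uncertainty relation in (ii): the specific combination $\Omega_B - X\Omega_A X^{\rm T}$ is not manifest from first principles but emerges precisely from the symplectic identity for $V$ projected onto the $B$ subspace, which is what ties together the ancillary contribution $Z \sigma_\tau Z^{\rm T}$ and the input-to-output block $X$. The remaining steps are an exercise in block-matrix bookkeeping supported by results already established, namely Proposition~\ref{prop:squ} guaranteeing the existence of an invariant $\hat{\tau}$ and Proposition~\ref{prop:symmetry_phasespace} providing the operator/phase-space dictionary for invariance.
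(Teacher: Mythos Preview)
Your proof is correct and takes a genuinely different, more constructive route than the paper. The paper first establishes that $\E$ is a Gaussian channel (as a composition of Gaussian channels) and that it is covariant via a direct operator-level calculation threading the representation through the dilation; it then invokes the general theory of Gaussian channels to assert the existence of $X,Y$ satisfying (i) and (ii), and appeals to Proposition~\ref{prop:symmetry_phasespace}(iv) to deduce (iii) from the already-established covariance.

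You instead work bottom-up: you block-decompose the symplectic matrix $V$ with respect to $A\oplus B\oplus\overbar{B}$, explicitly identify $X$ as the $B|A$ block and $Y=Z\sigma_\tau Z^{\rm T}$ with $Z$ the $B|B\overbar{B}$ block, and derive (i)--(iii) by direct computation, including a clean derivation of (ii) from the symplectic identity $V\Omega_{\rm tot}V^{\rm T}=\Omega_{\rm tot}$ restricted to $B$ combined with the uncertainty relation for $\sigma_\tau$. Your approach is more self-contained and yields explicit formulas for $X,Y,\bmxi$ in terms of the dilation data, whereas the paper's approach is shorter by outsourcing (ii) and (iii) to previously established machinery. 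Both are valid; yours gives more insight into how the channel uncertainty relation arises from the ancilla's physicality.
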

\begin{proof}
    The dilation defined in Eq.(\ref{eq:cov_dilation}) is a Gaussian quantum channel as a composition of Gaussian quantum channels according to Definition~\ref{def:gauss_channel}, and it is also covariant in the sense that
    \begin{align}
        \E(\hat{S}_A\hat{\rho}_A\hat{S}_A^\dag) 
        &= \tr_{A\overbar{B}}\left[ \hat{V}_{AB\overbar{B}}(\hat{S}_A\hat{\rho}_A\hat{S}_A^\dag \otimes \hat{\tau}_{B\overbar{B}})\hat{V}_{AB\overbar{B}}^\dag \right] \nonumber\\
        &= \tr_{A\overbar{B}}\left[ (\hat{S}_{A} \otimes \hat{S}_B \otimes \hat{S}_{\overbar{B}}) \hat{V}_{AB\overbar{B}}(\hat{\rho}_A \otimes \hat{\tau}_{B\overbar{B}})\hat{V}_{AB\overbar{B}}^\dag (\hat{S}_{A} \otimes \hat{S}_B \otimes \hat{S}_{\overbar{B}})^\dag \right] \nonumber\\
        &= \hat{S}_B \tr_{A\overbar{B}}\left[ \hat{V}_{AB\overbar{B}}(\hat{\rho}_A \otimes \hat{\tau}_{B\overbar{B}})\hat{V}_{AB\overbar{B}}^\dag \right] \hat{S}_B^\dag \nonumber\\ 
        &= \hat{S}_B\E(\hat{\rho}_A) \hat{S}_B^\dag \,.
    \end{align}
    The first and last equalities follow from the map definition in Eq.(\ref{eq:cov_dilation}), the second equality follows from the invariance conditions on $\hat{V}_{AB\overbar{B}}$ and $\hat{\tau}_{B\overbar{B}}$, and the third equality follows from properties of the partial trace.

    Since $\E$ is a Gaussian channel, it is fully characterized by two real matrices $X,Y$.
    The dilation condition $(i)$ follows directly by this equivalent characterization of the channel, the uncertainty relation $(ii)$ follows from the Gaussianity of the channel, and the invariance conditions $(iii)$ follow from the covariance of the channel according to statement $(iv)$ in Proposition~\ref{prop:symmetry_phasespace}.
\end{proof}

We now proceed to prove the converse of Proposition~\ref{prop:cov_E2XY}, which informally states that every covariant Gaussian channel $\E: \B(\H_A) \rightarrow \B(\H_B)$ admits a covariant dilation.
The dilation requires an ancilla of $2n_B$ modes, where $n_B$ is the number of modes in the output system $B$, prepared in a pure state.

\thmdilation*
\begin{proof}
    First, consider an ancillary system with the same number of modes as $A$, denoted by $\overbar{A}$. 
    Let $\ket{\psi}_{A\overbar{A}}$ be an arbitrary Gaussian pure state on $A\overbar{A}$ such that its reduced state $\rho_A = \tr_{\overbar{A}}\left[\ketbra{\psi}_{A\overbar{A}}\right]$ on $A$ is full rank.

    Next, act with channel $\E$ on subsystem $A$, to obtain the state
    \begin{equation}
        \hat{\tau}_{B\overbar{A}} = (\E \otimes \I)(\ketbra{\psi}_{A\overbar{A}}) \,,
    \end{equation}
    where $\I(\cdot) = \hat{I}_{\overbar{A}}(\cdot)\hat{I}_{\overbar{A}}$ is the identity channel on $\overbar{A}$.
    Lemma~\ref{lem:inv_gauss_purification} guarantees that $\hat{\tau}_{B\overbar{A}}$ has a Gaussian purification
    \begin{equation}\label{eq:purifPhi}
        \ket{\Phi}_{A\overbar{A}B\overbar{B}} \in \H_A \otimes \H_{\overbar{A}} \otimes \H_B \otimes \H_{\overbar{B}} \,,
    \end{equation}
    such that the reduced state on $B\overbar{A}$ is $\hat{\tau}_{B\overbar{A}}$, i.e.
    \begin{equation}
        \hat{\tau}_{B\overbar{A}} = \tr_{A\overbar{B}}[\ket{\Phi}\bra{\Phi}_{A\overbar{A}B\overbar{B}}] \,,
    \end{equation}
    where $\overbar{B}$ has, at most, the same number of modes as $B$.
    
    Moreover, consider the Gaussian state
    \begin{equation}\label{eq:purifPsi}
        \ket{\Psi}_{A\overbar{A}B\overbar{B}} = \ket{\psi}_{A\overbar{A}} \otimes \ket{\eta}_{B\overbar{B}} \in \H_A \otimes \H_{\overbar{A}} \otimes \H_B \otimes \H_{\overbar{B}} \,,
    \end{equation}
    where $\ket{\eta}_{B\overbar{B}}$ is some pure state.
    The reduced states of $\ket{\Phi}_{A\overbar{A}B\overbar{B}}$ and $\ket{\Psi}_{A\overbar{A}B\overbar{B}}$ on $\overbar{A}$ are the same, so Lemma~\ref{lem:inv_gauss_purification} states that there exists a Gaussian unitary $\hat{V}_{AB\overbar{B}}$ on the complement system $AB\overbar{B}$ such that
    \begin{equation}\label{eq:PhiPsi}
        \ket{\Phi}_{A\overbar{A}B\overbar{B}} = \left(\hat{I}_{\overbar{A}} \otimes \hat{V}_{AB\overbar{B}}\right) \ket{\Psi}_{A\overbar{A}B\overbar{B}} \,.
    \end{equation}
    Finally, define the linear map
    \begin{equation}\label{eq:Fchannel}
        \F(\cdot) = \tr_{A\overbar{B}}\left[ \hat{V}_{AB\overbar{B}}((\cdot) \otimes \ketbra{\eta}{\eta}_{B\overbar{B}})\hat{V}_{AB\overbar{B}}^\dag \right] \,,
    \end{equation}
    and observe that
    \begin{align}
        (\F \otimes \I)(\ketbra{\psi}_{A\overbar{A}})
        &= \tr_{A\overbar{B}}\left[ \left(\hat{I}_{\overbar{A}} \otimes \hat{V}_{AB\overbar{B}}\right) \left(\ketbra{\psi}_{A\overbar{A}} \otimes \ketbra{\eta}{\eta}_{B\overbar{B}}\right) \left(\hat{I}_{\overbar{A}} \otimes \hat{V}_{AB\overbar{B}}\right)^\dag \right] \nonumber\\
        &= \tr_{A\overbar{B}}\left[ \left(\hat{I}_{\overbar{A}} \otimes \hat{V}_{AB\overbar{B}}\right) \ketbra{\Psi}_{A\overbar{A}B\overbar{B}} \left(\hat{I}_{\overbar{A}} \otimes \hat{V}_{AB\overbar{B}}\right)^\dag \right] \nonumber\\
        &= \tr_{A\overbar{B}}\left[ \ketbra{\Phi}_{A\overbar{A}B\overbar{B}} \right] \nonumber\\
        &= \hat{\tau}_{B\overbar{A}} \nonumber\\
        &= (\E \otimes \I)(\ketbra{\psi}_{A\overbar{A}}) \label{eq:PhiIPsiI} \,,
    \end{align}
    where all equalities follow from the definitions given earlier.
    Since the reduced state $\hat{\rho}_A$ of state $\ket{\psi}_{A\overbar{A}}$ is full-rank, we conclude the equality of channels $\E = \F$. 
    This can be seen, for instance, by noting that for any bounded operator $\hat{M}_A$ on system $A$, there exists a bounded operator $\hat{N}_{\overbar{A}}$ on system $\overbar{A}$ such that
    \begin{equation}\label{eq:MANAbar}
        \hat{M}_A = \tr_{\overbar{A}}\left[\left(\hat{I} \otimes \hat{N}_{\overbar{A}}\right)\ketbra{\psi}_{A\overbar{A}}\right] \,.
    \end{equation} 
    Namely, one can choose $\hat{N} = \hat{\rho}^{-1/2} \hat{M}^{\rm T} \hat{\rho}^{-1/2}$, where $\hat{M}^{\rm T}$ is the transpose of $\hat{M}$ with respect to an arbitrary orthonormal basis, e.g. the Fock basis.
    Therefore, for all bounded operators $\hat{M}_A$,
    \begin{equation}
        \E(\hat{M}_A) 
        = \tr_{\overbar{A}}\left[(\E \otimes \I)(\ketbra{\psi}_{A\overbar{A}})\left(\hat{I} \otimes \hat{N}_{\overbar{A}}\right)\right]
        = \tr_{\overbar{A}}\left[(\F \otimes \I)(\ketbra{\psi}_{A\overbar{A}})\left(\hat{I} \otimes \hat{N}_{\overbar{A}}\right)\right]
        = \F(\hat{M}_A) \,,
    \end{equation}
    where the first and third equalities follow from Eq.(\ref{eq:MANAbar}) and the second equality from Eq.(\ref{eq:PhiIPsiI}).
    
    Next, we argue that if $\mathcal{E}$ is $G$--covariant, then all the steps in this construction can be chosen to respect the symmetry.
    To obtain $G$--invariant purifications, we choose the symmetry representations on systems $\overbar{A}$ and $\overbar{B}$ to be the complex conjugate of the representations on $A$ and $B$, respectively, i.e.
    \begin{equation}
        \hat{S}_{\overbar{A}} = \hat{S}^*_A \,,\quad\quad \hat{S}_{\overbar{B}} = \hat{S}^*_B \,,
    \end{equation}
    where the complex conjugate is defined in the common eigenbasis of position operators.
    Then, due to Lemma~\ref{lem:inv_gauss_purification2}, state $\ket{\psi}_{A\overbar{A}}$ can be chosen to be $G$--invariant, i.e.
    \begin{equation}
        \left(\hat{S}_{A}(g) \otimes \hat{S}_{\overbar{A}}(g)\right)\ket{\psi}_{A\overbar{A}} = \ket{\psi}_{A\overbar{A}} \,,
    \end{equation}
    for all $g \in G$, where the reduced state on $A$ is full rank.
    The $G$--covariance of $\mathcal{E}$  implies that the output state $\hat{\tau}_{B\overbar{A}}$ is also $G$--invariant, and therefore, by Lemma~\ref{lem:inv_gauss_purification2},  it has a $G$--invariant Gaussian purification $\ket{\Phi}_{A\overbar{A}B\overbar{B}}$ satisfying Eq.(\ref{eq:purifPhi}).
    
    Consider the second purification $\ket{\Psi}_{A\overbar{A}B\overbar{B}}$ of the reduced state on $\overbar{A}$, provided in Eq.(\ref{eq:purifPsi}).
    As we saw in Proposition~\ref{prop:squ}, since both $\hat{S}_{B}$ and $\hat{S}_{\overbar{B}}$ admit $G$--invariant states, they also admit $G$--invariant pure Gaussian states $\ket{\eta_1}_{B}$ and $\ket{\eta_2}_{\overbar{B}}$ such that
    \begin{equation}
        \hat{S}_{B} \ket{\eta_1}_{B} = \ket{\eta_1}_{B} \,,\quad\quad \hat{S}_{\overbar{B}} \ket{\eta_2}_{\overbar{B}} = \ket{\eta_2}_{\overbar{B}} \,,
    \end{equation}
    for all $g\in G$.
    Choosing $\ket{\eta}_{B\overbar{B}} = \ket{\eta_1}_{B} \otimes \ket{\eta_2}_{\overbar{B}}$ thus gives another $G$--invariant Gaussian purification of $\ket{\Psi}_{A\overbar{A}B\overbar{B}}$.
  
    Finally, according to Lemma~\ref{lem:inv_gauss_purification2}, there exists a symplectic Gaussian unitary $\hat{V}_{AB\overbar{B}}$ that commutes with the representation of symmetry on $AB\overbar{B}$, i.e. $\hat{S}_A \otimes \hat{S}_B \otimes \hat{S}_{\overbar{B}}$, and converts one purification to the other, as described in Eq.(\ref{eq:PhiPsi}).
    Hence, Eq.(\ref{eq:Fchannel}) and Eq.(\ref{eq:PhiIPsiI}) remain valid, so Eq.(\ref{eq:Fchannel}) provides a dilation of the $G$--covariant channel $\mathcal{E}$.
\end{proof}

Note that we are free to choose any (invariant) full-rank state $\hat{\rho}$ to construct the dilation in the proof of Theorem~\ref{thm:cov_gauss_dilation}.
Consider the Stinespring dilation of a channel tensor product $\E_1 \otimes \E_2: \B(\H_{A_1} \otimes \H_{A_2}) \rightarrow \B(\H_{B_1} \otimes \H_{B_2})$,
\begin{align}
    (\E_1 \otimes \E_2)((\cdot)_{A_1A_2}) = \tr_{A_1A_2\overbar{B}_1\overbar{B}_2}\left[ \hat{V}_{A_1A_2B_1B_2\overbar{B}_1\overbar{B}_2}((\cdot)_{A_1A_2} \otimes \ketbra{\eta}{\eta}_{B_1B_2\overbar{B}_1\overbar{B}_2})\hat{V}_{A_1A_2B_1B_2\overbar{B}_1\overbar{B}_2}^\dag \right]\,,
\end{align}
Choosing the full-rank, tensor-product state $\hat{\rho}_{A_1} \otimes \hat{\rho}_{A_2} \in \B(\H_{A_1} \otimes \H_{A_2})$ leads to a tensor product form for the unitary
\begin{equation}
    \hat{V}_{A_1A_2B_1B_2\overbar{B}_1\overbar{B}_2} = \hat{V}^{(1)}_{A_1B_1\overbar{B}_1} \otimes \hat{V}^{(2)}_{A_2B_2\overbar{B}_2} \,.
\end{equation}
The Stinespring dilation of a channel composition $\E' \circ \E$, where $\E: \B(\H_A) \rightarrow \B(\H_B)$ and $\E': \B(\H_B) \rightarrow \B(\H_C)$, can be written as follows,
\begin{align}
    (\E' \circ \E)((\cdot)_A) &= \E'\left( \tr_{A\overbar{B}}\left[ \hat{V}_{AB\overbar{B}}(\hat{\rho}_A \otimes \ketbra{\eta}{\eta}_{B\overbar{B}})\hat{V}_{AB\overbar{B}}^\dag \right] \right) \nonumber\\
    &= \tr_{B\overbar{C}}\left[ \hat{V}_{BC\overbar{C}}' \left( \tr_{A\overbar{B}}\left[ \hat{V}_{AB\overbar{B}}((\cdot)_A \otimes \ketbra{\eta}{\eta}_{B\overbar{B}})\hat{V}_{AB\overbar{B}}^\dag \right] \otimes \ketbra{\eta}{\eta}_{C\overbar{C}} \right)\hat{V}_{BC\overbar{C}}'^\dag \right] \nonumber\\
    &= \tr_{AD}\left[ \hat{V}''_{ACD}((\cdot)_A  \otimes \ketbra{\eta}{\eta}_{CD}) \hat{V}_{ACD}''^\dag \right] \,, \label{eq:dilation_composition}
\end{align}
where the ancilla system is $D = B\overbar{B}\overbar{C}$ and $\hat{V}''_{ACD} \coloneqq \hat{V}_{BC\overbar{C}}'  \hat{V}_{AB\overbar{B}}$.

\subsection{Phase-insensitive Gaussian channels on 1 mode}
\label{app:dilation_1mode}

We use our dilation theorem to construct dilations for all 1--mode phase-insensitive channels\footnote{Throughout this section, every covariance matrix $\sigma_{\hat{\rho}}$ of composite state $\hat{\rho}_{A_1 \dots A_n}$ is represented in the basis $\bigoplus_{j=1}^n \hat{\bmr}_{A_j}$ .}.
Any such channel $\E: \B(\H_A) \rightarrow \B(\H_B) \cong \B(\H_A)$, is defined by real matrices $X = xI$ and $Y = yI$, where $y \ge |x^2-1|$~\cite{holevo2007one,giovannetti2014ultimate}.
Pick the full-rank thermal density operator
\begin{equation}\label{eq:thermal}
    \hat{\rho}_A = \frac{\exp(-\beta \hat{a}^\dag\hat{a})}{\tr[\exp(-\beta \hat{a}^\dag\hat{a})]} = \sum_{k=0}^{\infty} \frac{\bar{n}^k}{(\bar{n}+1)^{k+1}} \ketbra{k}{k} \,,
\end{equation}
with zero first moment and $\sigma = cI$, where we set $c \equiv \cosh{(2r)} = 2\bar{n}+1$ and $s \equiv \sinh{(2r)} = 2\sqrt{\bar{n}(\bar{n}+1)}$, and $\bar{n} =(e^\beta-1)^{-1}$.
Consider the purification $\ket{\psi}_{A\overbar{A}}$ of $\hat{\rho}_A$ according to Eq.(\ref{eq:purif_cov}) with covariance matrix
\begin{equation}
    \sigma_{\psi} = \begin{pmatrix} cI & sZ \\[5pt] sZ & cI \end{pmatrix} \,,
\end{equation}
where the symplectic form is $\Omega \oplus \Omega$.

Applying $\E$ on the first subsystem, we get $\hat{\tau}_{B\overbar{A}} = (\E \otimes \I)(\ketbra{\psi}{\psi}_{A\overbar{A}})$ with covariance matrix
\begin{align}
    \sigma_{\tau}
    = (X \oplus I) \sigma_{\psi} (X \oplus I)^{\rm T} + (Y \oplus 0)
    = \begin{pmatrix} (cx^2 + y)I & sxZ \\[5pt] sxZ & cI \end{pmatrix} \,.
\end{align}
Purifying $\hat{\tau}_{B\overbar{A}}$ according to Eq.(\ref{eq:purif_cov}), we get $\ket{\Phi}_{B\overbar{A}\overbar{B}A}$ with covariance matrix
\begin{align}
    \sigma_{\Phi}
    = \begin{pmatrix} (cx^2 + y)I & sxZ & uZ & tI \\[5pt] sxZ & cI & -tI & vZ \\[5pt] uZ & -tI & (cx^2 + y)I & -sxZ \\[5pt] tI & vZ & -sxZ & cI \end{pmatrix} \,,
\end{align}
where we have calculated the off-diagonal block $M$ as follows,
\begin{align}
    M 
    &= \sqrt{-\sigma_{\tau}(\Omega \oplus \Omega)\sigma_{\tau}(\Omega \oplus \Omega) - (I \oplus I)} 
    = \sqrt{\begin{pmatrix} ((cx^2+y)^2 - s^2x^2 - 1)I & sx(c(1-x^2)-y)Z \\[5pt] sx(c(x^2-1)+y)Z & s^2(1 - x^2)I \end{pmatrix}} \nonumber\\
    &= \begin{pmatrix} uI & tZ \\[5pt] -tZ & vI \end{pmatrix} \,,
\end{align}
for parameters $u,v,t$ defined as
\begin{align}
    u &= \frac{1}{2} \left(Q + \frac{(cx^2+y)^2 - c^2}{Q}\right) \,,\\
    v &= \frac{1}{2} \left(Q - \frac{(cx^2+y)^2 - c^2}{Q}\right) \,, \text{ and}\\
    t &= \frac{sx(c(1-x^2)-y)}{Q} \,, \text{ where}\\
    Q &= \sqrt{(cx^2+y)^2 - 2s^2x^2 + s^2 - 1 + 2s\sqrt{((cx^2+y)^2 - s^2x^2 - 1)(1 - x^2)+x^2(c(1-x^2)-y)^2}} \,.
\end{align}

The state $\ket{\Psi}_{B\overbar{A}\overbar{B}A} = \ket{\psi}_{A\overbar{A}} \otimes \ket{0}^{\otimes 2}_{B\overbar{B}}$ has covariance matrix
\begin{align}
    \sigma_{\Psi}
    = \begin{pmatrix} I & 0 & 0 & 0 \\ 0 & cI & 0 & sZ \\ 0 & 0 & I & 0 \\ 0 & sZ & 0 & cI \end{pmatrix} \,.
\end{align}

To construct the dilation, we need to find a symplectic matrix $V_{AB\overbar{B}}$ such that $(I_{\overbar{A}} \oplus V_{AB\overbar{B}}) \sigma_{\Psi} (I_{\overbar{A}} \oplus V_{AB\overbar{B}})^{\rm T} = \sigma_{\Phi}$, which we first do separately for attenuator and amplifier channels on 1 mode.

Attenuators $\E^{\rm att}_{\alpha,\beta}$ are characterized by $x = \cos{\alpha}$ and $y = (2n_\beta+1) \sin^2{\alpha}$, where $n_\beta$ is the average number of thermal excitations.
To construct its dilation, we can choose $\bar{n} = n_\beta$ for the full-rank state that we define in Eq.(\ref{eq:thermal}), so that $y = \cosh(2r) \sin^2{\alpha}$.
Since the sign of $x$ does not affect the definition of the channel, we can also choose $\alpha \in [0, \pi/2)$.
As a consequence, $Q = 2\sinh(2r)\sin\alpha$, hence $u = v = \sinh(2r)\sin\alpha$, $t=0$, and
\begin{align}
    \sigma_{\Phi} = \begin{pmatrix} \cosh(2r)I & \sinh(2r)\cos\alpha Z & \sinh(2r)\sin\alpha Z & 0 \\[5pt] \sinh(2r)\cos\alpha Z & \cosh(2r)I & 0 & \sinh(2r)\sin\alpha Z \\[5pt] \sinh(2r)\sin\alpha Z & 0 & \cosh(2r)I & -\sinh(2r)\cos\alpha Z \\[5pt] 0 & \sinh(2r)\sin\alpha Z & -\sinh(2r)\cos\alpha Z & \cosh(2r)I \end{pmatrix} \,,
\end{align}
Indeed, this covariance matrix $\sigma_{\Phi}$ is equivalent to $\sigma_{\Psi}$, up to the symplectic matrix
\begin{equation}
    V^{\rm att}_{\alpha,\beta} \coloneqq V_{AB\overbar{B}} = V_{\rm bs}(-\alpha + \pi/2)_{AB}V_{\rm 2sq}\left(\frac{1}{2}\cosh^{-1}(2n_\beta+1)\right)_{B\overbar{B}} \,,
\end{equation}
which is invariant under any symmetry representation $S_A(g) \oplus S_B(g) \oplus S_{\overbar{B}}(g)$, provided that $S_A(g) = S_B(g) = ZS_{\overbar{B}}(g)Z$ for all $g \in G$.

Amplifiers $\E^{\rm amp}_{\alpha,\beta}$ are characterized by $x = \cosh{\alpha}$ and $y = (2n_\beta+1)\sinh^2{\alpha}$, where $n_\beta$ is the average number of thermal excitations.
Similarly, we choose $\alpha \in [0, \pi/2)$ and $n = n_\beta$, so that $y = \cosh(2r) \sinh^2{\alpha}$.
As a consequence, $Q = \cosh(2r)\sinh(2\alpha)$, hence $u = \cosh(2r)\sinh(2\alpha)$, $v = 0$, $t = -\sinh(2r)\sinh\alpha$, and
\begin{align}
    \sigma_{\Phi} = \begin{pmatrix} \cosh(2r)\cosh(2\alpha)I & \sinh(2r)\cosh\alpha Z & \cosh(2r)\sinh(2\alpha)Z & -\sinh(2r)\sinh\alpha I \\[5pt] \sinh(2r)\cosh\alpha Z & \cosh(2r)I & \sinh(2r)\sinh\alpha I & 0 \\[5pt] \cosh(2r)\sinh(2\alpha)Z & \sinh(2r)\sinh\alpha I & \cosh(2r)\cosh(2\alpha)I & -\sinh(2r)\cosh\alpha Z \\[5pt] -\sinh(2r)\sinh\alpha I & 0 & -\sinh(2r)\cosh\alpha Z & \cosh(2r)I \end{pmatrix} \,,
\end{align}
Indeed, this covariance matrix $\sigma_{\Phi}$ is equivalent to $\sigma_{\Psi}$, up to the symplectic matrix
\begin{equation}
    V^{\rm amp}_{\alpha,\beta} \coloneqq V_{AB\overbar{B}} = V_{\rm 2sq}(\alpha)_{B\overbar{B}}V_{\rm bs}(\pi/2)_{AB}V_{\rm 2sq}\left(\frac{1}{2}\cosh^{-1}(2n_\beta+1)\right)_{B\overbar{B}} \,,
\end{equation}
which is invariant under any symmetry representation $S_A(g) \oplus S_B(g) \oplus S_{\overbar{B}}(g)$, provided that $S_A(g) = S_B(g) = ZS_{\overbar{B}}(g)Z$ for all $g \in G$.

Any 1--mode phase-covariant channel $\E$ is unitarily equivalent to the composition of a pure loss channel $\E^{\rm att}_{\alpha,\infty}: \B(\H_A) \rightarrow \B(\H_B)$ and a quantum-limited amplifier $\E^{\rm amp}_{\alpha,\infty}: \B(\H_B) \rightarrow \B(\H_C)$~\cite{holevo2007one}, therefore one can use the composition rule in Eq.(\ref{eq:dilation_composition}) to express the Stinespring dilation of $\E$ in terms of the symplectic matrix
\begin{align}
    V^{U(1)}_{\alpha,\alpha'} 
    = \lim_{\beta \rightarrow \infty}V^{\rm amp}_{\alpha,\beta}V^{\rm att}_{\alpha',\beta}
    = V_{\rm 2sq}(\alpha)_{C\overbar{C}} V_{\rm bs}\left(\frac{\pi}{2}\right)_{BC} V_{\rm bs}\left(-\alpha' + \frac{\pi}{2}\right)_{AB} \,.
\end{align}

\end{document}